\newtheorem{theorem}{Theorem}
\newtheorem{lemma}[theorem]{Lemma}
\newtheorem{proposition}[theorem]{Proposition}
\newtheorem{definition}{Definition}
\newtheorem{remark}{Remark}
\newtheorem{task}{Task}
\DeclareMathOperator{\Tr}{Tr}
\DeclareMathOperator{\re}{Re}
\DeclareMathOperator{\poly}{poly}
\DeclareMathOperator{\sign}{sign}
\DeclareMathOperator{\op}{op}
\DeclareMathOperator{\Sp}{Sp}
\definecolor{mypink}{RGB}{255, 0, 213}
\definecolor{mypurple2}{RGB}{170,0,255}
\definecolor{myred}{RGB}{255, 0, 85}
\theoremstyle{definition}
\newtheorem{fact}{Fact}
\DeclareFontFamily{OMX}{MnSymbolE}{}
\DeclareSymbolFont{MnLargeSymbols}{OMX}{MnSymbolE}{m}{n}
\DeclareFontShape{OMX}{MnSymbolE}{m}{n}{
    <-6>  MnSymbolE5
   <6-7>  MnSymbolE6
   <7-8>  MnSymbolE7
   <8-9>  MnSymbolE8
   <9-10> MnSymbolE9
  <10-12> MnSymbolE10
  <12->   MnSymbolE12
}{}
\DeclareFontShape{OMX}{MnSymbolE}{b}{n}{
    <-6>  MnSymbolE-Bold5
   <6-7>  MnSymbolE-Bold6
   <7-8>  MnSymbolE-Bold7
   <8-9>  MnSymbolE-Bold8
   <9-10> MnSymbolE-Bold9
  <10-12> MnSymbolE-Bold10
  <12->   MnSymbolE-Bold12
}{}
\let\llangle\@undefined
\let\rrangle\@undefined
\DeclareMathDelimiter{\llangle}{\mathopen}%
                     {MnLargeSymbols}{'164}{MnLargeSymbols}{'164}
\DeclareMathDelimiter{\rrangle}{\mathclose}%
                     {MnLargeSymbols}{'171}{MnLargeSymbols}{'171}
\newcommand{\ket}[1]{| #1 \rangle}
\newcommand{\bra}[1]{\langle #1 |}
\newcommand{\kett}[1]{| #1 \rrangle}
\newcommand{\braa}[1]{\llangle #1 |}
\newcommand{\braakett}[2]{\llangle #1|#2 \rrangle}
\newcommand{\tr}{\mathrm{tr}}
\newcommand{\eps}{\varepsilon}
\newcommand{\E}{\mathop{{}\mathbb{E}}}
\newcommand{\Var}{\mathrm{Var}}
\newcommand{\Zd}{\mathbb{Z}_d}
\newcommand{\Cld}{\mathrm{Cl}_d}
\begin{document}
\title{Exponential learning advantages with conjugate states and minimal quantum memory}
\author{Robbie King}
\affiliation{Google Quantum AI, Venice, CA 90291, USA}
\affiliation{Department of Computing and Mathematical Sciences, Caltech, Pasadena, CA 91125, USA}
\author{Kianna Wan}
\affiliation{Google Quantum AI, Venice, CA 90291, USA}
\affiliation{Stanford Institute for Theoretical Physics, Stanford University, Stanford, CA 94305, USA}
\author{Jarrod R. McClean}
\affiliation{Google Quantum AI, Venice, CA 90291, USA}
\begin{abstract}
The ability of quantum computers to directly manipulate and analyze quantum states stored in quantum memory allows them to learn about aspects of our physical world that would otherwise be invisible given a modest number of measurements. Here we investigate a new learning resource which could be available to quantum computers in the future -- measurements on the unknown state accompanied by its complex conjugate $\rho \otimes \rho^\ast$. For a certain shadow tomography task, we surprisingly find that measurements on only copies of $\rho \otimes \rho^\ast$ can be exponentially more powerful than measurements on $\rho^{\otimes K}$, even for large $K$. This expands the class of provable exponential advantages using only a constant overhead quantum memory, or minimal quantum memory, and we provide a number of examples where the state $\rho^\ast$ is naturally available in both computational and physical applications. In addition, we precisely quantify the power of classical shadows on single copies under a generalized Clifford ensemble and give a class of quantities that can be efficiently learned. The learning task we study in both the single copy and quantum memory settings is physically natural and corresponds to real-space observables with a limit of bosonic modes, where it achieves an exponential improvement in detecting certain signals under a noisy background. We quantify a new and powerful resource in quantum learning, and we believe the advantage may find applications in improving quantum simulation, learning from quantum sensors, and uncovering new physical phenomena.
\end{abstract}
\maketitle

\section{Introduction}

\begin{figure*}
\centering
\includegraphics[width=11cm]{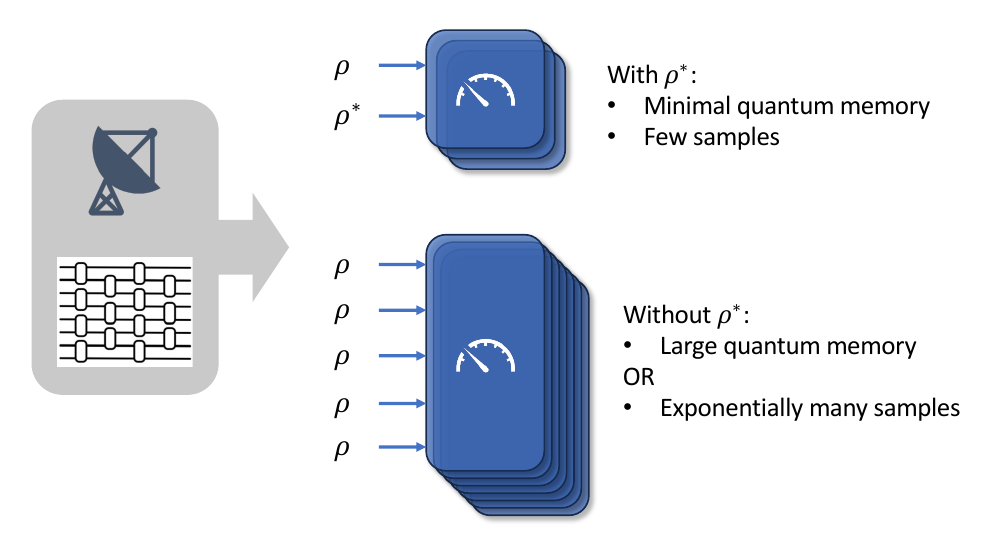}
\caption{A cartoon of the techniques and novel resources in this work.  Here we quantify the advantage endowed by minimal quantum memories containing $\rho \otimes \rho^\ast$ in learning about natural properties of quantum states coming from quantum sensors or digital quantum simulations.  (Left) The resources $\rho$ and $\rho^*$ are available from both computational sources such as a known quantum circuit or natural sources like certain quantum sensor setups.  (Right) This resource provides an exponential advantage in queries and computation for some learning tasks when only a minimal quantum memory (or constant number of copies) is available, as in most devices in the foreseeable future.  Without $\rho^\ast$, one needs either a large quantum memory or exponentially many samples.  Several applications of this technique are introduced and we believe this will motivate the development of further applications of minimal quantum memories.}
\label{fig:Overview}
\end{figure*}

Efficiently extracting information from quantum states is a central task in quantum information science. It is crucial in physical experiments and will be critical in simulations run on future quantum computers. Often learning the details of an entire quantum state is not required, but rather we would like to extract the expectation values of a set of interesting observables.  

Naively, measurements of many properties of quantum states are constrained by the uncertainty principle for non-commuting operators.   Additionally, tomographic techniques for exactly learning a quantum state up to a stringent standard like worst case observable error (or trace distance) are known to scale exponentially in the number of qubits or polynomially in the size of the Hilbert space~\cite{o2016efficient,anshu2023survey}.  Surprisingly however, the development of shadow tomography techniques demonstrated that one can learn a set of expectation values of even non-commuting observables with high probability with a shockingly small number of samples, scaling only polylogarithmically in the number of observables~\cite{aaronson2018shadow, aaronson2018online, aaronson2019gentle, buadescu2021improved}. While powerful, the general schemes suffer from two large caveats: they are computationally inefficient, and they require immense quantum memories, sometimes millions of times the size of the original state, to enable huge entangled measurements.
Classical shadows \cite{huang2020predicting} were developed to circumvent both of these limitations – it is computationally efficient, and requires only single-copy measurements for a wide class of useful observables. However, classical shadows place limitations on the sets of observables that are available, for example some schemes are only able to learn observables which are either local or low rank.

It is now known that many of the limitations of classical shadows performed only on single copies at a time are fundamental. For general quantum states, certain collections of observables can only be learned with a logarithmic number of samples by exploiting entangled measurements across multiple copies of a state~\cite{chen2022exponential,aharonov2022quantum,huang2022quantum}. This was shown to be true even for some of the simplest large sets of observables, namely Pauli operators on $n$ qubits.
Phrased a different way, the ability to make entangled measurements on copies of a quantum state can grant exponentially more power in learning tasks. Since such schemes require a quantum memory to store simultaneous copies of an unknown state and this type of advantage cannot be overcome even by an arbitrary amount of classical computation when samples are limited, this constitutes a promising future application of quantum computers. It was demonstrated experimentally in Ref.~\cite{huang2022quantum} that the advantage persists even for small numbers of qubits and in the presence of noise.  Importantly, Ref.~\cite{huang2022quantum} showed that exponential advantages are available using only 2 copies at a time of the state in quantum memory, or an example of a minimal quantum memory for which the number of copies required is independent of the learning task.  This demonstrated the existence of learning tasks for which extremely limited quantum resources could provide huge advantages, in contrast to the need of general shadow tomography to have memories that could be millions of times larger than the system of interest even when tasked with estimating observables to a precision of only $10^{-3}$.

In this work, we explore a novel resource for learning which is able to grant exponential advantages using only a minimal quantum memory (space for only 2 copies) – the ability to make joint measurements on an unknown quantum state with its complex conjugate, denoted $\rho \otimes \rho^*$. We give a learning task that can be achieved with low sample complexity using measurements on $\rho \otimes \rho^*$.  In contrast, without access to $\rho^*$, the same learning task requires exponentially more measurements, unless the size of the quantum memory is allowed to expand to practically unrealistic sizes as a function of the learning task. While it is known that quantum memory access to $\rho$ and $\rho^\ast$ provide an efficient means to sample in the Heisenberg-Weyl basis, accomplishing a similar learning task to one discussed here~\cite{montanaro2017learning,gross2021schur,grewal2023improved}, to the knowledge of the authors a quantum memory size lower bound without access to $\rho^*$ has not been shown previously.  Although the operation of complex conjugation to an unknown state $\rho^*$ is not physical and cannot be implemented efficiently, in Appendix ~\ref{app:natural_states} we highlight a wide class of cases where the complex conjugate state is available and discuss potential limitations.  We prove a lower bound showing that that copies of $\rho^*$ and $\rho$ without quantum memory are also insufficient for the learning task.  

Previous schemes using minimal quantum memory have only been able to resolve the magnitude of the observable~\cite{huang2021information,huang2022quantum} requiring a quantum memory scaling polynomially in precision to determine the sign.  The question of whether it is possible to determine both the magnitude and sign using a minimal quantum memory is resolved in Ref.~\cite{inprep} and we introduce a specialization here for the specific class of operators we are interested in.

Our exponential separation holds for a natural and physically motivated set of observables. The setting is a $d$-dimensional Hilbert space that discretizes position and momentum space with a natural limit of a continuous bosonic mode, and the operators we learn if we take the infinite dimensional limit are the bosonic displacement operators \cite{brady2023advances}. These operators more naturally correspond to real-space arrays of quantum sensors.  Recent developments in reconfigurable atom arrays~\cite{bluvstein2023logical} may provide a fruitful test bed for applications in a sensing context for example, especially given their wide bandwidth and sensitivity in other applications~\cite{osterwalder1999using,sedlacek2012microwave,holloway2017atom,wade2017real,simons2021rydberg}.
In addition to the learning algorithm using $\rho \otimes \rho^\ast$, we develop a version of classical shadows tailored to the $d$-dimensional bosonic setting.
It uses a uniform distribution over the generalized $d$-dimensional Clifford group to make single-copy measurements with good predictive power. Despite the more limited power of single copies, we identify a wide class of quantities that are efficiently learnable.

The core technique of our learning algorithm using $\rho \otimes \rho^\ast$ relies on using extensions via tensor products to create commutativity among displacement operators with Pauli operators as a special case.  One may wonder if this technique can be applied to a broader class of operators than displacement operators, and we partly resolve this question in the negative by showing commutativity via tensor extension naturally corresponds to a definition of the Heisenberg-Weyl group with a relationship to uniqueness via the Stone-von Neumann theorem.

Finally, we highlight two potential applications of our work.
The first is to learn efficiently from the output of quantum simulations run on quantum computers. Given a quantum algorithm and the explicit quantum circuit that prepares a state of interest $\rho$, one can easily complex-conjugate the gates of the quantum circuit to prepare the complex conjugate state $\rho^\ast$. Then incurring a factor of two overhead in the size of the quantum computer, one can perform entangled measurements on $\rho \otimes \rho^\ast$ to realize this exponential advantage.
The second realm of applications lies in learning from quantum data collected by quantum sensors, such arrays that have been proposed for long baseline interferometry \cite{gottesman2012longer,degen2017quantum,khabiboulline2019optical,bland2021quantum}.  The operators we consider are naturally related to regular arrays of quantum sensors arranged in real-space, and we provide examples where $\rho^*$ can be obtained in Appendix ~\ref{app:sensor_arrays}.  In this setting, our technique represents a specialized form of mixedness testing~\cite{chen2021hierarchy}, where it allows exponentially improved signal to noise ratios in determining if a signal is present in high background noise settings for certain classes of states. 
Potential applications and natural sources of $\rho^*$ are discussed in Appendix ~\ref{app:natural_states}.  We hope these results motivate the discovery of additional applications of minimal quantum memories equipped with complex conjugate resources.

\section{Necessity and power of conjugate quantum states}

Here we provide some background on the operators and states that we aim to learn, and then state our main theorems showing the exponential power of access to the complex conjugate resource in a minimal quantum memory.  Shifts in discrete position and momentum space are given by the $d$-dimensional clock and shift operators, $Z$ and $X$, which are generalizations of the qubit Pauli operators to $d$ dimensions.
\begin{align}
X &=
\begin{pmatrix}
0 & 0 & \dots & 0 & 1 \\
1 & 0 & \dots & 0 & 0 \\
\vdots & \vdots & \ddots & \vdots & \vdots \\
0 & 0 & \dots & 1 & 0 \\
\end{pmatrix}, \\
Z &=
\begin{pmatrix}
1 & 0 & \dots & 0 \\
0 & \omega & \dots & 0 \\
\vdots & \vdots & \ddots & \vdots \\
0 & 0 & \dots & \omega^{d-1} \\
\end{pmatrix},
\end{align}
where $\omega \coloneqq e^{i 2\pi / d}$. The operators map naturally to a 1D discrete line in real space, and can be interchanged via the $d$-dimensional quantum Fourier transform. 
Combined position and momentum shifts may be lumped together into displacement operators $D_{q,p}$, defined briefly as
\begin{equation}
D_{q,p} = e^{i \pi q p / d} X^q Z^p
\end{equation}
and in more detail in Appendix ~\ref{sec:background}.

\begin{definition}
The \emph{displacement amplitudes} of a $d$-dimensional state $\rho$ are
\begin{equation}
y_{q,p} = \Tr\left(D_{q,p} \rho\right)
\end{equation}
\end{definition}

The displacement operators may be used to form a basis for quantum states, and there are $d^2$ displacement amplitudes. The central task we consider in this paper is to estimate all $y_{q,p}$ to precision $\varepsilon$ given copies of an unknown quantum state $\rho$.

\begin{task} \label{task:main}
\emph{(Informal)}
Given access to a quantum state $\rho$, estimate all the displacement amplitudes $\{y_{q,p}\}$ to precision $\varepsilon$ with high probability.
\end{task}

Our first result is a sample complexity lower bound showing the minimal size of a conventional quantum memory required to efficiently perform this task without access to the resource $\rho^*$.  We assume we can measure $K$ copies $\rho^{\otimes K}$ at a time, possibly in entangled bases, but allow no access to $\rho^*$.

\begin{theorem} \label{thm:displacement_lower_bd_2_informal}
Let $d$ be the dimension of the Hilbert space. Assume $d$ is prime. Any protocol which learns the magnitudes of all displacement amplitudes to precision $\varepsilon$ with probability $2/3$ by measuring copies of $\rho^{\otimes K}$ for $K \leq 1 / (12 \varepsilon)$ requires $\Omega(\sqrt{d} / (K^2 \varepsilon^2))$ measurements.
\end{theorem}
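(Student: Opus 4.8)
The plan is to prove the lower bound by reduction to a distinguishing problem, followed by the information-theoretic ``learning tree'' method for bounded-copy measurements \cite{chen2022exponential,aharonov2022quantum,huang2022quantum}.

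\emph{Reduction to distinguishing.} Consider two hypotheses for the unknown state: under $H_0$, $\rho=I/d$; under $H_1$, draw $(q,p)$ uniformly from the $d^2-1$ nonzero points of $\Zd^2$ and set $\rho_{q,p}=\tfrac1d\bigl(I+3\varepsilon(D_{q,p}+D_{q,p}^\dagger)\bigr)$. Using that $d$ is prime --- so $D_{q,p}$ has order $d$ and $\Tr D_{q,p}=0=\Tr D_{q,p}^2$ for $(q,p)\neq(0,0)$ (the case $d=2$ being analogous) --- one checks that $\rho_{q,p}$ is a valid state for $\varepsilon\le\tfrac1{12}$, with $|y_{q,p}|=3\varepsilon$ and $y_{q',p'}=0$ for every $(q',p')\notin\{(0,0),\pm(q,p)\}$. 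Consequently any protocol that solves Task~\ref{task:main} to precision $\varepsilon$ with success probability $2/3$ yields a distinguisher between $H_0$ and $H_1$ with success probability $2/3$, so it suffices to lower bound the number $T$ of $K$-copy measurements needed for the latter.

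\emph{Learning tree.} Model an arbitrary protocol --- adaptive, and free to measure the $K$ copies jointly --- as a decision tree whose internal nodes apply rank-$1$ POVMs $\{w_s\ket{\psi_s}\!\bra{\psi_s}\}$ on $(\mathbb{C}^d)^{\otimes K}$. A likelihood-ratio martingale argument of the type in \cite{chen2022exponential} bounds the total-variation distance between the leaf distributions induced by $H_0$ and by the mixture $H_1$, and gives $T=\Omega(1/\delta)$ where, after the standard simplifications, $\delta$ may be taken to be
\begin{equation}
\delta=\max_{\|\psi\|=1}\Bigl(\E_{(q,p)}\bigl[\,|f_{q,p}(\psi)|\,\bigr]\Bigr)^2,\qquad f_{q,p}(\psi):=d^K\bra{\psi}\rho_{q,p}^{\otimes K}\ket{\psi}-1,
\end{equation}
with $(q,p)$ uniform over the nonzero points of $\Zd^2$. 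It therefore suffices to show $\delta=O\!\bigl(K^2\varepsilon^2/\sqrt d\bigr)$.

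\emph{Operator-norm estimates.} Writing $\rho_{q,p}^{\otimes K}=d^{-K}(I+A_{q,p})^{\otimes K}$ with $A_{q,p}=3\varepsilon(D_{q,p}+D_{q,p}^\dagger)$ gives $f_{q,p}(\psi)=\sum_{\emptyset\neq S\subseteq[K]}\bra{\psi}A_{q,p}^{(S)}\ket{\psi}$, where $A_{q,p}^{(S)}$ acts as $A_{q,p}$ on the copies in $S$ and as the identity on the rest. By the triangle inequality and Cauchy--Schwarz applied to each term,
\begin{equation}
\E_{(q,p)}\bigl[|f_{q,p}(\psi)|\bigr]\le\sum_{\emptyset\neq S\subseteq[K]}\Bigl(\E_{(q,p)}\bigl(\bra{\psi}A_{q,p}^{(S)}\ket{\psi}\bigr)^2\Bigr)^{1/2}\le\sum_{\emptyset\neq S\subseteq[K]}\bigl\|\E_{(q,p)}\bigl[A_{q,p}^{\otimes 2|S|}\bigr]\bigr\|_{\op}^{1/2}.
\end{equation}
The technical core is the bound, valid for $1\le m\le 2K$,
\begin{equation}
\bigl\|\E_{(q,p)\neq(0,0)}\bigl[(D_{q,p}+D_{q,p}^\dagger)^{\otimes m}\bigr]\bigr\|_{\op}=O\!\bigl(2^m/\sqrt d\bigr).
\end{equation}
Expanding $(D_{q,p}+D_{q,p}^\dagger)^{\otimes m}=\sum_{\vec\sigma\in\{\pm1\}^m}\bigotimes_j D_{\sigma_j q,\sigma_j p}$ and using the Heisenberg--Weyl relation $D_aD_b=\phi(a,b)\,D_{a+b}$ ($\phi$ a root of unity), the average over $(q,p)$ of each summand is computed by first summing over the momentum coordinate $p$: this forces the position shifts on the $m$ tensor factors to be consistent, pinning $q$ to a fixed $\Zd$-linear function of the computational-basis label, so the average equals $\tfrac1d$ times a permutation of the basis of $(\mathbb{C}^d)^{\otimes m}$ --- a genuine permutation (operator norm $1$) exactly when the associated $\Zd$-linear map is invertible, which, because $d$ is prime, happens whenever $d\nmid m$. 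In the degenerate case $d\mid m$ one instead obtains a projector-like operator of norm $O(1)$ in place of $O(1/d)$; these terms arise only if $2K\ge d$, and the constraint $K\le 1/(12\varepsilon)$ makes their total contribution $\sum_{j\ge1}\binom{K}{jd}(6\varepsilon)^{jd}$ super-exponentially small in $d$. Substituting, $\|\E_{(q,p)}[A_{q,p}^{\otimes 2|S|}]\|_{\op}^{1/2}=O\!\bigl((6\varepsilon)^{|S|}d^{-1/4}\bigr)$, hence
\begin{equation}
\E_{(q,p)}\bigl[|f_{q,p}(\psi)|\bigr]=O\!\Bigl(d^{-1/4}\sum_{m=1}^{K}\binom{K}{m}(6\varepsilon)^m\Bigr)=O\!\bigl(d^{-1/4}\bigl((1+6\varepsilon)^K-1\bigr)\bigr)=O\!\bigl(K\varepsilon\,d^{-1/4}\bigr),
\end{equation}
using $6\varepsilon K\le\tfrac12$ in the last step. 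Thus $\delta=O(K^2\varepsilon^2/\sqrt d)$ and $T=\Omega\!\bigl(\sqrt d/(K^2\varepsilon^2)\bigr)$, as claimed.

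\emph{Main obstacle.} The crux is the operator-norm estimate for the averaged displacement tensor powers: one must make precise how primality of $d$ forces invertibility of the $\Zd$-linear maps above and how the exceptional $d\mid m$ terms are suppressed --- this is simultaneously where primality is used and where the $\sqrt d$ in the final bound originates. Verifying that the learning-tree reduction applies to fully adaptive measurements entangled across all $K$ copies (rather than only to a two-outcome distinguishing test) is the other step requiring care.
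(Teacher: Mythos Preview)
Your high-level strategy matches the paper's: reduce to distinguishing a random ``spiked'' state from the maximally mixed state, invoke the one-sided likelihood-ratio lemma of \cite{chen2022exponential} (Lemma~\ref{lem:5.4} here), and control the per-measurement contribution via operator-norm estimates on averaged displacement tensor powers, using primality of $d$ to obtain permutation matrices. The operator-norm core you sketch is essentially the paper's Lemmas~\ref{lem:D_norm} and~\ref{lem:E_norm}.

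The gap is in the reduction, and you reach the claimed answer through two compensating errors. First, the assertion that the tree argument yields $T = \Omega(1/\delta)$ with $\delta = \max_\psi\bigl(\E_{(q,p)}[|f_{q,p}(\psi)|]\bigr)^2$ is not what the one-sided lemma delivers: after Jensen one needs, for each $\psi$, a lower bound on $\E_x[\log(1+f_x(\psi))]$, which is governed by $\E_x[f_x]$ and $\E_x[f_x^2]$ separately, not by $(\E_x[|f_x|])^2$. Second, your own permutation argument actually gives $\bigl\|\E_{(q,p)}[(D+D^\dagger)^{\otimes m}]\bigr\|_{\op} = O(2^m/d)$ for even $m$ (a $d$-scaled permutation averaged over $d^2-1$ points has norm $O(1/d)$), not $O(2^m/\sqrt d)$; with the correct norm and your squared $\delta$, the pipeline would output $T = \Omega(d/(K^2\varepsilon^2))$ rather than the stated $\sqrt d$.

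The paper makes this step work by inserting a random sign into the ensemble: $\rho_{q,p,r} = \tfrac{1}{d}(I + r\varepsilon E_{q,p})$ with $r \in \{\pm1\}$ uniform and $E_{q,p} = \chi D_{q,p} + \chi^\ast D_{q,p}^\dagger$, $\chi=(1+i)/2$. Averaging over $r$ first gives $\E_r[\log F_r] = \tfrac{1}{2}\log(F_+F_-) = \tfrac{1}{2}\log(1-G)$, in which the contributions linear in $\varepsilon$ have cancelled, so $G$ is second-order. Splitting $G$ into even/odd parts $H^0,H^1$ and applying Cauchy--Schwarz only to the $H^0$ term yields $\E_{q,p}[\max(0,G)] \le 2\sqrt{\Gamma^0} + \Gamma^1$ with $\Gamma^0 = O(K^4\varepsilon^4/d)$ and $\Gamma^1 = O(K^2\varepsilon^2/d)$; the $\sqrt d$ in the theorem comes from $\sqrt{\Gamma^0}$---this is where the square you wrote really belongs. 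Your ensemble lacks such a sign flip (since $D_{q,p}+D_{q,p}^\dagger$ is invariant under $(q,p)\mapsto(-q,-p)$), so the first-order term does not cancel, and that term would involve \emph{odd} tensor powers, to which the permutation lemma (which requires the total number of tensor factors to be even) does not directly apply.
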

The full proof of this is given in Appendix~\ref{sec:boson_lower_bd_2}.
Using a quantum memory with $\rho^{\otimes K}$, performing the learning task whilst consuming a number of copies scaling only as $\text{polylog}(d)$ is impossible, even if $K$ grows as large as $1 / (12 \varepsilon)$.  Hence it is impossible to efficiently perform the task with a minimal quantum memory.

This negative result may lead one to conclude that quantum memories are not as powerful as one might hope for physical learning tasks, but there is a resolution to this challenge which reveals an interesting subtlety in the power of quantum computing in analyzing quantum data.
While a quantum memory containing $K$ states $\rho^{\otimes K}$ is insufficient, measurements on the state $\rho \otimes \rho^*$ are able to learn the displacement amplitudes of $\rho$ up to a sign. Not only does the learning algorithm using $\rho \otimes \rho^*$ have logarithmic sample complexity, but the algorithm is very simple and computationally efficient.
Note that we use the term ``up to a sign'' to convey that because these are unitary but not always Hermitian operators, they are complex valued and we are able to learn some, but not all, phase information about that value with this procedure.
With the detailed proof and algorithm given in Appendix \ref{sec:disp_alg}, we show
\begin{theorem} \label{thm:displacement_informal}
There is an algorithm which can learn all displacement amplitudes up to a possible minus sign, with precision $\varepsilon$, using $\mathcal{O}(\log{d} / \varepsilon^4)$ samples. The algorithm makes measurements only on copies of $\rho \otimes \rho^\ast$ contained in a minimal quantum memory. Moreover, the algorithm is computationally efficient.
\end{theorem}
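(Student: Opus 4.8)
The plan is to show that a single, fixed joint measurement on $\rho\otimes\rho^\ast$ produces, from each shot, an unbiased and bounded-variance estimate of $|y_{q,p}|^2$ \emph{simultaneously for all} $(q,p)\in\Zd^2$, and then to convert $\mathcal{O}(\log d/\varepsilon^4)$ outcomes into $\varepsilon$-accurate estimates of all $d^2$ magnitudes by a routine median-of-means plus union-bound argument. The measurement will be the generalized Bell measurement, and the sign/phase that is lost is exactly the reason the statement says ``up to a possible minus sign.''

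The structural heart is that complex conjugation stays inside the Heisenberg--Weyl group but flips the symplectic phase. Since $X$ is real and $Z^\ast=Z^{-1}$ one checks $D_{q,p}^\ast=D_{q,-p}$, so the candidate observables are $U_{q,p}:=D_{q,p}\otimes D_{q,p}^\ast$ on $\mathbb{C}^d\otimes\mathbb{C}^d\cong\mathbb{C}^{d^2}$, with symplectic label $(q,p,q,-p)$. The Weyl relation $D_aD_b=\omega^{\langle a,b\rangle}D_bD_a$ with $\langle(q,p),(q',p')\rangle=qp'-q'p$ then gives $\langle(q,p,q,-p),(q',p',q',-p')\rangle=(qp'-q'p)+(-qp'+q'p)=0$, so \emph{all} the $U_{q,p}$ mutually commute — the ``commutativity via tensor extension'' of the introduction, which is special to the conjugate, since $D_{q,p}\otimes D_{q,p}$ would not commute. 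Hence $\{U_{q,p}:(q,p)\in\Zd^2\}$ is a maximal abelian (Lagrangian) subgroup of the Heisenberg--Weyl group on $\mathbb{C}^{d^2}$; its common eigenbasis is a generalized Bell basis $\{(D_{q',p'}\otimes I)\ket{\Omega}\}$, $\ket{\Omega}=d^{-1/2}\sum_i\ket{i}\ket{i}$, and it is resolved by a $d$-dimensional Clifford circuit followed by a computational-basis measurement, so the measurement is efficient. A direct computation gives $U_{q,p}(D_{q',p'}\otimes I)\ket{\Omega}=\omega^{qp'-q'p}(D_{q',p'}\otimes I)\ket{\Omega}$ (the $(q,p)=0$ and $(q',p')=0$ cases pin the overall phase to $1$), and crucially
\begin{equation}
\Tr\!\big[U_{q,p}\,(\rho\otimes\rho^\ast)\big]=\Tr[D_{q,p}\rho]\,\overline{\Tr[D_{q,p}\rho]}=|y_{q,p}|^2,
\end{equation}
which holds for any (not necessarily pure) $\rho$.

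The algorithm is then: prepare $\rho\otimes\rho^\ast$, apply the Bell-measurement Clifford, record the outcome $(q',p')$; repeat $N=\mathcal{O}(\log d/\varepsilon^4)$ times. For each target $(q,p)$, the per-shot quantity $\hat\mu_{q,p}:=\re\big(\omega^{qp'-q'p}\big)\in[-1,1]$ is an unbiased estimator of $\Tr[U_{q,p}(\rho\otimes\rho^\ast)]=|y_{q,p}|^2$ with variance at most $1$, and \emph{one shot yields all $d^2$ of these estimators at once}, since each is a deterministic function of the single observed outcome — this is precisely why the complexity is $\log d$ rather than $\mathrm{poly}(d)$, whereas naively reading $|y_{q,p}|^2\propto\Pr[\text{outcome}=(q,p)]$ off the empirical Bell distribution would cost $\Omega(d/\varepsilon^2)$. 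Grouping the $N$ shots into $\Theta(\log d)$ batches, taking medians of batch means, and union-bounding over the $d^2$ targets gives, with probability $\ge 2/3$, estimates $\widehat{|y_{q,p}|^2}$ with $\big|\widehat{|y_{q,p}|^2}-|y_{q,p}|^2\big|\le\varepsilon^2$ for all $(q,p)$; reporting $\sqrt{\max(\widehat{|y_{q,p}|^2},0)}$ recovers each $|y_{q,p}|$ to additive error $\mathcal{O}(\varepsilon)$, the square root being well-conditioned where $|y_{q,p}|^2\gtrsim\varepsilon^2$ and both estimate and truth being $\mathcal{O}(\varepsilon)$ otherwise. Rescaling $\varepsilon$ by a constant finishes. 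The quantum part is $\mathrm{polylog}(d)$ gates on the two-register system per shot, and the classical post-processing is $\mathcal{O}(d^2N)$ arithmetic operations, i.e.\ polynomial in the output size, so the procedure is computationally efficient.

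The single nontrivial ingredient is the commutativity in the second paragraph; given it, the rest is bookkeeping, with two spots needing care. First, obtaining the eigenvalue character exactly requires tracking the $e^{i\pi qp/d}$ prefactors and the symplectic-form convention fixed in the appendix, and the even-$d$ case carries the usual Heisenberg--Weyl cocycle subtleties (prime, or odd, $d$ is cleanest); however, only a character of $\Zd^2$ enters $\hat\mu_{q,p}$, so any normalization slip is caught by the $(q,p)=0$ check. Second, one must confirm that harvesting $d^2$ estimators from one outcome does not secretly reinflate the sample cost — it does not, exactly because the $U_{q,p}$ commute and there is a single measurement, so the union bound over $d^2$ events costs only the $\log d$ factor. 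I expect the main labor in the full proof to be writing down the explicit generalized-Bell Clifford circuit and verifying the phase formula in the chosen $D_{q,p}$ convention, rather than anything conceptual; combined with the earlier lower bound this yields the claimed exponential separation.
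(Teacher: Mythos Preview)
Your approach is essentially the paper's --- a single fixed Bell-type measurement on $\rho\otimes\rho^\ast$, commutativity via tensor extension, and Hoeffding plus a union bound --- but you have made a subtle choice that leaves you with a weaker conclusion than the theorem actually asserts.

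You take $U_{q,p}=D_{q,p}\otimes D_{q,p}^\ast=D_{q,p}\otimes D_{q,-p}$, for which indeed
\[
\Tr\big[(D_{q,p}\otimes D_{q,p}^\ast)(\rho\otimes\rho^\ast)\big]=y_{q,p}\,\overline{y_{q,p}}=|y_{q,p}|^2,
\]
so after the square root you recover only the \emph{magnitude} $|y_{q,p}|$. The paper instead uses $D_{q,p}\otimes D_{q,p}^{T}=D_{q,p}\otimes D_{-q,p}$ (these also mutually commute, by the same symplectic cancellation), and then, using $\rho^\ast=\rho^T$ for Hermitian $\rho$,
\[
\Tr\big[(D_{q,p}\otimes D_{q,p}^{T})(\rho\otimes\rho^\ast)\big]=\Tr(D_{q,p}\rho)\,\Tr\big((\rho D_{q,p})^T\big)=y_{q,p}^{\,2},
\]
the \emph{complex} square. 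Taking a principal square root then gives $y_{q,p}$ itself up to a $\pm$ sign, which is precisely what ``up to a possible minus sign'' means here (the paper explicitly remarks that this recovers some, but not all, phase information). Correspondingly, the paper's eigenbasis is built on $\ket{\Phi_{0,0}}=d^{-1/2}\sum_j\ket{j}\ket{-j}$ rather than your $\ket{\Omega}=d^{-1/2}\sum_j\ket{j}\ket{j}$, and the per-shot estimator is the full complex phase $e^{i2\pi(ap-bq)/d}$ (Hoeffding in the complex plane), not just its real part, since the target $y_{q,p}^2$ is generally complex.

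The fix is a one-symbol change (transpose in place of conjugate on the second register, equivalently $\ket{j}\ket{-j}$ in place of $\ket{j}\ket{j}$); everything else in your argument --- commutativity, the Clifford implementation of the Bell measurement, the $\log d/\varepsilon^4$ sample count, and the square-root error analysis --- goes through unchanged. As written, your proof establishes the weaker statement that all $|y_{q,p}|$ can be learned, which suffices for the contrast with the lower bounds but is not the full content of the theorem.
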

When viewed together, these theorems highlight the exponential advantage of using $\rho^*$ as a resource in learning tasks.  Indeed as we show in Appendix~\ref{app:rho_star_hardness}, $\rho^*$ is quite a powerful resource in other contexts, and may not be available for totally general unknown states.  This naturally leads one to wonder whether the state $\rho^*$ has this power in the single copy setting, but indeed lower bounds rule this out and show that entangled measurements are also a necessary component of the learning algorithm in Theorem \ref{thm:displacement_informal}. Our following theorem articulates this more precisely.
\begin{theorem} \label{thm:displacement_lower_bd_1_informal}
Let $d$ be the dimension of the Hilbert space. Any single-copy protocol which learns the magnitudes of all displacement amplitudes to precision $\varepsilon$ with probability $2/3$ requires a number of copies scaling as $\Omega(d / \varepsilon^2)$. This holds even if the protocol has access to single-copy measurements of both $\rho$ and $\rho^\ast$.
\end{theorem}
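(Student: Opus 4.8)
The plan is to reduce the estimation task to a binary hypothesis test between two nearby ensembles and then invoke the standard single-copy learning-tree lower bound, supplying two task-specific inputs: a second-moment (``twirl'') identity for the displacement operators that controls the per-copy information gain, and the observation that the hard instance family is closed under complex conjugation, so that access to $\rho^\ast$ is useless at the single-copy level. For the reduction, take the null hypothesis $\rho = \rho_0 := I/d$ (all displacement amplitudes vanish) and the alternative $\rho = \rho_{q,p} := \tfrac1d\big(I + \alpha(D_{q,p}+D_{q,p}^\dagger)\big)$ with $(q,p)$ uniform on $\Zd^2$ and $\alpha = \Theta(\varepsilon)$ chosen so that $\rho_{q,p}\succeq 0$ and $|\Tr(D_{q,p}\rho_{q,p})| = \Theta(\varepsilon)$ (a one-line computation gives $\Tr(D_{q,p}\rho_{q,p})=\alpha$, up to an $\mathcal O(1)$ set of harmless exceptional labels when $d$ is even). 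Then any algorithm that outputs all $|y_{q,p}|$ to accuracy $\varepsilon$ with probability $2/3$ distinguishes the two cases with the same probability by thresholding $\max_{q,p}|\hat y_{q,p}|$. Crucially $\E_{q,p}[\rho_{q,p}] = \rho_0$, so this is a \emph{mean-zero} perturbation ensemble, which is exactly what will force the $\Omega(d/\varepsilon^2)$ rather than merely $\Omega(1/\varepsilon^2)$ scaling.

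Next I would set up the learning tree, allowing access to $\rho^\ast$. Any protocol consuming $N$ copies one at a time---even one that at each round, adaptively on past outcomes, chooses to request a copy of $\rho$ or of $\rho^\ast$ and then applies an arbitrary POVM---is described by a tree whose leaves carry probabilities $p^{0}(\ell)=\prod_t \Tr(M^{(t)}_{s_t}\rho_0)$ under the null and $p^{\mathrm{pl}}(\ell)=\E_{q,p}\big[\prod_t \Tr(M^{(t)}_{s_t}\rho_{\tau_t(q,p)})\big]$ under the alternative, where $\tau_t$ is the identity if round $t$ measured $\rho$ and the involution $(q,p)\mapsto(q,-p)$ if it measured $\rho^\ast$. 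The structural point is $D_{q,p}^\ast = D_{q,-p}$, hence $\rho_{q,p}^\ast = \rho_{q,-p}$: complex conjugation merely permutes the hard instance family by a measure-preserving involution of the uniform prior, and since conjugating a POVM element preserves its trace, operator norm, and Hilbert--Schmidt norm, every estimate below is insensitive to whether a given round used $\rho$ or $\rho^\ast$. So the conjugate-access protocol is no stronger than an ordinary single-copy protocol for this family.

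The lower bound then follows the usual likelihood-ratio analysis of learning trees \cite{chen2022exponential,huang2022quantum}: distinguishing with constant success probability requires the $\chi^2$-divergence between the leaf distributions to be $\Omega(1)$, and this divergence is at most $\E_{q,p,\,q',p'}\big[\prod_t(1+g^{(t)})\big]-1$, where the expectation is over independent $(q,p),(q',p')$ and $g^{(t)}=\sum_s \Tr(M^{(t)}_s\Delta_{\tau_t(q,p)})\Tr(M^{(t)}_s\Delta_{\tau_t(q',p')})/\Tr(M^{(t)}_s\rho_0)$ with $\Delta_{q,p}:=\rho_{q,p}-\rho_0$. Because $\E_{q,p}[\Delta_{q,p}]=0$ the first-order terms vanish, and the dominant contribution is the sum over pairs of rounds of $\sum_{s,s'}\big(\E_{q,p}[\Tr(M_s\Delta_{q,p})\Tr(M_{s'}\Delta_{q,p})]\big)^2/\big(\Tr(M_s\rho_0)\Tr(M_{s'}\rho_0)\big)$. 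To bound this I would use the displacement second-moment identity $\tfrac1{d^2}\sum_{q,p}D_{q,p}\otimes D_{q,p}^\dagger=\tfrac1d\,\mathrm{SWAP}$ (the Heisenberg--Weyl operators form a projective $2$-design), together with the analogous evaluation of $\tfrac1{d^2}\sum_{q,p}D_{q,p}\otimes D_{q,p}$, to obtain $\big|\E_{q,p}[\Tr(M_s\Delta_{q,p})\Tr(M_{s'}\Delta_{q,p})]\big| = \mathcal O\!\big(\varepsilon^2 d^{-3}\|M_s\|_2\|M_{s'}\|_2\big)$; the outcome sums then close via $\sum_s \|M_s\|_2^2/\Tr(M_s)\le \sum_s\|M_s\|_{\mathrm{op}}\le \sum_s\Tr(M_s)=d$. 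This yields $\mathcal O(\varepsilon^4/d^2)$ per pair of rounds regardless of which POVMs are used, hence a leaf $\chi^2$ of order $N^2\varepsilon^4/d^2$, and therefore $N=\Omega(d/\varepsilon^2)$.

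The main obstacle is the bookkeeping that makes the ``$\chi^2\lesssim N^2\varepsilon^4/d^2$'' step rigorous: adaptivity means the leaf sum does not factor across rounds, so one works with a tree recursion rather than a product, and the higher-order terms $\E_{q,p}[\prod_{t\in T}\Tr(M^{(t)}_{s_t}\Delta_{\tau_t(q,p)})]$ for $|T|\ge 3$ must be controlled by a Wick-type pairing estimate---again reducing, via the $2$-design identity, to products of the pairwise bound---to confirm that they contribute only further powers of $\varepsilon^2/d$ and do not beat the pairwise scaling. This is routine within the quantum learning lower-bound toolkit but is where all the care lies; it is also precisely what upgrades the bound from the $\Omega(\sqrt d/\varepsilon^2)$ one would get by specializing Theorem~\ref{thm:displacement_lower_bd_2_informal} to $K=1$ to the tight $\Omega(d/\varepsilon^2)$ stated here.
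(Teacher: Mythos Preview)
Your high-level reduction (distinguish $I/d$ from a random displacement-perturbed state), the second-moment input $\sum_{q,p}D_{q,p}\otimes D_{q,p}^\dagger = d\cdot\text{SWAP}$, and the observation that complex conjugation acts on the hard family as the measure-preserving involution $(q,p)\mapsto(q,-p)$ are exactly what the paper uses. The difference is in how the per-copy bound is extracted, and here the paper takes a cleaner route than your $\chi^2$ expansion.

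The paper includes a uniform random sign $r\in\{\pm1\}$ in the hard instance, $\rho_{q,p,r}=\tfrac1d(\mathbbm{1}+r\varepsilon E_{q,p})$ with $E_{q,p}=\chi D_{q,p}+\chi^* D_{-q,-p}$, and then applies the \emph{one-sided} likelihood-ratio lemma of Chen et al.\ (\cite{chen2022exponential}, Lemma~5.4): if $\E_{q,p,r}[P_{\rho_{q,p,r}}(\ell)]/P_{\rho_0}(\ell)\ge 1-\delta$ for every leaf $\ell$, the success probability is at most $(1+\delta)/2$. Because this is checked leaf-by-leaf, adaptivity is free: for a fixed leaf the rank-1 POVMs along its path are fixed, so one applies Jensen to push the $(q,p,r)$-expectation inside the exponential of the sum of logs, the $r$-average converts each factor to $\tfrac12\log(1-\varepsilon^2\langle\psi_t|E_{\gamma_t q,p}|\psi_t\rangle^2)$, and then $\sup_\psi\E_{(q,p)\neq(0,0)}\langle\psi|E_{q,p}|\psi\rangle^2=1/(d+1)$ via the SWAP identity yields the ratio $\ge 1-T\varepsilon^2/(d+1)$ directly. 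There is no tree recursion, no pairing over rounds, and no higher-order Wick estimate: the entire ``bookkeeping'' you flag as the main obstacle is bypassed.

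Your proposal also has a concrete gap: the claim $\E_{q,p}[\rho_{q,p}]=\rho_0$ is false. Already for $d=2$ one has $\sum_{(q,p)}D_{q,p}=I+X+Y+Z$, so averaging $D_{q,p}+D_{q,p}^\dagger$ over $(q,p)\in\Zd^2\setminus\{(0,0)\}$ does not give zero (and including $(0,0)$ makes $\rho_{0,0}$ unnormalised). Without the mean-zero property the first-order terms in your $\chi^2$ expansion do not vanish; controlling them requires a separate bound on $\|\sum_{(q,p)\ne(0,0)}D_{q,p}\|_{\op}$ that you have not supplied, and if this contribution is $O(\varepsilon^2)$ per round rather than $O(\varepsilon^2/d)$, your argument yields only $\Omega(1/\varepsilon^2)$. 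The random sign is precisely what kills this term, and once you insert it, the paper's Jensen-based route is both shorter and avoids the adaptive-tree and higher-moment analysis your approach still owes.
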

With the detailed proof given in Appendix \ref{sec:boson_lower_bd_1}, this result reiterates the conclusion that entangled measurements using quantum memories have dramatically more power than those that can process only a single copy at a time, and indeed that $\rho^*$ inside a minimal quantum memory is a powerful and novel resource.

The algorithm to learn the displacement amplitudes up to a sign uses a technique where one attaches a second system and constructs a mutually commuting set of operators on the joint system which contain information about the original non-commuting operators on the single system. This trick was used previously to develop a shadow tomography algorithm for Pauli operators in \cite{huang2021information}.
Given the success of this technique, it is natural to wonder what other types of measurement and systems this could be applied to more generally.
In Appendix \ref{sec:commutation}, we give evidence that the ability to use this trick is unique to displacement operators which arise from representations of the Heisenberg groups. The qubit Pauli operators are a special case, and our work provides arguably the maximal generalization. We can show the following theorem using an idea reminiscent of the Stone-von Neumann theorem.
\begin{theorem} \label{thm:stone_von_neumann}
Let $U,V$ be unitaries of finite order $d$ on some Hilbert space $\mathcal{H}$. Suppose $U$ and $V$ do \emph{not} commute, but we can attach a second Hilbert space $\mathcal{H}'$ with unitaries $\tilde{U},\tilde{V}$ such that $U \otimes \tilde{U}$ and $V \otimes \tilde{V}$ commute. Then there is some unitary transformation of $\mathcal{H}$ mapping $U,V$ to a direct sum of displacement operators.
\end{theorem}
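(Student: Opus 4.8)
The plan is to recognize $U$ and $V$ as generators of an ordinary representation of a finite Heisenberg group on which the center acts by a faithful character, and then to reprove the relevant (finite, elementary) case of the Stone--von Neumann theorem directly, via the eigenspace decomposition of $U$.

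\emph{Step 1: extract a commutation-up-to-phase relation.} The hypothesis $(U\otimes\tilde U)(V\otimes\tilde V)=(V\otimes\tilde V)(U\otimes\tilde U)$ reads exactly $UV\otimes\tilde U\tilde V = VU\otimes\tilde V\tilde U$. Using the standard rigidity of tensor factorizations -- if $A\otimes B=C\otimes D$ with $A,C$ on $\mathcal{H}$, $B,D$ on $\mathcal{H}'$ and all four nonzero, then $A=\lambda C$ and $B=\lambda^{-1}D$ for some scalar $\lambda\neq0$ -- together with the fact that $U,V$ unitary makes $UV,VU,\tilde U\tilde V,\tilde V\tilde U$ all nonzero, we obtain $UV=\lambda VU$ with $|\lambda|=1$. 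Induction gives $U^kV=\lambda^kVU^k$; taking $k=d$ and using $U^d=I$ forces $\lambda^d=1$, and since $U,V$ do not commute, $\lambda\neq1$. Hence $\lambda$ is a primitive $m$-th root of unity for some $m\geq2$ with $m\mid d$, and $\langle U,V\rangle$ is an ordinary representation of a finite Heisenberg group whose center acts through the faithful character $\lambda$; it remains to classify such representations. (The auxiliary system enters only here: symmetrically $\tilde U\tilde V=\lambda^{-1}\tilde V\tilde U$, so a tensor extension can exist only when $\lambda$ is a root of unity -- this is the sense in which tensor-extendable commutativity encodes the Heisenberg--Weyl structure.)

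\emph{Step 2: orbit decomposition of the spectrum of $U$.} Since $U^d=I$, $U$ is diagonalizable with spectrum among the $d$-th roots of unity; write $\mathcal{H}=\bigoplus_\alpha\mathcal{H}_\alpha$ for its finitely many nonzero eigenspaces. From $UV=\lambda VU$ one gets $V\mathcal{H}_\alpha\subseteq\mathcal{H}_{\lambda\alpha}$, so $V$ permutes eigenspaces along the orbits of $\alpha\mapsto\lambda\alpha$, each of length exactly $m$ (and, since $V$ is invertible, either all or none of the eigenspaces in an orbit are nonzero). Collecting eigenvalues into orbits $O=\{\alpha_0,\lambda\alpha_0,\dots,\lambda^{m-1}\alpha_0\}$, the subspace $\mathcal{H}_O\coloneqq\bigoplus_j\mathcal{H}_{\lambda^j\alpha_0}$ is invariant under both $U$ and $V$, so it suffices to treat a single orbit block. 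There, each $V:\mathcal{H}_{\lambda^j\alpha_0}\to\mathcal{H}_{\lambda^{j+1}\alpha_0}$ is a unitary bijection; setting $\mathcal{K}\coloneqq\mathcal{H}_{\alpha_0}$ and using $V,V^2,\dots,V^{m-1}$ to identify the remaining summands with $\mathcal{K}$, we get $\mathcal{H}_O\cong\mathbb{C}^m\otimes\mathcal{K}$ with $U=\alpha_0(Z_m\otimes I_{\mathcal{K}})$, where $Z_m=\mathrm{diag}(1,\lambda,\dots,\lambda^{m-1})$ is a power of the clock operator (hence a displacement operator $D_{0,p}$ on $\mathbb{C}^m$), and with $V$ acting on the first factor as the cyclic shift $\ket{j}\mapsto\ket{j+1}$ for $j<m-1$, the single wrap-around $\ket{m-1}\otimes\ket{\psi}\mapsto\ket{0}\otimes T\ket{\psi}$ being recorded by one monodromy unitary $T$ on $\mathcal{K}$; equivalently $V^m=I_m\otimes T$.

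\emph{Step 3: normalization, and the main obstacle.} From $V^d=I$ and $V^m=I\otimes T$ we get $T^{d/m}=I_{\mathcal{K}}$, so $T$ is a finite-order unitary, hence diagonalizable with finite spectrum; splitting $\mathcal{K}$ into $T$-eigenspaces refines the decomposition while still respecting $U$ and $V$, reducing us to $T=\mu I_{\mathcal{K}}$ for a phase $\mu$. Now $U,V$ act only on the $\mathbb{C}^m$ factor, so $\mathcal{H}_O$ is a direct sum of $\dim\mathcal{K}$ copies of the pair $(\alpha_0 Z_m,\ \mu\text{-twisted shift})$ on $\mathbb{C}^m$; conjugating $\mathbb{C}^m$ by $\mathrm{diag}(1,c,c^2,\dots,c^{m-1})$ with $c^m=\mu$ -- which commutes with $Z_m$ and so leaves $U$ untouched -- turns the twisted shift into $cX_m$, where $X_m=D_{1,0}$ is the standard shift. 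Reassembling over all orbit blocks exhibits a unitary $W$ on $\mathcal{H}$ with $WUW^\dagger$ and $WVW^\dagger$ simultaneously block-diagonal over a common decomposition of $\mathcal{H}$ into subspaces of the single dimension $m$, acting on each block as scalar multiples of $Z_m$ and $X_m$ -- i.e.\ as a direct sum of displacement operators. The genuine work, and the point to handle carefully, is the phase bookkeeping of the residual scalars $\alpha_0$ (a $d$-th root of unity) and $c$ (with $c^m=\mu$): when $U,V$ have order exactly $d$ with $\lambda$ of full order $m=d$, one checks $\alpha_0^d=1$ forces $\alpha_0$ to be an $m$-th root of unity, removable by a further conjugation by a power of $X_m$ (and $\mu=1$ automatically), so one lands on $(D_{0,1},D_{1,0})$ on the nose; in general the conclusion is to be read up to an overall phase on each block, the invariant content being precisely the Heisenberg--Weyl group structure pinned down by Stone--von Neumann. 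A secondary check is the infinite-dimensional case, where finite order keeps $U$ and the monodromy $T$ diagonalizable with finite spectrum, so the argument goes through verbatim with possibly infinite-dimensional multiplicity spaces $\mathcal{K}$.
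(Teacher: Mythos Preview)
Your argument is correct and takes a genuinely different route from the paper. Step~1 coincides with the paper's opening: both extract $UVU^{-1}V^{-1}=\omega\,\mathbb{1}$ with $\omega^d=1$, $\omega\neq1$, via the same tensor-rigidity observation. From there the paper simply recognizes $\langle U,V,\omega\,\mathbb{1}\rangle$ as a representation of the discrete Heisenberg group $H(\mathbb{Z}_d)$, invokes an external classification of its irreducibles, asserts that each irrep sends the generators to powers of clock and shift, and concludes by decomposing into irreducibles. You instead reprove that finite Stone--von~Neumann statement by hand: diagonalize $U$, observe that $V$ cyclically permutes the $U$-eigenspaces in orbits of length $m=\mathrm{ord}(\lambda)$, and read off the $\mathbb{C}^m\otimes\mathcal{K}$ block structure with monodromy $T=V^m|_{\mathcal{K}}$, then split $\mathcal{K}$ along the spectrum of $T$. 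Your approach is longer but entirely self-contained and makes the block dimension and multiplicities explicit; the paper's is shorter but outsources the structural step to a citation. The residual-phase caveat you flag in Step~3 is real and is \emph{not} avoided by the paper's route either: with $d=4$ one may take $U=i\sigma_z$, $V=i\sigma_x$ (both of exact order~$4$, anticommuting, and tensor-extendable via $\tilde U=\sigma_z$, $\tilde V=\sigma_x$), yet $U$ has spectrum $\{\pm i\}$ while every two-dimensional $D_{q,p}$ has spectrum $\{\pm1\}$, so $U$ is not unitarily equivalent to any direct sum of displacement operators in the strict sense. The theorem should therefore be read as ``displacement operators up to a scalar phase on each block,'' which is exactly how you phrase your conclusion, and is the invariant content of the Heisenberg--Weyl structure.
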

Theorem \ref{thm:stone_von_neumann} suggests that to go beyond these classes of observables with efficient shadow tomography, we must search for new quantum learning primitives.

\section{Resolving the signs}

So far the tasks discussed have only pertained to learning expectation values up to a sign.  Previous work has shown that learning the magnitudes of Pauli operators was possible with a minimal quantum memory, but determining the signs required a large quantum memory with task dependent size~\cite{huang2021information} leaving the question of an efficient, minimal memory protocol open.
The signs of these operators clearly contain useful information, and hence one may ask if it is possible to efficiently measure the signs in an information theoretic and computationally efficient sense using only measurements on $\rho \otimes \rho^\ast$ as well.
This question is resolved by techniques developed in Ref.~\cite{inprep}; we introduce a specialization here for the specific class of operators we are interested in, with detailed algorithms in Appendix ~\ref{sec:determining_signs}.

\begin{theorem} \label{thm:determine_signs}
There is an algorithm which can learn all $d^2$ displacement amplitudes (including their sign) using $\mathcal{O}(\log{d} / \varepsilon^4)$ samples. The algorithm makes measurements only on copies of $\rho \otimes \rho^\ast$. The runtime of the algorithm is $\poly(d, \varepsilon^{-1})$.
\end{theorem}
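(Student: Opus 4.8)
The plan is to augment the magnitude algorithm of Theorem~\ref{thm:displacement_informal} with a small number of additional joint stabilizer measurements that recover the missing phase information, following the general sign-resolution strategy of Ref.~\cite{inprep} specialised to displacement operators. Recall that the magnitude information comes out because the family $\{D_{q,p}\otimes\overline{D_{q,p}}\}$ is mutually commuting, so a single joint (``generalized Bell'') measurement of $\rho\otimes\rho^\ast$ in its common eigenbasis estimates $\Tr\!\big((D_a\otimes\overline{D_a})(\rho\otimes\rho^\ast)\big)=y_a\overline{y_a}=|y_a|^2$ for every $a=(q,p)$ at once, as an empirical average of the characters labelling the outcomes. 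The new ingredient is that this is a special case of a whole family of commuting extensions: for \emph{any} symplectic $\sigma\in\mathrm{SL}_2(\Zd)$, the set $\{D_a\otimes\overline{D_{\sigma a}}\}_a$ is again mutually commuting. Indeed, complex conjugation reverses the sign of the Heisenberg phase (so $D_x^\ast D_y^\ast=\omega^{-[x,y]}D_y^\ast D_x^\ast$, with $[\cdot,\cdot]$ the symplectic form on $\Zd^2$), and because $\sigma$ preserves $[\cdot,\cdot]$ the phase picked up on the first tensor factor exactly cancels the one picked up on the second. A joint measurement of $\rho\otimes\rho^\ast$ in the common eigenbasis of $\{D_a\otimes\overline{D_{\sigma a}}\}_a$ then estimates $y_a\overline{y_{\sigma a}}$ for all $a$ simultaneously, i.e.\ the product of amplitudes along every ``$\sigma$-link''.

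Concretely, I would (i) run the $\sigma=\mathrm{id}$ measurement to obtain all $|y_a|$; (ii) fix $O(1)$ symplectics $\sigma_1,\dots,\sigma_m$ generating $\mathrm{SL}_2(\Zd)$ (chosen so that the induced Schreier graph on $\Zd^2\setminus\{0\}$ is an expander) and run the corresponding measurements to obtain $y_a\overline{y_{\sigma_j a}}$ for all $a$ and all $j$; and (iii) run one or more measurements of the form $\{D_a\otimes I\otimes I\otimes\overline{D_a}\}$ on $\rho\otimes\rho^\ast\otimes\tau\otimes\tau^\ast$ for suitable known reference states $\tau$ (again a commuting family, with expectation $y_a\,\overline{\Tr(D_a\tau)}$); already for $\tau=\ket{0}\!\bra{0}$ this yields $y_{0,p}$ with full phase for every $p$, since $\bra{0}D_{0,p}\ket{0}=1$. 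Since $\mathrm{SL}_2(\Zd)$ acts transitively on $\Zd^2\setminus\{0\}$ (for $d$ prime), the $\sigma_j$-links connect all nonzero displacement labels, so the remaining step is purely classical: reconstruct the complex vector $(y_a)_a$ from the estimated magnitudes, the estimated products $y_a\overline{y_{\sigma_j a}}$, and the direct estimates from the reference measurements. This is an angular (phase) synchronization problem on a graph of size $\poly(d)$, which I would solve spectrally (leading eigenvector of the Hermitian matrix of measured products), using the reference measurements to fix the global phase, with $y_{0,0}=\Tr\rho=1$ known outright. All measurement settings are Clifford circuits and the synchronization step is a $\poly(d)$-size eigenproblem, so the runtime is $\poly(d,\varepsilon^{-1})$; with $O(1)$ settings, each consuming $\mathcal{O}(\log d/\varepsilon^4)$ copies (the extra $\varepsilon^2$ over Hoeffding being inherited from Theorem~\ref{thm:displacement_informal}, where magnitudes arise as square roots of estimated squares), the total sample complexity is $\mathcal{O}(\log d/\varepsilon^4)$.

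The hard part is making the phase reconstruction quantitatively robust, in particular where the amplitudes are small. If $|y_a|=o(\varepsilon)$ then the estimate of $y_a\overline{y_{\sigma_j a}}$ carries essentially no information about $\arg y_a$, so the link graph restricted to the set $\{a:|y_a|\gtrsim\varepsilon\}$ can fragment (e.g.\ for $\rho$ a stabilizer state, whose amplitudes are supported on a single line), and a naive ``propagate the phase along a path'' argument fails; moreover, even with no vanishing amplitudes, path propagation accumulates error over the graph diameter, so a polylogarithmic sample complexity really requires the spectral (global least-squares) synchronization estimator, whose error I would control via a Davis--Kahan bound in which the relevant spectral gap is that of the expander link graph and the relevant signal strength is $\sum_a|y_a|^2=d\,\Tr(\rho^2)$ distributed among the links. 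Anchoring each surviving fragment then requires reference states $\tau$ whose characteristic functions $\Tr(D_a\tau)$ are bounded below on the relevant parts of phase space; the nontrivial point, supplied by Ref.~\cite{inprep}, is that $\poly\log d$ such references, chosen adaptively once the measured magnitudes reveal the support pattern, suffice (the support set $L=\{a:|y_a|\ge\varepsilon\}$ has $|L|\le d/\varepsilon^2$, which is precisely the size for which a single state can have $|\Tr(D_a\tau)|\gtrsim\varepsilon$ on all of $L$). The cleanest route to the stated theorem is to verify that displacement operators satisfy the abstract hypotheses of the general sign-learning theorem of Ref.~\cite{inprep} — the crucial hypothesis being exactly the existence of a commuting tensor extension of the target operators, which is the structure identified in Theorem~\ref{thm:stone_von_neumann} — and to invoke it, carrying out the symplectic-link and reference-state analysis above only to the extent needed to make the specialisation self-contained.
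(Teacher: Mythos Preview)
Your route is genuinely different from the paper's, and the central new observation---that $\{D_a\otimes \overline{D_{\sigma a}}\}_a$ is a commuting family for every symplectic $\sigma\in\mathrm{SL}_2(\Zd)$, so one joint measurement on $\rho\otimes\rho^\ast$ estimates all products $y_a\,\overline{y_{\sigma a}}$---is correct and attractive. But the paper does not go down the synchronization road at all. Its sign algorithm (Appendix~\ref{sec:determining_signs}) has just two ingredients: (i) matrix multiplicative weights \cite{arora2007combinatorial,aaronson2018online} is run as an \emph{online} learner to produce a classically-described hypothesis state $\tilde\rho$ whose amplitudes $\tilde y_a=\Tr(D_a\tilde\rho)$ satisfy $|\tilde y_a-s_a\hat u_a|\le O(\varepsilon)$ on the large-amplitude set $\mathcal T$, for some signs $s_a$ discovered along the way (each MMW ``error'' on some $a$ is resolved by $O(1/\varepsilon^2)$ direct single-copy measurements of that one observable on $\rho$; the regret bound caps the number of errors at $O(\log d/\varepsilon^2)$, hence $O(\log d/\varepsilon^4)$ copies in total); and (ii) one further Bell-basis measurement, now on $\rho\otimes\tilde\rho^\ast$ with $\tilde\rho^\ast$ prepared from the classical description, estimates $y_a\tilde y_a$ for all $a$ simultaneously, and comparing $\arg(y_a\tilde y_a)$ to $2\arg(\hat u_a)$ decides whether $s_a$ equals the true sign $r_a$ or not.

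In other words, the paper's whole contribution for this theorem is precisely the step you flag as hard and then defer to \cite{inprep}: it \emph{constructs} a single reference state whose characteristic function is bounded below on all of $\mathcal T$, and MMW is what makes this cost only $O(\log d/\varepsilon^4)$ samples. Your $\sigma$-link synchronization is a legitimate alternative in spirit, but the fragmentation problem you identify is a genuine gap that I do not see closed by $O(1)$ symplectics: the restriction of an expander Schreier graph to an arbitrary support $L\subset\Zd^2\setminus\{0\}$ need not have few components, and covering every line through the origin with stabilizer references $U\ket{0}\!\bra{0}U^\dagger$ costs $d+1$ measurement settings, not $\mathrm{polylog}\,d$. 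If instead you search adaptively for one reference state that is good on all of $L$, you are back at exactly the MMW problem the paper solves. Two side remarks: the paper's ``magnitude'' step actually learns the complex square $y_a^2$ (via $D_a\otimes D_a^T$), not $|y_a|^2$ (your $D_a\otimes D_a^\ast$), so the residual ambiguity really is a binary $\pm$; and your four-register reference measurement factors, reducing to the two-register Bell measurement on $\rho\otimes\tau^\ast$ that the paper in fact uses with $\tau=\tilde\rho$.
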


The algorithm in Theorem \ref{thm:determine_signs} relies on two key ideas \cite{inprep}: one is to use a hypothesis state to `shift the origin' of a subsequent magnitude measurement, and the other is to use matrix multiplicative weights \cite{arora2007combinatorial, aaronson2018online} as a subroutine to efficiently determine the hypothesis state.

\section{Application 1: Quantum data from quantum computation}

As a first application, we consider cases where an explicit quantum circuit is known for a state we wish to study. 
When conducting a physical experiment, we are often preparing some natural quantum state and subsequently performing measurements.
In quantum simulation, we aim to design quantum algorithms that simulate Nature, so that the quantum algorithm prepares the physical quantum state of interest. One may wonder, is there any advantage to having a quantum algorithm which prepares state $\rho$, rather than accessing copies of $\rho$ through an experimental setup? In particular, are there natural learning tasks where performing quantum simulation gives a big benefit?  

Task \ref{task:main} answers this question in the affirmative. While other polynomial advantages to having access to the source code are known~\cite{kothari2023mean}, the learning task here demonstrates an exponential advantage over black box access.  In general experimental setups, it can sometimes be unclear how to access the complex conjugate $\rho^\ast$ of the state of interest $\rho$.  We detail this difficulty in Appendix~\ref{app:rho_star_hardness} and discuss some natural cases where it is accessible in Appendix~\ref{app:natural_states} more broadly.
However, if we have a quantum algorithm which prepares $\rho$ we can easily access $\rho^\ast$ on our quantum computer -- we simply complex conjugate the quantum algorithm itself.

More concretely, suppose unitary $U$ prepares state $\rho$ via $\rho = \Tr_{\bar{S}}\left(U |0...0\rangle\langle0...0| U^\dag\right)$, and we have an efficient quantum circuit for $U$. Then by complex conjugating every gate in the circuit we can implement unitary $U^\ast$, which will prepare state $\rho^\ast$ via $\rho^\ast = \Tr_{\bar{S}}\left(U^\ast |0...0\rangle\langle0...0| U^T\right)$.

Theorems \ref{thm:displacement_lower_bd_2_informal} and \ref{thm:displacement_informal} then exhibit an exponential cost saving for Task \ref{task:main} from having white-box access to a quantum algorithm which prepares a quantum state of interest $\rho$.

\section{Application 2: Quantum data from nature}

As a second category of applications, we consider unknown quantum states collected from nature.  For example, these states could be gathered via quantum sensors or transduced from other quantum systems.  The ability to learn about unknown states $\rho$ with exponentially fewer samples using a minimal quantum memory with only $K=2$ prompted experimental demonstrations of this idea showing they were robust even with noisy operations today~\cite{huang2022quantum}.  Despite these promising results, connecting these advantages to existing quantum sensor states today has been challenging, as many quantum sensors today are single qubit, ensembles of single qubits, stretched single qubits like GHZ states, or cavity modes~\cite{degen2017quantum}.

In contrast to the collection of qubit case, the operators considered in $d$-dimensions here connect naturally with quantum sensor arrays in regular spatial arrangements with connections to applications like very long baseline interferometry enabled by quantum communication~\cite{gottesman2012longer}.  We detail some of these connections in Appendix~\ref{app:sensor_arrays}, and given examples of setups where access to the complex conjugate state $\rho^*$ may be available either exactly or approximately.  The displacement operators are a natural description of discrete position and momentum for real-space arrays, especially in a quantum regime where few excitations are expected and background thermal noise is high.

For these scenarios, we argue that one way to view the results here is as a specialized form of mixedness testing, where for a natural class of signals, exponentially fewer samples are require to detect the presence of the signal when combined with a sea of background noise.  For applications like detection of radio signals as in NMR, it is common for an infinite temperature background to be quite strong, and this may find applications in that area.  These results provide an additional setting for which exponential advantage in signal detection is possible with quantum memory, outside of the existing hierarchies of conventional quantum memory~\cite{chen2021hierarchy}.

\section{Generalized Clifford shadows}

Given the current difficulty of going beyond minimal quantum memories, one may wonder about the information that can be gleaned from single copies in this $d$-dimensional setting. One of the most effective methods for consuming single copies in the case of qubit operators is classical shadows. Here we develop a version suitable for the $d$-level quantum system, by considering the classical shadows associated with random generalized Clifford circuits. The group $\Cld$ of generalized Cliffords in $d$ dimensions is the normaliser of the Heisenberg-Weyl group. In \cite{becker2024classical} they analyze a related construction in the continuous-variable setting.

\begin{theorem} \label{thm:clifford_shadow}
Let $d$ be prime, and let $\{\ket{j}: j \in \Zd\}$ be the computational basis for $\mathbb{C}^d$. For any unknown state $\rho$ in $\mathbb{C}^d$, any subset of the quantities $\{\bra{i}U^\dagger \rho U\ket{j}: i,j \in \Zd, i \neq j, U \in \Cld\}$ can be estimated with high probability to within additive error $\varepsilon$ using at most $\mathcal{O}(\log (d)/\varepsilon^2)$ single-copy measurements of $\rho$. Moreover, the measurements can be done up front, independent of $U,i,j$, and the classical postprocessing to compute estimates is efficient.
\end{theorem}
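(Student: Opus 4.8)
The plan is to apply the classical-shadows framework with the ensemble of global generalized Clifford unitaries. Draw $V$ uniformly from $\Cld$, apply it to $\rho$, measure in the computational basis, and record the outcome $b\in\Zd$; since for prime $d$ the single-qudit stabilizer states (the $d(d+1)$ states lying in the $d+1$ mutually unbiased bases) form a complex-projective $2$-design, the shadow channel is $\mathcal{M}(\rho)=\tfrac{1}{d+1}(\rho+I)$, so the random operator $\hat\rho\coloneqq(d+1)\,V^\dagger\ket{b}\bra{b}V-I$ is an unbiased estimator of $\rho$. To estimate $\bra{i}U^\dagger\rho U\ket{j}=\Tr\!\big(\rho\,U\ket{j}\bra{i}U^\dagger\big)$ I would form the single-shot estimate $\hat o\coloneqq\Tr\!\big(U\ket{j}\bra{i}U^\dagger\hat\rho\big)$; because $i\neq j$ the target operator is traceless, the $-I$ term drops out, and $\hat o=(d+1)\,\bra{b}VU\ket{j}\,\overline{\bra{b}VU\ket{i}}$. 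Writing $W\coloneqq VU$, which is again uniform over $\Cld$, this is an unbiased estimator of $\bra{i}U^\dagger\rho U\ket{j}$, so, replacing $\rho$ by $U^\dagger\rho U$, I may take $U=I$ from here on. Crucially, the measurement data $\{(V_t,b_t)\}$ are collected up front, and a separate estimator is computed from them for each requested triple $(i,j,U)$.

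The technical heart is bounding the variance, i.e., $\E|\hat o|^2=(d+1)^2\,\E_W\!\big[\sum_b\bra{b}W\rho W^\dagger\ket{b}\,|\bra{b}W\ket{j}|^2\,|\bra{b}W\ket{i}|^2\big]$. The naive expectation is that this degree-$(3,3)$ function of $W$ requires a $3$-design, which $\Cld$ is \emph{not} for odd prime $d$ --- it is only a $2$-design. I would instead exploit the rigid structure of single-qudit stabilizer states: each of $W\ket{j},W\ket{i}$ is a stabilizer state, hence either a computational basis state or a ``spread'' state whose amplitudes all have modulus $1/\sqrt d$, so $|\bra{b}W\ket{k}|^2=1/d$ for all $b$ in the spread case and $=\delta_{b,c_k}$ in the basis-state case. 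Consequently $\sum_b\bra{b}W\rho W^\dagger\ket{b}\,|\bra{b}W\ket{j}|^2|\bra{b}W\ket{i}|^2$ equals $1/d^2$ when both $W\ket{i},W\ket{j}$ are spread, is at most $1/d$ otherwise, and is $0$ when both are basis states (necessarily distinct since $i\neq j$). Since $\Cld$ acts transitively on the $d(d+1)$ stabilizer states, $W\ket{k}$ is uniformly distributed over them, so $\Pr[W\ket{k}\text{ is a computational basis state}]=d/\big(d(d+1)\big)=1/(d+1)$; a union bound then gives $\E_W[\,\cdots\,]\le 1/d^2+(1/d)\cdot 2/(d+1)\le 3/d^2$, whence $\E|\hat o|^2\le 3(d+1)^2/d^2=O(1)$ and $\Var[\hat o]=O(1)$, independent of $d$.

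With the $O(1)$ variance bound in hand the rest is routine: splitting the $N$ up-front single-copy measurements into $O(\log(M/\delta))$ batches of size $O(1/\varepsilon^2)$ and taking the median of the batch means yields, for any fixed target, an estimate within additive error $\varepsilon$ with failure probability $\delta/M$. The family of \emph{all} quantities $\{\bra{i}U^\dagger\rho U\ket{j}\}$ has size at most $d^2\,|\Cld|=\poly(d)$, so $\log M=O(\log d)$ and a union bound over any requested subset gives overall success probability $2/3$ using $N=\mathcal{O}(\log(d)/\varepsilon^2)$ samples. Efficiency is immediate: sampling $V$ from $\Cld$, multiplying $V_tU$, applying the result to $\ket{i},\ket{j}$ (e.g.\ via the symplectic/Heisenberg--Weyl representation, or directly with $d\times d$ arithmetic), and forming the median-of-means are all $\poly(d,\varepsilon^{-1})$.

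The main obstacle is exactly the variance estimate above: one cannot simply import the qubit-Clifford-shadows variance bound, since that relies on the Clifford group being a unitary $3$-design, which fails for odd prime $d$. Making the argument go through requires replacing the Weingarten computation by the combinatorial structure of single-qudit stabilizer states and checking that the ``bad'' events --- where $W\ket{i}$ or $W\ket{j}$ collapses onto a computational basis state --- occur with probability only $O(1/d)$, hence contribute at the same order $O(1/d^2)$ as the generic term rather than dominating it.
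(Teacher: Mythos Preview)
Your proposal is correct and reaches the same conclusion, but the variance argument is genuinely different from the paper's. The paper first computes the full $3$-fold twirl channel $\mathcal{E}_d^{(3)}$ for $\Cld$ (by classifying the orbits of pairs of Paulis under the symplectic action), derives from it a general variance formula valid for arbitrary observables (their Theorem~\ref{thm: shadows variance}), and then specializes that formula to $O=U\ket{j}\bra{i}U^\dagger$, showing via Kronecker-delta bookkeeping that the potentially large ``$\Psi_k$'' contribution vanishes when $i\neq j$. Your route bypasses the $3$-fold twirl entirely: you exploit the dichotomy that every single-qudit stabilizer state is either a computational basis state or has all computational amplitudes of modulus $1/\sqrt d$, which lets you bound $\sum_b \bra{b}W\rho' W^\dagger\ket{b}\,|\bra{b}W\ket{j}|^2|\bra{b}W\ket{i}|^2$ combinatorially. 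This is more elementary for this particular observable, while the paper's approach buys a reusable variance formula they apply elsewhere (e.g., to displacement operators). One small tightening: your ``otherwise'' case (exactly one of $W\ket{i},W\ket{j}$ a basis state) in fact never occurs, because $\{W\ket{k}\}_k$ is the eigenbasis of the single Pauli $WZW^\dagger$ and hence lies entirely in one MUB; the ``both basis'' case already contributes $0$, so you can drop the union-bound term and get $\E|\hat o|^2\le (d+1)^2/d^2$ directly.
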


In other words, we can efficiently measure the transition elements of $\rho$ with respect to any and all stabiliser bases, i.e., $\{U\ket{j}: j \in \Zd\}$ for all $U \in \Cld$, using a number of copies that scales only logarithmically with $d$.  
$\Cld$ includes, for example, the quantum Fourier transform over $d$ dimensions.
In Appendix~\ref{sec:shadows}, we prove Theorem~\ref{thm:clifford_shadow} and analyse at a general level the classical shadows corresponding to the uniform distribution over $\Cld$, by evaluating $k$-fold twirl channels for $\Cld$ for $k$ up to $3$. In particular, we derive an explicit expression for the variance of estimates for $\tr(O\rho)$, for an arbitrary operator $O$ (Theorem~\ref{thm: shadows variance}). Interestingly, like the classical shadows associated with $n$-qubit Clifford circuits, this variance depends on the Hilbert-Schmidt norm of $O$, 
but it also has a further dependence on the overlaps of $O$ with displacement operators in a particular way (that is somewhat reminiscent of the variance for matchgate shadows in Ref.~\cite{wan2022matchgate}). Because of this additional dependence, the variance is not small for all low-rank observables; for instance, it scales linearly with $d$ when $O = \ket{j}\bra{j}$, which is why we only consider off-diagonal elements $\bra{i}U^\dagger \rho U\ket{j}$ with $i \neq j$ in Theorem~\ref{thm:clifford_shadow}.

It can also be shown using Theorem~\ref{thm: shadows variance} that the variance for measuring displacement operators is $\Omega(d)$. This implies that learning all displacement amplitudes using this particular classical shadows procedure may require
$\Omega(d/\varepsilon^2)$ copies of $\rho$, which is consistent with our lower bound result in Theorem~\ref{thm:displacement_lower_bd_1_informal}.

\section{Conclusions and outlook}
Exponential advantages in learning about physical systems made possible by quantum memories and quantum control represent a new opportunity for early quantum computers.
These techniques will find use in the readout stage of quantum simulations run on quantum computers and provide the exciting prospect of learning from quantum sensors with unprecedented efficiency.
While there remain many practical challenges in the implementation of this full quantum data pipeline and the full extent of applications of this technology remain unknown, there has been much recent progress.

In this work we investigated a new resource for learning – the ability to perform measurements on $\rho \otimes \rho^\ast$. We showed that for a natural set of operators, these measurements can provide exponential savings in sample complexity while using only a minimal quantum memory. Moreover, the setting of our learning task is closer to quantum sensors people imagine constructing, providing a more natural design path towards advantage in real applications.  We also resolved the question of whether a minimal quantum memory is sufficient to learn the signs of operators as well as just their magnitudes, and investigated a new form of classical shadows native to the $d$-dimensional setting.

We argued that the particular technique based on commutation achieved by simple tensor products that has been prominent in achieving learning advantages with quantum control is uniquely satisfied by the displacement operators we consider here. This raises the question of if there are more techniques yet to be discovered which can exploit entangled measurements on quantum memory to accomplish learning tasks.

\bigskip
{\bf \large Acknowledgements}
\medskip

We thank Dave Bacon for key early discussions on the properties of the Heisenberg group, Bill Huggins for detailed feedback, Ryan Babbush for important discussions, and Robin Kothari and David Gosset for developing the key ideas for the content of Appendix \ref{sec:determining_signs}.

\bibliographystyle{IEEEtran}
\bibliography{refs}

\onecolumngrid
\appendix
\pagebreak

\section{Background} \label{sec:background}

\subsection{Displacement operators} \label{sec:disp_ops}

Given a $d$-dimensional quantum system $\mathbb{C}^d$ with basis $\{|0\rangle,\dots,|d-1\rangle\}$, we define the operators $X$ and $Z$ by
\begin{align}
X &: \ket{j} \mapsto \ket{j+1} \\
Z &: \ket{j} \mapsto \omega^j \ket{j}.
\end{align}
Here and throughout, the addition inside the ket is modulo $d$, and 
\begin{equation}
\omega \coloneqq e^{i 2\pi / d}.
\end{equation}
The matrix representations of $X$ and $Z$ are
\begin{align}
X &=
\begin{pmatrix}
0 & 0 & 0 & \dots & 0 & 1 \\
1 & 0 & 0 & \dots & 0 & 0 \\
0 & 1 & 0 & \dots & 0 & 0 \\
0 & 0 & 1 & \dots & 0 & 0 \\
\vdots & \vdots & \vdots & \ddots & \vdots & \vdots \\
0 & 0 & 0 & \dots & 1 & 0 \\
\end{pmatrix}, \\
Z &=
\begin{pmatrix}
1 & 0 & 0 & \dots & 0 \\
0 & \omega & 0 & \dots & 0 \\
0 & 0 & \omega^2 & \dots & 0 \\
\vdots & \vdots & \vdots & \ddots & \vdots \\
0 & 0 & 0 & \dots & \omega^{d-1}.  \label{Z} \\
\end{pmatrix}
\end{align}
$X,Z$ are traceless and unitary but not Hermitian in general. $X$ is a \emph{shift} rotation, generated by the discrete momentum operator, and $Z$ is a \emph{phase} rotation, generated by the discrete position operator. For $d=2$, $X$ and $Z$ coincide with the usual Pauli matrices, and in this case they are in fact Hermitian. $X,Z$ obey the commutation relation
\begin{align} \label{eq:XZ_comm}
Z X &= \omega X Z \nonumber\\
\implies \ Z^p X^q &= \omega^{qp} X^q Z^p
\end{align}

Define the \emph{displacement operator} $D_{q,p}$ by
\begin{equation}
D_{q,p} = e^{i \pi q p / d} X^q Z^p
\end{equation}
$D_{q,p}$ acts on basis vectors by
\begin{equation} \label{Dqp act}
D_{q,p} : |j\rangle \rightarrow e^{i \pi (q+2j) p / d} |j+q\rangle
\end{equation}
$D_{q,p}$ is a unitary which can be interpreted as shifting by the vector $(q,p)$ in discrete position-momentum phase space.

\begin{proposition} \label{prop:disp_properties}
The displacement operators have properties
\begin{align}
&D^{-1}_{q,p} = D^\dag_{q,p} = D_{-q,-p} \label{D dagger} \\
&D^\ast_{q,p} = D_{q,-p} \\
&D^T_{q,p} = D_{-q,p} \label{eq:D_transpose}\\
&D_{q',p'} D_{q,p} = e^{i 2\pi (qp' - q'p) / d} D_{q,p} D_{q',p'} \label{eq:D_comm}\\
&D_{q,p}^k = D_{kq,kp}
\end{align}
\end{proposition}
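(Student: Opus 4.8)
The plan is to reduce all five identities to the single commutation relation \eqref{eq:XZ_comm}, $Z^p X^q = \omega^{qp} X^q Z^p$, together with two elementary observations about the matrix representations: $X$ is a real permutation matrix (so $X^\ast = X$ and $X^T = X^{-1}$) and $Z$ is diagonal with $Z^\ast = \mathrm{diag}(1,\bar\omega,\dots,\bar\omega^{d-1}) = Z^{-1}$. First I would prove a normal-ordering lemma, $(X^q Z^p)^k = \omega^{\binom{k}{2} q p}\, X^{kq} Z^{kp}$, by induction on $k$: the step pushes the trailing $Z^{kp}$ in $(X^q Z^p)^k X^q Z^p$ past $X^q$, picking up $\omega^{kqp}$, and $\binom{k}{2} + k = \binom{k+1}{2}$. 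This lemma gives the power identity directly, since $D_{q,p}^k = e^{i\pi kqp/d}\,\omega^{\binom{k}{2}qp}\,X^{kq}Z^{kp} = e^{i\pi kqp/d}\,e^{i\pi k(k-1)qp/d}\,X^{kq}Z^{kp} = e^{i\pi k^2 qp/d} X^{kq}Z^{kp} = D_{kq,kp}$.

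For the adjoint, I would compute $D_{q,p}^\dagger = e^{-i\pi qp/d} Z^{-p} X^{-q}$ and reorder via $Z^{-p}X^{-q} = \omega^{qp} X^{-q}Z^{-p}$, leaving the phase $e^{-i\pi qp/d}\omega^{qp} = e^{i\pi qp/d}$, which is exactly the phase in $D_{-q,-p} = e^{i\pi qp/d}X^{-q}Z^{-p}$. Unitarity of $X$, $Z$ (and $|e^{i\pi qp/d}| = 1$) then gives $D_{q,p}^{-1} = D_{q,p}^\dagger$. The complex conjugate follows from $X^\ast = X$ and $(Z^p)^\ast = Z^{-p}$: $D_{q,p}^\ast = e^{-i\pi qp/d} X^q Z^{-p} = D_{q,-p}$. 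The transpose is then obtained for free by composing the previous two, $D_{q,p}^T = \overline{D_{q,p}^\dagger} = \overline{D_{-q,-p}} = (D_{-q,-p})^\ast = D_{-q,p}$, using the conjugate identity with $(q,p) \mapsto (-q,-p)$.

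For the Weyl commutation relation, I would expand $D_{q',p'}D_{q,p}$ and $D_{q,p}D_{q',p'}$ into normal order. Both equal a common scalar $e^{i\pi(qp+q'p')/d}$ times $X^{q+q'}Z^{p+p'}$, except that bringing the left block past the right block requires commuting $Z^{p'}$ past $X^q$ in the first ordering (contributing $\omega^{qp'}$) versus $Z^p$ past $X^{q'}$ in the second (contributing $\omega^{q'p}$). Hence $D_{q',p'}D_{q,p} = \omega^{qp'-q'p} D_{q,p}D_{q',p'} = e^{i2\pi(qp'-q'p)/d} D_{q,p}D_{q',p'}$.

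None of this is deep; the only point requiring care is the bookkeeping of the ``half'' phases $e^{i\pi qp/d}$ — these involve a genuine square root of $\omega$, so they are not powers of $\omega$ and must be tracked separately from the $\omega$-valued factors produced by the commutation relation. Getting the symmetrizing $\binom{k}{2}$ phase right in the normal-ordering lemma is precisely what makes $D_{q,p}^k = D_{kq,kp}$ close up without a leftover root of unity, so I would write that induction step out in full and then present the five identities as immediate consequences.
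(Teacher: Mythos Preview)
Your proof is correct; the paper states Proposition~\ref{prop:disp_properties} as background and omits the proof entirely, so there is nothing to compare against. Your reduction of everything to the commutation relation $Z^p X^q = \omega^{qp} X^q Z^p$ together with $X^\ast = X$, $Z^\ast = Z^{-1}$ is the standard route, and your normal-ordering lemma $(X^q Z^p)^k = \omega^{\binom{k}{2}qp} X^{kq} Z^{kp}$ with the $\binom{k}{2}+k=\binom{k+1}{2}$ bookkeeping is exactly what is needed to close up the power identity without a stray phase.
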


\begin{proposition} \label{prop:disp_basis}
$\{D_{q,p}\}$ form a basis of the $(d \times d)$-dimensional space of operators on $\mathbb{C}^d$. Moreover, they are orthogonal in the Hilbert-Schmidt inner product.
\begin{equation} \label{Dqp orthogonal}
\Tr\left(D_{q,p}^\dagger D_{q',p'}\right) = d \cdot \delta_{q,q'} \delta_{p,p'}
\end{equation}
\end{proposition}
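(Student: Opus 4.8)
The plan is to reduce everything to one trace computation and then let a dimension count do the rest. The space of operators on $\mathbb{C}^d$ has dimension $d^2$, and there are exactly $d^2$ displacement operators $D_{q,p}$ with $q,p \in \Zd$. So it suffices to prove the orthogonality relation \eqref{Dqp orthogonal}: a set of $d^2$ pairwise orthogonal, nonzero vectors in a $d^2$-dimensional inner product space is automatically linearly independent and hence a basis. Nonzeroness is immediate, since each $D_{q,p}$ is unitary, so $\Tr(D_{q,p}^\dagger D_{q,p}) = \Tr(I) = d \neq 0$, which also pins down the diagonal case of \eqref{Dqp orthogonal}.

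The first step is to evaluate $\Tr(D_{q,p})$. Using the action \eqref{Dqp act}, the matrix elements are $\langle k|D_{q,p}|j\rangle = e^{i \pi (q + 2j) p / d}\,\delta_{k, j+q}$ with all index arithmetic modulo $d$, so $\Tr(D_{q,p}) = \sum_{j \in \Zd} e^{i\pi(q+2j)p/d}\,\delta_{j, j+q}$. The Kronecker delta forces $q \equiv 0 \pmod d$; when $q = 0$ the sum collapses to $\sum_{j} \omega^{jp}$, a geometric series equal to $d$ if $p \equiv 0 \pmod d$ and $0$ otherwise. Hence $\Tr(D_{q,p}) = d\,\delta_{q,0}\,\delta_{p,0}$.

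The second step is to reduce a general overlap to this case via the composition law. Writing $D_{q,p}^\dagger = D_{-q,-p}$ from \eqref{D dagger} and expanding both factors through $D_{q,p} = e^{i\pi qp/d} X^q Z^p$, I commute the inner pair $Z^{-p} X^{q'}$ past each other using \eqref{eq:XZ_comm}, collect the powers of $X$ and $Z$, and re-express the result as a displacement operator, obtaining $D_{q,p}^\dagger D_{q',p'} = c(q,p,q',p')\, D_{q'-q,\,p'-p}$ for an explicit unit-modulus phase $c$. Taking the trace and invoking the first step kills everything unless $q' \equiv q$ and $p' \equiv p \pmod d$; in that case a short check gives $c = 1$, so $\Tr(D_{q,p}^\dagger D_{q',p'}) = d\,\delta_{q,q'}\delta_{p,p'}$, which together with the dimension count proves the proposition.

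I expect the only genuine subtlety to be careful bookkeeping of the two kinds of phases, the $\omega$'s and the $e^{i\pi(\cdot)/d}$'s, in the composition law, along with keeping all index arithmetic modulo $d$; note $d$ prime is not needed here. If I want to sidestep the phase-chasing, there is a cleaner route: $\Tr\big((X^q Z^p)^\dagger X^{q'} Z^{p'}\big)$ is, up to an $\omega$-power, the trace of a single monomial in $X$ and $Z$, which vanishes unless that monomial is the identity, giving orthogonality of $\{X^q Z^p\}$ directly from the first step; since $D_{q,p}$ differs from $X^q Z^p$ only by a scalar of modulus $1$, the same orthogonality transfers verbatim to the $D_{q,p}$.
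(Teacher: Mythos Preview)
Your argument is correct and complete; the dimension count plus the trace computation is exactly the standard way to establish this fact. Note that the paper does not actually prove this proposition---it is stated without proof as a background result in the section on displacement operators---so there is no ``paper's own proof'' to compare against, but what you have written would serve perfectly well as the omitted justification.
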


Proposition \ref{prop:disp_basis} lets us decompose any $d$-dimensional density matrix $\rho$ as
\begin{equation}
\rho = \frac{1}{d} \sum_{q,p} \Tr\left(D_{q,p} \rho\right) D^\dag_{q,p}
\end{equation}
This can be viewed as the $d$-dimensional analog of the Bloch vector representation
\begin{equation}
\rho = \frac{1}{d} \Big(\mathbbm{1} + \sum_{(q,p) \neq (0,0)} y_{q,p} D^\dag_{q,p}\Big) \quad , \quad y_{q,p} = \Tr\left(D_{q,p} \rho\right)
\end{equation}

\subsection{Bosonic limit $d \rightarrow \infty$} \label{sec:cts_limit}

For an overview of continuous bosonic modes, see \cite{brady2023advances}.

As $d$ increases, the $d$-dimensional system described above forms an increasingly good discrete approximation to an infinite-dimensional continuous bosonic mode. To formalize this, let's set
\begin{equation}
x = \sqrt{\frac{\pi}{d}} (q,p)
\end{equation}
$x$ lives on a lattice with spacing $\sim 1/\sqrt{d}$ and size $\sim \sqrt{d}$ with periodic boundary conditions. As $d$ goes to infinity, $x$ becomes a continuous phase space variable in the plane.

For the quantum harmonic oscillator with frequency $\omega$ and mass $m$, the characteristic length and momentum scales are $\sqrt{\hbar / m \omega}$ and $\sqrt{\hbar m \omega}$ respectively. Thus if we can measure the position to precision $\delta$, this is effectively measuring the particle on a discrete lattice with
\begin{equation}
d \sim \frac{\hbar}{m \omega \delta^2}
\end{equation}

The analog of Proposition \ref{prop:disp_basis} in the continuous limit is
\begin{equation}
\Tr{D_{x} D_{x'}} = \pi \delta^2(x - x')
\end{equation}
which leads to the infinite-dimensional Bloch representation
\begin{equation}
\rho = \frac{1}{\pi} \int [d^2 x] y_{x} D^\dag_{x} \quad , \quad y_{x} = \Tr\left(D_{x} \rho\right)
\end{equation}

$D_{x}$ represents a shift in phase plane by $x$. The commutation relation Equation \ref{eq:D_comm} becomes
\begin{equation}
D_{x'} D_{x} = e^{i 2 x^T J x'} D_{x} D_{x'}
\end{equation}
where $J$ is the \emph{symplectic form}, defined by
\begin{equation}
J =
\begin{pmatrix}
    0 & 1 \\
    -1 & 0
\end{pmatrix}.
\end{equation}

If we have $n$ bosonic modes, the phase space variable becomes
\begin{equation}
\vec{x} = (\vec{q},\vec{p}) = (q_1,\dots,q_n,p_1,\dots,p_n)
\end{equation}

\subsection{Gaussian operations}

Gaussian unitaries can be defined in three equivalent ways:
\begin{itemize}
    \item They are the \emph{normalizer} of the Heisenberg-Weyl group. That is, they conjugate displacement operators to themselves.
    \item They are symplectic transformations of phase space. (See Equation \ref{symplectic_phase_space}.)
    \item They are generated by Hamiltonians quadratic in position and momentum operators.
\end{itemize}

A matrix $S \in \mathbb{R}^{2n \times 2n}$ is \emph{symplectic} if
\begin{equation}
S J S^T = J \quad , \quad J =
\begin{pmatrix}
    0 & I_n \\
    -I_n & 0
\end{pmatrix}
\end{equation}
$J$ is the symplectic form. Denote the set of symplectic matrices $\Sp(2n)$.

For every symplectic $S \in \Sp(2n)$, there is a corresponding Gaussian unitary $U_S$ on $n$ bosonic modes. This acts on the position and momentum operators via
\begin{equation} \label{symplectic_phase_space}
U_S^\dag \vec{x} U_S = S \vec{x}
\end{equation}
The adjoint action of a Gaussian unitary on the displacement operators is
\begin{equation}
U_S^\dag D(\vec{x}) U_S  = D(S^{-1} \vec{x})
\end{equation}
Note that for a single mode, the symplectic condition $S J S^T = J$ simply becomes $\det{S} = 1$. Gaussian unitaries are the infinite-dimensional version of Clifford operations.

\section{Learning displacement amplitudes} \label{sec:disp_alg}

\begin{definition}
The \emph{displacement amplitudes} of a $d$-dimensional state $\rho$ are
\begin{equation}
y_{q,p} = \Tr\left(D_{q,p} \rho\right)
\end{equation}
\end{definition}

Note however that the coefficients $y_{q,p}$ are now \emph{complex} in general, since $D_{q,p}$ are not always Hermitian. The Hermiticity of $\rho$ is reflected in the relation 
\begin{equation}
y_{q,p}^\ast = y_{-q,-p}
\end{equation}

Given copies of a $d$-dimensional state and its conjugate $\rho \otimes \rho^\ast$, the first phase of our learning algorithm will aim to learn the displacement amplitudes $\{ \pm y_{q,p}\}$ up to a sign $\pm$. The second phase will use entangled measurements across a few more copies to resolve the signs.

\subsection{Estimating displacement amplitudes up to a sign}

Using commutation relation Equation \ref{eq:D_comm}, it can be checked that the following operators mutually commute when acting on two systems $\mathbb{C}^d \otimes \mathbb{C}^d$.
\begin{equation}
\{ D_{q,p} \otimes D_{-q,p} \}
\end{equation}
This suggests that we can measure two systems in the joint eigenbasis of these operators. If we placed $\rho \otimes \rho^\ast$ in the two systems for some $d$-dimensional state $\rho$, we would have
\begin{align} \label{eq:DD_rho}
\Tr\left((D_{q,p} \otimes D_{-q,p}) (\rho \otimes \rho^\ast)\right) &= \Tr\left(D_{q,p} \rho\right) \Tr\left(D^T_{q,p} \rho^\ast\right) \nonumber\\
&= \Tr\left(D_{q,p} \rho\right)^2 = y_{q,p}^2
\end{align}
Here we used Proposition \ref{prop:disp_properties}.

We now construct the desired eigenbasis. It will form a $d$-dimensional generalization of the Bell basis on 2 qubits. Define
\begin{align}
|\Phi_{0,0}\rangle &= \frac{1}{\sqrt{d}} \sum_j |j\rangle |-j\rangle \label{Phi00} \\
|\Phi_{a,b}\rangle &= (X^a Z^b \otimes \mathbbm{1}) |\Phi_{0,0}\rangle \nonumber\\
&= \frac{1}{\sqrt{d}} \sum_j e^{i 2\pi b j / d} |j+a\rangle |-j\rangle \label{Phiab}
\end{align}
It can be checked that $\{|\Phi_{a,b}\rangle\}$ forms an orthonormal basis of $\mathbb{C}^d \otimes \mathbb{C}^d$. Furthermore, we can calculate
\begin{equation} \label{eq:DD_phi}
(D_{q,p} \otimes D_{-q,p}) |\Phi_{a,b}\rangle = e^{i 2\pi (ap - bq) / d} |\Phi_{a,b}\rangle.
\end{equation}

We provide more intuition for the simultaneous diagonalisability of the $D_{q,p} \otimes D_{-q,p}$ and their common eigenbasis $\ket{\Phi_{a,b}}$ in Appendix~\ref{sec:adjoint}.

\begin{algorithm}[H] \caption{Learning displacement amplitudes up to a sign.} \label{alg:disp}

\textbf{Input:}
\begin{itemize}
    \item A list of $M$ displacement indices $\mathcal{S} \in \mathbb{Z}_d^2$, $|\mathcal{S}| = M$.
    \item Precision $\varepsilon$.
    \item $N$ copies of $\rho \otimes \rho^\ast$, where $\rho$ is an unknown quantum state in $d$ dimensions.
\end{itemize}

\medskip
\textbf{Output:} For each $(q,p) \in \mathcal{S}$, either output $\hat{u}_{q,p} = 0$ or some $\hat{u}_{q,p}$ with $|\hat{u}_{q,p}| \geq \frac{\sqrt{2}}{\sqrt{3}} \varepsilon$. If $\hat{u}_{q,p} = 0$, we are guaranteed $|y_{q,p}| \leq \varepsilon$. If $|\hat{u}_{q,p}| \geq \frac{\sqrt{2}}{\sqrt{3}} \varepsilon$, we are guaranteed $|\pm \hat{u}_{q,p} - y_{q,p}| \leq \frac{\sqrt{2}}{2\sqrt{3}} \varepsilon$ for one of the choices of sign $\pm$.

\medskip
\textbf{Algorithm:}
\begin{enumerate}
    \item Measure each copy of $\rho \otimes \rho^\ast$ in the basis $\{|\Phi_{a,b}\rangle\}$, receiving outcomes $\{(a^{(k)},b^{(k)})\}_{k=1}^N$.
    \item Now given $(q,p) \in \mathcal{S}$, compute
    \begin{equation}
    \hat{v}_{q,p} = \frac{1}{N} \sum_{k=1}^N \exp\left(i 2\pi (a^{(k)}p - b^{(k)}q) / d\right).
    \end{equation}
    \item If $|\hat{v}_{q,p}| \leq \frac{2}{3} \varepsilon^2$, output $\hat{u}_{q,p} = 0$.
    \item Else output $\hat{u}_{q,p} = \sqrt{\hat{v}_{q,p}}$, choosing the primary square root without loss of generality.
\end{enumerate}
\end{algorithm}

\begin{theorem} \label{thm:disp_alg}
Algorithm \ref{alg:disp} succeeds with high probability using $N = \mathcal{O}(\log{M}/\varepsilon^4)$ copies of $\rho \otimes \rho^\ast$.
\end{theorem}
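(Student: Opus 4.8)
The plan is to show that the quantity $\hat v_{q,p}$ computed in Step~2 is an unbiased estimator of $y_{q,p}^2$ built from bounded summands, apply a standard concentration inequality together with a union bound over $\mathcal S$, and then track how an additive error on $\hat v_{q,p}$ propagates through the square root taken in Step~4. First I would identify the measurement distribution: measuring $\rho\otimes\rho^\ast$ in the orthonormal basis $\{\ket{\Phi_{a,b}}\}$ yields outcome $(a,b)$ with probability $\bra{\Phi_{a,b}}(\rho\otimes\rho^\ast)\ket{\Phi_{a,b}}$. By Equation~\eqref{eq:DD_phi}, the phase $\exp(i2\pi(ap-bq)/d)$ appearing in Step~2 is exactly the eigenvalue of the unitary $D_{q,p}\otimes D_{-q,p}$ on $\ket{\Phi_{a,b}}$; since this operator is normal, $\bra{\Phi_{a,b}}(D_{q,p}\otimes D_{-q,p}) = \exp(i2\pi(ap-bq)/d)\bra{\Phi_{a,b}}$, and summing over the basis gives
\begin{equation}
\E\!\left[\exp\!\big(i2\pi(ap-bq)/d\big)\right] = \Tr\!\big((D_{q,p}\otimes D_{-q,p})(\rho\otimes\rho^\ast)\big) = y_{q,p}^2,
\end{equation}
the last equality being Equation~\eqref{eq:DD_rho}. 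Hence $\E[\hat v_{q,p}] = y_{q,p}^2$, and $\hat v_{q,p}$ is an average of $N$ i.i.d.\ unit-modulus complex numbers.

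Next I would apply Hoeffding's inequality to the real and imaginary parts of the summands separately (each lies in $[-1,1]$), obtaining for a fixed $(q,p)$ a bound $\Pr[\,|\hat v_{q,p} - y_{q,p}^2| > \tfrac13\varepsilon^2\,] \le 4\exp(-cN\varepsilon^4)$ for an absolute constant $c>0$. A union bound over the $M$ indices in $\mathcal S$ then shows that $N = \mathcal O(\log M/\varepsilon^4)$ copies suffice so that, with probability at least $2/3$, the event
\begin{equation}
|\hat v_{q,p} - y_{q,p}^2| \le \tfrac13\varepsilon^2 \qquad \text{for all } (q,p)\in\mathcal S
\end{equation}
holds; condition on this event for the remainder of the argument.

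On this event the two output guarantees follow by elementary estimates. If $|\hat v_{q,p}|\le\tfrac23\varepsilon^2$, so Step~3 outputs $\hat u_{q,p}=0$, then $|y_{q,p}|^2 \le |\hat v_{q,p}| + \tfrac13\varepsilon^2 \le \varepsilon^2$, i.e.\ $|y_{q,p}|\le\varepsilon$. Otherwise $|\hat u_{q,p}| = |\hat v_{q,p}|^{1/2} > \tfrac{\sqrt2}{\sqrt3}\varepsilon$, matching the stated threshold, and writing $w=\hat u_{q,p}$, $z=y_{q,p}$ we have $|w-z|\,|w+z| = |w^2-z^2| = |\hat v_{q,p} - y_{q,p}^2| \le \tfrac13\varepsilon^2$, while $\max(|w-z|,|w+z|)\ge|w|$ because $(w-z)+(w+z)=2w$. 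Therefore $\min_\pm|\hat u_{q,p}\mp y_{q,p}| \le \tfrac{\varepsilon^2/3}{(\sqrt2/\sqrt3)\,\varepsilon} = \tfrac{\sqrt2}{2\sqrt3}\varepsilon$, which is exactly the claimed accuracy.

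Since the analysis is so short, the ``main obstacle'' is really bookkeeping: one must verify that the thresholds $\tfrac23\varepsilon^2$ and $\tfrac{\sqrt2}{\sqrt3}\varepsilon$ and the output accuracy $\tfrac{\sqrt2}{2\sqrt3}\varepsilon$ are mutually consistent with an additive error of $\tfrac13\varepsilon^2$ on $\hat v_{q,p}$ (they are, by the chain above), and that $\hat v_{q,p}\ne 0$ whenever Step~4 is reached so that the principal square root is well-defined. The genuine conceptual point — not an obstacle but worth flagging — is that only $y_{q,p}^2$, and not $y_{q,p}$ itself, is accessible from $\rho\otimes\rho^\ast$ through this measurement, so the branches $\pm y_{q,p}$ are indistinguishable; this is precisely why the output is ``up to a sign,'' and why the choice of branch of $\sqrt{\hat v_{q,p}}$ in Step~4 does not matter, the guarantee being symmetric under $\hat u_{q,p}\mapsto-\hat u_{q,p}$.
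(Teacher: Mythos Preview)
Your proposal is correct and follows essentially the same approach as the paper: establish $\E[\hat v_{q,p}]=y_{q,p}^2$ via Equations~\eqref{eq:DD_rho} and~\eqref{eq:DD_phi}, apply Hoeffding with a union bound to get $|\hat v_{q,p}-y_{q,p}^2|\le\tfrac13\varepsilon^2$, and then split into the two cases. Your factorization step in Case~2, using $\max(|w-z|,|w+z|)\ge|w|$ from the triangle inequality on $2w=(w-z)+(w+z)$, is in fact a slightly cleaner justification than the paper's informal ``choose the correct hemisphere'' phrasing, but the underlying inequality is identical.
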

\begin{proof}
Suppose we measure $\rho \otimes \rho^\ast$ in the basis $\{|\Phi_{a,b}\rangle\}$, and get the distribution $(a,b) \sim \mathcal{P}$. From Equations \ref{eq:DD_rho} and \ref{eq:DD_phi}, we can construct an estimator for $y_{q,p}^2$.
\begin{align} \label{eq:estimator}
y_{q,p}^2 &= \Tr\left((D_{q,p} \otimes D_{-q,p}) (\rho \otimes \rho^\ast)\right) \nonumber\\
&= \mathbb{E}_{(a,b) \sim \mathcal{P}} \ \Tr\left((D_{q,p} \otimes D_{-q,p}) |\Phi_{a,b}\rangle\langle\Phi_{a,b}|\right) \nonumber\\
&= \mathbb{E}_{(a,b) \sim \mathcal{P}} \ \exp\left(i 2\pi (ap - bq) / d\right)
\end{align}
Applying Hoeffding's inequality in the complex plane tells us that with $N = \mathcal{O}(\log{M} / \varepsilon^4)$ copies,
\begin{equation}
\big|\hat{v}_{q,p} - y_{q,p}^2\big| \leq \frac{1}{3} \varepsilon^2
\end{equation}
for any $M$ displacement operators $D_{q,p}$ with high probability.

\bigskip
{\bf Case 1:} $|\hat{v}_{q,p}| \leq \frac{2}{3} \varepsilon^2$. Then with high probability
\begin{equation}
|y_{q,p}^2| \leq |\hat{v}_{q,p}| + |\hat{v}_{q,p} - y_{q,p}^2| \leq \frac{2}{3} \varepsilon^2 + \frac{1}{3} \varepsilon^2 = \varepsilon^2
\ \implies \ |y_{q,p}| \leq \varepsilon
\end{equation}
and the estimate $\hat{u}_{q,p} = 0$ satisfies $|\hat{u}_{q,p} - y_{q,p}| \leq \varepsilon$.

\bigskip
{\bf Case 2:} $|\hat{v}_{q,p}| > \frac{2}{3} \varepsilon^2$. Then $|\hat{u}_{q,p}| > \frac{\sqrt{2}}{\sqrt{3}} \varepsilon$. Negating $\hat{u}_{q,p}$ if necessary, suppose we choose the correct hemisphere for $\hat{u}_{q,p}$. This guarantees that
\begin{equation}
|\hat{u}_{q,p} + y_{q,p}| \geq |\hat{u}_{q,p}| > \frac{\sqrt{2}}{\sqrt{3}} \varepsilon
\end{equation}
Now
\begin{align}
\frac{1}{3} \varepsilon^2 &\geq |\hat{v}_{q,p} - y_{q,p}^2| \\
&= |\hat{u}_{q,p}^2 - y_{q,p}^2| = |\hat{u}_{q,p} + y_{q,p}| \cdot |\hat{u}_{q,p} - y_{q,p}| \\
&\geq \frac{\sqrt{2}}{\sqrt{3}} \ \varepsilon \cdot |\hat{u}_{q,p} - y_{q,p}|
\end{align}
\begin{equation}
\implies \ |\hat{u}_{q,p} - y_{q,p}| \leq \frac{\sqrt{2}}{2\sqrt{3}} \varepsilon
\end{equation}
finishing the proof.
\end{proof}

\subsection{Implementing the measurements}

\begin{proposition}
If we encode the $d$-dimensional system in $\mathcal{O}(\log{d})$ qubits on a quantum computer, then we can measure in the basis $\{|\Phi_{a,b}\rangle\}$ in time $\poly{\log{d}}$.
\end{proposition}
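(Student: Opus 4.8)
The plan is to exhibit an explicit $\poly(\log d)$-size quantum circuit that maps the basis $\{|\Phi_{a,b}\rangle\}$ to the computational basis, after which a standard computational-basis measurement suffices. The starting observation is structural: by Equation~\eqref{Phiab}, $|\Phi_{a,b}\rangle = (X^a Z^b \otimes \mathbbm{1})|\Phi_{0,0}\rangle$, where $|\Phi_{0,0}\rangle = \tfrac{1}{\sqrt d}\sum_j |j\rangle|{-j}\rangle$ is (up to the sign flip $j \mapsto -j$ on the second register) the standard maximally entangled state. So the measurement circuit should (i) undo the entanglement of $|\Phi_{0,0}\rangle$, disentangling it to a fixed product state like $|0\rangle|0\rangle$, and (ii) arrange that the residual $X^a Z^b$ label on the first register gets read out as $(a,b)$ in the computational basis. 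Concretely, I would: first apply the permutation $|k\rangle \mapsto |{-k}\rangle$ (negation mod $d$) to the second register, turning $|\Phi_{0,0}\rangle$ into $\tfrac{1}{\sqrt d}\sum_j|j\rangle|j\rangle$; then apply a controlled-subtraction $|j\rangle|k\rangle \mapsto |j\rangle|k-j\rangle$ from the first register into the second, which on $|\Phi_{a,b}\rangle$ produces a state of the form $\big(X^a Z^b|{\cdot}\rangle\big)\otimes$ (a fixed single-register state in the second system); then apply the inverse $d$-dimensional QFT to the first register to convert the $Z^b$ phase into a shift by $b$ and the $X^a$ shift into a phase, so that a computational-basis readout of the (now unentangled) first register, together with the fixed-value second register, yields the pair $(a,b)$.

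The key steps in order: (1) Establish the three modular arithmetic / QFT primitives on a $d$-dimensional register each run in time $\poly(\log d)$: modular negation $|k\rangle\mapsto|{-k \bmod d}\rangle$, controlled modular subtraction $|j\rangle|k\rangle\mapsto|j\rangle|{(k-j)\bmod d}\rangle$, and the quantum Fourier transform $\mathrm{QFT}_d$ over $\mathbb{Z}_d$ — all of which are standard (e.g.\ the QFT over $\mathbb{Z}_d$ via the Kitaev phase-estimation / Shor-style construction, or exactly when $d=2^m$). (2) Track the action of the composed circuit on a generic $|\Phi_{a,b}\rangle$ using Equation~\eqref{Phiab} and the commutation relation $Z^p X^q = \omega^{qp} X^q Z^p$ of Equation~\eqref{eq:XZ_comm}, verifying that the composition sends $|\Phi_{a,b}\rangle$ to (a global phase times) $|a'\rangle\otimes|b'\rangle$ for an affine-linear, efficiently-invertible relabeling $(a,b)\mapsto(a',b')$. (3) Conclude that measuring in the computational basis and inverting the relabeling classically realizes the $\{|\Phi_{a,b}\rangle\}$ measurement, noting the classical inversion is trivially $\poly(\log d)$. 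Since $d$ is encoded in $\mathcal{O}(\log d)$ qubits and every gate used is from the standard efficient gate set, the total cost is $\poly(\log d)$.

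The main obstacle — really the only nontrivial point — is implementing the $d$-dimensional QFT and modular arithmetic efficiently when $d$ is an arbitrary integer (not a power of two), since the "register" is then a subspace of $\lceil\log_2 d\rceil$ qubits and one must be careful that the arithmetic is performed mod $d$ rather than mod $2^{\lceil\log_2 d\rceil}$. This is handled by appealing to known results: approximate $\mathrm{QFT}_{\mathbb{Z}_d}$ to inverse-polynomial precision in $\poly(\log d)$ gates (Kitaev; Hales–Hallgren), and reversible modular addition/negation circuits of size $\poly(\log d)$. I would state this reliance explicitly and, if desired, remark that in the bosonic-simulation regime one may simply take $d = 2^m$, in which case all three primitives are textbook exact $\poly(m)$ circuits and the argument is completely elementary. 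The remaining verification in step (2) is a short, mechanical computation with clock-and-shift operators and carries no real difficulty.
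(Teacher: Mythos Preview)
Your approach is correct and essentially the same as the paper's: both disentangle the generalized Bell basis using a controlled modular addition together with a $d$-dimensional QFT, then measure in the computational basis. The paper's circuit is slightly cleaner---it shows directly that $(CX)^{-1}(\mathbbm{1}\otimes W)\,|a\rangle|b\rangle = |\Phi_{a,b}\rangle$ (with $CX:|j\rangle|l\rangle\mapsto|j+l\rangle|l\rangle$ and $W$ the QFT on the second register), so inverting gives $(a,b)$ with no extra negation step or classical relabeling---but your version works just as well, and your explicit discussion of implementing $\mathrm{QFT}_d$ and modular arithmetic for non-power-of-two $d$ is a point the paper glosses over.
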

\begin{proof}
The \emph{quantum Fourier transform} $W$ in $d$ dimensions is the map
\begin{equation}
W : |b\rangle \rightarrow \frac{1}{\sqrt{d}} \sum_j e^{-i 2\pi b j / d} |j\rangle
\end{equation}
with matrix representation
\begin{equation}
W = \frac{1}{\sqrt{d}}
\begin{pmatrix}
1 & 1 & 1 & \dots & 1 \\
1 & \omega^{-1} & \omega^{-2} & \dots & \omega^{-(d-1)} \\
1 & \omega^{-2} & \omega^{-4} & \dots & \omega^{-2(d-1)} \\
\vdots & \vdots & \vdots & \ddots & \vdots \\
1 & \omega^{-(d-1)} & \omega^{-2(d-1)} & \dots & \omega^{-(d-1)^2} \\
\end{pmatrix}
\end{equation}
$W$ is a unitary. In relation to $X,Z$, it has the properties
\begin{align}
W X W^\dag &= Z \\
W Z W^\dag &= X^\dag
\end{align}
If we encode the $d$-dimensional system in $\mathcal{O}(\log{d})$ qubits, then $W$ can be implemented in time $\poly{\log{d}}$.

Define the \emph{controlled shift} operator $CX$ on two $d$-dimensional systems to act by
\begin{equation}
CX : |j\rangle |l\rangle \rightarrow |j+l\rangle |l\rangle
\end{equation}

$W$ and $CX$ together can be used to transform the standard product basis on two $d$-dimensional systems to the entangled $\{|\Phi_{a,b}\rangle\}$ basis.
\begin{equation}
(CX)^{-1} \cdot (\mathbbm{1} \otimes W) \cdot |a\rangle |b\rangle = |\Phi_{a,b}\rangle
\end{equation}
This lets us implement a measurement in the basis $\{|\Phi_{a,b}\rangle\}$ by first inverting this transformation and then measuring in the standard product basis.
\end{proof}

\begin{figure}[H]
\centering
\begin{subfigure}{.5\textwidth}
  \centering
  \includegraphics[width=3in]{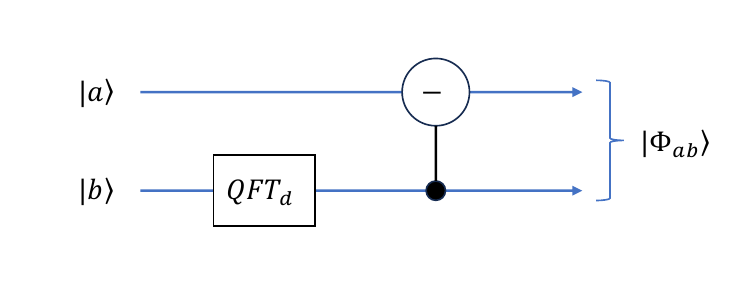}
  \caption{Transforming between bases}
\end{subfigure}%
\begin{subfigure}{.5\textwidth}
  \centering
  \includegraphics[width=3in]{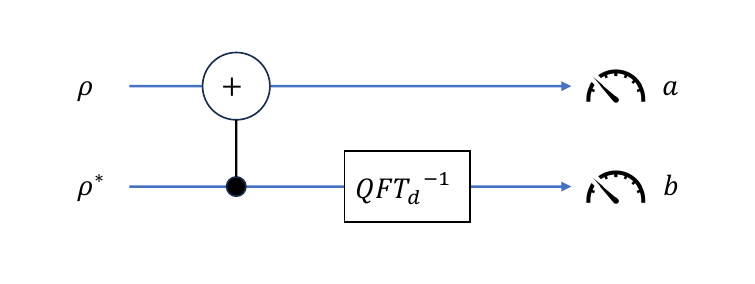}
  \caption{Algorithm \ref{alg:disp}}
\end{subfigure}
\caption{}
\label{plt_prod}
\end{figure}

\subsection{$n$ subsystems} \label{sec:n_bosons}

Suppose now we have $n$ $d$-dimensional subsystems, with Hilbert space $(\mathbb{C}^d)^{\otimes n}$. We can promote $q,p,a,b$ to vectors $\vec{q},\vec{p},\vec{a},\vec{b} \in \mathbb{Z}_d^n$ and define
\begin{align}
D_{\vec{q},\vec{p}} &= D_{q_1,p_1} \otimes \dots \otimes D_{q_n,p_n} \\
|\Phi_{\vec{a},\vec{b}}\big\rangle &= |\Phi_{a_1,b_1}\rangle \otimes \dots \otimes |\Phi_{a_n,b_n}\rangle
\end{align}

All statements in Proposition \ref{prop:disp_properties} hold with $q,p$ replaced with the vectors $\vec{q},\vec{p}$. In the commutation relation Equation \ref{eq:D_comm}, the products $qp',q'p$ are to be replaced with dot products $\vec{q}\cdot\vec{p'}, \ \vec{q'}\cdot\vec{p}$.

Following Proposition \ref{prop:disp_basis}, $\{D_{\vec{q},\vec{p}}\}$ form a basis of the $(d^n \times d^n)$-dimensional space of operators on $(\mathbb{C}^d)^{\otimes n}$, orthogonal in the Hilbert-Schmidt inner product.
\begin{equation}
\Tr\left(D_{\vec{q},\vec{p}} D_{\vec{q'},\vec{p'}}\right) = d \cdot \delta_{q_1,q'_1} \delta_{p_1,p'_1} \dots \delta_{q_n,q'_n} \delta_{p_n,p'_n}
\end{equation}
We have the Bloch vector decomposition
\begin{equation}
\rho = \frac{1}{d^n} \Big(\mathbbm{1} + \sum_{\vec{q},\vec{p}} y_{\vec{q},\vec{p}} D^\dag_{\vec{q},\vec{p}}\Big) \quad , \quad y_{\vec{q},\vec{p}} = \Tr\left(D(\vec{q},\vec{p}) \rho\right)
\end{equation}

The analog of Equation \ref{eq:estimator} is
\begin{equation} \label{eq:estimator_n}
y_{\vec{q},\vec{p}}^2 = \mathbb{E}_{(\vec{a},\vec{b})} \exp\left(i 2\pi (\vec{a} \cdot \vec{p} - \vec{b} \cdot \vec{q}) / d\right)
\end{equation}

\begin{theorem}
The natural generalization of Algorithm \ref{alg:disp} to $n$ qudits has the same guarantee as in Theorem $\ref{thm:disp_alg}$, and likewise the measurements can be implemented in time $\poly{\log{d}}$.
\end{theorem}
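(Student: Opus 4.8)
The plan is to reduce everything to the single-qudit analysis of Theorem~\ref{thm:disp_alg} by exploiting tensor-product structure. First I would note that, since $D_{\vec q,\vec p} = D_{q_1,p_1}\otimes\cdots\otimes D_{q_n,p_n}$ and each single-mode pair $D_{q_i,p_i}\otimes D_{-q_i,p_i}$ mutually commutes on $\mathbb{C}^d\otimes\mathbb{C}^d$ (by Equation~\ref{eq:D_comm}), the $2n$-qudit operators $D_{\vec q,\vec p}\otimes D_{-\vec q,\vec p}$ — where $-\vec q$ denotes componentwise negation, so that $D^T_{\vec q,\vec p}=D_{-\vec q,\vec p}$ by Equation~\ref{eq:D_transpose} applied modewise — all commute, being tensor products of commuting factors acting on disjoint pairs of modes. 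Consequently $\{|\Phi_{\vec a,\vec b}\rangle = |\Phi_{a_1,b_1}\rangle\otimes\cdots\otimes|\Phi_{a_n,b_n}\rangle\}$ is an orthonormal basis of $(\mathbb{C}^d)^{\otimes n}\otimes(\mathbb{C}^d)^{\otimes n}$ and is a common eigenbasis, with eigenvalue $\prod_i e^{i2\pi(a_ip_i-b_iq_i)/d} = e^{i2\pi(\vec a\cdot\vec p - \vec b\cdot\vec q)/d}$ by Equation~\ref{eq:DD_phi} applied modewise.

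Second, placing $\rho\otimes\rho^\ast$ into the two $n$-qudit registers, the modewise version of Equation~\ref{eq:DD_rho} gives
\begin{equation}
\Tr\big((D_{\vec q,\vec p}\otimes D_{-\vec q,\vec p})(\rho\otimes\rho^\ast)\big) = \Tr(D_{\vec q,\vec p}\rho)\,\Tr(D^T_{\vec q,\vec p}\rho^\ast) = y_{\vec q,\vec p}^2,
\end{equation}
using that $\rho$ is Hermitian, so $\rho^\ast=\rho^T$ and $\Tr(D^T_{\vec q,\vec p}\rho^T)=\Tr(D_{\vec q,\vec p}\rho)$. Hence a measurement of $\rho\otimes\rho^\ast$ in the $\{|\Phi_{\vec a,\vec b}\rangle\}$ basis yields an outcome $(\vec a,\vec b)$ for which $\exp(i2\pi(\vec a\cdot\vec p-\vec b\cdot\vec q)/d)$ is a unit-modulus complex random variable with expectation $y_{\vec q,\vec p}^2$, which is Equation~\ref{eq:estimator_n}. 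Averaging over $N$ copies, $\hat v_{\vec q,\vec p}=\tfrac1N\sum_{k=1}^N\exp(i2\pi(\vec a^{(k)}\cdot\vec p-\vec b^{(k)}\cdot\vec q)/d)$ is an empirical mean of bounded i.i.d.\ complex variables, and Hoeffding's inequality in the complex plane together with a union bound over the $M=|\mathcal S|$ requested indices gives $|\hat v_{\vec q,\vec p}-y_{\vec q,\vec p}^2|\le\tfrac13\varepsilon^2$ simultaneously for all of them with high probability once $N=\mathcal O(\log M/\varepsilon^4)$. From here the square-root/thresholding step and the Case~1/Case~2 analysis go through \emph{verbatim} as in the proof of Theorem~\ref{thm:disp_alg}, since they only use the bound $|\hat v - y^2|\le\varepsilon^2/3$ and the factorization $|\hat u^2-y^2|=|\hat u+y|\cdot|\hat u-y|$, neither of which is sensitive to the number of modes.

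For the implementation claim, I would note that the basis change $|\vec a,\vec b\rangle\mapsto|\Phi_{\vec a,\vec b}\rangle$ factorizes over the $n$ mode-pairs as a tensor product of copies of $(CX)^{-1}(\mathbbm{1}\otimes W)$; encoding each $d$-dimensional mode in $\mathcal O(\log d)$ qubits, each $W$ and each $CX$ costs $\poly(\log d)$, so the whole circuit followed by a computational-basis measurement costs $\poly(n,\log d)$ per copy — i.e.\ $\poly(\log d)$ for constant $n$. I do not expect a substantive obstacle here: the theorem is a direct tensorization of the $n=1$ case, and every ingredient (commutation of the $D_{\vec q,\vec p}\otimes D_{-\vec q,\vec p}$, orthonormality and eigenstructure of $\{|\Phi_{\vec a,\vec b}\rangle\}$, the $y^2$ estimator, the concentration bound, the circuit) decomposes modewise. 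The only points requiring a little care are (i) taking the union bound over the $M$ \emph{requested} indices rather than all $d^{2n}$ displacement labels, so that the sample complexity stays $\mathcal O(\log M/\varepsilon^4)$ with no hidden dependence on $n$, and (ii) noting that the ``$\poly(\log d)$'' measurement bound treats $n$ as a constant, the honest bound otherwise being $\poly(n,\log d)$.
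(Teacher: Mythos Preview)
Your proposal is correct and matches the paper's intended argument: the paper does not give an explicit proof of this theorem, but sets up exactly the tensorized objects $D_{\vec q,\vec p}$, $|\Phi_{\vec a,\vec b}\rangle$, and the estimator in Equation~\ref{eq:estimator_n}, leaving the verification as a direct modewise extension of Theorem~\ref{thm:disp_alg} and its implementation proposition. Your observation that the honest measurement-time bound is $\poly(n,\log d)$ rather than $\poly(\log d)$ (so the stated bound implicitly treats $n$ as constant) is a fair caveat that the paper glosses over.
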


\subsection{Learning algorithm in infinite dimensions}

Using Gaussian unitaries, we can phrase Algorithm \ref{alg:disp} in the infinite-dimensional setting. Step 1 of Algorithm \ref{alg:disp} is to measure $\rho \otimes \rho^\ast$ in the generalized Bell basis $\{|\Phi_{a,b}\rangle\}$. This is achieved by applying a controlled shift Gaussian operation, followed by \emph{homodyne measurements} along certain quadratures. Let the position and momentum variables of the first and second register be $q_1,p_1,q_2,p_2$.

\begin{figure}[H]
\centering
  \includegraphics[width=4in]{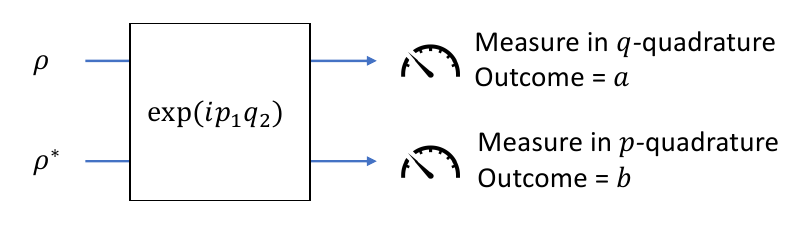}
  \caption{Algorithm \ref{alg:disp} in infinite dimensions}
  \label{fig:cts_alg}
\
\end{figure}

On two modes, the controlled shift corresponds to the Gaussian unitary
\begin{equation}
CX = \exp{i p_1 q_2}
\end{equation}
It has symplectic matrix
\begin{equation}
CX = U_S \quad , \quad S =
\begin{pmatrix}
    1 & 0 & 0 & 0 \\
    1 & 1 & 0 & 0 \\
    0 & 0 & 1 & -1 \\
    0 & 0 & 0 & 1
\end{pmatrix}.
\end{equation}

In the Heisenberg picture, the procedure shown in Figure \ref{fig:cts_alg} transforms the variables $q_1, p_2$ to the observables
\begin{equation}
a = q_1 + q_2 \quad , \quad b = - p_1 + p_2
\end{equation}

We finish the procedure by measuring the first register in the $q$ quadrature, and the second register in the $p$ quadrature. Note measuring in the $p$ quadrature is equivalent to doing a quantum Fourier transform, or phase shift, before measuring in the $q$ quadrature.

Define
\begin{equation}
\alpha = (a,b)
\end{equation}
The estimator for the learning algorithm has a nice expression in terms of the symplectic product. The analogue of Equation \ref{eq:estimator} is
\begin{equation}
y_{x}^2 = \mathbb{E}_{\alpha} \exp\left(i 2 \alpha^T J x\right)
\end{equation}

An example of a class of states where the displacement amplitudes are non-trivial are the GKP codestates. These are the states stabilized by two displacement operators $D_{x}$ and $D_{x'}$ for some $x$ and $x'$ satisfying $x^T J x' = \pi$.

\section{Determining signs} \label{sec:determining_signs}

\subsection{Determining signs of displacement operators}

Here we describe a method to resolve the signs $\pm$ with an additional $\mathcal{O}(\log{M} / \varepsilon^4)$ copies of $\rho$. The method only consumes a single copy of $\rho$ at a time. The ideas in this section are taken from a different work, currently in preparation \cite{inprep}.

The intuition behind the algorithm is that the ability to measure $|x|$ and $|x + \varepsilon|$ should determine $x$; that is, we have estimated the magnitude with a shifted origin. The algorithm's strategy is to find a hypothesis quantum state $\tilde{\rho}$ which it can use to shift the origin of the magnitude measurements. In order to find $\tilde{\rho}$, we will use Algorithm \ref{alg:hypo_state} in Section \ref{sec:hypo_state} below, which in turn relies on results from \cite{arora2007combinatorial, aaronson2018online}.

\begin{algorithm}[H] \caption{Determining the signs of displacement amplitudes.} \label{alg:disp_signs}

\textbf{Input:}
\begin{itemize}
    \item A list of $M$ displacement indices $\mathcal{S} \in \mathbb{Z}_d^2$, $|\mathcal{S}| = M$.
    \item Precision $\varepsilon$.
    \item $\mathcal{O}((\log{d} + \log{M}) / \varepsilon^4)$ copies of unknown state $\rho$.
    \item The output of Algorithm \ref{alg:disp} for $\mathcal{S}$ and $\rho$ and precision $\varepsilon / 2$.
\end{itemize}

\medskip
\textbf{Output:} For each $(q,p) \in \mathcal{S}$ such that $\hat{u}_{q,p} \neq 0$, output a sign $\hat{r}_{q,p}$ so that $|\hat{r}_{q,p} \hat{u}_{q,p} - y_{q,p}| \leq \varepsilon$. ($\hat{u}_{q,p}$ is the output of Algorithm \ref{alg:disp}.)

\medskip
\textbf{Algorithm:}
\begin{enumerate}
    \item Denote $\mathcal{T} = \{(q,p) : \hat{u}_{q,p} \neq 0\}$. We must output a sign $\hat{r}_{q,p}$ for each $(q,p) \in \mathcal{T}$.
    \item Using Algorithm \ref{alg:hypo_state}, find a classical description of a density matrix $\tilde{\rho}$ satisfying
    \begin{equation}
    |\Tr\left(D_{q,p} \tilde{\rho}\right) - s_{q,p} \hat{u}_{q,p}| \leq \frac{\sqrt{2}}{2\sqrt{3}} \varepsilon \quad \forall \ (q,p) \in \mathcal{T}
    \end{equation}
    for some choices of signs $s_{q,p} \in \{+1,-1\}$. This requires $\mathcal{O}(\log{d} / \varepsilon^4)$ single-copy measurements of $\rho$. Using the classical description of $\tilde{\rho}$, we are able to manufacture copies of its complex conjugate state $\tilde{\rho}^\ast$ on our quantum computer.
    \item Measure $N = \mathcal{O}(\log{M} / \varepsilon^4)$ copies of $\rho \otimes \tilde{\rho}^\ast$ in the basis $\{|\Phi_{a,b}\rangle\}$, receiving outcomes $\{(a^{(k)},b^{(k)})\}_{k=1}^N$.
    \item Now given $(q,p) \in \mathcal{T}$, compute
    \begin{equation}
    \hat{v}_{q,p} = \frac{1}{N} \sum_{k=1}^N \exp\left(i 2\pi (a^{(k)}p - b^{(k)}q) / d\right).
    \end{equation}
    \item Output
    \begin{equation}
    \hat{r}_{q,p} =
    \begin{cases}
        +s_{q,p} & |\arg\left(\hat{v}_{q,p}\right) - 2 \arg\left(\hat{u}_{q,p}\right)| \leq \pi/2 \\
        -s_{q,p} & |\arg\left(\hat{v}_{q,p}\right) - 2 \arg\left(\hat{u}_{q,p}\right)| > \pi/2
    \end{cases}
    \end{equation}
\end{enumerate}
\end{algorithm}

\begin{remark}
The runtime of Algorithm \ref{alg:disp_signs} is polynomial in the dimension $d$ of the Hilbert space and in $\varepsilon^{-1}$.
\end{remark}

\begin{theorem}
Algorithm \ref{alg:disp_signs} succeeds with high probability.
\end{theorem}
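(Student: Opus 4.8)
The plan is to show that each of the three randomized/approximate subroutines in Algorithm~\ref{alg:disp_signs} succeeds with high probability, and that when they all succeed the output sign $\hat r_{q,p}$ satisfies $|\hat r_{q,p}\hat u_{q,p}-y_{q,p}|\le\varepsilon$. First I would fix attention on a single index $(q,p)\in\mathcal T$; a union bound over the $M$ indices (and over the $\poly$-many constraints inside the matrix multiplicative weights routine) converts the per-index guarantees into the claimed global one, at the cost of the $\log M$, $\log d$ factors already present in the sample counts. By hypothesis the output of Algorithm~\ref{alg:disp} at precision $\varepsilon/2$ gives us, for each $(q,p)\in\mathcal T$, a number $\hat u_{q,p}$ with $|{\pm}\hat u_{q,p}-y_{q,p}|\le\frac{\sqrt2}{4\sqrt3}\varepsilon$ for one sign; write $y_{q,p}=r_{q,p}\hat u_{q,p}+\eta_{q,p}$ with $r_{q,p}\in\{\pm1\}$ the true sign and $|\eta_{q,p}|\le\frac{\sqrt2}{4\sqrt3}\varepsilon$. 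The goal is exactly to recover $r_{q,p}$.

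Next I would analyze Step 3--4. Exactly as in Equation~\ref{eq:DD_rho}, measuring $\rho\otimes\tilde\rho^\ast$ in the basis $\{\ket{\Phi_{a,b}}\}$ and forming $\hat v_{q,p}$ gives an unbiased estimator of $\Tr(D_{q,p}\rho)\Tr(D_{q,p}^T\tilde\rho^\ast)=\Tr(D_{q,p}\rho)\Tr(D_{q,p}\tilde\rho)=y_{q,p}\,\tilde y_{q,p}$, where $\tilde y_{q,p}=\Tr(D_{q,p}\tilde\rho)$. Hoeffding in the complex plane with $N=\mathcal O(\log M/\varepsilon^4)$ gives $|\hat v_{q,p}-y_{q,p}\tilde y_{q,p}|\le c\varepsilon^2$ for a small constant $c$, uniformly over $\mathcal T$, with high probability. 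By Step~2 (the guarantee furnished by Algorithm~\ref{alg:hypo_state}, which I take as given from \cite{inprep, arora2007combinatorial, aaronson2018online}), $\tilde y_{q,p}=s_{q,p}\hat u_{q,p}+\xi_{q,p}$ with $|\xi_{q,p}|\le\frac{\sqrt2}{2\sqrt3}\varepsilon$. Substituting, $\hat v_{q,p}\approx y_{q,p}\tilde y_{q,p}=r_{q,p}s_{q,p}\hat u_{q,p}^2+(\text{small})$. Since $|\hat u_{q,p}|\ge\frac{\sqrt2}{\sqrt3}\cdot\frac{\varepsilon}{2}$ is bounded below, the ``small'' additive error is a controlled fraction of $\hat u_{q,p}^2$, so $\arg(\hat v_{q,p})$ is within, say, $\pi/4$ of $\arg(r_{q,p}s_{q,p}\hat u_{q,p}^2)=\arg(r_{q,p}s_{q,p})+2\arg(\hat u_{q,p})$. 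Hence the test in Step~5 — comparing $|\arg(\hat v_{q,p})-2\arg(\hat u_{q,p})|$ against $\pi/2$ — correctly decides whether $r_{q,p}s_{q,p}=+1$ or $-1$, i.e.\ outputs $\hat r_{q,p}=r_{q,p}$. Finally, with the correct sign, $|\hat r_{q,p}\hat u_{q,p}-y_{q,p}|=|\eta_{q,p}|\le\varepsilon$, as required; I should double-check the constants so the various $\frac{\sqrt2}{2\sqrt3}\varepsilon$-sized errors and the Hoeffding slack genuinely fit inside the $\pi/4$ angular margin and the final $\varepsilon$ bound, tightening $N$ by a constant if needed.

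The main obstacle is Step~2: producing the hypothesis state $\tilde\rho$ whose displacement amplitudes match $s_{q,p}\hat u_{q,p}$ on all of $\mathcal T$ to within $\frac{\sqrt2}{2\sqrt3}\varepsilon$, using only $\mathcal O(\log d/\varepsilon^4)$ single-copy measurements of $\rho$ and $\poly(d,\varepsilon^{-1})$ classical time. This is where the matrix multiplicative weights / online learning machinery of \cite{inprep, arora2007combinatorial, aaronson2018online} enters — one runs the online learner over density matrices, at each round using a constraint violated by the current hypothesis (detected via magnitude estimates from Algorithm~\ref{alg:disp}-type measurements with the origin shifted by the current hypothesis) as the feedback observable, and the regret bound caps the number of rounds, hence the sample complexity. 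Since the paper explicitly defers these ideas to \cite{inprep} and I am permitted to assume earlier-stated results, I would invoke the guarantee of Algorithm~\ref{alg:hypo_state} as a black box and concentrate the written proof on the reduction above: the unbiasedness of $\hat v_{q,p}$, the Hoeffding concentration, the angular-margin argument, and the union bound. The only genuinely delicate point in the part I must prove is verifying that the ``shifted origin'' idea really does pin down the sign — that two magnitude-type measurements, one centered at $0$ and one centered near $\hat u_{q,p}$, have products whose phase distinguishes $r_{q,p}=+1$ from $r_{q,p}=-1$ — which is precisely what the computation $\hat v_{q,p}\approx r_{q,p}s_{q,p}\hat u_{q,p}^2$ makes explicit.
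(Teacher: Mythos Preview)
Your proposal is essentially correct and follows the same overall route as the paper: show $\hat v_{q,p}$ is an unbiased estimator of $y_{q,p}\tilde y_{q,p}$, apply Hoeffding, and use an angular-margin argument to decide the sign via the $\pi/2$ threshold. The one technical step that does not go through as written is your plan to control $\arg(\hat v_{q,p})$ by bounding $|y_{q,p}\tilde y_{q,p}-r_{q,p}s_{q,p}\hat u_{q,p}^2|$ as a small fraction of $|\hat u_{q,p}|^2$. With the stated constants this perturbation can actually exceed $|\hat u_{q,p}|^2$ (e.g.\ $|\hat u\,\xi|$ alone is already comparable to $|\hat u|^2$ at the boundary), so the ``controlled fraction'' claim fails, and tightening $N$ does not help since this error is independent of the Hoeffding step.

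The paper fixes this by bounding the arguments of the \emph{factors} separately rather than the perturbation of the product: from $|y_{q,p}-r_{q,p}\hat u_{q,p}|\le\frac12|\hat u_{q,p}|$ one gets $|\arg(y_{q,p})-\arg(r_{q,p}\hat u_{q,p})|\le\pi/6$, and similarly for $\tilde y_{q,p}$; summing gives $|\arg(y_{q,p}\tilde y_{q,p})-\arg(r_{q,p}s_{q,p}\hat u_{q,p}^2)|\le\pi/3$ in the matching case and $\ge 2\pi/3$ otherwise. Then a lower bound $|y_{q,p}\tilde y_{q,p}|\ge\varepsilon^2/6$ converts the Hoeffding additive error into a further $\pi/6$ angular error, so the total stays strictly below (resp.\ above) $\pi/2$. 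Everything else in your sketch matches the paper.
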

\begin{proof}
Denote $\tilde{y}_{q,p} = \Tr\left(D_{q,p} \tilde{\rho}\right)$. By design, $|\tilde{y}_{q,p} - s_{q,p} \hat{u}_{q,p}| \leq \frac{\sqrt{2}}{2\sqrt{3}} \varepsilon \ \forall (q,p) \in \mathcal{T}$. Recall that $|\hat{u}_{q,p}| \geq \frac{\sqrt{2}}{\sqrt{3}} \varepsilon \ \forall (q,p) \in \mathcal{T}$. This means that
\begin{equation}
|\arg\left(\tilde{y}_{q,p}\right) - \arg\left(s_{q,p} \hat{u}_{q,p}\right)| \leq \pi/6 \ \forall (q,p) \in \mathcal{T}
\end{equation}
Recall that $|y_{q,p} - r_{q,p} \hat{u}_{q,p}| \leq \frac{\sqrt{2}}{2\sqrt{3}} \varepsilon \ \forall (q,p) \in \mathcal{T}$ for one of the choices of sign $r_{q,p} \in \{+1,-1\}$. It is also the case that
\begin{equation}
|\arg\left(y_{q,p}\right) - \arg\left(r_{q,p} \hat{u}_{q,p}\right)| \leq \pi/6 \ \forall (q,p) \in \mathcal{T}
\end{equation}
If we guessed correctly with $\tilde{\rho}$ and $s_{q,p} = r_{q,p}$, then
\begin{equation}
s_{q,p} = r_{q,p} \ \implies \ |\arg\left(y_{q,p}\right) + \arg\left(\tilde{y}_{q,p}\right) - 2\arg\left(u_{q,p}\right)| \leq \pi/3
\end{equation}
On other other hand, if we guessed incorrectly and $s_{q,p} = - r_{q,p}$, then
\begin{equation}
s_{q,p} = -r_{q,p} \ \implies \ |\arg\left(y_{q,p}\right) + \arg\left(\tilde{y}_{q,p}\right) - 2\arg\left(u_{q,p}\right)| \geq 2\pi/3
\end{equation}
Thus to determine $r_{q,p}$, it is sufficient to estimate $\arg\left(y_{q,p}\right) + \arg\left(\tilde{y}_{q,p}\right) = \arg\left(y_{q,p} \tilde{y}_{q,p}\right)$.

Now consider the product $y_{q,p} \tilde{y}_{q,p}$ for $(q,p) \in \mathcal{T}$. By triangle inequalities,
\begin{align}
|y_{q,p} \tilde{y}_{q,p}| &= |y_{q,p}| \cdot |\tilde{y}_{q,p}| \\
&\geq (|\hat{u}_{q,p}| - |y_{q,p} - r_{q,p} \hat{u}_{q,p}|) \cdot (|\hat{u}_{q,p}| - |\tilde{y}_{q,p} - s_{q,p} u_{q,p}|) \\
&\geq \left( \frac{\sqrt{2}}{\sqrt{3}} \varepsilon - \frac{\sqrt{2}}{2\sqrt{3}} \varepsilon \right)^2 = \frac{1}{6} \varepsilon^2
\end{align}

Suppose that in Step 3 we measure $\rho \otimes \tilde{\rho}^\ast$ in the basis $\{|\Phi_{a,b}\rangle\}$, and get the distribution $(a,b) \sim \mathcal{P}$. This gives an unbiased estimator of $y_{q,p} \tilde{y}_{q,p}$.
\begin{align}
y_{q,p} \tilde{y}_{q,p} &= \Tr\left(D_{q,p} \rho\right) \Tr\left(D_{q,p} \tilde{\rho}\right) \nonumber\\
&= \Tr\left(D_{q,p} \rho\right) \Tr\left(D_{-q,p} \tilde{\rho}^\ast\right) \nonumber\\
&= \Tr\left((D_{q,p} \otimes D_{-q,p}) (\rho \otimes \tilde{\rho}^\ast)\right) \nonumber\\
&= \mathbb{E}_{(a,b) \sim \mathcal{P}} \ \Tr\left((D_{q,p} \otimes D_{-q,p}) |\Phi_{a,b}\rangle\langle\Phi_{a,b}|\right) \nonumber\\
&= \mathbb{E}_{(a,b) \sim \mathcal{P}} \ \exp\left(i 2\pi (ap - bq) / d\right)
\end{align}
On the second line used Equation \ref{eq:D_transpose}. On the fourth line, we decohered in the simultaneous eigenbasis of the operators $\{D_{q,p} \otimes D_{-q,p}\}$, as in Algorithm \ref{alg:disp}. In the final line, we used Equation \ref{eq:DD_phi}.

Applying Hoeffding's inequality in the complex plane tells us that with $N = \mathcal{O}(\log{M} / \varepsilon^4)$ copies,
\begin{equation}
\big|\hat{v}_{q,p} - y_{q,p} \tilde{y}_{q,p}\big| < \frac{1}{12} \varepsilon^2
\end{equation}
for any $M$ displacement operators $D_{q,p}$ with high probability. Since $|y_{q,p} \tilde{y}_{q,p}| \geq \varepsilon^2 / 6$ for all $(q,p) \in \mathcal{T}$, this guarantees
\begin{equation}
|\arg\left(\hat{v}_{q,p}\right) - \arg\left(y_{q,p} \tilde{y}_{q,p}\right)| < \pi/6 \ \forall (q,p) \in \mathcal{T}
\end{equation}
and thus
\begin{align}
s_{q,p} = r_{q,p} \ \implies \ &|\arg\left(\hat{v}_{q,p}\right) - 2\arg\left(\hat{u}_{q,p}\right)| \\
&\leq |\arg\left(y_{q,p} \tilde{y}_{q,p}\right) - 2\arg\left(\hat{u}_{q,p}\right)| + |\arg\left(\hat{v}_{q,p}\right) - \arg\left(y_{q,p} \tilde{y}_{q,p}\right)| \\
&< \pi/3 + \pi/6 = \pi/2 \\
s_{q,p} = -r_{q,p} \ \implies \ &|\arg\left(\hat{v}_{q,p}\right) - 2\arg\left(\hat{u}_{q,p}\right)| \\
&\geq |\arg\left(y_{q,p} \tilde{y}_{q,p}\right) - 2\arg\left(\hat{u}_{q,p}\right)| - |\arg\left(\hat{v}_{q,p}\right) - \arg\left(y_{q,p} \tilde{y}_{q,p}\right)| \\
&> 2\pi/3 - \pi/6 = \pi/2
\end{align}
\end{proof}

\subsection{Finding hypothesis state} \label{sec:hypo_state}

\begin{algorithm}[H] \caption{Finding hypothesis state.} \label{alg:hypo_state}

\textbf{Input:}
\begin{itemize}
    \item A stream of observables $(E_1,E_2,\dots)$ on Hilbert space of dimension $d$.
    \item $\mathcal{O}(\log{d} / \varepsilon^4)$ copies of unknown state $\rho$.
    \item Estimates $\{\hat{u}_j\}$ satisfying $|r_j \hat{u}_j - \Tr\left(E_j \rho\right)| \leq \varepsilon / 2$ for some choices of signs $r_j \in \{+1,-1\}$.
\end{itemize}

\medskip
\textbf{Output:} A classical description of a density matrix $\tilde{\rho}$ satisfying
\begin{equation}
|\Tr\left(E_j \tilde{\rho}\right) - s_j \hat{u}_j| \leq \varepsilon \ \forall j
\end{equation}
for some choices of signs $s_j \in \{+1,-1\}$.

\medskip
\textbf{Algorithm:}
\begin{enumerate}
    \item Set $T := \lceil 16 \log{d} / \varepsilon^2 \rceil$ and $\beta := \sqrt{\log{d} / T}$.
    \item Initialize $\omega^{(0)} = \mathbbm{1}/d$ and $t = 0$.
    \item For each $j$,
        \subitem Compute $\tilde{y}^{(t)}_j = \Tr\left(E_j \omega^{(t)}\right)$.
        \subitem If $|\tilde{y}^{(t)}_j - \hat{u}_j| \geq \varepsilon$ \emph{and} $|\tilde{y}^{(t)}_j - \hat{u}_j| \geq \varepsilon$, declare ERROR and do:
            \subsubitem Measure $\mathcal{O}(1/\varepsilon^2)$ copies of $\rho$ to measure the sign $r_j \in \{+1,-1\}$ with high probability.
            \subsubitem Set
            \begin{equation}
            M^{(t)} = \frac{1}{2|\tilde{y}_j - r_j \hat{u}_j|} \big((\tilde{y}_j^\ast - r_j \hat{u}_j^\ast) E_j + (\tilde{y}_j - r_j \hat{u}_j) E_j^\dag\big).
            \end{equation}
            \subsubitem Set
            \begin{equation}
            \omega^{(t+1)} := \frac{\exp\left(- \beta \sum_{\tau=1}^t M^{(\tau)}\right)}{\Tr\left(\exp\left(- \beta \sum_{\tau=1}^t M^{(\tau)}\right)\right)}
            \end{equation}
            \subsubitem $t \leftarrow t+1$.
    \item Output $\tilde{\rho} = \omega^{(t)}$.
\end{enumerate}
\end{algorithm}

\begin{remark}
The runtime of Algorithm \ref{alg:hypo_state} is polynomial in the dimension $d$ of the Hilbert space and in $\varepsilon^{-1}$.
\end{remark}

\begin{proposition} \label{prop:MMW}
\emph{(\cite{arora2007combinatorial} Theorem 3.1)}
The choice
\begin{equation}
\omega^{(t+1)} := \frac{\exp\left(- \beta \sum_{\tau=1}^t M^{(\tau)}\right)}{\Tr\left(\exp\left(- \beta \sum_{\tau=1}^t M^{(\tau)}\right)\right)}
\end{equation}
satisfies
\begin{equation}
\sum_{t=1}^T \Tr\left(M^{(t)} \omega^{(t)}\right) - \lambda_{\min} \left(\sum_{t=1}^T M^{(t)}\right) \ \leq \ \beta \sum_{t=1}^T \Tr\left((M^{(t)})^2 \omega^{(t)}\right) + \frac{n}{\beta}
\end{equation}
\end{proposition}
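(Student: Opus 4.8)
The statement to prove is Proposition \ref{prop:MMW}, which is quoted as Theorem 3.1 of \cite{arora2007combinatorial}. Since this is a citation to the matrix multiplicative weights (MMW) regret bound, the natural plan is to reproduce the standard MMW analysis rather than invent something new.

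\textbf{Plan.} The proof proceeds by bounding the potential function $\Phi_t = \Tr\bigl(\exp(-\beta \sum_{\tau=1}^t M^{(\tau)})\bigr)$ from above and below. For the lower bound, I would use $\Phi_T \geq \lambda_{\max}\bigl(\exp(-\beta \sum_{t=1}^T M^{(t)})\bigr) = \exp\bigl(-\beta \lambda_{\min}(\sum_{t=1}^T M^{(t)})\bigr)$, simply because the trace of a positive operator dominates its largest eigenvalue. For the upper bound, I would track the one-step multiplicative change $\Phi_{t+1}/\Phi_t$. The key step is the inequality (the ``Golden--Thompson'' step, or in the self-adjoint bounded case a direct operator inequality) $\Tr\bigl(\exp(A + B)\bigr) \leq \Tr\bigl(\exp(A)\exp(B)\bigr)$, combined with the scalar-function bound $e^{-\beta x} \leq 1 - \beta x + \beta^2 x^2$ valid for $|\beta x|$ not too large, lifted to operators: $\exp(-\beta M^{(t)}) \preceq \mathbbm{1} - \beta M^{(t)} + \beta^2 (M^{(t)})^2$. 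Here one uses that each $M^{(t)}$ is Hermitian with bounded operator norm — which follows from its definition in Algorithm \ref{alg:hypo_state}, since $M^{(t)}$ is a normalized Hermitian combination of $E_j$ and $E_j^\dagger$ with unit-normalized coefficients and the observables are assumed to have bounded norm (one can rescale so $\|M^{(t)}\|_{\op} \leq 1$, which is exactly the regime where $\beta \leq 1$ makes the scalar bound valid).

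\textbf{Key steps in order.} First, establish $\|M^{(t)}\|_{\op} \leq 1$ (or absorb the norm bound into constants) so the scalar inequality $e^{-\beta x} \leq 1 - \beta x + \beta^2 x^2$ applies spectrally. Second, write $\Phi_{t+1} = \Tr\bigl(\exp(-\beta\sum_{\tau=1}^{t} M^{(\tau)})\bigr)$ and apply Golden--Thompson to split off the last factor: $\Phi_{t+1} \leq \Tr\bigl(\exp(-\beta\sum_{\tau=1}^{t-1}M^{(\tau)}) \exp(-\beta M^{(t)})\bigr)$ — being careful with the indexing convention in the statement. Third, substitute the operator inequality for $\exp(-\beta M^{(t)})$ and recognize that $\Tr\bigl(\exp(-\beta\sum_{\tau<t}M^{(\tau)})\, X\bigr) = \Phi_{t-1}\,\Tr(\omega^{(t)} X)$ by the definition of $\omega^{(t)}$, to get $\Phi_{t+1} \leq \Phi_{t-1}\bigl(1 - \beta\Tr(M^{(t)}\omega^{(t)}) + \beta^2 \Tr((M^{(t)})^2 \omega^{(t)})\bigr)$. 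Fourth, use $1 + x \leq e^x$ and telescope the product over $t = 1,\dots,T$, starting from $\Phi_0 = d$ (since $\omega^{(0)} = \mathbbm{1}/d$). Fifth, combine with the lower bound $\Phi_T \geq \exp\bigl(-\beta\lambda_{\min}(\sum_t M^{(t)})\bigr)$, take logarithms, and rearrange; the $\log d = \log \Phi_0$ term divided by $\beta$ produces the $n/\beta$ term (with $n = \log d$ in the paper's notation, or after identifying $d$ with the relevant dimension).

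\textbf{Main obstacle.} The one genuine subtlety is matching conventions: the summation ranges in the statement of Proposition \ref{prop:MMW} (sums from $\tau = 1$ to $t$ inside the exponential defining $\omega^{(t+1)}$, versus sums from $t=1$ to $T$ in the regret bound) must be lined up carefully so the telescoping is clean and so the identity relating traces against $\exp(-\beta \sum M^{(\tau)})$ to expectations against $\omega^{(t)}$ is applied at the correct index. The other place requiring care is the norm bound on $M^{(t)}$: one must check from its explicit formula that $\|M^{(t)}\|_{\op}$ is controlled (it is a Hermitian operator of the form $(c E + \bar c E^\dagger)/\|c\|$ for scalars $c$, so its norm is at most $2\|E_j\|_{\op}$, and under the normalization that observables have operator norm at most $1$ this is a constant) — this is what legitimizes both the spectral scalar inequality and the choice $\beta = \sqrt{\log d / T} \leq 1$. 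Everything else is the textbook MMW computation, so I would keep it terse and cite \cite{arora2007combinatorial} for the details not reproduced.
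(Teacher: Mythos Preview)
Your outline is correct and is precisely the standard matrix multiplicative weights potential argument from \cite{arora2007combinatorial}. The paper itself omits the proof entirely (the proof environment in the paper contains only the word ``Omitted''), so there is nothing to compare against; your sketch of Golden--Thompson plus the spectral scalar inequality $e^{-\beta x} \leq 1 - \beta x + \beta^2 x^2$, telescoped via the potential $\Phi_t$, is exactly the textbook route and would reproduce the stated bound once the indexing convention and the identification $n = \log d$ are pinned down.
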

\begin{proof}
Omitted.
\end{proof}

The choice of $\omega^{(t)}$ in Proposition \ref{prop:MMW} is known as the \emph{matrix multiplicative weights} algorithm, and the left hand side is often referred to as the \emph{regret}. If the matrices are normalized as $||M^{(t)}||_{\op} \leq 1$, we can write the regret bound as
\begin{equation}
\sum_{t=1}^T \Tr\left(M^{(t)} \rho^{(t)}\right) \leq \lambda_{\min}\left(\sum_{t=1}^T M^{(t)}\right) + \beta T + \frac{n}{\beta}
\end{equation}
A common choice is $\beta = \sqrt{n / T}$. In this case we get the bound
\begin{equation}
\sum_{t=1}^T \Tr\left(M^{(t)} \rho^{(t)}\right) \leq \lambda_{\min}\left(\sum_{t=1}^T M^{(t)}\right) + 2 \sqrt{nT}
\end{equation}

\begin{theorem} \label{thm:MMW}
\emph{(\cite{aaronson2018online} Theorem 1, adapted)}
Algorithm \ref{alg:hypo_state} makes at most $T = \lceil 16 \log{d} / \varepsilon^2 \rceil$ errors, regardless of the number of observables $(E_1,E_2,\dots)$ presented.
\end{theorem}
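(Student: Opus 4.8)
The plan is to follow the mistake-counting argument behind online learning of quantum states \cite{aaronson2018online}, treating the matrix multiplicative weights update of Proposition~\ref{prop:MMW} as a black box. The only real work is to verify that the loss matrices $M^{(t)}$ manufactured inside the ERROR clause certify $\Omega(\varepsilon)$ of ``progress'' on each error, in the sense that the current hypothesis $\omega^{(t)}$ incurs noticeably larger loss than the true state $\rho$. Given that, the regret bound of Proposition~\ref{prop:MMW} immediately caps the number of errors at $\mathcal O(\log d/\varepsilon^2)$.

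Concretely, fix a round $t$ at which ERROR is declared for some $E_j$ and write $c = \tilde y_j - r_j\hat u_j$ (with $\tilde y_j = \Tr(E_j\omega^{(t)})$). I would establish three facts. \textbf{(i) Normalisation.} Since $\|E_j\|_{\op}\le 1$, the Hermitian matrix $M^{(t)} = \tfrac{1}{2|c|}\big(\bar c E_j + c E_j^\dag\big)$ has $\|M^{(t)}\|_{\op}\le 1$, so in Proposition~\ref{prop:MMW} each $\Tr\big((M^{(t)})^2\omega^{(t)}\big)\le 1$, and the ``dimension parameter'' $n$ there is $\log d$ (because $\Tr(\omega\log\omega) - \Tr(\omega\log(\mathbbm{1}/d))\le\log d$ for every density matrix $\omega$), consistent with the algorithm's choice $\beta = \sqrt{\log d/T}$. \textbf{(ii) Progress.} Using $\overline{\Tr(A)} = \Tr(A^\dag)$ and Hermiticity of $\omega^{(t)},\rho$ one finds $\Tr(M^{(t)}\sigma) = \re\big(\bar c\,\Tr(E_j\sigma)/|c|\big)$ for any state $\sigma$, hence
\[
\Tr(M^{(t)}\omega^{(t)}) - \Tr(M^{(t)}\rho) = |c| + \re\!\Big(\tfrac{\bar c\,(r_j\hat u_j - \Tr(E_j\rho))}{|c|}\Big) \ \ge \ |c| - \tfrac{\varepsilon}{2} \ \ge \ \tfrac{\varepsilon}{2},
\]
where the first bound uses the input promise $|r_j\hat u_j - \Tr(E_j\rho)|\le\varepsilon/2$ and the second uses that ERROR forces $|c|\ge\varepsilon$ for the correct sign $r_j$ (ERROR requires inconsistency with both of $\pm\hat u_j$, and $r_j$ is determined correctly with high probability from the $\mathcal O(1/\varepsilon^2)$ fresh copies of $\rho$ measured in the ERROR clause, which is valid since the target magnitudes are bounded below in the application). \textbf{(iii) Feasibility of $\rho$.} As $\rho$ is a density matrix, $\sum_t\Tr(M^{(t)}\rho)\ge\lambda_{\min}\big(\sum_t M^{(t)}\big)$.

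Summing (ii) over the $T'$ errors and invoking (iii) gives $\tfrac{\varepsilon}{2}T' \le \sum_t\Tr(M^{(t)}\omega^{(t)}) - \lambda_{\min}\big(\sum_t M^{(t)}\big)$, which by Proposition~\ref{prop:MMW} and (i) is at most $\beta T' + \log d/\beta$ with $\beta = \sqrt{\log d/T}$ and $T = \lceil 16\log d/\varepsilon^2\rceil$. Since $T\ge 16\log d/\varepsilon^2$ forces $\beta\le\varepsilon/4$, the $\beta T'$ term is absorbed into the left side, leaving $\tfrac{\varepsilon}{4}T'\le\log d/\beta = \sqrt{T\log d}$, i.e.\ $T'\le 4\sqrt{T\log d}/\varepsilon\le T$; so at most $\lceil 16\log d/\varepsilon^2\rceil$ errors occur, independent of the length of the stream $(E_1,E_2,\dots)$, and the $\mathcal O(1/\varepsilon^2)$ copies per error sum to the advertised $\mathcal O(\log d/\varepsilon^4)$ copies of $\rho$. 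The step I expect to need the most care is (ii): handling the genuinely complex $\tilde y_j,\hat u_j$ (the $E_j$ are unitary but not Hermitian), and pinning down that the ERROR test — read as ``$|\tilde y_j^{(t)}-\hat u_j|\ge\varepsilon$ and $|\tilde y_j^{(t)}+\hat u_j|\ge\varepsilon$'' — really does imply $|c|\ge\varepsilon$ for the true sign while keeping the sign-learning subroutine's target well separated. Everything else is the standard regret computation.
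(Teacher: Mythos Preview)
Your proposal is correct and follows essentially the same architecture as the paper: instantiate the MMW regret bound of Proposition~\ref{prop:MMW} with the loss matrices from the ERROR clause, substitute the true state $\rho$ to upper-bound $\lambda_{\min}$, and show each error contributes $\Omega(\varepsilon)$ to the regret. Your step (ii) is in fact cleaner than the paper's version: you write $\tilde y_j - \Tr(E_j\rho) = c + (r_j\hat u_j - \Tr(E_j\rho))$ and bound the real part directly to get $|c|-\varepsilon/2\ge\varepsilon/2$, whereas the paper takes a detour through an angle estimate $|\arg(\tilde y_j-r_j\hat u_j)-\arg(\tilde y_j-\Tr(E_j\rho))|\le\pi/6$ before reaching the same $T\varepsilon/2\le 2\sqrt{nT}$ conclusion; both arrive at $T\le 16\log d/\varepsilon^2$.
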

\begin{proof}
Let error number $t$ occur on $E_{j^{(t)}}$. View $T$ as unknown, but nevertheless insist on the relation $\beta = \sqrt{n / T}$. By the regret bound in Proposition \ref{prop:MMW}, we have
\begin{align}
&\sum_{t=1}^T \re\left(\frac{\tilde{y}_{j^{(t)}}^\ast - r_{j^{(t)}} \hat{u}_{j^{(t)}}^\ast}{|\tilde{y}_{j^{(t)}} - r_{j^{(t)}} \hat{u}_{j^{(t)}}|} \tilde{y}_{j^{(t)}}\right) \\
&\leq \lambda_{\min}\left(\sum_{t=1}^T \frac{\tilde{y}_{j^{(t)}}^\ast - r_{j^{(t)}} \hat{u}_{j^{(t)}}^\ast}{2|\tilde{y}_{j^{(t)}} - r_{j^{(t)}} \hat{u}_{j^{(t)}}|} E_{j^{(t)}} + \text{h.c.} \right) + 2 \sqrt{nT} \\
&\leq \sum_{t=1}^T \re\left(\frac{\tilde{y}_{j^{(t)}}^\ast - r_{j^{(t)}} \hat{u}_{j^{(t)}}^\ast}{|\tilde{y}_{j^{(t)}} - r_{j^{(t)}} \hat{u}_{j^{(t)}}|} \Tr\left(E_{j^{(t)}} \rho\right)\right) + 2 \sqrt{nT}
\end{align}
To get the last equation, we substituted in the particular quantum state $\rho$.

Rewriting, we get
\begin{equation} \label{eq:regret_implication}
\sum_{t=1}^T \re\left( \frac{\tilde{y}_{j^{(t)}}^\ast - r_{j^{(t)}} \hat{u}_{j^{(t)}}^\ast}{|\tilde{y}_{j^{(t)}} - r_{j^{(t)}} \hat{u}_{j^{(t)}}|} \Big(\tilde{y}_{j^{(t)}} - \Tr\left(E_{j^{(t)}} \rho\right)\Big) \right) \leq 2 \sqrt{nT}
\end{equation}

Now since $E_{j^{(t)}}$ was an error, we have $|\tilde{y}_{j^{(t)}} - r_{j^{(t)}} \hat{u}_{j^{(t)}}| \geq \varepsilon$. But also $|r_j \hat{u}_j - \Tr\left(E_j \rho\right)| \leq \varepsilon / 2$. From these, we can deduce the following
\begin{align}
&|\arg\left(\tilde{y}_{j^{(t)}} - r_{j^{(t)}} \hat{u}_{j^{(t)}}\right) - \arg\left(\tilde{y}_{j^{(t)}} - \Tr\left(E_{j^{(t)}} \rho\right)\right)| \leq \pi/6 \\
\implies \ &(\tilde{y}_{j^{(t)}}^\ast - r_{j^{(t)}} \hat{u}_{j^{(t)}}^\ast) (\tilde{y}_{j^{(t)}} - \Tr\left(E_{j^{(t)}} \rho\right)) \geq (1 - 1/\sqrt{2}) \cdot |\tilde{y}_{j^{(t)}} - r_{j^{(t)}} \hat{u}_{j^{(t)}}| \cdot |\tilde{y}_{j^{(t)}} - \Tr\left(E_{j^{(t)}} \rho\right)| \label{eq:online1}\\
& \left|\tilde{y}_{j^{(t)}} - \Tr\left(E_{j^{(t)}} \rho\right)\right| \geq \frac{\varepsilon}{2} \label{eq:online2}
\end{align}

Equations \ref{eq:regret_implication} and \ref{eq:online1} lets us deduce
\begin{equation}
\sum_{t=1}^T \left|\Tr\left(E_{j^{(t)}} \omega^{(t)}\right) - \Tr\left(E_{j^{(t)}} \rho\right)\right| \leq 2 \sqrt{nT}
\end{equation}
and finally Equation \ref{eq:online2} gives
\begin{align}
T \cdot \frac{\varepsilon}{2} &\leq 2 \sqrt{nT} \\
\implies \ T &\leq \frac{16 n}{\varepsilon^2}
\end{align}
\end{proof}

\section{Natural States}
\label{app:natural_states}
A key question is where one may encounter pairs of states $\rho$ and $\rho^*$.  Here we discuss potential sources of $\rho$ and $\rho^*$ and some of the challenges and opportunities in different application spaces.  In addition to general systems, we highlight how they relate to types of quantum sensors that have been proposed for different applications.  A key element of this connection is the conceptual connection to the generalized operators we focus on in this text and the exponential overhead in connecting standard Pauli operators to these more natural choices.

\subsection{The power of $\rho^*$ and hardness of producing $\rho^*$ from copies of $\rho$}
\label{app:rho_star_hardness}
The operation of complex conjugation of the entries of a density matrix $\rho$, denoted here by $\rho^*$ is not strictly a physical operation, in part, due to its basis dependence.  As physics should not depend on the basis one represents a problem in, this makes it clear the operation is unphysical.  While this basis dependence does not prevent it from being a useful operation, this non-physicality suggests that it may be difficult in general to produce $\rho^*$ from only copies of an unknown $\rho$, and that there is no efficient quantum operation for the general task.  Here we briefly review existing arguments why this is the case from a computational perspective.

Two input models of interest from the computational perspective are when one has access to copies of the state $\rho$ with no additional information about the state, and when one has blackbox access to a unitary $U$ that prepares a general $\rho$ without access to the source code, or description of the quantum circuit that produces $U$.  Considering first the input model where one may receive copies of $\rho$ for unit cost, it was shown in the proof of Theorem 1 of Ref. ~\cite{cotler2021revisiting}, that any quantum algorithm that can locate the random all-real-entry state among a collection of generally random states requires $\Omega(2^{n/2}) = \Omega(\sqrt{d})$ copies of $\rho$ where $n$ is the number of qubits in the constituent states.  If one could produce copies of $\rho^*$ from a polynomial number of copies of $\rho$, this would allow one to check whether a vector was all real to precision $\epsilon$ in time $O(c(n)/\eps^2)$, where $c(n)$ is the theoretical polynomial cost of complex conjugation on the unknown state $\rho$, and allow solution of the real-vector search problem in a time that is only linear in the number of vectors on top of this.  This is in clear contradiction to the proven lower bound of Ref.~\cite{cotler2021revisiting} and hence polynomial-time conjugation is not possible in the totally general case given only copies of an unknown $\rho$.

In the case of having black box access to $U$ that prepares $\rho$ from a reference state, but not its source code, it was shown in Ref.~\cite{ebler2023optimal} that the optimal procedure to act the complex conjugated unitary map $U^*$ is given by $d-1$ calls to the unitary black box $U$, where $d$ is the dimension of the unitary.  A procedure for producing $\rho^*$ for any given reference state that maps to $\rho$ would be equivalent to implementing $U^*$, and hence a procedure that can do this in less than $d-1=2^n-1$ calls to $U$ for a qubit system with $n$ qubits would be in violation of the lower bound.  As a result, for the most general unknown $\rho$, it is inefficient to produce $\rho^*$ with either blackbox access to copies of $\rho$ or a unitary that prepares $\rho$ from a particular reference state.  In spite of this general limitation, there are many useful cases for which production of $\rho^*$ is efficient, a subset of which we review in subsequent sections.

\subsection{Naturally real density matrices}
As a starting point, a wide class of useful states for which $\rho$ and $\rho^*$ are available, is the somewhat trivial case of a totally real density matrix, where naturally $\rho=\rho^*$.  While this makes the creation and use of $\rho^*$ something of a formality, it helps to highlight the cases where these techniques are applicable in practice.

Consider, for example, any real Hamiltonian $H$.  Real Hamiltonians encompass essentially all of chemistry without magnetic fields~\cite{helgaker2013molecular}, all $XZ$-type spin Hamiltonians, and many real-space Hamiltonians built from spatially local pieces without magnetic fields.  One central object of study for these systems is the density matrix at thermal equilibrium for some temperature $T=1/\beta$, or the Gibbs state, which is given by
\begin{align}
    \rho_\beta = \exp(\beta H) / \mathcal{Z}
\end{align}
where $\mathcal{Z}$ is a real-valued normalization constant.  As the Hamiltonian is real-valued, the Gibbs state is clearly also manifestly real by properties of the exponential function. 
Much effort has been devoted to the creation of these states on quantum computers~\cite{poulin2009sampling,riera2012thermalization,chowdhury2016quantum}, which is often quite expensive.  Hence efficient use of these resources can be quite important in both theory and practice.  This raises the question of when the displacement operators may be the subject of interest in physical systems, and a natural choice is real space systems, i.e.
\begin{align}
    H = -\sum_j \nabla_j^2 / m_j + V(x_1, ..., x_n).
\end{align}
where $x_i$ is a real-valued coordinate and the kinetic energy operator $-\sum_j \nabla_j^2 / m_j$ is taken with respect to a Euclidean geometry along each of the coordinates $x_i$.  This choice of a Euclidean geometry is what marks the displacement operators as the most relevant, as now $X$ will correspond naturally to evolution under the kinetic energy and $Z$ will correspond naturally to evolution under the potential term $V(x_1,...x_n)$.  While implementing this on a quantum computer requires an appropriate discretization step that respects the geometry of the original model, these types of discretization are not uncommon in first quantized simulations of chemistry, which are among some of the most efficient for very large systems on a quantum computer~\cite{su2021fault}.  Indeed, first quantized dynamics are so efficient, that performing them exactly on a quantum computer can be more efficient than comparable mean-field dynamics on a classical computer, and efficient classical shadows methods have been developed in this setting for understanding chemistry in the context of real-space bases~\cite{babbush2023_chem}. 

Another class of real-valued density matrices are those composed of mixtures of eigenstates of real-valued Hamiltonians, superpositions of degenerate eigenstates with real-valued relative phases, their time evolutions, and mixtures therein.  This is because a real-valued Hamiltonian can always be expressed in a real-valued eigenbasis. Even under time evolution, no relative phase accumulates as these states are in equilibrium such that $\partial_t \rho = 0$.  Given some real eigenbasis $\{E_i, \ket{E_i}\}$, this is easy to see from
\begin{align}
    \rho &= \sum_j p_j \ket{E_j} \bra{E_j} \\
    \partial_t \rho &= i[H, \rho] = i E_j \rho - i \rho E_j = 0.
\end{align}
This is important to note, as particular eigenstates, often low energy, are the specific subject of study in many quantum simulation experiments.  Gibbs states are a special case of this more general property, and both play a central role in many areas of chemistry and physics.  Hence for these studies we always expect to have access to $\rho$ and $\rho^*$ through the trivial relation $\rho=\rho^*$, and any benefits from the techniques in this work apply naturally to all of these systems.

\subsection{Time evolution from a real-valued initial state}
A key application of interest in quantum computing is quantum simulation of quantum systems.  The state generated by the time evolution of a physical system can be quite complex and contain interesting physics one wishes to study.  In these cases, even time-independent evolution can be quite interesting and challenging to simulate on a classical computer.  While at a glance this model may seem to be equivalent to having the source code for a quantum evolution, casting it as a physical time evolution allows one to consider schemes for approximate time reversal as an alternative to exact access to source code as has been proposed in applications to NMR and quantum learning~\cite{schuster2023learning}.

In the absence of magnetic fields or other time reversal symmetry breaking terms, it is always possible to choose a basis such that the eigenstates are real-valued.  In such a case, any initial state that is real-valued can be evolved under the real Hamiltonian $H$ as
\begin{align}
    \ket{\psi(t)} = \exp(-i H t) \ket{\psi(0)}.
\end{align}
In such a case, we often have a prescription for either evolving backwards in time $t \rightarrow -t$ or equivalently flipping the signs in the Hamiltonian $H \rightarrow -H$ such that
\begin{align}
    \ket{\psi(-t)} = \exp(i H t) \ket{\psi(0)} = \ket{\psi(t)}^*.
\end{align}
Hence for any time-reversal symmetric Hamiltonian, the ability to reverse time is equivalent to creating the complex conjugate state $\rho^*$, and copies may be produced as needed when the ability to create the initial state and perform the reverse time evolution is available upon demand.  

\subsection{Sensor arrays}
\label{app:sensor_arrays}
Another broad class of quantum sensors naturally related to the displacement operators are quantum sensor arrays, with a specific emphasis on low intensities.  Arrays of sensors in this regime have been discussed for enabling very long baseline optical interferometry for dramatically improved resolution in astronomical imaging~\cite{gottesman2012longer,khabiboulline2019optical,bland2021quantum}.  These sensor arrays are sometimes proposed to be composed of neutral atoms that are tunable across a wide range of frequencies from MHz to THz~\cite{osterwalder1999using,sedlacek2012microwave,holloway2017atom,wade2017real,simons2021rydberg,artusio2022modern} as well as optical single photons~\cite{tresp2016single}, and recent progress in tunable neutral atom arrays may open up other applications with such setups~\cite{bluvstein2023logical}.  Here we will introduce and explore this category of source in the context of conjugate quantum memory.

In the low intensity regime, incoming light is essentially a weak coherent state, but with an average photon number much less than 1.  This means the probability of two-photon and higher events is quite small, and we may assume there is a single incident photon to consider at all locations of the array.  To build intuition, we consider a simplified but reasonable situation physical situation where we obtain $\rho$ and $\rho^*$.

Consider an array of sensors with two accessible levels and an incoming photon with wavevector $k$ coming from a fixed source that repeatedly generates identical photons at some fixed time interval.  This array may be distributed in essentially any configuration, but the simplest and most applicable to the displacement operator setup is a Euclidean 1D array of sensors with a fixed length $L$ and number of equally spaced sensors $n^x$.  Current atomic traps allow neutral atoms to be prepared in high fidelity identical states with long-memory times~\cite{bluvstein2023logical}, and as such as an initial simplifying assumption, we take the detuning with respect to the incident photon to be uniform and the decoherence to be negligible.  Moreover, we consider a spatial extent of light much larger than any array under consideration such that any position dependence of the absorption amplitude is approximately uniform as in Ref.~\cite{gottesman2012longer}.  

Fourier analysis on the uniform grid provides us some insight on the resolution and limitations of measurements we would take.  In particular, the resolution of the wavevector $\Delta k$ is determined by the reciprocal of the physical length of the array itself, that is
\begin{align}
    \Delta k = \frac{1}{L}
\end{align}
which is directly related to imaging resolution in interferometry, and the origin of the desire to create very-long baseline interferometers enabled by quantum technology~\cite{gottesman2012longer}.  In addition, the maximum observable (unaliased) wavevector is determined by the Nyquist-Shannon sampling theorem~\cite{shannon1949communication} derived in this case from the number of points in the interval, or
\begin{align}
    |k| < \frac{1}{2 \Delta x} = \frac{L}{2 n_x}.
\end{align}

For the case of a weak interaction, one may increase the number of absorbers to match complete absorption, or alternatively post-select results on the case that at least 1 photon is present in any complete setup and have the state
\begin{align}
    \ket{\psi} \propto \sum_{x=1}^{n^x - 1} e^{-2 \pi i (k x) / L} \ket{x}
\end{align}
where $\ket{x}$ is the state with the $x$'th absorber excited and the rest 0 and the associated pure state density matrix is $\rho=\ket{\psi}\bra{\psi}$.  

This simple example provides a platform to understand the conceptual merit of working directly with displacement operators as opposed to qubit Pauli operators.  A general application area we are interested in is the use of quantum computers to do minimal processing directly on quantum states (with or without quantum memory).  As such one might expect an (imperfect) transduction and encoding step into $O(\lceil \log n^x \rceil )$ logical qubits for compactness.  The step of encoding the above $W$-like state and manipulations on the bit representation can be done via tools like unary iteration without significant overhead.  Once in this encoding, the effective memory time and number of processing operations without concern about memory or imperfect gate fidelity extends greatly.  It is also possible at that juncture to choose between something like measuring Pauli operators on tensor products $\log n_x $ qubits, or directly measuring the displacement operators living in dimension $n_x$.  

If we consider the cases where our natural operations are measurements of $k-$local operators (in the qubit or qudit sense), the local qubit operators on the above state have a somewhat unnatural representation.  A $Z$ measurement on the first qubit would correspond to occupation in the left or right half of the line, whereas a $Z$ measurement on the last qubit would indicate occupation on evenly or oddly labeled sites within the lattice.  For the $Z$-like displacement operators, one may measure all the qubits transversally and infer the positional value for the operator $D_{0, 1} = Z_{n_x}$.  This draws a stark contrast with the $X$ local qubit operators and the $X$-like displacement operators, such as $D_{1,0}=X_{n_x}$.

In contrast, any local $X$ measurement (obtained via local Hadamard transformations) on select qubits would yield a result in a local momentum basis that reflects the geometry of a hypercube rather than a Euclidean line.  These two coincide only for the unique common ground state of the two operators that is of uniform phase.  Indeed perhaps the biggest conceptual difference between the qubit and qudit pictures of the same line is the mismatch in the geometry implied by the qubit $X$ operators reflecting an implied hypercube geometry and the displacement operators $D_{1,0}=X_{n_x}$ reflecting a Euclidean geometry.  If it were, in general possible to infer the result of measurements in the qudit picture after a QFT reading only transversal measurements after a Hadamard transversal operation efficiently, this would replace the need for many QFT's found in phase estimation algorithms, for example.

The above state is naturally an eigenstate of the displacement operator $D_{1, 0} = X_{n^x}$ and its measurement yields up to $\log n^x$ bits of precision the very physical interpretation as the photon's wavevector along the $x$ spatial coordinate.  In contrast, measuring the local qubit $X$ operators yields a mixture of corresponding eigenstates that cannot be efficiently reassembled into the wavevector precisely without a number of measurements scaling like $n_x$ in the worst case.

So far we have not invoked the advantage of using quantum memory and $\rho^*$ for these states, but merely motivated the importance and naturalness of displacement operators in this context.  To see how $\rho^*$ is available in this simplified system where we assume identical photons are received at regular intervals, consider building a second copy of the sensor array inside a mirrored cavity with a known change in the pathlength of the incoming photon along the array axis $b$, such that the positions of the new detectors are shifted by the value $b$.  Using two mirrors, we may reflect the wavevector $k$ about the array axis from $k \rightarrow -k$ yielding
\begin{align}
    \ket{\psi'} \propto \sum_{x=1}^{n^x - 1} e^{2 \pi i (k (x-b)) / L} \ket{x} = e^{-2 \pi i k b / L} \sum_{x=1}^{n^x - 1} e^{2 \pi i (k x)/ L} \ket{x} = e^{-2 \pi i k b} \ket{\psi}^*.
\end{align}
In this case, we are able to obtain the complex conjugate state somewhat easily while picking up only a global phase.  While the assumptions that led to this were idealized in several ways, for more realistic cases it may be possible to use a similar procedure to build an approximation of the state that still allows one to retain advantage.

Assuming we build the conjugate state or an approximation of it, we now turn to the utility of the state in the sensor array context.  A common problem in sensor arrays, especially those tuned to low energy excitations like radio waves in thermal environments is a heavy amount of background thermal noise.  For example in the case of NMR, the background noise is often modeled as a nearly infinite temperature, or a totally mixed state.  For the most general type of signal when $\varepsilon \ll 1$, determining if any signal is present is equivalent to the mixedness testing problem, which can require a number of samples scaling like the dimension of the space in the worst case even with access to many copies at once~\cite{chen2021hierarchy}.  One way of viewing the advantage from the measurements in this scenario is a specialized form of mixedness testing, which permits distinguishing a sensor state from the maximally mixed state with an efficient number of samples, but only over a specialized class of states.

The natural questions are then what types of sensor state can we distinguish from noise using the conjugate memory resource, and how natural are they in physical systems?  The states we prove explicit lower bounds for while being sample efficient to distinguish from the maximally mixed state with conjugate memory are of the form
\begin{align}
    \rho = \rho_{qp} = \frac{1}{d} \big(\mathbbm{1} + r \varepsilon E_{q,p} \big).
\end{align}
By extension, we can also sample efficiently distinguish randomly states of the form 
\begin{align}
    \rho = \frac{1}{d} \big(\mathbbm{1} + \sum_{q, p} \alpha_{qp} E_{q,p} \big)
\end{align}
from the maximally mixed state so long as we sample according to the maximum magnitude of $\alpha_{pq}$, and we conjecture similar lower bounds apply for single copies considering the first case as a special case of the second so long as $\varepsilon \ll 1$ and the number of such operators is relatively small.

To gain intuition, consider first the operator $D_{0, 1} = Z$.  The corresponding state $\rho_{qp}$ on a 1D sensor array is a mixed state in the position basis, with probabilities of detection at each site given by $(1 + r \varepsilon \cos (2 \pi x/ L)) / d$.  Similarly, if we consider $D_{0, 2}=Z^2$, we have site populations $(1 + r \varepsilon \cos (2 \pi x/ L))^2 / d$ and more generally $D_{0, q}$ gives site populations $(1 + r \varepsilon \cos (2 \pi x/ L))^q / d$.  Generalizing this to the flexible sum above but still only in the position basis, we see we can have site populations given by trigonometric polynomials.  Hence it can distinguish very general population distribution patterns, like a Gaussian distribution with heavy background noise, from the totally random mixed state, but efficiency gain will be determined by the shape of the function.  By symmetry, the same is true of momentum space distributions.  However, there cannot be an advantage for states guaranteed to be probabilistic mixtures of only position or momentum eigenstates as efficient sampling within the two bases separately is possible, there must be non-trivial coherence in the states that manifests as a randomly oriented basis with $p, q \in [0, n_x - 1]$.  A combination of these can be a quite general expression of a physical state, like an incoherent distribution of Gaussian wavepackets on a real-space grid with heavy background noise.   

As one makes more general population distributions, one has to consider when the technique remains sample efficient and a gap to single copy techniques persist.  We leave precise derivations of these parameter regimes as future work, but in the $\varepsilon \ll 1$ regime the techniques can be viewed as an exponential improvement in signal to noise ratios per sample and an exponential improvement in the detection of any signal at all in analogy to specialized mixedness testing.  Multi-copy mixedness testing has also been considered to form a hierarchy for different numbers of copies as well~\cite{chen2021hierarchy}. In the $\varepsilon \approx 1$ regime, the question of advantage may be harder, as potentially one can construct a nearly pure state where other sample efficient techniques like classical shadows may apply.  We leave this interesting more general construction as an open question in applications of the techniques here.

\subsection{Access to $U^*$ via access to $U^\dagger$ on high temperature states}
Related to the above case of Gibbs states, another interesting model where one can access $U^*$ is when one has access to $U^\dagger$.  We note that access only to $U^\dagger$ does not generally permit efficient access to $U^*$, but here we examine a special case. The maximally entangled state produced by a simple Bell basis change
\begin{align}
    \ket{\Psi} = \frac{1}{\sqrt{2^n}} \sum_{i=0}^{2^n-1} \ket{i}\ket{i}
\end{align}
is the so-called thermofield double of an infinite temperature system on $n$ qubits.  For this state, the following identity is known to be true for any operators $V$, $W$
\begin{align}
    (V \otimes W) \ket{\Psi} = (V W^{T} \otimes I) \ket{\Psi} 
\end{align}
where $W^T$ is the transpose of $W$.  Hence by choosing $V=I$ and $W=U^\dagger$, which physically means accessing $U^\dagger$ once on only one of the registers and doing nothing on the other after preparing the entangled state, we have implemented
\begin{align}
    U^* \otimes I \ket{\Psi}
\end{align}
which is equivalent to the action of $U^*$ on the infinite temperature state when the second register is traced out or ignored.  On its own, this does nothing of note due to the infinite temperature state being invariant under unitary transformations, but if the application of $U^\dagger$ is controlled on ancilla qubits, it can be used to project into the basis with its conjugate basis, as well as other more general tasks.  As a special case, if a single ancilla qubit is prepared in the $\ket{+}$ state, this is related to the one-clean qubit model of computation, also known as DQC1~\cite{knill1998power}.  This at least contains DQC1 and we believe there may be interesting applications within this resource context.  

While restricted compared to general polynomial time quantum computation, the DQC1 model is known to have various applications including NMR structure prediction in chemistry~\cite{o2022quantum}, but we are not aware of the explicit use of $U^*$ as a resource within this model to date.  Finally we note that the application of $U$ to two-parts of the above Bell state, followed by Bell measurements, has been used to efficiently distinguish between a random real and general random unitary matrix, which is closely related to the identities discussed here~\cite{aharonov2022quantum,huang2022quantum,chen2022exponential}.

\section{Commutation properties of Heisenberg group} \label{sec:commutation}

Almost all of the techniques to make practical use of minimal quantum memories so far take advantage of a trick to make non-commuting operators into commuting ones that offer partial information about the original operators.  Here we will show that the displacement operators we consider in this paper are the most general class of operators for which this trick works, and hence one must go beyond this method to learn about other classes of operators with a minimal quantum memory.

Recall that Algorithm \ref{alg:disp} to learn the displacement amplitudes up to a sign relied on the crucial property that the following set of operators commute on two registers:
\begin{equation}
\{ D_{q,p} \otimes D_{-q,p} \}
\end{equation}
This was because braiding two displacement operators picks up a complex phase:
\begin{equation} \label{braid1}
D_{q',p'} D_{q,p} = e^{i 2\pi (qp' - q'p) / d} D_{q,p} D_{q',p'}
\end{equation}
and the second register picks up the conjugate phase:
\begin{equation} \label{braid2}
D_{-q',p'} D_{-q,p} = e^{-i 2\pi (qp' - q'p) / d} D_{-q,p} D_{-q',p'}
\end{equation}
so that they cancel and
\begin{equation}
(D_{q,p} \otimes D_{-q,p}) (D_{q',p'} \otimes D_{-q',p'}) = (D_{q',p'} \otimes D_{-q',p'}) (D_{q,p} \otimes D_{-q,p})
\end{equation}
Information about the state could be extracted from simultaneously measuring these commuting operators, since
\begin{equation}
\Tr\left((D_{q,p} \otimes D_{-q,p}) (\rho \otimes \rho^\ast)\right) = \Tr\left(D_{q,p} \rho\right)^2
\end{equation}

We now ask the question: how generally can this method be used? The following theorem gives evidence that the technique \emph{cannot} be applied more broadly than the setting of displacement operators.

\begin{theorem}
Let $U,V$ be unitaries of finite order $d$ on some Hilbert space $\mathcal{H}$. Suppose $U$ and $V$ do \emph{not} commute, but we can attach a second Hilbert space $\mathcal{H}'$ with unitaries $\tilde{U},\tilde{V}$ such that $U \otimes \tilde{U}$ and $V \otimes \tilde{V}$ commute. Then there is some unitary transformation of $\mathcal{H}$ mapping $U,V$ to a direct sum of displacement operators.
\end{theorem}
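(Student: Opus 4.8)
The plan is to show that the commuting-extension hypothesis forces $U,V$ to satisfy a twisted commutation relation $UV=\lambda VU$ with $\lambda$ a \emph{scalar}, after which $\langle U,V\rangle$ is a finite Heisenberg-type group and the conclusion follows from a finite-dimensional argument in the spirit of Stone--von Neumann. Concretely: (i) extract the scalar commutator; (ii) identify the group generated by $U,V$; (iii) decompose the defining representation into irreducibles and recognise each block as a displacement-operator representation.

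For (i): expanding $(U\otimes\tilde U)(V\otimes\tilde V)=(V\otimes\tilde V)(U\otimes\tilde U)$ gives $UV\otimes\tilde U\tilde V=VU\otimes\tilde V\tilde U$, and left-multiplying by $(VU)^{-1}\otimes\mathbbm{1}$ rewrites this as $\big((VU)^{-1}UV\big)\otimes(\tilde U\tilde V)=\mathbbm{1}\otimes(\tilde V\tilde U)$. Since $\tilde U\tilde V\neq0$ and $(VU)^{-1}UV$ is invertible, comparing these simple-tensor factorizations forces $(VU)^{-1}UV=\lambda\,\mathbbm{1}$ for a scalar $\lambda$ (an operator for which every vector is an eigenvector is scalar). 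Hence $UV=\lambda VU$, with $\lambda\neq1$ since $U,V$ do not commute; conjugating $U^{d}=\mathbbm{1}$ by $V$ yields $\lambda^{d}=1$. Let $d_0>1$ be the multiplicative order of $\lambda$, so $d_0\mid d$.

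For (ii)--(iii): $G=\langle U,V\rangle$ is generated by $U,V$ with $U^{d}=V^{d}=\mathbbm{1}$ and central commutator $UVU^{-1}V^{-1}=\lambda\mathbbm{1}$, hence finite (order dividing $d_0 d^2$), and $\mathcal{H}$ carries a unitary $G$-representation, decomposing as an orthogonal direct sum of finite-dimensional irreducibles $\mathcal{H}=\bigoplus_k\mathcal{H}_k$. On each $\mathcal{H}_k$ the scalar operator $\lambda\mathbbm{1}$ still acts by $\lambda\neq1$, so no $\mathcal{H}_k$ is one-dimensional. Fixing $k$, take an eigenvector $\ket{e}$ of $U|_{\mathcal{H}_k}$ with eigenvalue $\nu$; from $UV^{j}=\lambda^{j}V^{j}U$ one gets that $V^{j}\ket{e}$ is again a $U$-eigenvector with eigenvalue $\lambda^{j}\nu$, so for $j=0,\dots,d_0-1$ the vectors $V^{j}\ket{e}$ are linearly independent, and their span is $U$- and $V$-invariant, hence all of $\mathcal{H}_k$ by irreducibility. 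Moreover $U^{d_0},V^{d_0}$ are central in $G$ (as $\lambda^{d_0}=1$), hence scalar on $\mathcal{H}_k$ by Schur's lemma; since $\ket{e}$ is cyclic for $V|_{\mathcal{H}_k}$ and $V^{d_0}|_{\mathcal{H}_k}$ is scalar, the spectrum of $V|_{\mathcal{H}_k}$ has at most $d_0$ points, which forces $\dim\mathcal{H}_k=d_0$ with all $V$-eigenspaces one-dimensional. In the basis $\{V^{j}\ket{e}\}_{j=0}^{d_0-1}$, $V|_{\mathcal{H}_k}$ acts as a shift and $U|_{\mathcal{H}_k}$ as a clock operator, i.e.\ (up to a basis choice within the block) $U|_{\mathcal{H}_k},V|_{\mathcal{H}_k}$ are displacement operators $D^{(d_0)}_{q,p}$ on $\mathbb{C}^{d_0}$; assembling the blocks gives the claimed unitary.

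I expect the delicate point to be the final part of step (iii): pinning the identification down to genuine \emph{displacement operators} $D^{(d_0)}_{q,p}$ rather than merely clock/shift-type operators. The $D_{q,p}$ carry the specific phase cocycle $e^{i\pi qp/d}$, and when $d_0<d$ the central scalars $U^{d_0}|_{\mathcal{H}_k},V^{d_0}|_{\mathcal{H}_k}$ need not be trivial, so one must absorb residual phases by rescaling (and possibly further refining) within each block, or state the conclusion up to block-dependent phases; in the regime relevant to the applications, where $d$ is prime so that $\lambda$ is a primitive $d$-th root of unity and $d_0=d$, these scalars are forced to be $1$ and the subtlety disappears.
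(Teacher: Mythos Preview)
Your proof is correct and follows essentially the same route as the paper: extract the scalar commutator from the tensor-commutation hypothesis, recognise $\langle U,V\rangle$ as a representation of the discrete Heisenberg group $H(\mathbb{Z}_d)$, and decompose into irreducibles that are identified as displacement operators. The only difference is that the paper invokes a cited classification of the irreps of $H(\mathbb{Z}_d)$, whereas you reconstruct the irreps by hand with an explicit Stone--von Neumann argument (eigenvector of $U$, cyclic orbit under $V$); your version is more self-contained, and your caveat about residual block-dependent phases when $d_0<d$ is a genuine subtlety that the paper's brief citation also does not fully resolve as stated, though as you note it vanishes in the prime-$d$ regime relevant to the applications.
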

\begin{proof}
By assumption, $U$ and $V$ do \emph{not} commute, so
\begin{equation}
U V U^{-1} V^{-1} \neq \mathbbm{1}
\end{equation}
but $U \otimes \tilde{U}$ and $V \otimes \tilde{V}$ commute, so
\begin{equation}
(U V U^{-1} V^{-1}) \otimes (\tilde{U} \tilde{V} \tilde{U}^{-1} \tilde{V}^{-1}) = \mathbbm{1}.
\end{equation}
Only scalars can pass through tensor factors, so we must have
\begin{equation}
U V U^{-1} V^{-1} = \omega \mathbbm{1}
\end{equation}
for some phase $\omega \neq 1$. Now using that $V^d = \mathbbm{1}$,
\begin{align}
&\omega^d \mathbbm{1} = \left(\omega V\right)^d = \left(U V U^{-1}\right)^d = U V^d U^{-1} = \mathbbm{1} \\
\implies \ &\omega^d = 1.
\end{align}

Overall, we can write
\begin{equation}
U V U^{-1} V^{-1} = \omega \mathbbm{1} \quad , \quad U^d = V^d = (\omega \mathbbm{1})^d = \mathbbm{1}
\end{equation}
The discrete Heisenberg group $H(\mathbb{Z}_d)$ has presentation
\begin{equation}
H(\mathbb{Z}_d) = \{x, z, w \ : xzx^{-1}z^{-1} = w, \ x^d = z^d = w^d = 1 , \ wx = xw, \ wz = zw\}
\end{equation}
Thus $\{U,V,\omega \mathbbm{1}\}$ generate a representation of $H(\mathbb{Z}_d)$.

To complete the proof, we invoke the representation theory of $H(\mathbb{Z}_d)$. The irreducible representations (irreps) of $H(\mathbb{Z}_d)$ are classified in \cite{grassberger2001note} using character theory. They are indexed by three integers $(\gamma, \alpha, \beta) \in \mathbb{Z}_d \times \mathbb{Z}_h \times \mathbb{Z}_h$ where $h := \text{gcd}(\gamma,d)$. The dimension of $\text{irrep}(\gamma, \alpha, \beta)$ is $d' = d/h$. Define the clock and shift operators in $d'$ dimensions as usual
\begin{equation}
X :=
\begin{pmatrix}
0 & 0 & \dots & 0 & 1 \\
1 & 0 & \dots & 0 & 0 \\
\vdots & \vdots & \ddots & \vdots & \vdots \\
0 & 0 & \dots & 1 & 0 \\
\end{pmatrix}, \qquad
Z :=
\begin{pmatrix}
1 & 0 & \dots & 0 \\
0 & \omega & \dots & 0 \\
\vdots & \vdots & \ddots & \vdots \\
0 & 0 & \dots & \omega^{d-1} \\
\end{pmatrix},
\end{equation}
and $\omega := e^{2\pi i / d'}$. Then we can write $\text{irrep}(\gamma, \alpha, \beta)$ as
\begin{align}
\text{irrep}(\gamma, \alpha, \beta) \ : \ H(\mathbb{Z}_d) &\rightarrow GL(d') \\
w &\rightarrow \omega^\gamma \\
x &\rightarrow X^\alpha \\
z &\rightarrow Z^\beta
\end{align}

As we can see, all of the irreducible representations consist of displacement operators in some basis, thus decomposing our $H(\mathbb{Z}_d)$-representation generated by $\{U,V,\omega \mathbbm{1}\}$ into a direct sum over its irreducible components completes the proof.
\end{proof}

The representation theory of the discrete Heisenberg group $H(\mathbb{Z}_d)$ is reminiscent of that of the continuous Heisenberg group $H(\mathbb{R})$, where the Stone-von Neumann theorem tells us that there is a unique representation up to unitary equivalence \cite{rosenberg2004selective}.

\section{Simultaneous diagonalisability from Pauli braiding} \label{sec:adjoint}

So far, we have justified that $D_{q,p} \otimes D_{-q,p} = D_{q,p}\otimes D_{q,p}^T$ simply through explicit verification (see e.g., Eqs.~\eqref{braid1} and~\eqref{braid2}, and we exhibited a simultaneous eigenbasis, whose eigenvalues can also be found through explicit calculation (Eq.~\eqref{eq:DD_phi}). Similarly, the operators $D_{q,p} \otimes D_{q,-p} = D_{q,p} \otimes D_{q,p}^*$ also mutually commute. In this section, we aim to provide more foundation for these commutation relations and their associated eigenvalue-eigenvector equations, by showing that the simultaneously diagonalisability of $D_{q,p} \otimes D_{q,p}^*$ is in fact equivalent to the well-known property that Paulis leave other Paulis fixed under conjugation, up to a global phase, i.e., for any two Paulis $P$ and $Q$, 
\begin{equation} \label{PQbraid} P Q P^\dagger = \alpha Q\end{equation}
for some phase $\alpha$. This equivalence is given by a simple linear algebra maneuver known as operator-vector correspondence, which will moreover allow us to directly obtain the simultaneous eigenbasis and the corresponding eigenvalues.

Fixing a computational basis $\{\ket{i}\}$, operator-vector correspondence is given by the linear ``vectorisation'' map $\mathrm{vec}$ defined by
\[ \mathrm{vec}(\ket{i}\bra{j}) = \ket{i} \otimes \ket{j}. \] It follows from linearity that for arbitrary states $\ket{\psi}$ and $\ket{\varphi}$, we have
\[ \mathrm{vec}(\ket{\psi}\bra{\varphi}) = \ket{\psi}\otimes \ket{\varphi}^* = \ket{\psi}(\bra{\varphi})^T,\]
where the complex conjugate and transpose are taken with respect to the chosen computational basis. In particular,
\[ \mathrm{vec}(P\ket{i}\bra{j}P^\dagger) = P\ket{i}(\bra{j}P^\dagger)^T = P\ket{i} \otimes P^*\ket{j}. \]
Thus, when we conjugate some operator by $P$, under operator-vector correspondence the $P^\dagger$ on one side gets transposed and becomes a $P^*$.
So writing $Q \equiv \sum_{i,j \in \Zd} Q_{ij}\ket{i}\bra{j}$ in terms of basis elements and applying $\mathrm{vec}$ to both sides of the Pauli braiding equation Eq.~\eqref{PQbraid}, we obtain
\[ \sum_{i,j \in \Zd} Q_{ij} P\ket{i} \otimes P^*\ket{j} = \alpha \sum_{i,j \in \Zd} Q_{ij} \ket{i} \otimes \ket{j}, \]
i.e.,
\begin{equation} \label{ev vecQ} (P \otimes P^*)\mathrm{vec}(Q) = \alpha \, \mathrm{vec}(Q). \end{equation}
Therefore, Eq.~\eqref{PQbraid} is equivalent to the statement that $\mathrm{vec}(Q)$ is an eigenvector of $P \otimes P^*$, with eigenvalue $\alpha$. This holds for all Paulis $P$ and $Q$, and since the $\mathrm{vec}(Q)$ form an orthogonal basis of states (from the fact that Paulis comprise a Hilbert-Schmidt orthogonal basis for operators), the $P \otimes P^*$ are simultaneously diagonalisable.

The explicit version of Eq.~\eqref{PQbraid} is
\begin{equation} \label{Dqp braid} D_{q,p} D_{a,b} D_{q,p}^\dagger = \omega^{ap-qb} D_{a,b}  \end{equation}
(from Eq.~\eqref{eq:D_comm}), and we have
\[ D_{a,b} = \omega^{ab/2} \sum_{j\in\Zd}\omega^{jb}\ket{j + a} \bra{j} \]
(from Eq.~\eqref{Dqp act}). Hence, using Eq.~\eqref{ev vecQ}, which becomes $(D_{q,p}\otimes D_{q,p}^*)\mathrm{vec}(D_{a,b}) = \omega^{ap-qb}\mathrm{vec}(D_{a,b})$ with these substitutions, we immediately see that the simultaneous eigenbasis for $\{D_{q,p} \otimes D_{q,p}^*\}$ consists of the vectorisations of the Paulis $D_{a,b}$,
\begin{equation} \mathrm{vec}(D_{a,b}) \propto \sum_{j \in \Zd} \omega^{jb} \ket{j+a}\ket{j} \end{equation}
for $a, b \in \Zd$, and the corresponding eigenvalues of $D_{q,p}$ are $\omega^{ap - qb}$.

In representation-theoretic terms, the simultaneous diagonalisability of $D_{q,p} \otimes D_{q,p}^*$ is also equivalent to the statement that the representation $U \mapsto U \otimes U^*$ of the Heisenberg group consists solely of 1-dimensional irreps. However, it is nontrivial to directly build the irreps of this representation from $U$ and $U^*$.
Here, we are instead using operator-vector correspondence to relate the $U \mapsto U \otimes U^*$ representation to the adjoint representation $U \mapsto U(\, \cdot \,)U^\dagger$, because the latter is well-known to consist of 1-dimensional irreps (Eq.~\eqref{PQbraid}). 

To get the analogous result for $D_{q,p}\otimes D_{q,p}^T$, we use the fact that $SD_{q,p}S = D_{-q,-p} = D_{q,p}^\dagger$, where $S \coloneqq \sum_{j \in \Zd}\ket{-j}\bra{j}$. Then, Eq.~\eqref{Dqp braid} can be written $D_{q,p}D_{a,b}SD_{q,p}S = \omega^{ap-qb}D_{a,b}$, or $D_{q,p}D_{a,b}SD_{q,p} = \omega^{ap-qb}D_{a,b}S$, which vectorises to
\[ (D_{q,p} \otimes D_{q,p}^T) \mathrm{vec}(D_{a,b}S) = \omega^{ap-qb}D_{a,b}\mathrm{vec}(D_{a,b}S). \]
Thus, the simultaneous eigenbasis of $D_{q,p} \otimes D_{q,p}^T$ is given by 
\[ \mathrm{vec}(D_{a,b}S) \propto \sum_{j \in \Zd}\omega^{jb}\ket{j + a}\ket{-j}, \]
which are precisely the $\ket{\Phi_{a,b}}$ states defined in Eqs.~\eqref{Phi00} and~\eqref{Phiab}, and the corresponding eigenvalues of $D_{q,p}$ are $\omega^{ap-qb}$ as claimed in Eq.~\eqref{eq:DD_phi}.


\section{Sample complexity lower bounds} \label{sec:single_copy}

The aim of this section is to show that the use of entangled measurements across $\rho \otimes \rho^\ast$ are essential to our learning task. Any strategy which uses only single-copy measurements on $\rho$ and $\rho^\ast$, even an adaptive one, necessarily requires exponentially many copies. On the other hand, any strategy which uses only $\rho$, and does not have access to $\rho^\ast$, must necessarily make entangled measurements across $\Omega(1/\varepsilon)$ copies at a time.

The proofs will follow the techniques introduced in \cite{chen2022exponential, chen2021hierarchy}. Here we briefly state two lemmas adapted from \cite{chen2022exponential} which will be useful to us.

\begin{lemma} \label{lem:4.8}
\emph{(\cite{chen2022exponential} Lemma 4.8)} 
When we only consider the classical outcome of the POVM measurement and neglect the post-measurement quantum state, then any POVM can be simulated by a rank-1 POVM with some postprocessing. A rank-1 POVM is defined by $\{w_s, |\psi_s\rangle\}_s$ where $w_s > 0$ and
\begin{equation}
\sum_s w_s |\psi_s\rangle\langle\psi_s| = \mathbbm{1}
\end{equation}
The outcomes probabilities are
\begin{equation}
\mathbb{P}_\rho(s) = w_s \langle\psi_s|\rho|\psi_s\rangle
\end{equation}
\end{lemma}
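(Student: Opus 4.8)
The plan is to prove this by spectrally decomposing an arbitrary POVM into its rank-1 eigenprojectors and bundling the eigenvalues into the weights $w_s$. Let $\{M_s\}_s$ be an arbitrary POVM, so $M_s \succeq 0$ for all $s$ and $\sum_s M_s = \mathbbm{1}$. Since each $M_s$ is positive semidefinite, I would write its spectral decomposition $M_s = \sum_k \lambda_{s,k}\ket{\phi_{s,k}}\bra{\phi_{s,k}}$ with strictly positive eigenvalues $\lambda_{s,k} > 0$ (zero eigenvalues are simply dropped) and orthonormal eigenvectors $\{\ket{\phi_{s,k}}\}_k$.

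Next I would define the candidate rank-1 POVM to have outcomes indexed by the pairs $(s,k)$, with weights $w_{s,k} \coloneqq \lambda_{s,k} > 0$ and unit vectors $\ket{\psi_{s,k}} \coloneqq \ket{\phi_{s,k}}$. Completeness is then immediate: $\sum_{s,k} w_{s,k}\ket{\psi_{s,k}}\bra{\psi_{s,k}} = \sum_s M_s = \mathbbm{1}$, so $\{w_{s,k}, \ket{\psi_{s,k}}\}$ is a valid rank-1 POVM in the sense stated in the lemma. The postprocessing is the trivial coarse-graining that maps the fine-grained outcome $(s,k)$ to $s$, i.e.\ ``forget $k$.''

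It then remains to verify that this reproduces the original classical outcome statistics for every state. For any $\rho$, the probability that the rank-1 POVM returns $(s,k)$ is $w_{s,k}\bra{\psi_{s,k}}\rho\ket{\psi_{s,k}} = \Tr\!\left(\lambda_{s,k}\ket{\phi_{s,k}}\bra{\phi_{s,k}}\,\rho\right)$, and summing over $k$ gives $\Tr(M_s\rho) = \mathbb{P}_\rho(s)$, which is exactly the outcome distribution of the original POVM. Hence the postprocessed rank-1 measurement has identical outcome statistics for all $\rho$.

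I do not expect a genuine obstacle here: the statement is a routine structural fact about POVMs. The only points worth a word of care are (i) restricting to strictly positive eigenvalues so that the requirement $w_s > 0$ holds literally; (ii) noting that the simulation uses \emph{deterministic} postprocessing (a coarse-graining), so no additional randomness is needed; and (iii) observing that on a $d$-dimensional space each $M_s$ contributes at most $d$ rank-1 terms, so the outcome alphabet grows by at most a factor of $d$ and in particular stays finite when the original one is.
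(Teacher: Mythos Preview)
Your proof is correct and is the standard argument for this fact. The paper does not actually give its own proof of this lemma; it merely cites it from \cite{chen2022exponential}, so there is nothing to compare against beyond confirming that your spectral-decomposition-plus-coarse-graining argument is exactly the usual way this result is established.
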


This lemma lets us consider only rank-1 POVMs without loss of generality.

\begin{lemma} \label{lem:5.4}
\emph{(\cite{chen2022exponential} Lemma 5.4)} Suppose we have a many vs one distinguishing task
\begin{itemize}
    \item \emph{(YES)} $\rho = \rho_x$ for some random $x$ and some family of states $\{\rho_x\}$.
    \item \emph{(NO)} $\rho = \frac{1}{d} \mathbbm{1}$ maximally mixed.
\end{itemize}
If all outcomes $l$ of a protocol satisfy
\begin{equation} \label{eq:one_sided_condition}
\frac{\mathbb{E}_x{\mathbb{P}_{\rho_x}(l)}}{\mathbb{P}_{\mathbbm{1}/d}(l)} \geq 1 - \delta
\end{equation}
then the probability of success is at most $(1 + \delta) / 2$. Note that the outcome $l$ refers to the entire history of many measurements, which are possibly adaptive.
\end{lemma}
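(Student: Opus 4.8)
The plan is to reduce the statement to the elementary fact that the optimal success probability for distinguishing two distributions under a balanced prior equals $\tfrac{1}{2}\bigl(1 + \mathrm{TV}\bigr)$, and then to bound the relevant total variation distance by $\delta$ using the one-sided likelihood-ratio hypothesis.

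First I would set up the transcript picture. Fix the (possibly adaptive) protocol. Under the NO hypothesis the full measurement history $l$ is distributed according to $Q(l) := \mathbb{P}_{\mathbbm{1}/d}(l)$; under the YES hypothesis one first draws $x$ and then $l \sim \mathbb{P}_{\rho_x}$, so that marginally $l \sim P(l) := \mathbb{E}_x\,\mathbb{P}_{\rho_x}(l)$. The key observation is that the protocol's final accept/reject decision is a (randomized) function of $l$ alone, so by the data-processing inequality it suffices to upper bound the best possible success probability of any decision rule acting on $l$, with the two hypotheses occurring with prior probability $1/2$ each. That optimum is the Bayes expression $\tfrac{1}{2} + \tfrac{1}{2}\,\mathrm{TV}(P,Q)$, where $\mathrm{TV}(P,Q) = \sum_l \max\{0,\,Q(l) - P(l)\}$, attained by the likelihood-ratio test.

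Next I would bound $\mathrm{TV}(P,Q)$. The hypothesis $P(l)/Q(l) \geq 1 - \delta$ — equivalently $P(l) \geq (1-\delta)\,Q(l)$ for every $l$ — gives $Q(l) - P(l) \leq \delta\,Q(l)$ whenever $Q(l) > P(l)$, while $\max\{0,\,Q(l) - P(l)\} = 0$ otherwise. Summing over $l$ yields $\mathrm{TV}(P,Q) \leq \delta \sum_l Q(l) = \delta$, and combining with the previous paragraph gives success probability at most $\tfrac{1}{2}(1+\delta)$, as claimed.

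The one step I would be most careful about — the only real subtlety here — is the transcript reduction in the adaptive setting: one must check that, conditioned on $x$, the law of the transcript is exactly $\mathbb{P}_{\rho_x}$ (the adaptivity is already absorbed into this conditional law, since the protocol is a fixed tree of measurements whose branches depend on earlier outcomes), and that the displayed hypothesis is genuinely a statement about these full-history distributions rather than single-shot outcome distributions. Once that bookkeeping is in place the argument is a one-line Le~Cam / total-variation estimate, and the one-sided likelihood bound suffices precisely because $\mathrm{TV}$ sees only the positive part $(Q-P)_+$, which is exactly what a lower bound on $P/Q$ controls.
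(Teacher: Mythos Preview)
Your argument is correct: reducing to the transcript distributions $P(l)=\mathbb{E}_x\mathbb{P}_{\rho_x}(l)$ and $Q(l)=\mathbb{P}_{\mathbbm{1}/d}(l)$, invoking the Bayes-optimal bound $\tfrac{1}{2}(1+\mathrm{TV}(P,Q))$, and then using the one-sided likelihood hypothesis to control $(Q-P)_+$ pointwise is exactly the right one-line Le~Cam estimate, and your handling of adaptivity (absorbing it into the conditional transcript law) is the standard and correct bookkeeping.

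The paper itself does not supply a proof of this lemma; it is quoted verbatim from \cite{chen2022exponential} and used as a black box, so there is no in-paper argument to compare against. Your write-up is precisely the short justification one would expect for this kind of result.
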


We will also use a set of operators defined in \cite{asadian2016heisenberg}. These are given by
\begin{equation} \label{eq:E_def}
E_{q,p} = \chi D_{q,p} + \chi^\ast D_{-q,-p} \quad , \quad \chi = \frac{1+i}{2}
\end{equation}
and are known as \emph{displacement observables}.
\begin{proposition} \label{prop:properties_obs} \emph{\cite{asadian2016heisenberg}}
The displacement observables are Hermitian and have properties
\begin{align}
&E^\ast_{q,p} = E^T_{q,p} = E_{-q,p} \label{eq:comp_conj_E} \\
&\|E_{q,p}\|_{\op} \leq \sqrt{2} \label{eq:norm_bd} \\
&\Tr\left(E_{q,p} E_{q',p'}\right) = d \cdot \delta_{q,q'} \delta_{p,p'}
\end{align}
\end{proposition}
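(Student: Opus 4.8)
The plan is to verify the three claimed properties directly from the definition $E_{q,p} = \chi D_{q,p} + \chi^\ast D_{-q,-p}$ with $\chi = (1+i)/2$, using only the displacement-operator algebra of Proposition~\ref{prop:disp_properties} and the Hilbert--Schmidt orthogonality of Proposition~\ref{prop:disp_basis}. No representation theory is needed; everything reduces to bookkeeping of phases, together with the elementary identities $\chi^2 = i/2$, $(\chi^\ast)^2 = -i/2$, $\chi\chi^\ast = 1/2$, $\chi + \chi^\ast = 1$, and $|\chi| = 1/\sqrt2$.

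First I would establish Hermiticity and the conjugation/transpose identities. Taking the adjoint and using $D_{q,p}^\dagger = D_{-q,-p}$ swaps the two terms of $E_{q,p}$ while conjugating their coefficients, returning $E_{q,p}$; hence $E_{q,p} = E_{q,p}^\dagger$. For the remaining identities I would use $D_{q,p}^\ast = D_{q,-p}$ and $D_{q,p}^T = D_{-q,p}$: since entrywise conjugation is conjugate-linear, $E_{q,p}^\ast = \chi^\ast D_{q,-p} + \chi D_{-q,p}$, which is exactly $E_{-q,p}$; since transposition is linear, $E_{q,p}^T = \chi D_{-q,p} + \chi^\ast D_{q,-p} = E_{-q,p}$ as well (a consistency check being $E^T = (E^\dagger)^\ast = E^\ast$ because $E$ is Hermitian).

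Next, the operator-norm bound is immediate: each $D_{q,p}$ is unitary so $\|D_{q,p}\|_{\op} = 1$, and $|\chi| = |\chi^\ast| = 1/\sqrt2$, so the triangle inequality gives $\|E_{q,p}\|_{\op} \le |\chi| + |\chi^\ast| = \sqrt2$. For the orthogonality relation I would expand $E_{q,p}E_{q',p'}$ into four products of displacement operators and take the trace of each, using Proposition~\ref{prop:disp_basis} in the form $\Tr(D_{a,b}D_{c,d}) = d\,\delta_{(c,d),(-a,-b)}$ (which follows from $D_{a,b}^\dagger = D_{-a,-b}$). The two mixed terms carry coefficient $\chi\chi^\ast = \tfrac12$ and each contribute $\tfrac{d}{2}\delta_{q,q'}\delta_{p,p'}$; the other two carry coefficients $\chi^2 = i/2$ and $(\chi^\ast)^2 = -i/2$ and each contribute $\pm\tfrac{id}{2}\,\delta_{(q',p'),(-q,-p)}$. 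Summing, the purely imaginary contributions cancel and we are left with $\Tr(E_{q,p}E_{q',p'}) = d\,\delta_{q,q'}\delta_{p,p'}$.

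The only subtlety — and the one place I would be careful — is that when $(q,p)$ is a $2$-torsion element of $\mathbb{Z}_d^2$ (for instance $(0,0)$, or order-two points when $d$ is even) the support conditions $(q',p') = (q,p)$ and $(q',p') = (-q,-p)$ can coincide; but the cancellation of the imaginary terms uses only that the coefficients $\chi^2$ and $(\chi^\ast)^2$ are exact negatives, so it survives (indeed $E_{0,0} = (\chi+\chi^\ast)\mathbbm{1} = \mathbbm{1}$, consistent with $\Tr(E_{0,0}^2) = d$). Beyond this the argument is pure bookkeeping, the main risk being sign errors in tracking which index gets negated under $\ast$, $T$, and $\dagger$.
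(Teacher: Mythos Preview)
Your proposal is correct and complete. The paper itself does not give a proof of this proposition but simply cites \cite{asadian2016heisenberg}; your direct verification from the definition and Propositions~\ref{prop:disp_properties}--\ref{prop:disp_basis} is exactly the natural way to fill this in, and your handling of the $2$-torsion case is the right sanity check.
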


\subsection{Single copies of $\rho$ and $\rho^\ast$} \label{sec:boson_lower_bd_1}

\begin{theorem} \label{thm:single_copy}
Let $d$ be the dimension of the Hilbert space. Any single-copy protocol which learns $|\Tr\left(D_{q,p} \rho\right)|$ to precision $\varepsilon$ for all $(q,p) \in \mathbb{Z}_d^2$ with probability $2/3$ requires $\Omega(d / \varepsilon^2)$ copies. This holds even if the protocol has access to single-copy measurements of both $\rho$ and $\rho^\ast$.
\end{theorem}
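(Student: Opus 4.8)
The plan is to apply the one-sided likelihood-ratio framework of Lemma \ref{lem:5.4} to a ``many-versus-one'' distinguishing task. For the YES case I would take $\rho = \rho_{q,p} := \frac{1}{d}(\mathbbm{1} + r\varepsilon' E_{q,p})$ for a uniformly random index $(q,p) \in \mathbb{Z}_d^2$ (possibly also a random sign $r \in \{\pm 1\}$), with $\varepsilon'$ a constant multiple of $\varepsilon$; since $\|E_{q,p}\|_{\op}\leq\sqrt 2$ (Proposition \ref{prop:properties_obs}) this is a valid density matrix for small $\varepsilon'$. For the NO case, $\rho = \mathbbm{1}/d$. The key observation is that $\Tr(E_{q,p}\rho_{q',p'}) = r\varepsilon'\delta_{q,q'}\delta_{p,p'}$, so distinguishing YES from NO is implied by estimating all the magnitudes $|\Tr(D_{q,p}\rho)|$ to precision $\sim\varepsilon$ (the $E$'s being simple linear combinations of two $D$'s, and $|\Tr(D_{q,p}\rho_{q',p'})|$ being $\Theta(\varepsilon)$ exactly on the nonzero index and $0$ elsewhere). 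Hence a sample-complexity lower bound for the distinguishing task transfers to the learning task.

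Next I would invoke Lemma \ref{lem:4.8} to assume each single-copy measurement is a rank-1 POVM $\{w_s,|\psi_s\rangle\}$, so a length-$N$ adaptive transcript $l = (s_1,\dots,s_N)$ has probability $\prod_{t} w_{s_t}\langle\psi_{s_t}|\sigma_t|\psi_{s_t}\rangle$ where $\sigma_t$ is either $\rho$ or $\rho^\ast$ depending on which the protocol queried at step $t$ (adaptively). The crucial point is that since $E_{q,p}$ is Hermitian and $E_{q,p}^\ast = E_{-q,p}$ (Eq.~\eqref{eq:comp_conj_E}), the conjugate state is $\rho_{q,p}^\ast = \frac{1}{d}(\mathbbm{1}+r\varepsilon' E_{-q,p}) = \rho_{-q,p}$, which is just another member of the same family. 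So querying $\rho^\ast$ instead of $\rho$ only relabels the random index $(q,p)\mapsto(-q,p)$ and does not give the protocol any genuinely new information — this is exactly why access to $\rho^\ast$ does not help in the single-copy setting. Writing out the likelihood ratio $\mathbb{E}_{q,p,r}\mathbb{P}_{\rho_{q,p}}(l)/\mathbb{P}_{\mathbbm{1}/d}(l)$, each factor becomes $\mathbb{E}[\,\prod_t (1 + r\varepsilon'\langle\psi_{s_t}|E_{\pm q,p}|\psi_{s_t}\rangle d)\,]$ divided out appropriately; expanding the product and using that $\mathbb{E}_r[r]=0$ kills all odd-order terms, and $\mathbb{E}_{q,p}[\langle\psi|E_{q,p}|\psi\rangle\langle\phi|E_{q,p}|\phi\rangle] = \frac{1}{d}\sum_{q,p}\langle\psi|E_{q,p}|\psi\rangle\langle\phi|E_{q,p}|\phi\rangle$, which by the orthogonality relation $\Tr(E_{q,p}E_{q',p'}) = d\,\delta\delta$ equals $\frac{1}{d}|\langle\psi|\phi\rangle|^2 \cdot (\text{const})$ after recognizing $\sum_{q,p}E_{q,p}\otimes E_{q,p}$ as (a constant times) a SWAP-like operator. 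Bounding the second-order term (pairs $t<t'$ contributing) by $O(N^2\varepsilon'^2/d)$ via these identities, and higher even orders by successively smaller quantities, I get that the likelihood ratio is $\geq 1 - O(N^2\varepsilon^2/d)$ for every transcript $l$. Lemma \ref{lem:5.4} then forces $N = \Omega(d/\varepsilon^2)$ for success probability $2/3$.

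The main obstacle I anticipate is the combinatorial bookkeeping in the likelihood-ratio expansion: one must show that after averaging over the random sign and random index, the cross terms at order $2k$ are controlled — effectively that $\sum_{q,p}\langle\psi|E_{q,p}|\psi\rangle\langle\phi|E_{q,p}|\phi\rangle$ behaves like a rank-one-ish kernel with operator norm $O(d)$ (coming from the identity-plus-SWAP decomposition of $\sum E_{q,p}\otimes E_{q,p}$, with the identity piece cancelling against the normalization) so that the dominant contribution is the $O(N^2\varepsilon^2/d)$ term, and that the adaptivity of the protocol and the interleaving of $\rho$ and $\rho^\ast$ queries does not spoil this. A secondary technical point is handling the slightly different normalization when the POVM element $|\psi_s\rangle$ is not aligned with any displacement operator — but since the bound is uniform over all $|\psi_s\rangle$ via the operator-norm estimate, this should go through cleanly. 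The overall structure mirrors the Pauli lower bound of \cite{chen2022exponential}, with the new ingredient being the $\rho\leftrightarrow\rho^\ast$ symmetry of the $E_{q,p}$ family.
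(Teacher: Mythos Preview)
Your overall framework matches the paper's: the same distinguishing task with $\rho_{q,p,r} = \frac{1}{d}(\mathbbm{1} + r\varepsilon' E_{q,p})$ versus $\mathbbm{1}/d$, the reduction via Lemmas~\ref{lem:4.8} and~\ref{lem:5.4}, and the observation that $\rho^\ast_{q,p,r} = \rho_{-q,p,r}$ so that the choice of $\rho$ versus $\rho^\ast$ is absorbed into the average over $(q,p)$. The ingredient $\sum_{q,p}E_{q,p}\otimes E_{q,p}=d\cdot\mathrm{SWAP}$ is also exactly what the paper uses. That part is fine.

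The gap is in the likelihood-ratio bound. Your direct expansion keeps the \emph{cross} terms $\sum_{t<t'}\mathbb{E}_{q,p}\big[\langle\psi_t|E_{q,p}|\psi_t\rangle\langle\psi_{t'}|E_{q,p}|\psi_{t'}\rangle\big]$, and you bound the second-order contribution by $O(N^2\varepsilon^2/d)$. But a lower bound of $1 - O(N^2\varepsilon^2/d)$ only forces $N = \Omega(\sqrt{d}/\varepsilon)$, not $\Omega(d/\varepsilon^2)$; your final arithmetic does not follow from your own estimate. Each cross term has magnitude up to $1/(d+1)$ (from the SWAP identity), there are $\binom{N}{2}$ of them, and bounding them by absolute value is too lossy by a factor of $N$. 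The higher-order terms are no easier to control in this expansion, which is exactly the ``combinatorial bookkeeping'' obstacle you flagged.

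The paper sidesteps this entirely by applying Jensen's inequality to the convex function $\exp$ \emph{before} expanding:
\[
\mathbb{E}_{q,p,r}\prod_{t}\big(1+r\varepsilon\,x_t^{(q,p)}\big)\;\geq\;\exp\Big(\sum_{t}\mathbb{E}_{q,p,r}\log\big(1+r\varepsilon\,x_t^{(q,p)}\big)\Big),
\]
which decouples the product into a sum over $t$. Averaging over $r$ turns each summand into $\tfrac12\log(1-\varepsilon^2 (x_t^{(q,p)})^2)\geq -\varepsilon^2 (x_t^{(q,p)})^2$ (up to a constant), so only the \emph{diagonal} quantity $\Gamma=\sup_{|\psi\rangle}\mathbb{E}_{(q,p)\neq 0}\langle\psi|E_{q,p}|\psi\rangle^2$ enters, and the same SWAP calculation gives $\Gamma=1/(d+1)$. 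The bound becomes $\geq 1 - N\varepsilon^2/(d+1)$, linear in $N$, which is what yields $N=\Omega(d/\varepsilon^2)$. This Jensen step is the one missing idea in your proposal.
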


This will follow from Proposition \ref{prop:distinguishing_single_copy}, which concerns a certain distinguishing task.

\begin{proposition} \label{prop:distinguishing_single_copy}
Let $d$ be the dimension of the Hilbert space. Consider the task of distinguishing between the following two scenarios:
\begin{itemize}
    \item \emph{(YES)} $\rho = \rho_{q,p,r} = \frac{1}{d} \big(\mathbbm{1} + r \varepsilon E_{q,p} \big)$ for some uniformly random $(q,p) \in \mathbb{Z}_d^2 \setminus \{(0,0)\}$ and some uniformly random sign $r \in \{\pm 1\}$. (We assume $0 < \varepsilon < 1$.)
    \item \emph{(NO)} $\rho = \frac{1}{d} \mathbbm{1}$ maximally mixed.
\end{itemize}
Any single-copy protocol requires $\Omega(d / \varepsilon^2)$ copies in order to succeed with probability $2/3$. This holds even if the protocol has access to single-copy measurements of both $\rho$ and $\rho^\ast$.
\end{proposition}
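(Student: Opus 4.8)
The plan is to apply the one-sided likelihood-ratio criterion of Lemma~\ref{lem:5.4}: it suffices to show that for every possible measurement-outcome history $l$, the averaged likelihood ratio $\mathbb{E}_{(q,p),r}[\mathbb{P}_{\rho_{q,p,r}}(l)]/\mathbb{P}_{\mathbbm{1}/d}(l)$ is at least $1-\delta$ for some $\delta<1/3$ whenever $N=o(d/\varepsilon^2)$. First I would invoke Lemma~\ref{lem:4.8} to assume without loss of generality that every single-copy measurement is a rank-1 POVM; an adaptive protocol is then described, at each step $t$, by a rank-1 POVM together with a sign $\epsilon_t\in\{+1,-1\}$ recording whether that copy is drawn from the $\rho$- or the $\rho^\ast$-supply, both of which may depend on the earlier outcomes. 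Fixing the entire history $l=(s_1,\dots,s_N)$ freezes the rank-1 projectors $\pi_t$ and the signs $\epsilon_t$; then, using $E_{q,p}^\ast=E_{-q,p}$ (Eq.~\eqref{eq:comp_conj_E}) to absorb the conjugation on the $\rho^\ast$-copies, a direct computation gives $\mathbb{P}_{\rho_{q,p,r}}(l)/\mathbb{P}_{\mathbbm{1}/d}(l)=\prod_{t=1}^N(1+r\varepsilon c_t)$ with $c_t=\Tr(E_{\epsilon_t q,p}\,\pi_t)\in\mathbb{R}$ and $|c_t|\le\|E_{\epsilon_t q,p}\|_\op\le\sqrt2$.

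The key maneuver is to average over the sign $r\in\{\pm1\}$ first. Since each factor $1\pm\varepsilon c_t$ is a ratio of genuine probabilities, hence nonnegative, AM--GM together with $\sqrt{1-u}\ge1-u$ and the Weierstrass product inequality yields $\mathbb{E}_r\prod_t(1+r\varepsilon c_t)=\tfrac12\big[\prod_t(1+\varepsilon c_t)+\prod_t(1-\varepsilon c_t)\big]\ge\sqrt{\prod_t(1-\varepsilon^2c_t^2)}\ge 1-\varepsilon^2\sum_t c_t^2$. This is the one genuinely clever step, and I expect it to be the crux: it turns a quantity controlled only in expectation over outcomes into the \emph{pointwise} lower bound Lemma~\ref{lem:5.4} requires, and it avoids the indefinite cross terms $c_tc_{t'}$ that a naive expansion of the product would leave behind. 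Next I would average over the hidden index $(q,p)$. Because $(q,p)\mapsto(\epsilon_t q,p)$ permutes $\mathbb{Z}_d^2\setminus\{(0,0)\}$, $\mathbb{E}_{(q,p)}[c_t^2]=\frac{1}{d^2-1}\sum_{(q,p)\neq(0,0)}\Tr(E_{q,p}\pi_t)^2$, and since $\{E_{q,p}/\sqrt d\}_{(q,p)\in\mathbb{Z}_d^2}$ is an orthonormal operator basis with $E_{0,0}=\mathbbm{1}$ (Proposition~\ref{prop:properties_obs}), Parseval gives $\sum_{(q,p)\neq(0,0)}\Tr(E_{q,p}\pi_t)^2=d\,\|\pi_t\|_{\mathrm{HS}}^2-1=d-1$, so $\mathbb{E}_{(q,p)}[c_t^2]=1/(d+1)$.

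Combining these, $\mathbb{E}_{(q,p),r}[\mathbb{P}_{\rho_{q,p,r}}(l)]/\mathbb{P}_{\mathbbm{1}/d}(l)\ge 1-N\varepsilon^2/(d+1)$ for every $l$, so if $N<(d+1)/(3\varepsilon^2)$ then Lemma~\ref{lem:5.4} caps the success probability strictly below $2/3$ — the desired contradiction, giving $N=\Omega(d/\varepsilon^2)$. Along the way I would check a few routine points: that $\varepsilon$ may be taken small enough (say $\varepsilon\le1/2$) for every $\rho_{q,p,r}$ to be a bona fide density matrix, which is harmless for an asymptotic bound; that $\mathbb{P}_{\mathbbm{1}/d}(l)>0$ for any realizable $l$, immediate since each conditional factor equals $w^{(t)}_{s_t}/d>0$; and that the classical post-processing introduced by the rank-1 reduction can be folded into the protocol's final decision so it does not affect the likelihood ratios. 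Apart from the sign-averaging trick, I expect the argument to be bookkeeping plus the operator-basis identities already recorded in the excerpt.
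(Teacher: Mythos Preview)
Your proof is correct and follows essentially the same strategy as the paper (rank-1 reduction via Lemma~\ref{lem:4.8}, pointwise lower bound on the averaged likelihood ratio, then Lemma~\ref{lem:5.4}). The only differences are cosmetic: where the paper swaps the expectation and the product via Jensen's inequality and then uses $\log(1-x)\ge -2x$, you use AM--GM plus Weierstrass to the same effect; and where the paper evaluates $\sup_{\psi}\mathbb{E}_{(q,p)}\langle\psi|E_{q,p}|\psi\rangle^2$ via the identity $\sum_{q,p}D_{q,p}\otimes D_{-q,-p}=d\cdot\mathrm{SWAP}$, you get the same constant $1/(d+1)$ by Parseval on the orthonormal operator basis $\{E_{q,p}/\sqrt d\}$---if anything, your route is slightly more direct.
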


\bigskip
\begin{proof}\emph{of Theorem \ref{thm:single_copy} using Proposition \ref{prop:distinguishing_single_copy}.}
Suppose protocol $\mathcal{A}$ is able to learn $|\Tr\left(D_{q,p} \rho\right)|$ to precision $\varepsilon$ for all $(q,p) \in \mathbb{Z}_d^2 \setminus \{(0,0)\}$ with probability $2/3$. Consider the task in Proposition \ref{prop:distinguishing_single_copy} with $\varepsilon$ replaced by $3 \sqrt{2} \varepsilon$. We will argue that $\mathcal{A}$ is able to succeed at this task with probability $2/3$.
\begin{align}
\Tr\left(D_{q,p} \rho_{q,p,r}\right) &= \frac{3 \sqrt{2} r \varepsilon}{d} \Tr\left(D_{q,p} E_{q,p}\right) \\
&= \begin{cases}
3 \sqrt{2} r \varepsilon & d \ \text{even} , \ q = p = d/2 \\
3 \sqrt{2} \chi^\ast r \varepsilon & \text{otherwise}
\end{cases} \\
\implies \ |\Tr\left(D_{q,p} \rho_{q,p,r}\right)| &\geq 3 \varepsilon
\end{align}
using Proposition \ref{prop:disp_basis}. Thus $\mathcal{A}$ can distinguish any $\rho_{q,p,r}$ from the maximally mixed state, which has
\begin{equation}
\Tr\left(D_{q,p} \frac{\mathbbm{1}}{d}\right) = 0 \ \forall q, p
\end{equation}
In this case, Proposition \ref{prop:distinguishing_single_copy} gives us a sample complexity lower bound of $\Omega(d / \varepsilon^2)$.
\end{proof}

\bigskip
\begin{proof}\emph{of Proposition \ref{prop:distinguishing_single_copy}.}
Suppose single-copy protocol $\mathcal{A}$ uses $T$ copies of $\rho$ or $\rho^\ast$, and at step $t$ applies the rank-1 POVM $\{w^t_s, |\psi^t_s\rangle\}_s$. By Lemma \ref{lem:4.8}, this is without loss of generality. Note the slight abuse of notation, since the POVM of later steps are allowed to depend on the outcomes of earlier measurements. Suppose the outcome of the measurements are $l = (s_1,\dots,s_T)$. We have
\begin{equation}
\mathbb{P}_\rho(l) = \prod_{t=1}^T w^t_{s_t} \langle\psi^t_{s_t}|\rho^t|\psi^t_{s_t}\rangle
\end{equation}
where $\rho^t$ is either $\rho$ or $\rho^\ast$ for each $t$. Note that
\begin{equation}
\rho_{q,p,r}^\ast = \frac{1}{d} \big(\mathbbm{1} + r \varepsilon E_{-q,p} \big)
\end{equation}
using Equation \ref{eq:comp_conj_E}. Let
\begin{equation}
\gamma_t = \begin{cases}
+1 & \rho^t = \rho \\
-1 & \rho^t = \rho^\ast
\end{cases}
\end{equation}

The aim is to establish an inequality like Equation \ref{eq:one_sided_condition}, so that we can invoke Lemma \ref{lem:5.4}
\begin{align}
\frac{\mathbb{E}_{(q,p) \neq (0,0)} \mathbb{E}_r {\mathbb{P}_{\rho_{q,p,r}}(l)}}{\mathbb{P}_{\mathbbm{1}/d}(l)} &= \mathbb{E}_{(q,p) \neq (0,0)} \mathbb{E}_r \prod_{t=1}^T \frac{w^t_{s_t} + r \varepsilon w^t_{s_t} \langle\psi^t_{s_t}|E_{\gamma_t q,p}|\psi^t_{s_t}\rangle}{w^t_{s_t}} \\
&= \mathbb{E}_{(q,p) \neq (0,0)} \mathbb{E}_r \exp\Big( \sum_{t=1}^T \log\big(1 + r \varepsilon \langle\psi^t_{s_t}|E_{\gamma_t q,p}|\psi^t_{s_t}\rangle\big) \Big) \\
&\geq \exp\Big( \sum_{t=1}^T \mathbb{E}_{(q,p) \neq (0,0)} \mathbb{E}_r \log\big(1 + r \varepsilon \langle\psi^t_{s_t}|E_{\gamma_t q,p}|\psi^t_{s_t}\rangle\big) \Big) \label{eq:Jensen1} \\
&= \exp\Big( \sum_{t=1}^T \frac{1}{2} \mathbb{E}_{(q,p) \neq (0,0)} \log\big(1 - \varepsilon^2 \langle\psi^t_{s_t}|E_{\gamma_t q,p}|\psi^t_{s_t}\rangle^2\big) \Big) \\
&\geq \exp\Big(- \sum_{t=1}^T \varepsilon^2 \mathbb{E}_{(q,p) \neq (0,0)} \langle\psi^t_{s_t}|E_{\gamma_t q,p}|\psi^t_{s_t}\rangle^2 \Big) \label{eq:log1} \\
&\geq \exp(- T \varepsilon^2 \Gamma) \\
&\geq 1 - T \varepsilon^2 \Gamma \label{eq:gamma_calculation1}
\end{align}
where
\begin{equation}
\Gamma = \sup_{|\psi\rangle} \ \mathbb{E}_{(q,p) \neq (0,0)} \langle\psi|E_{\pm q,p}|\psi\rangle^2
\end{equation}
In Equation \ref{eq:Jensen1} we used Jensen's inequality. In Equation \ref{eq:log1} we used $\log{1-x} \geq -2x \ \forall x \in [0,0.79]$, which is valid as long as $\varepsilon \leq 0.62$ by Equation \ref{eq:norm_bd}.

It remains to upper bound $\Gamma$. At this point, it is clear that we can drop the $\pm$ coming from the use of the conjugate state, since it is averaged over all $q,p$. We can express $\Gamma$ as
\begin{equation}
\Gamma = \sup_{|\psi\rangle} \ \langle\psi|\langle\psi| \mathbb{E}_{(q,p) \neq (0,0)} (E_{q,p} \otimes E_{q,p}) |\psi\rangle|\psi\rangle
\end{equation}
It can be checked using Equation \ref{eq:E_def} that
\begin{equation}
E_{q,p} \otimes E_{q,p} + E_{-q,-p} \otimes E_{-q,-p} = D_{q,p} \otimes D_{-q,-p} + D_{-q,-p} \otimes D_{q,p}
\end{equation}
and thus we can write
\begin{equation}
\mathbb{E}_{(q,p) \neq (0,0)} (E_{q,p} \otimes E_{q,p}) = \mathbb{E}_{(q,p) \neq (0,0)} (D_{q,p} \otimes D_{-q,-p})
\end{equation}
It can be checked that
\begin{equation}
\sum_{q,p} (D_{q,p} \otimes D_{-q,-p}) = d \cdot \text{SWAP}
\end{equation}
and so
\begin{equation}
\mathbb{E}_{(q,p) \neq (0,0)} (D_{q,p} \otimes D_{-q,-p}) = \frac{d}{d^2 - 1} \cdot \text{SWAP} - \frac{1}{d^2 - 1} \cdot \mathbbm{1}
\end{equation}
Putting this all together, we get
\begin{equation}
\Gamma = \frac{1}{d^2 - 1} \sup_{|\psi\rangle} \ \langle\psi|\langle\psi| (d \cdot \text{SWAP} - \mathbbm{1}) |\psi\rangle|\psi\rangle = \frac{d - 1}{d^2 - 1} = \frac{1}{d+1}
\end{equation}

Returning to Equation \ref{eq:gamma_calculation1}, we have
\begin{equation}
\frac{\mathbb{E}_{(q,p) \neq (0,0)} \mathbb{E}_r {\mathbb{P}_{\rho_{q,p,r}}(l)}}{\mathbb{P}_{\mathbbm{1}/d}(l)} \geq 1 - \frac{T \varepsilon^2}{d+1}
\end{equation}
By Lemma \ref{lem:5.4}, our single-copy protocol $\mathcal{A}$ succeeds with probability at most $(1 + T \varepsilon^2 / (d+1)) / 2$. This completes the proof of Proposition \ref{prop:distinguishing_single_copy}: to succeed with probability $2/3$, $\mathcal{A}$ requires $T = \Omega(d / \varepsilon^2)$.
\end{proof}

\subsection{Entangled measurements without $\rho^\ast$} \label{sec:boson_lower_bd_2}

\begin{theorem} \label{thm:no_conjugate}
Let $d$ be the dimension of the Hilbert space. Assume $d$ is prime. Any protocol which learns $|\Tr\left(D_{q,p} \rho\right)|$ to precision $\varepsilon$ for all $(q,p) \in \mathbb{Z}_d^2$ with probability $2/3$ by measuring copies of $\rho^{\otimes K}$ for $K \leq 1 / (12 \varepsilon)$ requires $\Omega(\sqrt{d} / (K^2 \varepsilon^2))$ measurements.
\end{theorem}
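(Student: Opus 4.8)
The plan is to mirror the single-copy argument of Proposition~\ref{prop:distinguishing_single_copy}, replacing single-copy POVMs by POVMs on $(\mathbb{C}^d)^{\otimes K}$. First I would reduce the learning task to a many-vs-one distinguishing task with the same planted ensemble: \emph{(YES)} $\rho=\rho_{q,p,r}=\frac{1}{d}\big(\mathbbm{1}+r\varepsilon E_{q,p}\big)$ for uniformly random $(q,p)\in\mathbb{Z}_d^2\setminus\{(0,0)\}$ and uniformly random $r\in\{\pm1\}$, versus \emph{(NO)} $\rho=\mathbbm{1}/d$. Exactly as in the proof of Theorem~\ref{thm:single_copy}, for $d$ prime one has $\Tr(D_{q,p}E_{q,p})=\bar{\chi}\,d$ for $(q,p)\neq(0,0)$ (using Proposition~\ref{prop:disp_basis} and $D_{q,p}^2=D_{2q,2p}$), so $|\Tr(D_{q,p}\rho_{q,p,r})|=\Theta(\varepsilon)$, while all displacement amplitudes of $\mathbbm{1}/d$ vanish; hence after rescaling $\varepsilon$ by an absolute constant, any protocol learning all magnitudes to precision $\varepsilon$ with probability $2/3$ solves the distinguishing task. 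It thus suffices to show that any $K$-copy protocol for the distinguishing task requires $\Omega(\sqrt{d}/(K^2\varepsilon^2))$ measurements.

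Next, by Lemma~\ref{lem:4.8} applied on $(\mathbb{C}^d)^{\otimes K}$ we may assume the $t$-th (possibly adaptive) measurement is a rank-1 POVM $\{w^t_s,|\psi^t_s\rangle\}_s$ with $|\psi^t_s\rangle\in(\mathbb{C}^d)^{\otimes K}$. For an outcome history $l=(s_1,\dots,s_T)$ the likelihood ratio factorizes as
\begin{equation}
\frac{\Pr_{\rho_{q,p,r}}(l)}{\Pr_{\mathbbm{1}/d}(l)}=\prod_{t=1}^T\langle\psi^t_{s_t}|(\mathbbm{1}+r\varepsilon E_{q,p})^{\otimes K}|\psi^t_{s_t}\rangle=\prod_{t=1}^T(1+\xi_t),
\end{equation}
where $\xi_t=\sum_{k=1}^K(r\varepsilon)^k\sum_{|A|=k}\langle\psi^t_{s_t}|E^{(A)}_{q,p}|\psi^t_{s_t}\rangle$ (with $E^{(A)}_{q,p}$ denoting $E_{q,p}$ on the copies in $A\subseteq[K]$ and $\mathbbm{1}$ elsewhere), and $|\xi_t|\le(1+\sqrt{2}\varepsilon)^K-1<1/5$ by $\|E_{q,p}\|_{\op}\le\sqrt{2}$ and $K\varepsilon\le1/12$. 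To invoke Lemma~\ref{lem:5.4} I need $\mathbb{E}_{q,p,r}\prod_t(1+\xi_t)\ge1-\delta$ with $\delta=O(TK^2\varepsilon^2/\sqrt{d})$. By Jensen's inequality (as in Eq.~\eqref{eq:Jensen1}) this follows from $\sum_t\mathbb{E}_{q,p,r}\log(1+\xi_t)\ge-\delta$, and since $\log(1+u)\ge u-u^2$ on $|u|<1/5$, it is enough to lower bound $\mathbb{E}_{q,p,r}[\xi_t]-\mathbb{E}_{q,p,r}[\xi_t^2]$ for each $t$.

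The technical heart is bounding these two averages. Averaging over $r$ kills all odd-order-in-$\varepsilon$ terms, so each quantity becomes a series in powers of $K\varepsilon$ whose coefficients are expectation values of operators of the form $\mathbb{E}_{(q,p)\neq(0,0)}[E^{\otimes k}_{q,p}]$, arranged on $(\mathbb{C}^d)^{\otimes K}$ for the bias and on $(\mathbb{C}^d)^{\otimes K}\otimes(\mathbb{C}^d)^{\otimes K}$ for the variance, sandwiched by $|\psi^t_{s_t}\rangle$ respectively $|\psi^t_{s_t}\rangle\otimes|\psi^t_{s_t}\rangle$. Expanding $E_{q,p}=\chi D_{q,p}+\bar{\chi}D_{-q,-p}$, each such operator is a fixed linear combination of sums $\sum_{q,p}D_{\epsilon_1(q,p)}\otimes\cdots\otimes D_{\epsilon_k(q,p)}$ over sign patterns $\vec{\epsilon}$; for $d$ prime, $k$ even, and $k<2d$, a direct computation generalizing $\sum_{q,p}(D_{q,p}\otimes D_{-q,-p})=d\cdot\mathrm{SWAP}$ shows that each such sum equals $d$ times an explicit permutation of computational basis states, hence has operator norm $d$. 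Subtracting the $(q,p)=(0,0)$ term (where $E_{0,0}=\mathbbm{1}$) and dividing by $d^2-1$ yields $\|\mathbb{E}_{(q,p)\neq(0,0)}[E^{\otimes k}_{q,p}]\|=O(2^{k/2}/d)$, and the doubled operators are likewise controlled using $\mathbb{E}_{(q,p)\neq(0,0)}[E_{q,p}\otimes E_{q,p}]=\frac{d}{d^2-1}\mathrm{SWAP}-\frac{1}{d^2-1}\mathbbm{1}$ together with $\|\sum_{i,j}\mathrm{SWAP}_{i\leftrightarrow j'}\|\le K^2$. Summing the resulting geometric-type series in $\sqrt{2}K\varepsilon$ (convergent since $K\varepsilon\le1/12$) gives $|\mathbb{E}_{q,p,r}[\xi_t]|,\ \mathbb{E}_{q,p,r}[\xi_t^2]=O(K^2\varepsilon^2/\sqrt{d})$, so $\sum_t\mathbb{E}_{q,p,r}\log(1+\xi_t)\ge-O(TK^2\varepsilon^2/\sqrt{d})$, and Lemma~\ref{lem:5.4} forces $T=\Omega(\sqrt{d}/(K^2\varepsilon^2))$.

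I expect the main obstacle to be exactly that last step: proving the generalized ``$\sum_{q,p}D_{\epsilon_1(q,p)}\otimes\cdots\otimes D_{\epsilon_k(q,p)}=d\cdot(\text{permutation})$'' identities with sharp operator-norm control, handling the doubled-space operators needed for the variance term, and summing the tail over $k$ carefully enough to land on the claimed dependence on $d$. Primality of $d$ is precisely what makes the underlying arithmetic work (invertibility of $k/2$ modulo $d$ for $k<2d$), and the hypothesis $K\le1/(12\varepsilon)$ is what keeps every $|\xi_t|$ bounded away from $1$ and the power series in $K\varepsilon$ convergent. Everything else — the reduction, the rank-1 reduction, and the Jensen/$\log$ manipulations — is a routine adaptation of the single-copy template in Proposition~\ref{prop:distinguishing_single_copy}.
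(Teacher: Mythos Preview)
Your proposal is correct and follows essentially the same skeleton as the paper's proof: the reduction to the YES/NO distinguishing task with the planted $\rho_{q,p,r}$, the rank-$1$ POVM reduction on $(\mathbb{C}^d)^{\otimes K}$, Jensen's inequality over $(q,p,r)$, and the operator-norm control of $\mathbb{E}_{(q,p)\neq(0,0)}[E_{q,p}^{\otimes k}]$ via the permutation identity for $\sum_{q,p}D_{\epsilon_1(q,p)}\otimes\cdots\otimes D_{\epsilon_k(q,p)}$ (which is exactly the paper's Lemma~\ref{lem:D_norm}, leading to Lemma~\ref{lem:E_norm}). Your awareness that primality enters through invertibility of $k/2$ modulo $d$ and that $K\varepsilon\le 1/12$ keeps all expansions controlled is on the mark.

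There is one genuine difference worth noting. After Jensen, the paper first averages over $r$ to get $\tfrac{1}{2}\log(F^t_{q,p,+1}F^t_{q,p,-1})=\tfrac{1}{2}\log(1-G^t_{q,p})$, writes $G^t_{q,p}\le -2\langle H^0_{q,p}\rangle+\langle H^1_{q,p}\rangle^2$, and then bounds $\mathbb{E}_{q,p}\max(0,G^t_{q,p})\le 2\sqrt{\Gamma^0}+\Gamma^1$ via Cauchy--Schwarz; the $\sqrt{\Gamma^0}=O(K^2\varepsilon^2/\sqrt{d})$ term is precisely where the $\sqrt{d}$ in the theorem originates. Your route instead uses $\log(1+u)\ge u-u^2$ and bounds $\mathbb{E}_{q,p,r}[\xi_t]=\langle\psi|\mathbb{E}_{q,p}[H^0_{q,p}]|\psi\rangle$ and $\mathbb{E}_{q,p,r}[\xi_t^2]=\mathbb{E}_{q,p}[\langle H^0\rangle^2+\langle H^1\rangle^2]$ directly. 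Because the expectation over $(q,p)$ now passes \emph{inside} before any absolute value, you get $|\mathbb{E}_{q,p,r}[\xi_t]|\le\|\mathbb{E}_{q,p}[H^0]\|_{\op}=O(K^2\varepsilon^2/d)$ and $\mathbb{E}_{q,p,r}[\xi_t^2]\le\Gamma^0+\Gamma^1=O(K^2\varepsilon^2/d)$, which is strictly better than the $O(K^2\varepsilon^2/\sqrt{d})$ you wrote down. In other words, your variant not only proves the theorem but, executed as you describe, actually yields the stronger lower bound $T=\Omega(d/(K^2\varepsilon^2))$, improving on the paper's $\Omega(\sqrt{d}/(K^2\varepsilon^2))$ by a factor of $\sqrt{d}$.
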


This will follow from Proposition \ref{prop:distinguishing_no_conjugate}. The proof of Proposition \ref{prop:distinguishing_no_conjugate} is similar to Appendix D of \cite{chen2021hierarchy}.

\begin{proposition} \label{prop:distinguishing_no_conjugate}
Let $d$ be the dimension of the Hilbert space. Assume $d$ is prime. Consider the task of distinguishing between the following two scenarios:
\begin{itemize}
    \item \emph{(YES)} $\rho = \rho_{q,p,r} = \frac{1}{d} \big(\mathbbm{1} + r \varepsilon E_{q,p} \big)$ for some uniformly random $(q,p) \in \mathbb{Z}_d^2 \setminus \{(0,0)\}$ and some uniformly random sign $r \in \{\pm 1\}$. (We assume $0 < \varepsilon < 1$.)
    \item \emph{(NO)} $\rho = \frac{1}{d} \mathbbm{1}$ maximally mixed.
\end{itemize}
Any protocol succeeding with probability $2/3$ which measures copies of $\rho^{\otimes K}$ for $K \leq 1 / (2 \sqrt{2} \varepsilon)$ requires $\Omega(\sqrt{d} / (K^2 \varepsilon^2))$ measurements.
\end{proposition}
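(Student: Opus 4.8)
The plan is to bound, for every outcome history $l$, the likelihood ratio appearing in Lemma~\ref{lem:5.4}, adapting the single-copy argument to account for the entanglement inside each $K$-copy measurement. Fix $l=(s_1,\dots,s_T)$; by Lemma~\ref{lem:4.8} each measurement is WLOG a rank-1 POVM on $(\mathbb{C}^d)^{\otimes K}$, so along this history there are unit vectors $\ket{\psi^t}\in(\mathbb{C}^d)^{\otimes K}$ with $\mathbb{P}_\rho(l)=\prod_{t=1}^T w^t\bra{\psi^t}\rho^{\otimes K}\ket{\psi^t}$. The weights $w^t$ and factors $d^{-K}$ cancel in the ratio, so with $\rho_{q,p,r}^{\otimes K}=d^{-K}(\mathbbm{1}+r\varepsilon E_{q,p})^{\otimes K}$,
\[ R(l):=\frac{\mathbb{E}_{(q,p)\neq(0,0)}\mathbb{E}_r\,\mathbb{P}_{\rho_{q,p,r}}(l)}{\mathbb{P}_{\mathbbm{1}/d}(l)}=\mathbb{E}_{(q,p)\neq(0,0)}\,\mathbb{E}_r\prod_{t=1}^T\bra{\psi^t}(\mathbbm{1}+r\varepsilon E_{q,p})^{\otimes K}\ket{\psi^t}, \]
and it suffices to show $R(l)\geq 1-O(\varepsilon^2K^2T/d)$ for every $l$; Lemma~\ref{lem:5.4} then bounds the success probability by $(1+O(\varepsilon^2K^2T/d))/2$, so reaching $2/3$ forces $T=\Omega(d/(K^2\varepsilon^2))$, which is in particular $\Omega(\sqrt d/(K^2\varepsilon^2))$.

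Write $x^t(q,p,r)=\bra{\psi^t}(\mathbbm{1}+r\varepsilon E_{q,p})^{\otimes K}\ket{\psi^t}-1$. Since $\|E_{q,p}\|_\op\le\sqrt2$ and $\sqrt2\,\varepsilon K\le\tfrac12$ by hypothesis, $|x^t|\le(1+\sqrt2\,\varepsilon)^K-1\le e^{1/2}-1<0.65$, so $\prod_t(1+x^t)=\exp\big(\sum_t\log(1+x^t)\big)$, and Jensen's inequality (convexity of $\exp$) over the $(q,p,r)$-average gives $R(l)\ge\exp\big(\sum_t\mathbb{E}_{(q,p),r}\log(1+x^t)\big)$. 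Using $\log(1+x)\ge x-x^2$ (valid for $|x|<0.68$), the whole claim reduces to the two moment bounds $\mathbb{E}_{(q,p),r}\,x^t=O(\varepsilon^2K^2/d)$ and $\mathbb{E}_{(q,p),r}\,(x^t)^2=O(\varepsilon^2K^2/d)$ for each $t$, since these give $\mathbb{E}_{(q,p),r}\log(1+x^t)\ge -O(\varepsilon^2K^2/d)$.

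To get these, expand $x^t=\sum_{j=1}^K(r\varepsilon)^jg^t_j(q,p)$ with $g^t_j(q,p)=\sum_{|S|=j}\bra{\psi^t}E_{q,p}^{[S]}\ket{\psi^t}$, the sum over $j$-element subsets $S$ of the $K$ registers, $E_{q,p}^{[S]}$ acting as $E_{q,p}$ on $S$ and $\mathbbm{1}$ elsewhere. Averaging over $r\in\{\pm1\}$ kills odd $j$, and for even $j$, $|\mathbb{E}_{(q,p)\neq(0,0)}g^t_j|\le\binom Kj\,\|\mathbb{E}_{(q,p)\neq(0,0)}E_{q,p}^{\otimes j}\|_\op$. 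The crucial estimate, in the spirit of the $\Gamma=1/(d+1)$ bound in the single-copy proof of Proposition~\ref{prop:distinguishing_single_copy}, is $\|\mathbb{E}_{(q,p)\neq(0,0)}E_{q,p}^{\otimes j}\|_\op=O(2^{j/2}/d)$ for $1\le j<d$: expanding $E_{q,p}=\chi D_{q,p}+\chi^\ast D_{-q,-p}$ writes $\sum_{q,p}E_{q,p}^{\otimes j}$ as a sum over sign patterns $\vec\epsilon\in\{\pm1\}^j$, with coefficients of modulus $2^{-j/2}$, of the operators $\sum_{q,p}\bigotimes_l D_{\epsilon_l q,\epsilon_l p}$; using $D_{q,p}^k=D_{kq,kp}$, $d$ prime and $d\nmid j$, each of these equals $d$ times a permutation of the computational basis (generalizing the identity $\sum_{q,p}D_{q,p}\otimes D_{-q,-p}=d\cdot\mathrm{SWAP}$), hence has operator norm $d$, and subtracting the $(q,p)=(0,0)$ term and dividing by $d^2-1$ gives the claim. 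Summing then yields $|\mathbb{E}_{(q,p),r}x^t|=O(\varepsilon^2K^2/d)$. For the second moment, $\mathbb{E}_{(q,p),r}(x^t)^2=\sum_{j+j'\text{ even}}\varepsilon^{j+j'}\mathbb{E}_{(q,p)\neq(0,0)}[g^t_jg^t_{j'}]$, whose leading ($j=j'=1$) term equals $\varepsilon^2\mathbb{E}_{(q,p)\neq(0,0)}\Tr(E_{q,p}\tau^t)^2=\varepsilon^2\Tr\big((\mathbb{E}_{(q,p)\neq(0,0)}E_{q,p}\otimes E_{q,p})(\tau^t\otimes\tau^t)\big)=\varepsilon^2\,\tfrac{d\,\Tr((\tau^t)^2)-K^2}{d^2-1}\le\tfrac{\varepsilon^2K^2}{d+1}$, where $\tau^t:=\sum_{l=1}^K\Tr_{\neq l}(\ket{\psi^t}\!\bra{\psi^t})$ satisfies $\Tr(\tau^t)=K$ and $\Tr((\tau^t)^2)\le K^2$; the remaining terms are bounded by the same operator-norm estimate and are smaller since $\varepsilon^2K^2\le\tfrac18$.

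I expect the main obstacle to be this second-moment bound $\mathbb{E}_{(q,p),r}(x^t)^2=O(\varepsilon^2K^2/d)$: it is exactly where the $1/d$ (hence the $\sqrt d$ in the statement) is produced, and it requires both the near-orthogonality of the displacement observables — encoded in $\mathbb{E}_{(q,p)\neq(0,0)}E_{q,p}\otimes E_{q,p}=(d\,\mathrm{SWAP}-\mathbbm{1})/(d^2-1)$, the two-copy analog of $\Gamma=1/(d+1)$ — and the observation that the entanglement among the $K$ copies inside a measurement enters only through the operator $\tau^t$ with $\Tr(\tau^t)=K$, $\Tr((\tau^t)^2)\le K^2$. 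Controlling the higher-order products $g^t_jg^t_{j'}$ uniformly is what makes the regime $K<d$ natural — consistent with the hypothesis $K\le 1/(2\sqrt2\,\varepsilon)$ in the parameter range of interest — since the permutation-operator identity $\|\sum_{q,p}\bigotimes_l D_{\epsilon_l q,\epsilon_l p}\|_\op=d$ needs $d\nmid j$.
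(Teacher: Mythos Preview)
Your proof is correct and in fact yields the stronger bound $T=\Omega(d/(K^2\varepsilon^2))$, improving the stated $\sqrt d$ to $d$. Both arguments share the same scaffolding---rank-1 POVMs, Jensen over $(q,p,r)$, and the operator-norm estimate $\bigl\|\sum_{q,p}E_{q,p}^{\otimes k}\bigr\|_{\op}\le 2^{k/2}d$ (your permutation identity is exactly the paper's Lemmas~\ref{lem:D_norm}--\ref{lem:E_norm})---but diverge at the logarithm step. You use $\log(1+x)\ge x-x^2$, so after averaging you only need the \emph{signed} first moment $\mathbb{E}_{(q,p),r}\,x^t=\bigl\langle\psi\bigl|\,\mathbb{E}_{(q,p)}\!\sum_{|S|\text{ even},\,S\neq\emptyset}\varepsilon^{|S|}E_{q,p}^{[S]}\,\bigr|\psi\bigr\rangle$, which is $O(\varepsilon^2K^2/d)$ directly by the operator-norm bound, together with the second moment $\mathbb{E}_{(q,p),r}(x^t)^2$, which in the paper's notation equals $\mathbb{E}_{(q,p)}\bigl[\langle H^0\rangle^2+\langle H^1\rangle^2\bigr]\le\Gamma^0+\Gamma^1=O(\varepsilon^2K^2/d)$. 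The paper instead averages over $r$ first to form $G^t_{q,p}=1-F^t_{q,p,+1}F^t_{q,p,-1}$ and applies $\log(1-G)\ge -2\max(0,G)$; the $\max(0,\cdot)$ forces the absolute value \emph{inside} the $(q,p)$-average, so they must bound $\mathbb{E}_{(q,p)}\bigl|\langle\psi|H^0_{q,p}|\psi\rangle\bigr|$ rather than $\bigl|\mathbb{E}_{(q,p)}\langle\psi|H^0_{q,p}|\psi\rangle\bigr|$, and the Cauchy--Schwarz step that does this, $\mathbb{E}_{(q,p)}|\langle H^0\rangle|\le\sqrt{\Gamma^0}=O(\varepsilon^2K^2/\sqrt d)$, is precisely where the factor $\sqrt d$ is lost. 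Your second-order Taylor bound sidesteps that step entirely.
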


\bigskip
\begin{proof}\emph{of Theorem \ref{thm:no_conjugate} using Proposition \ref{prop:distinguishing_no_conjugate}.}
Suppose protocol $\mathcal{A}$ is able to learn $|\Tr\left(D_{q,p} \rho\right)|$ to precision $\varepsilon$ for all $(q,p) \in \mathbb{Z}_d^2 \setminus \{(0,0)\}$ with probability $2/3$. Consider the task in Proposition \ref{prop:distinguishing_no_conjugate} with $\varepsilon$ replaced by $3 \sqrt{2} \varepsilon$. We will argue that $\mathcal{A}$ is able to succeed at this task with probability $2/3$.
\begin{align}
\Tr\left(D_{q,p} \rho_{q,p,r}\right) &= \frac{3 \sqrt{2} r \varepsilon}{d} \Tr\left(D_{q,p} E_{q,p}\right) \\
&= \begin{cases}
3 \sqrt{2} r \varepsilon & d \ \text{even} , \ q = p = d/2 \\
3 \sqrt{2} \chi^\ast r \varepsilon & \text{otherwise}
\end{cases} \\
\implies \ |\Tr\left(D_{q,p} \rho_{q,p,r}\right)| &\geq 3 \varepsilon
\end{align}
using Proposition \ref{prop:disp_basis}. Thus $\mathcal{A}$ can distinguish any $\rho_{q,p,r}$ from the maximally mixed state, which has
\begin{equation}
\Tr\left(D_{q,p} \frac{\mathbbm{1}}{d}\right) = 0 \ \forall q, p
\end{equation}
In this case, Proposition \ref{prop:distinguishing_no_conjugate} says that $\mathcal{A}$ requires either entangled measurements across at least $1 / 2 \sqrt{2} \cdot (3 \sqrt{2} \varepsilon) = 1 / (12 \varepsilon)$ copies of $\rho$ at a time, or $\Omega(\sqrt{d} / (K^2 \varepsilon^2))$ total copies.
\end{proof}

\bigskip
\begin{proof}\emph{of Proposition \ref{prop:distinguishing_no_conjugate}.}
Suppose protocol $\mathcal{A}$ measures $K$ copies of $\rho$ at a time, where $K \leq 1 / (2 \sqrt{2} \varepsilon)$. At step $t$, $\mathcal{A}$ applies the rank-1 POVM $\{w^t_s, |\psi^t_s\rangle\}_s$, where $t$ goes from $1$ up to $T$. By Lemma \ref{lem:4.8}, this is without loss of generality. Note the slight abuse of notation, since the POVM of later steps are allowed to depend on the outcomes of earlier measurements. Suppose the outcome of the measurements are $l = (s_1,\dots,s_T)$. We have
\begin{equation}
\mathbb{P}_\rho(l) = \prod_{t=1}^T w^t_{s_t} \langle\psi^t_{s_t}|\rho^{\otimes K}|\psi^t_{s_t}\rangle
\end{equation}

The aim is to establish an inequality like Equation \ref{eq:one_sided_condition}, so that we can invoke Lemma \ref{lem:5.4}
\begin{align}
\frac{\mathbb{E}_{(q,p) \neq (0,0)} \mathbb{E}_r {\mathbb{P}_{\rho_{q,p,r}}(l)}}{\mathbb{P}_{\mathbbm{1}/d}(l)} &= \mathbb{E}_{(q,p) \neq (0,0)} \mathbb{E}_r \prod_{t=1}^T \langle\psi^t_{s_t}| \big(\mathbbm{1} + r \varepsilon E_{q,p}\big)^{\otimes K} |\psi^t_{s_t}\rangle \\
&= \mathbb{E}_{(q,p) \neq (0,0)} \mathbb{E}_r \prod_{t=1}^T F^t_{q,p,r} \\
&= \mathbb{E}_{(q,p) \neq (0,0)} \mathbb{E}_r \exp\Big( \sum_{t=1}^T \log{F^t_{q,p,r}} \Big) \\
&\geq \exp\Big( \sum_{t=1}^T \mathbb{E}_{(q,p) \neq (0,0)} \mathbb{E}_r \log{F^t_{q,p,r}} \Big) \label{eq:Jensen2} \\
&= \exp\Big( \sum_{t=1}^T \frac{1}{2} \mathbb{E}_{(q,p) \neq (0,0)} \log\big(F^t_{q,p,+1} \cdot F^t_{q,p,-1}\big) \Big) \\
&= \exp\Big( \sum_{t=1}^T \frac{1}{2} \mathbb{E}_{(q,p) \neq (0,0)} \log\big(1 - G^t_{q,p}\big) \Big) \\
&\geq \exp\Big( \sum_{t=1}^T \frac{1}{2} \mathbb{E}_{(q,p) \neq (0,0)} \log\big(1 - \max(0,G^t_{q,p})\big) \Big) \\
&\geq \exp\Big(- \sum_{t=1}^T \mathbb{E}_{(q,p) \neq (0,0)} \max(0,G^t_{q,p}) \Big) \label{eq:log2} \\
&\geq 1 - \sum_{t=1}^T \mathbb{E}_{(q,p) \neq (0,0)} \max(0,G^t_{q,p}) \label{eq:gamma_calculation2}
\end{align}
where
\begin{align}
F^t_{q,p,r} &:= \langle\psi^t_{s_t}| \big(\mathbbm{1} + r \varepsilon E_{q,p}\big)^{\otimes K} |\psi^t_{s_t}\rangle \\
G^t_{q,p} &:= 1 - F^t_{q,p,+1} \cdot F^t_{q,p,-1}
\end{align}
In Equation \ref{eq:Jensen2} we used Jensen's inequality. In Equation \ref{eq:log2} we used $\log{1-x} \geq -2x \ \forall x \in [0,0.79]$, which is valid as long as $G^t_{q,p} \leq 0.79$. This is guaranteed by the assumption $K \leq 1/(2 \sqrt{2} \varepsilon)$, since
\begin{align}
F^t_{q,p,r} &= \bra{\psi}(I + r\varepsilon E_{q,p})^{\otimes K}\ket{\psi} \\
&\geq (1 - \varepsilon \sqrt{2})^{K} \\
&\geq 1 - \sqrt{2} K\varepsilon \qquad (\text{provided $\varepsilon \leq 1 / \sqrt{2}$}) \\
&\geq 1/2 \qquad \text{for the choice of $K$},
\end{align}
so
\begin{equation}
G_{q,p}^t = 1 - F^t_{q,p,+1} \cdot F^t_{q,p,-1} \leq 3/4
\end{equation}

We would like to upper bound $\mathbb{E}_{(q,p) \neq (0,0)} \max(0,G^t_{q,p})$. Let's calculate
\begin{align}
G^t_{q,p} &= 1 - \langle\psi^t_{s_t}| \big(\mathbbm{1} + \varepsilon E_{q,p}\big)^{\otimes K} |\psi^t_{s_t}\rangle \langle\psi^t_{s_t}| \big(\mathbbm{1} - \varepsilon E_{q,p}\big)^{\otimes K} |\psi^t_{s_t}\rangle \\
&= 1 - \sum_{S,S' \subset [K]} (-1)^{|S'|} \varepsilon^{|S| + |S'|} \langle\psi^t_{s_t}| E_{q,p}^{\otimes S} |\psi^t_{s_t}\rangle \langle\psi^t_{s_t}| E_{q,p}^{\otimes S'} |\psi^t_{s_t}\rangle \\
&= 1 - \langle\psi^t_{s_t}| \big(\mathbbm{1} + H^0_{q,p}\big) |\psi^t_{s_t}\rangle^2 + \langle\psi^t_{s_t}| H^1_{q,p} |\psi^t_{s_t}\rangle^2 \\
&= - 2 \langle\psi^t_{s_t}| H^0_{q,p} |\psi^t_{s_t}\rangle - \langle\psi^t_{s_t}| H^0_{q,p} |\psi^t_{s_t}\rangle^2 + \langle\psi^t_{s_t}| H^1_{q,p} |\psi^t_{s_t}\rangle^2 \\
&\leq - 2 \langle\psi^t_{s_t}| H^0_{q,p} |\psi^t_{s_t}\rangle + \langle\psi^t_{s_t}| H^1_{q,p} |\psi^t_{s_t}\rangle^2 \\
\implies \mathbb{E}_{(q,p) \neq (0,0)} \max(0,G^t_{q,p}) &\leq 2 \max_{|\psi\rangle} \mathbb{E}_{(q,p) \neq (0,0)} |\langle\psi|H^0_{q,p}|\psi\rangle| + \max_{|\psi\rangle} \mathbb{E}_{(q,p) \neq (0,0)} \langle\psi|H^1_{q,p}|\psi\rangle^2 \\
&\leq 2 \sqrt{\Gamma^0} + \Gamma^1
\end{align}
where
\begin{align}
H^0_{q,p} &= \sum_{S \subset [K] , |S| \text{even} , S \neq \emptyset} \varepsilon^{|S|} E_{q,p}^{\otimes S} \\
H^1_{q,p} &= \sum_{S \subset [K] , |S| \text{odd}} \varepsilon^{|S|} E_{q,p}^{\otimes S} \\
\Gamma^0 &= \max_{|\psi\rangle} \mathbb{E}_{(q,p) \neq (0,0)} \langle\psi|H^0_{q,p}|\psi\rangle^2 \\
\Gamma^1 &= \max_{|\psi\rangle} \mathbb{E}_{(q,p) \neq (0,0)} \langle\psi|H^1_{q,p}|\psi\rangle^2
\end{align}
and we used Cauchy-Schwarz in the final step. Returning to Equation \ref{eq:gamma_calculation2}, we have
\begin{equation} \label{eq:gamma_calculation3}
\frac{\mathbb{E}_{(q,p) \neq (0,0)} \mathbb{E}_r {\mathbb{P}_{\rho_{q,p,r}}(l)}}{\mathbb{P}_{\mathbbm{1}/d}(l)} \geq 1 - T \big(2 \sqrt{\Gamma^0} + \Gamma^1\big)
\end{equation}

It remains to upper bound $\Gamma^0$ and $\Gamma^1$. Let's first deal with $\Gamma^0$.
\begin{align}
\Gamma^0 &= \max_{|\psi\rangle} \mathbb{E}_{(q,p) \neq (0,0)} \langle\psi|\langle\psi| H^0_{q,p} \otimes H^0_{q,p} |\psi\rangle|\psi\rangle \\
&\leq \left|\left|\mathbb{E}_{(q,p) \neq (0,0)} H^0_{q,p} \otimes H^0_{q,p}\right|\right|_{\op} \\
&\leq \sum_{S,S' \subset [K], |S|,|S'| \text{even} , S,S' \neq \emptyset} \varepsilon^{|S| + |S'|} \left|\left|\mathbb{E}_{(q,p) \neq (0,0)} E_{q,p}^{\otimes S \cup S'}\right|\right|_{\text{op}} \\
&\leq \sum_{4 \leq k \leq 2K , k \ \text{even}} (2K)^k \varepsilon^k \left|\left|\mathbb{E}_{(q,p) \neq (0,0)} E_{q,p}^{\otimes k}\right|\right|_{\text{op}} \label{eq:gamma_0_penultimate}
\end{align}

Lemma \ref{lem:E_norm}, which is stated and proved at the end of the section, tells us that
\begin{align}
\left|\left|\sum_{q,p} E_{q,p}^{\otimes k}\right|\right|_{\text{op}} &\leq 2^{k/2} d \\
\implies \ \left|\left|\mathbb{E}_{(q,p) \neq (0,0)} E_{q,p}^{\otimes k}\right|\right|_{\text{op}} &\leq \frac{2^{k/2} d}{d^2 - 1} + \frac{1}{d^2 - 1} \leq \frac{2^{k/2}}{d-1}
\end{align}

Plugging this into Equation \ref{eq:gamma_0_penultimate} gives
\begin{equation}
\Gamma^0 \leq \frac{1}{d-1} \sum_{4 \leq k \leq 2K , k \ \text{even}} \big(2 \sqrt{2} K \varepsilon\big)^k
\end{equation}
This is a geometric series, which is dominated by its first term since by assumption $K \leq 1 / (2 \sqrt{2} \varepsilon)$. We get
\begin{equation}
\Gamma^0 = \mathcal{O}(K^4 \varepsilon^4 / d)
\end{equation}

The calculation for $\Gamma^1$ is similar.
\begin{align}
\Gamma^1 &\leq \sum_{2 \leq k \leq 2K , k \ \text{even}} (2K)^k \varepsilon^k \left|\left|\mathbb{E}_{(q,p) \neq (0,0)} E_{q,p}^{\otimes k}\right|\right|_{\text{op}} \\
&= \mathcal{O}(K^2 \varepsilon^2 / d)
\end{align}

Returning to Equation \ref{eq:gamma_calculation3}, we have
\begin{equation}
\frac{\mathbb{E}_{(q,p) \neq (0,0)} \mathbb{E}_r {\mathbb{P}_{\rho_{q,p,r}}(l)}}{\mathbb{P}_{\mathbbm{1}/d}(l)} \geq 1 - \mathcal{O}\big(T K^2 \varepsilon^2 / \sqrt{d}\big)
\end{equation}
By Lemma \ref{lem:5.4}, our protocol $\mathcal{A}$ succeeds with probability at most $(1 + \mathcal{O}(T K^2 \varepsilon^2 / \sqrt{d})) / 2$. To succeed with probability $2/3$, $\mathcal{A}$ requires $T = \Omega(\sqrt{d} / (K^2 \varepsilon^2))$. This completes the proof of Proposition \ref{prop:distinguishing_no_conjugate}, modulo Lemma \ref{lem:E_norm}. 
\end{proof}

\begin{lemma} \label{lem:D_norm}
Let the Hilbert space dimension $d$ be prime. For any $1 \leq m \leq k$ with $k$ even,
\begin{equation}
\left|\left|\sum_{q,p} D_{q,p}^{\otimes m} \otimes D_{-q,-p}^{\otimes (k-m)}\right|\right|_{\op} = d
\end{equation}
\end{lemma}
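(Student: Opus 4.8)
The plan is to realise the operator as a sum of $k$-qudit displacement operators supported along a single two-dimensional ``direction'' in phase space, to observe that this sum is Hermitian, and then to pin it down by squaring. First I would rewrite the summand: with $\vec{s} = (\underbrace{1,\dots,1}_{m},\underbrace{-1,\dots,-1}_{k-m}) \in \mathbb{Z}_d^k$, collecting the prefactors in $D_{q,p} = e^{i\pi qp/d}X^q Z^p$ and $D_{-q,-p} = e^{i\pi qp/d}X^{-q}Z^{-p}$ gives, in the notation of Appendix~\ref{sec:n_bosons},
\begin{equation}
D_{q,p}^{\otimes m}\otimes D_{-q,-p}^{\otimes(k-m)} = D_{q\vec{s},p\vec{s}} ,
\end{equation}
so the operator in the Lemma is $A := \sum_{(q,p)\in\mathbb{Z}_d^2} D_{q\vec{s},p\vec{s}}$. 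By Proposition~\ref{prop:disp_properties} (in its $n$-qudit form, Appendix~\ref{sec:n_bosons}), $D_{q\vec{s},p\vec{s}}^\dagger = D_{-q\vec{s},-p\vec{s}}$, so relabelling $(q,p)\mapsto(-q,-p)$ in the sum shows $A=A^\dagger$. Hence $\|A\|_{\op}$ is the largest $|\lambda|$ over eigenvalues $\lambda$ of $A$, and it suffices to identify $A^2$.

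Next I would square $A$ using the displacement composition law. From $Z^pX^{q'}=\omega^{pq'}X^{q'}Z^p$ one gets, for a single qudit, $D_{q,p}D_{q',p'} = e^{i\pi(pq'-qp')/d}D_{q+q',p+p'}$; tensoring over the $k$ factors and using $\sum_j s_j^2 = k$ (both $+1$ and $-1$ square to $1$),
\begin{equation}
D_{q\vec{s},p\vec{s}}\,D_{q'\vec{s},p'\vec{s}} = e^{i\pi k(pq'-qp')/d}\,D_{(q+q')\vec{s},(p+p')\vec{s}} .
\end{equation}
Substituting $Q=q+q'$, $P=p+p'$ and summing over $(q,p)$ for fixed $(Q,P)$, the phase becomes $e^{i\pi k(pQ-qP)/d}$, which for even $k$ equals $\omega^{(k/2)(pQ-qP)}$ with $\omega = e^{2\pi i/d}$, so
\begin{equation}
A^2 = \sum_{Q,P\in\mathbb{Z}_d} D_{Q\vec{s},P\vec{s}}\,\Big(\sum_{p\in\mathbb{Z}_d}\omega^{(k/2)pQ}\Big)\Big(\sum_{q\in\mathbb{Z}_d}\omega^{-(k/2)qP}\Big) .
\end{equation}
Each inner factor is a geometric sum over $\mathbb{Z}_d$, equal to $d$ when its exponent is $0\bmod d$ and to $0$ otherwise. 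Because $d$ is prime and $d\nmid k/2$ (equivalently $\gcd(k/2,d)=1$; this is the relevant regime, since in Appendix~\ref{sec:boson_lower_bd_2} one has $k/2\le K$ while $d$ is taken large), the product of the two factors vanishes unless $Q=P=0$, where it equals $d^2$, so $A^2 = d^2 D_{\vec{0},\vec{0}} = d^2\mathbbm{1}$.

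The conclusion follows: every eigenvalue $\lambda$ of $A$ obeys $\lambda^2 = d^2$, hence $\lambda\in\{+d,-d\}$ and $\|A\|_{\op}=d$. The squaring step is the main obstacle — one must carry the metaplectic phase $e^{i\pi k(pq'-qp')/d}$ correctly through the tensor product and recognise the resulting double sum as a Gauss-type sum collapsing to $d^2\mathbbm{1}$. One caveat worth recording: if $d\mid k/2$ the phase is identically $1$, both inner sums equal $d$ for every $Q,P$, and instead $A^2 = d^2 A$, so $A = d^2\Pi$ with $\Pi$ a projection of rank $d^{k-2}$ (using $\Tr A = d^k$, cf.\ Proposition~\ref{prop:disp_basis}), giving $\|A\|_{\op}=d^2$; this degenerate case does not arise in the application.
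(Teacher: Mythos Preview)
Your proof is correct and takes a genuinely different route from the paper. The paper computes the action of $A \coloneqq \sum_{q,p} D_{q,p}^{\otimes m}\otimes D_{-q,-p}^{\otimes(k-m)}$ directly on computational basis states $|a_1\rangle\cdots|a_k\rangle$: summing over $p$ produces a delta constraint $\tfrac{k}{2}q + \sum_{j\le m} a_j - \sum_{j>m}a_j \equiv 0 \pmod d$, which (using $d$ prime and $k/2$ invertible) picks out a unique $\hat q$, so $A/d$ acts as a permutation of basis vectors and hence $\|A\|_{\op}=d$. Your approach instead exploits the displacement-operator algebra abstractly: you show $A=A^\dagger$ and $A^2 = d^2\mathbbm{1}$, forcing the spectrum into $\{\pm d\}$.

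The paper's method yields slightly more structural information (that $A/d$ is literally a permutation matrix, not merely unitary up to sign), while your argument is cleaner and makes the role of the hypothesis $d\nmid k/2$ completely transparent. Your analysis of the degenerate case $d\mid k/2$, where $A^2=d^2A$ and $\|A\|_{\op}=d^2$, is a useful addendum that the paper does not address---its proof also silently requires the inverse of $k/2$ modulo $d$. Both arguments thus share the same unstated side condition, which, as you correctly note, is always met in the application.
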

\begin{proof}
Denote
\begin{equation}
\mathcal{D}(m,k) = \sum_{q,p} D_{q,p}^{\otimes m} \otimes D_{-q,-p}^{\otimes (k-m)}
\end{equation}

Consider the action on a basis state $|a_1\rangle \dots |a_k\rangle$.
\begin{align}
&\mathcal{D}(m,k) |a_1\rangle \dots |a_k\rangle \\
&= \sum_{q,p} e^{i \pi k q p / d} \big((X^q)^{\otimes m} \otimes (X^{-q})^{\otimes (k-m)}\big) \big((Z^p)^{\otimes m} \otimes (Z^{-p})^{\otimes (k-m)}\big) |a_1\rangle \dots |a_k\rangle \\
&= \sum_{q,p} e^{i (2 \pi / d)  p \left(\frac{k}{2}q + a_1 + \dots + a_m - a_{m+1} - \dots - a_k\right)} |a_1+q\rangle \dots |a_m+q\rangle |a_{m+1}-q\rangle \dots |a_k+q\rangle \\
&= d |a_1+\hat{q}\rangle \dots |a_m+\hat{q}\rangle |a_{m+1}-\hat{q}\rangle \dots |a_k-\hat{q}\rangle
\end{align}
where $\hat{q}$ is the unique solution to
\begin{equation}
\frac{k}{2} q + a_1 + \dots + a_m - a_{m+1} - \dots - a_k = 0 \mod d
\end{equation}
Here we used that $d$ is prime, so that $\mathbb{Z}_d$ is a field.

Let $g$ be the multiplicative inverse of $k/2$ mod $d$. The operator $\mathcal{D}(m,k) / d$ implements a linear map on basis vectors over $\mathbb{Z}_d^k$ given by the matrix
\begin{equation}
I_k +
\begin{pmatrix}
    g & \dots & g & -g & \dots & -g \\
    \vdots & & \vdots & \vdots & & \vdots \\
    g & \dots & g & -g & \dots & -g
\end{pmatrix}
\end{equation}
This matrix is invertible over $\mathbb{Z}_d^k$ with inverse
\begin{equation}
I_k +
\begin{pmatrix}
    h & \dots & h & -h & \dots & -h \\
    \vdots & & \vdots & \vdots & & \vdots \\
    h & \dots & h & -h & \dots & -h
\end{pmatrix}
\end{equation}
where $h$ solves
\begin{equation}
g + h + (2m-k)gh = 0
\end{equation}

We have shown that $\mathcal{D}(m,k) / d$ is in fact a permutation matrix. In particular,
\begin{equation}
\left|\left|\mathcal{D}(m,k)\right|\right|_{\op} = d
\end{equation}
\end{proof}

\begin{lemma} \label{lem:E_norm}
Let the Hilbert space dimension $d$ be prime. For any even $k$,
\begin{equation}
\left|\left|\sum_{q,p} E_{q,p}^{\otimes k}\right|\right|_{\op} \leq 2^{k/2} d
\end{equation}
\end{lemma}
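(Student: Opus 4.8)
The plan is to expand $E_{q,p}^{\otimes k}$ into a sum of tensor products of $D_{q,p}$ and $D_{-q,-p}$, recognize each resulting term as a tensor-factor reordering of the operator $\mathcal{D}(m,k)$ bounded in Lemma~\ref{lem:D_norm}, and finish with the triangle inequality. The substantive content lives entirely in Lemma~\ref{lem:D_norm}; everything here is bookkeeping.

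First, writing $E_{q,p} = \chi D_{q,p} + \chi^\ast D_{-q,-p}$ and expanding the $k$-fold tensor power,
\[
E_{q,p}^{\otimes k} = \sum_{S \subseteq [k]} \chi^{|S|} (\chi^\ast)^{k-|S|}\, T_S(q,p),
\]
where $T_S(q,p)$ denotes the tensor product of $k$ factors, with a $D_{q,p}$ in slot $i$ whenever $i \in S$ and a $D_{-q,-p}$ otherwise. Summing over $(q,p) \in \mathbb{Z}_d^2$ and pulling the finite sum inside, for each fixed $S$ with $|S| = m$ the operator $\sum_{q,p} T_S(q,p)$ is obtained from $\mathcal{D}(m,k) := \sum_{q,p} D_{q,p}^{\otimes m} \otimes D_{-q,-p}^{\otimes (k-m)}$ by permuting the $k$ tensor factors. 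Conjugation by the corresponding permutation unitary commutes with the sum over $q,p$, so $\big\|\sum_{q,p} T_S(q,p)\big\|_{\op} = \|\mathcal{D}(m,k)\|_{\op}$.

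Next, Lemma~\ref{lem:D_norm} gives $\|\mathcal{D}(m,k)\|_{\op} = d$ for all $1 \le m \le k$ (recall $k$ is even); the remaining case $m = 0$ follows from the substitution $(q,p) \mapsto (-q,-p)$, which identifies $\sum_{q,p} D_{-q,-p}^{\otimes k}$ with $\mathcal{D}(k,k)$, again of norm $d$. Hence each of the $2^k$ terms in the expansion has operator norm $d$. Since $|\chi| = |\chi^\ast| = 1/\sqrt{2}$, we have $|\chi|^{|S|}|\chi^\ast|^{k-|S|} = 2^{-k/2}$ for every $S$, so the triangle inequality yields
\[
\Big\| \sum_{q,p} E_{q,p}^{\otimes k} \Big\|_{\op} \le \sum_{S \subseteq [k]} 2^{-k/2}\, d = 2^k \cdot 2^{-k/2}\, d = 2^{k/2} d,
\]
as claimed. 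There is no real obstacle beyond Lemma~\ref{lem:D_norm} itself; the only points requiring a line of care are the tensor-factor reordering (a unitary conjugation, hence norm-preserving) and the $m = 0$ reindexing.
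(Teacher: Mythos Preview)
Your proof is correct and follows essentially the same approach as the paper: expand the tensor power, invoke Lemma~\ref{lem:D_norm} term by term, and finish with the triangle inequality. You are in fact slightly more careful than the paper, which glosses over the tensor-factor reordering and the $m=0$ case that you handle explicitly.
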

\begin{proof}
First expand
\begin{align}
E_{q,p}^{\otimes k} &= \big(\chi D_{q,p} + \chi^\ast D_{-q,-p}\big)^{\otimes k} \\
&= \sum_{S \subset [k]} \chi^{|S|} (\chi^\ast)^{k - |S|} D_{q,p}^{\otimes S} \otimes D_{-q,-p}^{\otimes [k] \setminus S}
\end{align}
Now sum over $q,p$ and take the operator norm.
\begin{align}
\left|\left|\sum_{q,p} E_{q,p}^{\otimes k}\right|\right|_{\op} &\leq \frac{1}{2^{k/2}} \sum_{S \subset [k]} \left|\left|\sum_{q,p} D_{q,p}^{\otimes S} \otimes D_{-q,-p}^{\otimes [k] \setminus S}\right|\right|_{\op} \\
&= 2^{k/2} d
\end{align}
using Lemma \ref{lem:D_norm}.
\end{proof}

We suspect Lemma \ref{lem:E_norm} is \emph{not} tight, and the correct bound is $2d$ independent of $k$, but it is sufficient for our purposes.

\section{Generalized Clifford classical shadows} \label{sec:shadows}

In this appendix, we analyse a bosonic version of classical shadows, derived from the uniform distribution over generalized Cliffords in prime dimension.

\subsection{Review of classical shadows framework} \label{sec: shadows review}

We start by reviewing the classical shadows framework of Ref.~\cite{huang2020predicting}, introducing some generalizations that are relevant for generalized Clifford shadows. This review is based in large part on Section 2.2 of Ref. \cite{wan2022matchgate}. See also the review in \cite{mele2023introduction}.

Let $\mathcal{H}$ be a finite-dimensional Hilbert space, and let $\mathcal{B}$ denote an orthonormal basis for $\mathcal{H}$. 
The goal of the classical shadows procedure is to estimate the expectation values $\tr(O_1 \rho), \dots, \tr(O_M\rho)$ of $M$ (possible non-Hermitian) ``observables'' $O_1, \dots, O_M$ $\mathcal{H}$, with respect to an unknown state $\rho$. The input to the procedure consists of copies of $\rho$ and some classical description of $O_1,\dots, O_M$, and (having fixed $\mathcal{B}$) the procedure is parametrised by a distribution $\mathcal{D}$ over unitaries. For each copy of $\rho$, we randomly draw a unitary $U$ from this distribution, and measure $\rho$ in the basis $\{U^\dagger\ket{b}\}_{b\in \mathcal{B}}$.\footnote{This measurement can be implemented by applying $U$ to $\rho$, then measuring in the basis $\mathcal{B}$, and applying $U^\dagger$ to the post-measurement state. Note, however, that for the purpose of implementing the classical shadows procedure, the application of $U^\dagger$ is not necessary, as we only need the measurement statistics and not the post-measurement state.} The quantum channel $\mathcal{M}$ corresponding to this process is given by
\begin{align} \label{measurement channel 1}
    \mathcal{M}(\rho) &= \E_{U \sim \mathcal{D}} \sum_{b \in \mathcal{B}} U^\dagger\ket{b}\bra{b}U \rho U^\dagger\ket{b}\bra{b}U,
\end{align}
which can be rewritten as 
\begin{align}
    \mathcal{M}(\rho) &= \tr_1\left[ \sum_{b\in \mathcal{B}} \E_{U \sim \mathcal{D}} \mathcal{U}^{\otimes 2}(\ket{b}\bra{b}^{\otimes 2})(\rho \otimes I)\right] \label{measurement channel}
\end{align}
in order to isolate the dependence on the distribution $\mathcal{D}$. Here, $\tr_1$ denotes the partial trace over the first tensor component, and $\mathcal{U}$ denotes the unitary channel corresponding to ${U}^\dagger$: $\mathcal{U}(\,\cdot\,) = U^\dagger(\,\cdot\,)U$. 

For certain distributions $\mathcal{D}$, $\mathcal{M}$ is invertible. Then, we can define a random operator $\hat{\rho}$ by 
\begin{equation} \label{hat rho} \hat{\rho} \coloneqq \mathcal{M}^{-1}(\hat{U}^\dagger\ket{\hat{b}}\bra{\hat{b}}\hat{U}), \end{equation}
where $\hat{U}$ is distributed according to $\mathcal{D}$ and $\mathbb{P}[\ket{\hat{b}} = \ket{b} \, | \, \hat{U} = U] = \bra{b}U \rho U^\dagger\ket{b}$. By construction, $\hat{\rho}$ is an unbiased estimator for $\rho$:
\[ \E[\hat{\rho}] = \rho. \]
In the literature, the term ``classical shadow'' is used sometimes to refer to this unbiased estimator $\hat{\rho}$, and sometimes to a sample of it obtained in one realisation of the above procedure (i.e., $\mathcal{M}^{-1}(U^\dagger\ket{b}\bra{b}U)$ for some outcomes $U$ and $\ket{b}$). These samples can be used to estimate the expectation values $\tr(O_i\rho)$, since 
\[ \E[\hat{o}_i] = \tr(O_i\rho),\]
where for $i \in [M]$,
\begin{equation} \label{hat oi} \hat{o}_i \coloneqq \tr(O_i\hat{\rho}) = \tr\left( O_i \mathcal{M}^{-1}(\hat{U}^\dagger\ket{\hat{b}}\bra{\hat{b}}\hat{U})\right) \end{equation} is the unbiased estimator for $\tr(O_i\rho)$ derived from $\hat{\rho}$. 

To bound the number of samples of the classical shadow estimator $\hat{\rho}$ (and hence the number of copies of $\rho$) required to estimate the expectation values to within some desired precision with high probability, we consider the variances of the estimators $\hat{o}_i$:
\begin{align}
    \Var[\hat{o}_i] &= \E[|\hat{o}_i|^2] - |\E[\hat{o}_i]|^2 \\
    &= \E_{U \sim \mathcal{D}} \sum_{b \in \mathcal{B}} \bra{b} U\rho U^\dagger\ket{b} \left|\tr\left(O_i \mathcal{M}^{-1}(U^\dagger \ket{b}\bra{b}U)\right)\right|^2 - |\tr(O_i\rho)|^2 \\
    &= \E_{U \sim \mathcal{D}} \sum_{b \in \mathcal{B}} \tr\left[U^\dagger \ket{b}\bra{b} U \rho \otimes \mathcal{M}^{-1}(U^\dagger \ket{b}\bra{b}U)O_i \otimes  \mathcal{M}^{-1}(U^\dagger\ket{b}\bra{b}U) O_i^\dagger\right] - |\tr(O_i\rho)|^2 \\
    &= \tr\left[\sum_{b \in \mathcal{B}} \E_{U \in \mathcal{D}} \mathcal{U}^{\otimes 3}(\ket{b}\bra{b}^{\otimes 3}) \left(\mathcal{M}^{-1}(O_i) \otimes \mathcal{M}^{-1}(O_i^\dagger) \otimes \rho\right) \right] - |\tr(O_i\rho)|^2, \label{shadows variance}
\end{align}
using the fact that $\mathcal{M}$ is self-adjoint (with respect to the Hilbert-Schmidt inner product) in the last line, so $\tr(\mathcal{M}^{-1}(A)B) =\tr(A\mathcal{M}^{-1}(B))$ for any operators $A$ and $B$. This variance can be upper-bounded by the first term, maximised over all states $\rho$; this gives the square of what is often referred to as the ``shadow norm" in the literature. A useful observation is that since $\tr(\hat{\rho}) = 1$, the variance for any observable $\hat{O}_i$ is the same as that for $\hat{O}_i + cI$ for any $c \in \mathbb{C}$, so we can also also write
\begin{align} \label{shadows variance shifted}
    \Var[\hat{o}_i] = \tr\left[\sum_{b \in \mathcal{B}} \E_{U \in \mathcal{D}} \mathcal{U}^{\otimes 3}(\ket{b}\bra{b}^{\otimes 3}) \left(\mathcal{M}^{-1}(O_i + cI) \otimes \mathcal{M}^{-1}(O_i^\dagger + cI) \otimes \rho\right) \right] - |\tr((O_i + cI)\rho)|^2.
\end{align}
If median-of-means estimators are used, it follows straightforwardly from Chebyshev's and Hoeffding's inequalities that
\begin{equation} \label{Nsample} N_{\mathrm{sample}} = \mathcal{O}\left(\frac{\log (M/\delta)}{\eps^2} \max\limits_{i \in [M]} \Var[\hat{o}_i] \right) \end{equation} 
classical shadows samples (and hence single-copy measurements of $\rho$) are sufficient to ensure that with probability at least $1-\delta$, every $\tr(O_i\rho)$ is estimated to within additive error $\eps$.

\subsection{Preliminaries and notation} 

\subsubsection{Generalized Clifford group}

The generalized (single-qudit) Clifford group in dimension $d$, which we will denote by $\Cld$, is the normaliser of the Pauli group in the $d$-dimensional unitary group.\footnote{Strictly speaking, we should quotient out phases so that it makes sense to write the twirl channel in Eq.~\eqref{Ekd} as an average over a finite group, but the results in this appendix hold even if phases are included.} Hence, a Clifford $U \in \Cld$ satisfies the property that for any $(q,p) \in \Zd^2$, $U^\dagger D_{q,p}U = \alpha D_{q',p'}$
for some $(q',p') \in \Zd^2$ and some phase $\alpha$. Thus, each Clifford can be associated with a map $\Zd^2 \to \Zd^2$, and since its action under conjugation is completely specified by how $X = D_{1,0}$ and $Z = D_{0,1}$ are transformed, this map is linear, and can be written as a $2 \times 2$ matrix $C \in \Zd^{2\times 2}$ so that
\begin{equation} \label{Clifford matrix} U^\dagger D_{q,p} U = \alpha D_{q',p'} \text{ for some phase $\alpha$ } \quad \Leftrightarrow \quad C\begin{pmatrix} q \\ p\end{pmatrix} = \begin{pmatrix} q' \\ p' \end{pmatrix} \end{equation}
for all $q,p \in \Zd$. It follows from the fact that conjugation preserves commutation relations that this matrix also has to be symplectic. It turns out that conversely, for any symplectic $2\times 2$ matrix with entries in $\Zd$, there exists a corresponding Clifford in the sense of Eq.~\eqref{Clifford matrix} \cite{hostens2005stabilizer}.

\begin{fact}[\cite{hostens2005stabilizer}] \label{fact: Clifford symplectic} 
    For any $C \in \Zd^{2\times 2}$, there exists $U \in \Cld$ satisfying Eq.~\eqref{Clifford matrix} if and only if $C$ is symplectic. 
\end{fact}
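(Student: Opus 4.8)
The plan is to prove the two directions separately, using that $d$ is prime throughout this appendix, so $\Zd$ is a field and ``symplectic'' for a $2\times 2$ matrix is just the condition $\det C = 1$; that is, $\Sp(2,\Zd) = \mathrm{SL}(2,\Zd)$. For \textbf{necessity}, suppose $U \in \Cld$ realizes $C$ in the sense of Eq.~\eqref{Clifford matrix}, so $U^\dagger D_{q,p}U = \alpha_{q,p} D_{C(q,p)}$ for phases $\alpha_{q,p}$. Conjugating both sides of the Heisenberg--Weyl commutation relation $D_{q',p'}D_{q,p} = \omega^{qp'-q'p}D_{q,p}D_{q',p'}$ (Eq.~\eqref{eq:D_comm}) by $U^\dagger$, the products $\alpha_{q,p}\alpha_{q',p'}$ cancel on the two sides and one is left with $D_{C(q',p')}D_{C(q,p)} = \omega^{qp'-q'p} D_{C(q,p)}D_{C(q',p')}$. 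Comparing with the direct application of Eq.~\eqref{eq:D_comm} to the image vectors shows $\omega^{qp'-q'p}$ must equal $\omega$ raised to the symplectic pairing of $C(q,p)$ and $C(q',p')$, and since $\omega$ is a primitive $d$-th root of unity this gives $(q,p)^T J (q',p') \equiv (C(q,p))^T J (C(q',p'))$ for all vectors, i.e.\ $C^T J C = J$ over $\Zd$. Hence $C$ is symplectic.

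For \textbf{sufficiency} the strategy is the standard generators-and-relations argument. First note that $U \mapsto C_U$ is a group homomorphism: composing Cliffords composes their conjugation actions, so $C_{U_1 U_2} = C_{U_2} C_{U_1}$ (modulo the overall phase $\alpha$, which is irrelevant to Eq.~\eqref{Clifford matrix}). Next, exhibit explicit single-qudit Cliffords realizing a generating set of $\mathrm{SL}(2,\Zd)$: the quantum Fourier transform $W$, for which $WXW^\dagger = Z$ and $WZW^\dagger = X^{-1}$, realizes the matrix $\begin{pmatrix} 0 & -1 \\ 1 & 0\end{pmatrix}$ (up to the convention for which side one conjugates); for $d$ odd the diagonal quadratic-phase gate $P : \ket{j}\mapsto \omega^{2^{-1}j^2}\ket{j}$ commutes with $Z$ and sends $X$ to $Z^{\pm 1}X$ up to a phase, hence realizes a transvection $\begin{pmatrix} 1 & 0 \\ \pm 1 & 1\end{pmatrix}$; and for each unit $a \in \Zd$ the multiplier $M_a : \ket{j}\mapsto\ket{aj}$ realizes the diagonal symplectic $\begin{pmatrix} a^{-1} & 0 \\ 0 & a\end{pmatrix}$. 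One then invokes the elementary fact that $\mathrm{SL}(2,\Zd)$ is generated by these matrices (indeed by a single transvection together with $\begin{pmatrix} 0 & -1 \\ 1 & 0\end{pmatrix}$), so every symplectic $C$ is a finite product of them and therefore equals $C_U$ for the corresponding product $U$ of the above Cliffords, which lies in $\Cld$.

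The \textbf{main obstacle} is the generation step together with the (short but convention-sensitive) verification that $W$, $P$, and $M_a$ conjugate $X$ and $Z$ to exactly the claimed displacement operators up to phase; in particular the gate $P$ needs separate treatment when $d = 2$, where $2^{-1}$ does not exist in $\Zd$ and one uses the ordinary phase gate $\mathrm{diag}(1,i)$ instead, still obtaining a transvection. Everything else -- the homomorphism property and the necessity direction -- is routine. Since the result is already established in the literature, we rely on \cite{hostens2005stabilizer} for the full details.
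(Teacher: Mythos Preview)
The paper does not give a proof of this fact; it is stated with a citation to \cite{hostens2005stabilizer}, and the surrounding text only offers a one-line justification of the necessity direction (``conjugation preserves commutation relations''). Your proposal is correct and considerably more detailed than anything the paper provides: your necessity argument unpacks exactly that one-line remark, and your sufficiency argument via explicit Clifford generators (Fourier transform, quadratic phase gate, multiplier gates) together with the generation of $\mathrm{SL}(2,\Zd)$ is the standard route, essentially the content of the cited reference. There is nothing in the paper to compare against beyond the citation itself.
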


The following basic fact about symplectic $2\times 2$ matrices will also come in useful.

\begin{fact} \label{fact: 2x2 symplectic}
A $2\times 2$ matrix $C$ is symplectic if and only if $\det(C) = 1$.
\end{fact}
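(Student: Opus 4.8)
The plan is to prove the equivalence by a direct computation, using that in the single-mode case the symplectic form is the single $2\times 2$ block $J = \begin{pmatrix} 0 & 1 \\ -1 & 0\end{pmatrix}$ and the defining condition for $C$ to be symplectic is $CJC^T = J$. Writing $C = \begin{pmatrix} a & b \\ c & d\end{pmatrix}$ with entries in $\Zd$, the first step is to compute $JC^T = \begin{pmatrix} b & d \\ -a & -c\end{pmatrix}$ and then multiply on the left by $C$.

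The key observation, which I would isolate as a one-line identity, is that
\[ C J C^T = (ad - bc)\, J = \det(C)\, J \]
for every $2\times 2$ matrix over a commutative ring: the diagonal entries of the product vanish, while the $(1,2)$ and $(2,1)$ entries come out to $ad - bc$ and $-(ad-bc)$ respectively. Granting this identity, $C$ is symplectic if and only if $\det(C)\, J = J$; since the $(1,2)$ entry of $J$ is the unit $1 \in \Zd$, comparing that entry shows this is equivalent to $\det(C) = 1$, and conversely $\det(C) = 1$ immediately yields $CJC^T = J$. This establishes Fact~\ref{fact: 2x2 symplectic}.

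I do not anticipate any genuine obstacle here: the statement is elementary and the entire argument reduces to the two-line matrix multiplication above, which holds verbatim over the ring $\Zd$. The only point worth emphasizing in the write-up is that this clean "symplectic $\Leftrightarrow$ unit determinant" dichotomy is special to the case $2n = 2$; for $n \geq 2$ the condition $\det(S) = 1$ remains necessary for $S \in \Sp(2n)$ but is no longer sufficient, consistent with the remark already made in Appendix~\ref{sec:background}.
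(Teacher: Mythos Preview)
Your proof is correct. The paper states this as a basic fact without proof, so there is nothing to compare against; your direct computation $CJC^T = \det(C)\,J$ is exactly the standard verification, and your remark that the equivalence is special to the $2\times 2$ case is a welcome clarification.
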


\subsubsection{Liouville representation}

In some parts of this appendix (especially when we analyse the twirl channels of the generalized Clifford group in the following subsection), we will use Liouville representation, which is often useful for specifying quantum channels. In this representation, operators are notated using ``double'' kets, and a ``double'' bracket is used to represent the Hilbert-Schmidt inner product. By convention, all double kets are normalised with respect to the Hilbert-Schmidt norm, unless otherwise noted. Thus, for any nonzero operators $A,B$, 
\begin{equation} \label{Liouville1} \kett{A} \equiv \frac{1}{\sqrt{\tr(A^\dagger A)}}A, \qquad \braakett{A}{B} \equiv \frac{\tr(A^\dagger B)}{\sqrt{\tr(A^\dagger A)\tr(B^\dagger B)}},\end{equation}
and we set $\ket{0} \equiv 0$ for the zero operator. In particular, since the generalized Pauli operators are unitary (Eq.~\eqref{D dagger}) and Hilbert-Schmidt orthogonal (Proposition~\ref{prop:disp_basis}), we have
\begin{equation} \label{Liouville Dqp} \kett{D_{q,p}} \equiv \frac{1}{\sqrt{d}} D_{q,p}, \qquad \braakett{D_{q,p}}{D_{q',p'}} = \delta_{q,q'}\delta_{p,p'}. \end{equation}
As with usual (state) kets, we will freely write e.g., $\kett{A}\kett{B}$ in place of $\kett{A} \otimes \kett{B}$. 

A superoperator $\mathcal{E}$ acting on an operator $A$ is represented by placing the superoperator to the left of the operator's double ket:
\[ \mathcal{E}\kett{A} \equiv \frac{1}{\sqrt{\tr(A^\dagger A)}}\mathcal{E}(A). \]
We will also write $\mathcal{E}\mathcal{E}'$ in place of $\mathcal{E} \circ \mathcal{E}'$. 

Since the generalized Pauli operators form a Hilbert-Schmidt orthogonal basis for the space of operators acting on $d$-dimensional Hilbert space $\mathcal{H}_d$, we have the following resolution of the identity superoperator $\mathcal{I}$:
\begin{equation} \label{resolution of superidentity} \mathcal{I} = \sum_{p,q \in \Zd} \kett{D_{p,q}}\braa{D_{p,q}}. \end{equation}

\subsection{Uniform distribution over generalized Cliffords}

In the rest of this appendix, we consider the classical shadows associated with the uniform distribution over $\Cld$, the generalized (single-qudit) Cliffords in prime dimension $d$. 

As can be seen from Eqs.~\eqref{measurement channel} and~\eqref{shadows variance}, the measurement channel $\mathcal{M}$ in the classical shadows protocol depends on the chosen unitary distribution $\mathcal{D}$ only through its $2$-fold twirl $\E_{U \sim \mathcal{D}}\mathcal{U}^{\otimes 2}$, while the variances of the resulting estimates depend on $\mathcal{D}$ through its $3$-fold twirl $\E_{U \sim \mathcal{D}} \mathcal{U}^{\otimes 3}$. Hence, in this subsection, we evaluate the $2$- and $3$-fold twirl channels
for the uniform distribution over generalized Cliffords, arriving at the theorem below. This then allow us to obtain explicit expressions for the measurement channel $\mathcal{M}$ and the variance for any observable. 

For $k \in \mathbb{Z}_{>0}$, we use $\mathcal{E}_d^{(k)}$ to denote the $k$-fold twirl channel corresponding to the uniform distribution over $\Cld$:
\begin{equation} \label{Ekd}
    \mathcal{E}_d^{(k)} \coloneqq \frac{1}{|\Cld|}\sum_{U \in \Cld}\mathcal{U}^{\otimes k},
\end{equation}
where, as in subsection~\ref{sec: shadows review}, $\mathcal{U}(\,\cdot\,) = U^\dagger (\,\cdot\,) U$. 


\begin{theorem}[First three moments of uniform distribution over generalized Cliffords] \label{thm: twirl channels} Let $\mathcal{E}_d^{(k)}$ be defined as in Eq.~\eqref{Ekd}. Then, for any prime $d$, we have
\begin{enumerate}[(i)]
    \item $\mathcal{E}_d^{(1)} = \kett{I}\braa{I},$
    \item $\mathcal{E}_d^{(2)} = \kett{I}\braa{I}^{\otimes 2} + \kett{\Phi_0}\braa{\Phi_0}$,
    \item $\mathcal{E}_d^{(3)} = \kett{I}\braa{I}^{\otimes 3} + \kett{\Phi_1}\braa{\Phi_1} + \kett{\Phi_2}\braa{\Phi_2}+ \kett{\Phi_3}\braa{\Phi_3} + \sum\limits_{k=1}^{d-2} \kett{\Psi_k}\braa{\Psi_k} + \sum\limits_{l = 1}^{d-1} \kett{\Upsilon_l}\braa{\Upsilon_l}$,
\end{enumerate}
where 
\begin{equation} \label{Phi0}
    \kett{\Phi_0} \coloneqq \frac{1}{\sqrt{d^2 - 1}} \sum_{\substack{q,p\in \Zd \\ (q,p) \neq (0,0)}} \kett{D_{q,p}}\kett{D_{q,p}^\dagger}
\end{equation}
and
\begin{align}
    &\kett{\Phi_1} \coloneqq \frac{1}{\sqrt{d^2-1}} \sum_{\substack{q,p \in\Zd \\ (q,p) \neq (0,0)}} \kett{I} \kett{D_{q,p}}\kett{D_{q,p}^\dagger}, \label{Phi1} \\
    &\kett{\Phi_2} \coloneqq \frac{1}{\sqrt{d^2-1}} \sum_{\substack{q,p \in\Zd \\ (q,p) \neq (0,0)}}  \kett{D_{q,p}}\kett{I}\kett{D_{q,p}^\dagger}, \\
    &\kett{\Phi_3} \coloneqq \frac{1}{\sqrt{d^2-1}} \sum_{\substack{q,p \in\Zd \\ (q,p) \neq (0,0)}}  \kett{D_{q,p}}\kett{D_{q,p}^\dagger}\kett{I}, \\
    &\ket{\Psi_k} \coloneqq \frac{1}{\sqrt{d^2-1}}\sum_{\substack{q,p \in\Zd \\ (q,p) \neq (0,0)}} \kett{D_{q,p}}\kett{D_{kq,kp}} \kett{D_{q,p}^\dagger D_{kq,kp}^\dagger} \\
    &\kett{\Upsilon_l} \coloneqq \frac{1}{\sqrt{d(d^2-1)}} \sum_{\substack{q_1,p_1,q_2,p_2 \in \Zd \\ p_1q_2 - q_1p_2 = l}} \kett{D_{q_1,p_1}}\kett{D_{q_2,p_2}} \kett{D_{q_1,p_1}^\dagger D_{q_2,p_2}^\dagger} \label{Upsilon_l}.
\end{align}
\end{theorem}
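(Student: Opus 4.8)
The plan is to compute each twirl channel $\mathcal{E}_d^{(k)}$ by working in the Liouville (double-ket) representation and diagonalizing in the Pauli basis. The starting observation is that for a Clifford $U$ associated with symplectic matrix $C \in \mathrm{Sp}(2,\Zd)$, conjugation acts on Pauli double-kets as $\mathcal{U}\kett{D_{q,p}} = \alpha_{U,(q,p)}\kett{D_{C(q,p)}}$ for some phase; hence $\mathcal{U}^{\otimes k}$ permutes (up to phases) the product basis $\{\kett{D_{q_1,p_1}}\cdots\kett{D_{q_k,p_k}}\}$, sending it to $\{\kett{D_{C(q_1,p_1)}}\cdots\kett{D_{C(q_k,p_k)}}\}$ with an overall phase $\prod_j \alpha_{U,(q_j,p_j)}$. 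Averaging over $U$, the $k$-fold twirl is block-diagonal with respect to orbits of the diagonal $\mathrm{Sp}(2,\Zd)$-action on $(\Zd^2)^k$, and within each block one must also average the product of phases. First I would set up this bookkeeping carefully, noting (using Fact~\ref{fact: Clifford symplectic} and Fact~\ref{fact: 2x2 symplectic}) that the relevant group is $\mathrm{SL}(2,\Zd)$ acting diagonally on $k$-tuples of vectors in $\Zd^2$.

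Next I would dispatch the three cases in turn. For $k=1$: the orbits of $\mathrm{SL}(2,\Zd)$ on $\Zd^2$ are $\{(0,0)\}$ and $\Zd^2\setminus\{(0,0)\}$; on the nonzero orbit the phase $\alpha_{U,(q,p)}$ averages to zero (since one can find a Clifford fixing the line but multiplying by a nontrivial phase, or more simply since $\tr(D_{q,p})=0$), leaving only $\mathcal{E}_d^{(1)} = \kett{I}\braa{I}$. For $k=2$: the orbits on pairs $((q_1,p_1),(q_2,p_2))$ are governed by which vectors are zero and by the symplectic form $p_1q_2 - q_1p_2$; the phases survive only when the two Pauli labels are inverse to one another (so that $D_{q_1,p_1}\otimes D_{q_2,p_2}$ becomes a ``diagonal'' invariant), giving the projector $\kett{\Phi_0}\braa{\Phi_0}$ in addition to $\kett{I}\braa{I}^{\otimes 2}$; I would verify $\kett{\Phi_0}$ is correctly normalized and that the off-``diagonal'' contributions vanish by the phase average. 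For $k=3$: this is the substantive case. The orbit structure on triples must be enumerated — possibilities include one or more labels equal to $(0,0)$ (yielding the $\kett{\Phi_i}\braa{\Phi_i}$ terms, which are essentially the $k=2$ structure with an identity inserted in slot $i$), the case where all three labels are nonzero and pairwise proportional, $(q,p),(kq,kp),(k'q,k'p)$, with the phase constraint forcing $D_{q,p}D_{kq,kp}D_{k'q,k'p} \propto I$, i.e.\ $1+k+k'\equiv 0$, giving one free parameter $k$ and hence the $\kett{\Psi_k}\braa{\Psi_k}$ family (with the degenerate subcases absorbed into $\kett{\Phi_i}$), and finally the case where the three labels span $\Zd^2$ with $p_1q_2-q_1p_2 = l \neq 0$ fixed and $D_{q_1,p_1}D_{q_2,p_2}D_{q_3,p_3}\propto I$ determining the third label from the first two, giving the $\kett{\Upsilon_l}\braa{\Upsilon_l}$ family indexed by $l \in \{1,\dots,d-1\}$. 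I would carefully check the ranges of the summation indices ($k$ from $1$ to $d-2$ to avoid $k\in\{0,-1\}$ which are the degenerate cases; $l$ from $1$ to $d-1$), the normalizations (the $1/\sqrt{d^2-1}$ versus $1/\sqrt{d(d^2-1)}$ reflects the orbit sizes: proportional triples form orbits of size $d^2-1$, while generic pairs with fixed symplectic product $l$ form orbits of size $d(d^2-1)$), and the mutual orthogonality of all the listed double-kets.

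The main obstacle I expect is the $k=3$ orbit enumeration together with the phase averaging: one must show that each listed $\kett{\cdot}\braa{\cdot}$ is genuinely $\mathrm{SL}(2,\Zd)$-invariant (the nontrivial point being that the phase $\prod_j\alpha_{U,(q_j,p_j)}$ is trivial precisely on the support of each such vector — equivalently, that the product of the three Pauli operators is proportional to the identity, using the cocycle/commutator identities in Proposition~\ref{prop:disp_properties}), and conversely that these exhaust all invariants, i.e.\ that the listed vectors span the commutant of $\{\mathcal{U}^{\otimes 3}\}$. A clean way to organize the ``exhaustion'' direction is a dimension count: show the number of invariant vectors listed ($1 + 3 + (d-2) + (d-1) = 2d+1$) matches $\dim(\mathrm{commutant})$, which can be computed independently as $\tfrac{1}{|\Cld|}\sum_U |\mathrm{tr}(\mathcal{U})|^6$ or via the orbit-counting lemma applied to the $\mathrm{SL}(2,\Zd)$-action on triples of Pauli labels restricted to those whose product is scalar — prime $d$ is used here so that $\Zd$ is a field and the relevant linear-algebra (invertibility of $2\times 2$ matrices, uniqueness of $\hat q$ in Lemma~\ref{lem:D_norm}-style arguments) goes through. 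Everything else — the normalization constants, the explicit eigenvalues, the orthogonality relations — is routine once the orbit/phase structure is pinned down.
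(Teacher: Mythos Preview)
Your proposal is correct and takes essentially the same approach as the paper: both first use conjugation by the Pauli subgroup of $\Cld$ to kill all basis states $\kett{D_{q_1,p_1}}\cdots\kett{D_{q_k,p_k}}$ except those with $\sum_j(q_j,p_j)=(0,0)$ (your ``phase averaging''; the paper's Lemmas~\ref{lem: 2-fold kernel} and~\ref{lem: 3-fold kernel}), then enumerate the $\mathrm{SL}(2,\Zd)$-orbits on the surviving labels (your enumeration matches the paper's Lemma~\ref{3-fold orbit}), and finally argue that the twirl is the sum of rank-one projectors onto the orbit-averaged vectors.

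The one place you diverge is the exhaustion step. You propose a dimension count---verify the $2d+1$ listed vectors are invariant and independent, then match against $\dim(\mathrm{im}\,\mathcal{E}_d^{(3)}) = |\Cld|^{-1}\sum_U|\tr U|^{6}$ computed separately. The paper instead uses that $\mathcal{E}_d^{(3)}$ is an orthogonal projector (Fact~\ref{fact: projector}): once the expansion $\sum_{\text{orbits}} c_{\text{orbit}}\kett{\cdot}\braa{\cdot}$ with orthonormal $\kett{\cdot}$'s is established, each $c_{\text{orbit}}\in\{0,1\}$, and a one-line check that each orbit-averaged vector is Clifford-invariant (the phase cancellation is automatic because the third slot is written as $D_{q_1,p_1}^\dagger D_{q_2,p_2}^\dagger$, whose conjugation phase is the complex conjugate of the product of the first two---see Eq.~\eqref{phases cancel 3}) forces every $c=1$. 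This sidesteps the sixth-moment computation entirely and is cleaner, though your route would also work.
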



The strategy we use to prove Theorem~\ref{thm: twirl channels} is very similar in spirit to that used to determine the $2$- and $3$-fold twirl channels of the matchgate group in Ref.~\cite{wan2022matchgate}. In broad strokes, we start from the simple observation that each twirl channel is an orthogonal projector, and use the invariance  of the twirl channel under composition with suitably chosen group elements to determine its image. The difference in the details of the proof of Theorem~\ref{thm: twirl channels} differ from those in Ref.~\cite{wan2022matchgate} stem from the difference in the structure of the adjoint representation of the Clifford group and that of the matchgate group, with the former being somewhat easier to handle since it is a discrete group. 

The following facts follow straightforwardly from the fact that (for any $k \in \mathbb{Z}_{>0}$) the map $U \mapsto \mathcal{U}^{\otimes k}$ is a representation of the generalized Clifford group; see e.g., \cite{fulton2013representation} Proposition 2.8.

\begin{fact} \label{fact: projector}
For any $k \in \mathbb{Z}_{>0}$, $\mathcal{E}_d^{(k)}$ is the orthogonal projector onto the subspace $\{A \in \mathcal{L}(\mathcal{H}_d)^{\otimes k}: \mathcal{U}^{\otimes k}(A) = A \enspace  \forall \, U \in \Cld\}$.
\end{fact}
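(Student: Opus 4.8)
The plan is to deduce Fact~\ref{fact: projector} from the standard representation-theoretic statement that the uniform average of a finite unitary group representation is the orthogonal projector onto its invariant subspace. Two hypotheses must be checked before that black box applies: that $U \mapsto \mathcal{U}^{\otimes k}$ really is a representation of a \emph{finite} group, and that it is \emph{unitary} with respect to the Hilbert--Schmidt inner product on $\mathcal{L}(\mathcal{H}_d)^{\otimes k}$, so that ``projector'' can be upgraded to ``orthogonal projector''.

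For the first point I would observe that although $\Cld$, defined as the normaliser of the Pauli group in the full unitary group, contains a continuum of global phases, the conjugation channel $\mathcal{U}(\,\cdot\,) = U^\dagger(\,\cdot\,)U$ is insensitive to global phases; hence $U \mapsto \mathcal{U}^{\otimes k}$ factors through the finite quotient $\Cld/\mathrm{U}(1)$, and the average in Eq.~\eqref{Ekd} is a genuine finite-group average, to which \cite{fulton2013representation} Prop.~2.8 applies. For the second point I would note that conjugation by a unitary preserves the Hilbert--Schmidt inner product, $\tr\bigl((U^\dagger A U)^\dagger (U^\dagger B U)\bigr) = \tr(A^\dagger B)$, so each $\mathcal{U}$ is a unitary operator on $\mathcal{L}(\mathcal{H}_d)$, and therefore $\mathcal{U}^{\otimes k}$ is unitary on $\mathcal{L}(\mathcal{H}_d)^{\otimes k}$; a short computation also shows its Hilbert--Schmidt adjoint is the unitary channel associated with $U^{-1}$, which again belongs to $\Cld$.

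With these in place the argument is routine, and I would simply write it out. Invariance of the group average under composition with any $\mathcal{V}^{\otimes k}$ (reindexing the defining sum) gives $\mathcal{V}^{\otimes k}\,\mathcal{E}_d^{(k)} = \mathcal{E}_d^{(k)}$ for all $V \in \Cld$; applying this to an arbitrary operator shows the range of $\mathcal{E}_d^{(k)}$ lies inside $W \coloneqq \{A : \mathcal{U}^{\otimes k}(A) = A \ \forall\, U \in \Cld\}$, and composing $\mathcal{E}_d^{(k)}$ with itself gives idempotence. Conversely, if $A \in W$ then every summand of $\mathcal{E}_d^{(k)}(A)$ equals $A$, so $\mathcal{E}_d^{(k)}$ acts as the identity on $W$; hence its range is exactly $W$. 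Finally, reindexing the sum by $U \mapsto U^{-1}$ and using the adjoint computed above yields $\bigl(\mathcal{E}_d^{(k)}\bigr)^\dagger = \mathcal{E}_d^{(k)}$, so $\mathcal{E}_d^{(k)}$ is an idempotent self-adjoint operator, i.e.\ the orthogonal projector onto $W$. I do not anticipate a genuine obstacle here; the statement is a packaging of standard facts, and the only point requiring a little care is the phase bookkeeping that makes ``finite group'' literally correct, after which every step is a one- or two-line verification.
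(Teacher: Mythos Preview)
Your proposal is correct and matches the paper's approach: the paper does not give a proof at all, merely noting that the fact ``follow[s] straightforwardly from the fact that (for any $k \in \mathbb{Z}_{>0}$) the map $U \mapsto \mathcal{U}^{\otimes k}$ is a representation of the generalized Clifford group'' and citing \cite{fulton2013representation} Proposition~2.8. Your write-up is exactly the standard argument behind that citation, and your attention to the phase-quotient issue mirrors the paper's own footnote on quotienting $\Cld$ by $\mathrm{U}(1)$ to make the average a genuine finite-group average.
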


\begin{fact} \label{fact: invariance}
For any $k \in \mathbb{Z}_{>0}$ and $U \in \Cld$, 
\[ \mathcal{U}^{\otimes k} \circ \mathcal{E}_d^{(k)} = \mathcal{E}_d^{(k)} = \mathcal{E}_d^{(k)} \circ \mathcal{U}^{\otimes k}. \]
\end{fact}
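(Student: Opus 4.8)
The plan is to prove the stated invariance directly from the definition of $\mathcal{E}_d^{(k)}$ in Eq.~\eqref{Ekd} as a group average, using only that $U \mapsto \mathcal{U}^{\otimes k}$ is multiplicative and that left/right translation is a bijection of the group $\Cld$. No deeper input is needed; this is the familiar statement that a twirl absorbs any group element on either side.

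First I would record the composition law for the adjoint channels. Since $\mathcal{U}(\,\cdot\,) = U^\dagger(\,\cdot\,)U$, for any $U,V \in \Cld$ one computes $(\mathcal{U}\circ\mathcal{V})(X) = U^\dagger V^\dagger X V U = (VU)^\dagger X (VU)$, so $\mathcal{U}\circ\mathcal{V} = \mathcal{W}$ with $W = VU \in \Cld$ (using that $\Cld$ is closed under products). Tensoring $k$ copies gives $\mathcal{U}^{\otimes k}\circ\mathcal{V}^{\otimes k} = (\mathcal{U}\circ\mathcal{V})^{\otimes k} = \mathcal{W}^{\otimes k}$. I would also note that $\mathcal{U}$, and hence $\mathcal{U}^{\otimes k}$, depends on $U$ only through its image in $\Cld/U(1)$, since a global phase cancels in $U^\dagger(\,\cdot\,)U$; this is what lets us treat the average in Eq.~\eqref{Ekd} as a genuine finite-group average even when phases are formally retained (cf.\ the footnote), and it is the only place the phase subtlety enters.

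Next, for the left identity, I would expand
\begin{equation*}
\mathcal{U}^{\otimes k}\circ\mathcal{E}_d^{(k)} = \frac{1}{|\Cld|}\sum_{V\in\Cld}\mathcal{U}^{\otimes k}\circ\mathcal{V}^{\otimes k} = \frac{1}{|\Cld|}\sum_{V\in\Cld}\mathcal{W}^{\otimes k}, \qquad W = VU,
\end{equation*}
and then reindex: as $V$ ranges over $\Cld$ so does $W = VU$, because right multiplication by the fixed element $U$ is a bijection of $\Cld$. Hence the right-hand side equals $\frac{1}{|\Cld|}\sum_{W\in\Cld}\mathcal{W}^{\otimes k} = \mathcal{E}_d^{(k)}$. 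The right identity $\mathcal{E}_d^{(k)}\circ\mathcal{U}^{\otimes k} = \mathcal{E}_d^{(k)}$ is identical after observing that now $\mathcal{V}\circ\mathcal{U} = \mathcal{W}'$ with $W' = UV$, and left multiplication by $U$ is likewise a bijection of $\Cld$.

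There is essentially no real obstacle: the only point requiring a word of care is the phase ambiguity in identifying Clifford unitaries with elements of a finite group, and this is harmless precisely because the adjoint action $\mathcal{U}^{\otimes k}$ factors through $\Cld/U(1)$. (If one prefers, Fact~\ref{fact: invariance} can alternatively be read off as a special case of the general representation-theoretic fact that the projector onto the trivial isotypic component commutes with — indeed is absorbed by — every group element, but the elementary reindexing argument above is self-contained.)
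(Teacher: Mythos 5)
Your proof is correct and is essentially the argument the paper has in mind: the paper simply states that Fact~\ref{fact: invariance} follows from $U \mapsto \mathcal{U}^{\otimes k}$ being a representation of $\Cld$ (citing the standard averaging result), and your explicit reindexing via right/left translation, together with the observation that the adjoint action kills global phases, is exactly the standard proof of that fact spelled out. No gaps; the note that $\mathcal{U}\circ\mathcal{V}$ corresponds to $VU$ (an anti-homomorphism) is harmless since the sum runs over the whole group.
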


We will also make use of the following simple fact. 

\begin{fact} \label{fact: numbers}
Let $d$ be a prime number. For any $q,p \in \Zd$ with $(q,p) \neq (0,0)$, there exist $q',p' \in \Zd$ such that
\[ p'q - q' p = 1.\]
\end{fact}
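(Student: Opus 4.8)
The plan is to exploit the fact that, since $d$ is prime, $\Zd$ is a field, so that the bilinear form $(q,p),(q',p') \mapsto p'q - q'p$ on $\Zd^2$ is nondegenerate (it is, up to sign, the standard symplectic form). First I would observe that the map $L_{q,p}\colon \Zd^2 \to \Zd$ defined by $L_{q,p}(q',p') = p'q - q'p$ is $\Zd$-linear in $(q',p')$. Next I would argue that $L_{q,p}$ is not the zero functional whenever $(q,p) \neq (0,0)$: indeed $L_{q,p}(-p,q) = q^2 + p^2$ need not help directly, so instead I would just evaluate on the two standard basis vectors, $L_{q,p}(1,0) = -p$ and $L_{q,p}(0,1) = q$, at least one of which is nonzero by hypothesis. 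A nonzero linear functional from a vector space onto the field $\Zd$ is surjective, so in particular $1$ lies in its image, which is exactly the claimed existence of $(q',p')$ with $p'q - q'p = 1$.

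Alternatively, and even more explicitly, I would split into the two cases $q \neq 0$ and $q = 0$. If $q \neq 0$, then since $\Zd$ is a field $q$ is invertible, and taking $q' = 0$, $p' = q^{-1}$ gives $p'q - q'p = 1$. If $q = 0$, then $p \neq 0$ (as $(q,p)\neq(0,0)$), so $p$ is invertible, and taking $p' = 0$, $q' = -p^{-1}$ gives $p'q - q'p = -(-p^{-1})p = 1$. Either presentation is a one-liner.

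There is essentially no obstacle here: the only place primality of $d$ is used is to guarantee that nonzero elements of $\Zd$ are invertible (equivalently, that $\Zd$ is a field), without which the statement can fail — e.g.\ for $d = 4$ and $(q,p) = (2,0)$ one has $p'q - q'p = 2p' \in \{0,2\}$, never $1$. I would include this remark only if brevity permits; otherwise the explicit two-case construction above suffices and requires no further justification.
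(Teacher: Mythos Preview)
Your proposal is correct and essentially the same as the paper's own proof: the paper also splits on whether $q \neq 0$ (setting $p' = (q'p+1)q^{-1}$ for any $q'$) or $p \neq 0$ (setting $q' = (p'q-1)p^{-1}$ for any $p'$), which specializes to your choices $q'=0,\,p'=q^{-1}$ and $p'=0,\,q'=-p^{-1}$. Your alternative linear-functional formulation is a mild repackaging of the same idea and is also fine.
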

\begin{proof}
    If $q \neq 0$, choose any $q'$ and set $p' = (q'p + 1)q^{-1}$. Otherwise, we must have $p \neq 0$, in which case we choose any $p'$ and set $q' = (p'q - 1)p^{-1}$. 
\end{proof}

\subsubsection{1-fold twirl}
We start by evaluating the $1$-fold twirl $\mathcal{E}_d^{(1)}$, since a similar idea will be used in obtaining the $2$- and $3$-fold twirls.
\begin{proof}[Proof of Theorem~\ref{thm: twirl channels}(i)]
We can rewrite Eq.~\eqref{eq:D_comm} as 
\begin{equation} \label{D conjugation} D_{q',p'} D_{q,p} D_{q',p'}^\dagger = \omega^{p'q - q' p} D_{q,p}, \end{equation}
with $\omega \coloneqq e^{2\pi i/d}$.
By Fact~\ref{fact: numbers}, there exist $q',p'$ such that $p'q-q'p = 1$ provided that $(q,p) \neq (0,0)$. Thus, for any $(q,p) \neq (0,0)$, there exists a Pauli $U \in \Cld$ such that
\[ \mathcal{U}\kett{D_{q,p}} = \omega\kett{D_{q,p}}. \] Hence, 
\begin{align*}
    \mathcal{E}_d^{(1)} \kett{D_{q,p}} &= \frac{1}{d} \sum_{j = 0}^{d-1} \mathcal{E}_d^{(1)} \mathcal{U}^{j} \kett{D_{q,p}} \\
    &= \frac{1}{d}\mathcal{E}_d^{(1)} \left(\sum_{j=0}^{d-1} \omega^{j} \right)\kett{D_{q,p}} \\
    &= 0
\end{align*}
for any $(q,p) \neq (0,0)$, where we use Fact~\ref{fact: invariance} in the first line, while for $(q,p) = (0,0)$, we have $\mathcal{E}_d^{(1)} \kett{D_{0,0}} = \mathcal{E}_d^{(1)} \kett{I} = \kett{I}$. Since $\{D_{q,p}\}_{q,p \in \Zd}$ forms a basis for the space of operators, it follows that
\[ \mathcal{E}_1 = \kett{I}\braa{I}. \]
\end{proof}

\subsubsection{2-fold twirl}

Now, we evaluate the $2$-fold twirl $\mathcal{E}_d^{(2)}$, which will allow us to determine the measurement channel (Eq.~\eqref{measurement channel}) for the generalized Clifford shadows. We start with a lemma that precludes certain basis states (in the Pauli basis) from being in the image of $\mathcal{E}_d^{(2)}$. 

\begin{lemma} \label{lem: 2-fold kernel} Let $d$ be a prime number. 
    For $q_1,p_1, q_2,p_2 \in \Zd$, $\mathcal{E}_d^{(2)}\kett{D_{q_1,p_1}}\kett{D_{q_2,p_2}} \neq 0$ only if $(q_1 + q_2, p_1 + p_2) = (0,0)$.
\end{lemma}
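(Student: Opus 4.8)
The plan is to mimic exactly the argument used for the $1$-fold twirl in the proof of Theorem~\ref{thm: twirl channels}(i): exploit the fact that $\mathcal{E}_d^{(2)}$ is invariant under composition with $\mathcal{U}^{\otimes 2}$ for every $U \in \Cld$ (Fact~\ref{fact: invariance}), and apply this with $U$ a suitably chosen Pauli, so that $\kett{D_{q_1,p_1}}\kett{D_{q_2,p_2}}$ becomes an eigenvector of $\mathcal{U}^{\otimes 2}$ with a nontrivial eigenvalue whenever $(q_1+q_2,p_1+p_2) \neq (0,0)$.

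Concretely, I would argue as follows. Suppose $(q_1+q_2, p_1+p_2) \neq (0,0)$. By Fact~\ref{fact: numbers} (which uses that $d$ is prime, so $\Zd$ is a field), there exist $q',p' \in \Zd$ with $p'(q_1+q_2) - q'(p_1+p_2) = 1$. Take $U = D_{q',p'}$, which lies in $\Cld$ since the Pauli group is contained in its own normaliser. Using the conjugation relation $D_{q',p'} D_{q,p} D_{q',p'}^\dagger = \omega^{p'q - q'p} D_{q,p}$ (Eq.~\eqref{D conjugation}) on each tensor factor, together with the fact that $\kett{\,\cdot\,}$ is just a normalised copy of the operator,
\begin{equation}
\mathcal{U}^{\otimes 2}\kett{D_{q_1,p_1}}\kett{D_{q_2,p_2}} = \omega^{(p'q_1 - q'p_1) + (p'q_2 - q'p_2)}\kett{D_{q_1,p_1}}\kett{D_{q_2,p_2}} = \omega\,\kett{D_{q_1,p_1}}\kett{D_{q_2,p_2}},
\end{equation}
where $\omega = e^{2\pi i/d} \neq 1$.

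Finally, I would average over the cyclic group generated by $U$: since $D_{q',p'}^{j} = D_{jq',jp'}$, we have $\mathcal{U}^{\otimes 2\,j}\kett{D_{q_1,p_1}}\kett{D_{q_2,p_2}} = \omega^{j}\kett{D_{q_1,p_1}}\kett{D_{q_2,p_2}}$, and using Fact~\ref{fact: invariance},
\begin{equation}
\mathcal{E}_d^{(2)}\kett{D_{q_1,p_1}}\kett{D_{q_2,p_2}} = \frac{1}{d}\sum_{j=0}^{d-1} \mathcal{E}_d^{(2)}\,\mathcal{U}^{\otimes 2\,j}\kett{D_{q_1,p_1}}\kett{D_{q_2,p_2}} = \frac{1}{d}\Big(\sum_{j=0}^{d-1}\omega^{j}\Big)\,\mathcal{E}_d^{(2)}\kett{D_{q_1,p_1}}\kett{D_{q_2,p_2}} = 0,
\end{equation}
since $\sum_{j=0}^{d-1}\omega^j = 0$. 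This establishes the contrapositive and hence the lemma. There is no real obstacle here; the only points to be careful about are that $D_{q',p'}$ is genuinely an element of $\Cld$ (so Fact~\ref{fact: invariance} applies), that primality of $d$ is what makes the solvability condition in Fact~\ref{fact: numbers} hold, and that the two phases from the two tensor factors combine additively in the exponent into $p'(q_1+q_2) - q'(p_1+p_2)$.
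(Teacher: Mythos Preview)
Your proposal is correct and follows essentially the same argument as the paper's own proof: both use Fact~\ref{fact: numbers} to select a Pauli $D_{q',p'}$ making $\kett{D_{q_1,p_1}}\kett{D_{q_2,p_2}}$ an $\omega$-eigenvector of $\mathcal{U}^{\otimes 2}$, then average over the cyclic subgroup it generates and invoke Fact~\ref{fact: invariance} to obtain zero. The only cosmetic difference is that the paper writes $(\mathcal{U}^j)^{\otimes 2}$ where you write $\mathcal{U}^{\otimes 2\,j}$, and there is a harmless sign convention issue (since $\mathcal{U}(\cdot)=U^\dagger(\cdot)U$ in the paper, the eigenvalue is technically $\omega^{-1}$ rather than $\omega$, but $\sum_j \omega^{\pm j}=0$ either way).
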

\begin{proof} The proof is very similar to that of Theorem~\ref{thm: twirl channels}(i) above. 
    From Eq.~\eqref{D conjugation}, 
    \[ D_{q',p'}^{\otimes 2} (D_{q_1,p_1}\otimes D_{q_2,p_2})(D_{q',p'}^{\otimes 2})^\dagger = \omega^{p'(q_1 + q_2) - q'(p_1 + p_2)}D_{q_1,p_1}\otimes D_{q_2,p_2}, \] 
    so it follows from Fact~\ref{fact: numbers} that for any $q_1,p_1,q_2,p_2 \in \Zd$ with $(q_1 + q_2, p_1 + p_2) \neq (0,0)$, there exists a Pauli $U \in \Cld$ such that
    \[ \mathcal{U}^{\otimes 2}\kett{D_{q_1,p_1}}\kett{D_{q_2,p_2}} = \omega \kett{D_{q_1,p_1}}\kett{D_{q_2,p_2}}. \]
    Hence,
    \begin{align*}
        \mathcal{E}_d^{(2)}\kett{D_{q_1,p_1}}\kett{D_{q_2,p_2}} &= \frac{1}{d} \sum_{j=0}^{d-1} \mathcal{E}_d^{(2)} (\mathcal{U}^j)^{\otimes 2} \kett{D_{q_1,p_1}}\kett{D_{q_2,p_2}} \\
        &= \frac{1}{d} \mathcal{E}_d^{(2)} \left( \sum_{j=0}^{d-1} \omega^j\right)\kett{D_{q_1,p_1}}\kett{D_{q_2,p_2}} \\
        &= 0.
    \end{align*}
\end{proof}

Next, we show that all $D_{q,p}$ with $(q,p) \neq (0,0)$ are in the same orbit under the action of $\Cld$ by conjugation (if we ignore global phases). 

\begin{lemma} \label{lem: 2-fold orbit} Let $d$ be a prime number. 
    For any $q,p,q',p' \in \Zd$ with $(q,p), (q',p') \neq (0,0)$, there exists $U \in \Cld$ such that 
    \[ \mathcal{U} \kett{D_{q,p}} = \alpha\kett{D_{q',p'}}\]
    for some phase $\alpha$.
\end{lemma}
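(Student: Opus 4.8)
The plan is to reduce the statement to the elementary fact that $\mathrm{SL}_2(\Zd)$ acts transitively on $\Zd^2 \setminus \{(0,0)\}$, and then feed this into the correspondence between Cliffords and symplectic matrices recorded in Fact~\ref{fact: Clifford symplectic} and Fact~\ref{fact: 2x2 symplectic}. Concretely, recall that $\mathcal{U}\kett{D_{q,p}} = \tfrac{1}{\sqrt{d}}U^\dagger D_{q,p} U$, so by Eq.~\eqref{Clifford matrix} it suffices to produce a symplectic matrix $C \in \Zd^{2\times 2}$ with $C(q,p)^T = (q',p')^T$; the associated Clifford $U$ (which exists by Fact~\ref{fact: Clifford symplectic}) then satisfies $U^\dagger D_{q,p}U = \alpha D_{q',p'}$ for some phase $\alpha$, and hence $\mathcal{U}\kett{D_{q,p}} = \alpha \kett{D_{q',p'}}$ as claimed. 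By Fact~\ref{fact: 2x2 symplectic}, ``$C$ symplectic'' is the same as $\det(C) = 1$, i.e.\ $C \in \mathrm{SL}_2(\Zd)$, so the whole lemma comes down to: for any nonzero $v, v' \in \Zd^2$ there is $C \in \mathrm{SL}_2(\Zd)$ with $Cv = v'$.

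To see the latter, I would use that $d$ is prime, so $\Zd$ is a field and $\Zd^2$ is a $2$-dimensional vector space over it. Since $v \neq 0$, it can be extended to a basis $\{v, w\}$ of $\Zd^2$; likewise extend $v'$ to a basis $\{v', w'\}$. Let $A$ be the unique linear map with $Av = v'$ and $Aw = w'$. Then $A$ is invertible, so $\det(A) = \lambda$ for some $\lambda \in \Zd^\times$. Now $\{v', \lambda^{-1}w'\}$ is still a basis of $\Zd^2$, and the linear map $C$ defined by $Cv = v'$, $Cw = \lambda^{-1}w'$ has determinant $\lambda^{-1}\det(A) = 1$; thus $C \in \mathrm{SL}_2(\Zd)$ and $Cv = v'$, as needed.

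Combining the two paragraphs completes the proof: given $(q,p), (q',p') \neq (0,0)$, choose $C \in \mathrm{SL}_2(\Zd)$ with $C(q,p)^T = (q',p')^T$, take $U \in \Cld$ realizing $C$ via Fact~\ref{fact: Clifford symplectic}, and read off $\mathcal{U}\kett{D_{q,p}} = \alpha\kett{D_{q',p'}}$ from Eq.~\eqref{Clifford matrix}.

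I do not expect a genuine obstacle here; the argument is routine linear algebra over a field together with the already-quoted facts about the generalized Clifford group. The only point requiring the tiniest bit of care is arranging $\det(C) = 1$ rather than merely $C$ invertible, which is why one rescales a single complementary basis vector by $\lambda^{-1}$; this is also the only place the field structure (invertibility of $\lambda$), and hence the primality of $d$, is used.
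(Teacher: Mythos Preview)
Your proof is correct and follows essentially the same approach as the paper: both reduce the lemma to finding a matrix in $\mathrm{SL}_2(\Zd)$ sending $(q,p)^T$ to $(q',p')^T$, then invoke Fact~\ref{fact: Clifford symplectic} and Fact~\ref{fact: 2x2 symplectic}. The only cosmetic difference is that the paper writes out the matrix entries explicitly and uses Fact~\ref{fact: numbers} to fix the determinant, whereas you argue more abstractly via basis extension and a rescaling of the complementary vector.
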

\begin{proof}
    By Fact~\ref{fact: Clifford symplectic} and Eq.~\eqref{Clifford matrix}, such a Clifford exists provided that there exists a symplectic matrix $\begin{pmatrix} a &b \\ c &d\end{pmatrix}$ such that
    \[ \begin{pmatrix} w &x \\ y &z\end{pmatrix}\begin{pmatrix} q\\p \end{pmatrix} =\begin{pmatrix} q'\\p' \end{pmatrix}. \]
    If $q \neq 0$, we set $w = (q' - xp)q^{-1}$ and $y = (p' - zp)q^{-1}$. Then, $\det \begin{pmatrix} w &x \\ y &z\end{pmatrix} = (q'z-xp')q^{-1}$. By Fact~\ref{fact: numbers}, since $(q',p') \neq (0,0)$, we can choose $z, x$ such that $(q'z-xp') = q$, so that the determinant is $1$---i.e., the matrix is symplectic by Fact~\ref{fact: 2x2 symplectic}. Otherwise, we must have $p \neq 0$, in which case we set $x = (q' - wq)p^{-1}$ and $z = (p' - yq)p^{-1}$, and choose $w, y$ such that $(wp' - q'y) = p$ via Fact~\ref{fact: numbers}.
\end{proof}

\begin{proof}[Proof of Theorem~\ref{thm: twirl channels}(ii)]
Inserting resolutions of the identity (Eq.~\eqref{resolution of superidentity}) and using Lemma~\ref{lem: 2-fold kernel}, we have
\begin{align}
    \mathcal{E}_d^{(2)} &= \sum_{q,p,q',p' \in \Zd} \kett{D_{q,p}} \kett{D_{-q,-p}} \braa{D_{q,p}}\braa{D_{-q,-p}} \mathcal{E}_d^{(2)} \kett{D_{q',p'}} \kett{D_{-q',-p'}} \braa{D_{q',p'}}\braa{D_{-q',-p'}} \\
    &= \sum_{q,p,q',p' \in \Zd} \kett{D_{q,p}} \kett{D_{q,p}^\dagger} \braa{D_{q,p}}\braa{D_{q,p}^\dagger} \mathcal{E}_d^{(2)} \kett{D_{q',p'}} \kett{D_{q',p'}^\dagger} \braa{D_{q',p'}}\braa{D_{q',p'}^\dagger} \\
    &= \kett{I}\kett{I}\braa{I}\braa{I}+ \sum_{\substack{q,p,q',p'\in\Zd \\ (q,p), (q',p') \neq (0,0)}} \kett{D_{q,p}} \kett{D_{q,p}^\dagger} \braa{D_{q,p}}\braa{D_{q,p}^\dagger} \mathcal{E}_d^{(2)} \kett{D_{q',p'}} \kett{D_{q',p'}^\dagger} \braa{D_{q',p'}}\braa{D_{q',p'}^\dagger} \label{Ed2 calculation}
\end{align}
(note that since $\mathcal{E}_d^{(2)\dagger} = \mathcal{E}_d^{(2)}$ by Fact~\ref{fact: projector}, Lemma~\ref{lem: 2-fold kernel} also implies that $\braa{D_{q_1,p_1}}\braa{D_{q_2,p_2}}\mathcal{E}_d^{(2)}  = 0$ if $(q_2,p_2) \neq (-q_1,-p_1)$.) Here, the second line uses Eq.~\eqref{D dagger} and the third line follows from the obvious fact that $\mathcal{E}_d^{(2)} \kett{D_{0,0}} = \kett{D_{0,0}} = \kett{I}$.  

Now, we use Lemma~\ref{lem: 2-fold orbit} to show that $\braa{D_{q,p}}\braa{D_{q,p}^\dagger} \mathcal{E}_d^{(2)} \kett{D_{q',p'}} \kett{D_{q',p'}^\dagger}$ is the same for all $(q,p), (q',p') \neq (0,0)$, i.e., 
\begin{equation} \label{2-fold inner product}
\braa{D_{q,p}}\braa{D_{q,p}^\dagger} \mathcal{E}_d^{(2)} \kett{D_{q',p'}} \kett{D_{q',p'}^\dagger} = c
\end{equation}
for some constant $c$. To see this,  note from Lemma~\ref{lem: 2-fold orbit} that for any $(q'',p'') \neq (0,0)$, there exists $U \in \Cld$ such that $\mathcal{U}\kett{D_{q',p'}} = \alpha \kett{D_{q'',p''}}$ for some phase $\alpha$, so
\begin{equation} \label{phases cancel}
\mathcal{U}^{\otimes 2}\kett{D_{q',p'}}\kett{D_{q',p'}^\dagger} = \alpha\kett{D_{q'',p''}} \alpha^*\kett{D_{q'',p''}^\dagger} = \kett{D_{q'',p''}} \kett{D_{q'',p''}^\dagger}
\end{equation}
Then, use Fact~\ref{fact: invariance} to write
\begin{align*}
    \mathcal{E}_d^{(2)} \kett{D_{q',p'}} \kett{D_{q',p'}^\dagger} &= \mathcal{E}_d^{(2)}\mathcal{U}^{\otimes 2} \kett{D_{q',p'}} \kett{D_{q',p'}^\dagger} 
    = \mathcal{E}_d^{(2)} \kett{D_{q'',p''}} \kett{D_{q'',p''}^\dagger}.
\end{align*}

Inserting Eq.~\eqref{2-fold inner product} into Eq.~\eqref{Ed2 calculation}, we arrive at
\begin{align*}
    \mathcal{E}_d^{(2)} &= \kett{I}\braa{I}^{\otimes 2} + c\sum_{\substack{q,p,q',p'\in\Zd \\ (q,p), (q',p') \neq (0,0)}} \kett{D_{q,p}}\kett{D_{q,p}^\dagger}\braa{D_{q',p'}}\braa{D_{q',p'}^\dagger} \\
    &= \kett{I}\braa{I}^{\otimes 2} + c \kett{\Phi_0}\braa{\Phi_0},
\end{align*}
where $\kett{\Phi_0}$ is defined as in Eq.~\eqref{Phi0}. Since $\mathcal{E}_d^{(2)}$ is a projector (Fact~\ref{fact: projector}), $c$ must be equal to $0$ or $1$. 
We see from Eq.~\eqref{phases cancel} that for any $U \in \Cld$, $\mathcal{U}^{\otimes 2}$ permutes the basis states of the form $\kett{D_{q,p}}\kett{D_{q,p}^\dagger}$ for $(q,p) \neq (0,0)$, without incurring any phases. Since $\kett{\Phi_0}$ is the equal superposition over these basis states, it follows that $\mathcal{U}^{\otimes 2}\kett{\Phi_0} = \kett{\Phi_0}$, so $\mathcal{E}_d^{(2)}\kett{\Phi_0} = \kett{\Phi_0}$ and $c$ must be $1$.  
\end{proof}

\subsubsection{3-fold twirl}

Finally, we evaluate the $3$-fold twirl $\mathcal{E}_d^{(3)}$, which will allow us to analyse the variance (Eq.~\eqref{shadows variance}) of estimates obtained using generalized Clifford shadows. The high-level strategy for evaluating $\mathcal{E}_d^{(3)}$ is is essentially the same as that for evaluating the $2$-fold twirl channel $\mathcal{E}_d^{(2)}$. We start with the following lemma, which precludes certain basis states from being in the image of $\mathcal{E}_d^{(3)}$. This lemma is the straightforward extension of Lemma~\ref{lem: 2-fold kernel}, which was used in evaluating $\mathcal{E}_d^{(2)}$. (In fact, there is a natural generalization of these lemmas to $\mathcal{E}_d^{(k)}$ for any $k \in \mathbb{Z}_{>0}$.) 

\begin{lemma} \label{lem: 3-fold kernel}
Let $d$ be a prime number. For $q_1,p_1, q_2, p_2, q_3, p_3 \in \Zd$, $\mathcal{E}_d^{(3)}\kett{D_{q_1,p_1}}\kett{D_{q_2,p_2}}\kett{D_{q_3,p_3}} \neq 0$ only if $(q_1 + q_2 + q_3, p_1 + p_2 + p_3) = (0,0)$.
\end{lemma}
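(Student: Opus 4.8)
The plan is to mimic, essentially verbatim, the proof of Lemma~\ref{lem: 2-fold kernel}, using that conjugation by a generalized Pauli multiplies a tensor product of Paulis by a phase depending only on the \emph{sum} of the phase-space labels. Concretely, tensoring Eq.~\eqref{D conjugation} three times gives, for any $q',p' \in \Zd$,
\[
D_{q',p'}^{\otimes 3}\bigl(D_{q_1,p_1}\otimes D_{q_2,p_2}\otimes D_{q_3,p_3}\bigr)\bigl(D_{q',p'}^{\otimes 3}\bigr)^\dagger = \omega^{\,p'(q_1+q_2+q_3) - q'(p_1+p_2+p_3)}\,D_{q_1,p_1}\otimes D_{q_2,p_2}\otimes D_{q_3,p_3},
\]
so the channel $\mathcal{U}^{\otimes 3}$ associated with the Pauli $U = D_{q',p'} \in \Cld$ acts on $\kett{D_{q_1,p_1}}\kett{D_{q_2,p_2}}\kett{D_{q_3,p_3}}$ by an overall $d$-th root of unity determined by $p'(q_1+q_2+q_3) - q'(p_1+p_2+p_3)$.

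First I would note that if $(q_1+q_2+q_3,\, p_1+p_2+p_3) \neq (0,0)$ in $\Zd^2$, then Fact~\ref{fact: numbers} (here is where primality of $d$ enters) produces $q',p' \in \Zd$ with $p'(q_1+q_2+q_3) - q'(p_1+p_2+p_3) = 1$, so there is a Pauli $U \in \Cld$ with
\[
\mathcal{U}^{\otimes 3}\kett{D_{q_1,p_1}}\kett{D_{q_2,p_2}}\kett{D_{q_3,p_3}} = \omega\,\kett{D_{q_1,p_1}}\kett{D_{q_2,p_2}}\kett{D_{q_3,p_3}},
\]
with $\omega = e^{2\pi i/d} \neq 1$. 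Then averaging over powers of $\mathcal{U}$ and invoking $\mathcal{E}_d^{(3)} = \mathcal{E}_d^{(3)}\circ(\mathcal{U}^{j})^{\otimes 3}$ from Fact~\ref{fact: invariance},
\[
\mathcal{E}_d^{(3)}\kett{D_{q_1,p_1}}\kett{D_{q_2,p_2}}\kett{D_{q_3,p_3}} = \frac{1}{d}\sum_{j=0}^{d-1}\mathcal{E}_d^{(3)}(\mathcal{U}^{j})^{\otimes 3}\kett{D_{q_1,p_1}}\kett{D_{q_2,p_2}}\kett{D_{q_3,p_3}} = \frac{1}{d}\Bigl(\sum_{j=0}^{d-1}\omega^{j}\Bigr)\mathcal{E}_d^{(3)}\kett{D_{q_1,p_1}}\kett{D_{q_2,p_2}}\kett{D_{q_3,p_3}} = 0,
\]
since $\sum_{j=0}^{d-1}\omega^{j} = 0$. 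Taking the contrapositive gives the claim: $\mathcal{E}_d^{(3)}\kett{D_{q_1,p_1}}\kett{D_{q_2,p_2}}\kett{D_{q_3,p_3}} \neq 0$ forces $(q_1+q_2+q_3,\,p_1+p_2+p_3) = (0,0)$.

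There is no real obstacle here; the only things to verify are bookkeeping: that the phase acquired under conjugation by $D_{q',p'}$ is exactly the bilinear form $p'(q_1+q_2+q_3) - q'(p_1+p_2+p_3)$ in the summed labels (immediate from Eq.~\eqref{eq:D_comm}/Eq.~\eqref{D conjugation} and the multiplicativity of tensor products), and that the $\Zd$-linear functional $(q',p') \mapsto p'(q_1+q_2+q_3) - q'(p_1+p_2+p_3)$ is surjective onto $\Zd$ whenever its coefficient vector is nonzero, which is precisely Fact~\ref{fact: numbers}. The identical argument yields the natural generalization to $\mathcal{E}_d^{(k)}$ for any $k$, with the constraint $\sum_{i=1}^k (q_i,p_i) = (0,0)$, which I would state in passing.
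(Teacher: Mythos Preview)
Your proof is correct and follows precisely the approach the paper intends: the paper's own proof of Lemma~\ref{lem: 3-fold kernel} simply states that it is essentially the same as that of Lemma~\ref{lem: 2-fold kernel} with three tensor components instead of two, which is exactly what you have written out in detail. Your closing remark about the generalization to $\mathcal{E}_d^{(k)}$ also matches the paper's parenthetical comment to the same effect.
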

\begin{proof}
    The proof is essentially the same as that of Lemma~\ref{lem: 2-fold kernel}, except we consider three tensor components instead of two.
\end{proof}

Lemma~\ref{lem: 3-fold kernel} shows that we need only consider the action of $\mathcal{E}_d^{(3)}$ on basis states of the form $\kett{D_{q_1,p_1}}\kett{D_{q_2,p_2}}\kett{D_{-q_1-q_2,-p_1-p_2}}$. For these states, the third tensor component is fully determined by the first two. For the purpose of the next lemma, we will say that two pairs of Paulis $(D_{q_1,p_1},D_{q_2,p_2})$ and $(D_{q_1',p_1'},D_{q_2',p_2'})$ are in the ``orbit''\footnote{Strictly speaking, these are not orbits in the group theoretic sense, as we allow phases in Eq.~\eqref{pair orbit def}. However, this definition will suffice for our purposes, because as can be seen from Eq.~\eqref{phases cancel 3}, we will eventually choose the particular form of the basis states for $\mathcal{L}(\mathcal{H}_d)^{\otimes 3}$ in such a way that any phases from Eq.~\eqref{pair orbit def} are cancelled out.} if there exists $U \in \Cld$ such that $\mathcal{U}^{\otimes 2}\kett{D_{q_1,p_1}}\kett{D_{q_2,p_2}} = \alpha\kett{D_{q_1',p_1'}}\kett{D_{q_2',p_2'}}$ for some phase $\alpha$,  or equivalently, \begin{equation} \label{pair orbit def} \mathcal{U}\kett{D_{q_1,p_1}} = \alpha_1\kett{D_{q_1',p_1'}} \quad \text{ and } \quad \mathcal{U}\kett{D_{q_2,p_2}} = \alpha_2\kett{D_{q_2',p_2'}} \end{equation} for some phases $\alpha_1$ and $\alpha_2$. The next lemma is essentially the analogue of Lemma~\ref{lem: 2-fold orbit}, but for pairs of Paulis. 

\begin{lemma} \label{3-fold orbit}
The set of pairs of Paulis $\{(D_{q_1,p_1},D_{q_2,p_2}): q_1,p_1,q_2,p_2 \in \Zd\}$ is partitioned into the following orbits (in the sense of Eq.~\eqref{pair orbit def}):
\begin{enumerate}[a)]
    \item $\{(D_{0,0},D_{0,0})\}$ ($1$ element)
    \item $\{(D_{0,0}, D_{q,p}): q,p \in \Zd, (q,p) \neq (0,0)\}$ ($d^2 - 1$ elements)
    \item $\{(D_{q,p},D_{0,0}):q,p \in \Zd, (q,p) \neq (0,0)\}$ ($d^2 - 1$ elements),
    \item $\{(D_{q,p}, D_{kq,kp}): q,p \in \Zd, (q,p) \neq (0,0) \}$ for $k \in \{1,\dots, d-1\}$ ($d^2- 1$ elements for each $k$)
    \item $\{(D_{q_1,p_1},D_{q_2,p_2}): q_1,p_1,q_2,p_2 \in \Zd, p_1q_2 - q_1 p_2 = l\}$ for $l \in \{1, \dots, d-1\}$ ($d(d^2 - 1)$ elements for each $l$).
\end{enumerate}
\end{lemma}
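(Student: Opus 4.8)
The plan is to reduce the statement to a classification of orbits of $\mathrm{SL}_2(\Zd)$ acting on pairs of vectors over the field $\Zd$, and then read off the orbits and their sizes from elementary linear algebra. By Fact~\ref{fact: Clifford symplectic} and Eq.~\eqref{Clifford matrix}, conjugation by $U \in \Cld$ sends $D_{q,p}$ to $\alpha D_{q',p'}$ for some phase $\alpha$, where $(q',p')^{T} = C\,(q,p)^{T}$ and $C$ runs over all symplectic $2 \times 2$ matrices over $\Zd$; by Fact~\ref{fact: 2x2 symplectic} these are exactly the matrices of $\mathrm{SL}_2(\Zd)$. The \emph{same} matrix $C$ governs both tensor factors, so, since the notion of ``orbit'' in Eq.~\eqref{pair orbit def} is defined up to phases, the action of $\Cld$ by conjugation on pairs $(D_{q_1,p_1},D_{q_2,p_2})$ is identified with the diagonal action $C\cdot(v_1,v_2) = (Cv_1,Cv_2)$ of $\mathrm{SL}_2(\Zd)$ on $\Zd^2 \times \Zd^2$, where $v_i = (q_i,p_i)^{T}$. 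It therefore suffices to classify these orbits.

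Next I would write down the invariants that distinguish the claimed families: (i) whether $v_1 = 0$; (ii) whether $v_2 = 0$; (iii) the symplectic pairing $p_1 q_2 - q_1 p_2$, which equals $\pm\det[v_1\,|\,v_2]$ and is constant on orbits because $\det C = 1$; and (iv) in the case where $v_1, v_2$ are both nonzero but $p_1 q_2 - q_1 p_2 = 0$, the scalar $k \in \Zd \setminus \{0\}$ with $v_2 = k v_1$ (well-defined since $\Zd$ is a field and $v_1 \neq 0$, and invariant since $C v_2 = k\,C v_1$). Since over a field $v_1 \neq 0$ and $\det[v_1|v_2] = 0$ force $v_2 \in \mathrm{span}(v_1)$, these invariants partition $\Zd^2 \times \Zd^2$ into exactly the families a)--e) in the statement.

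It remains to verify \emph{transitivity} of $\mathrm{SL}_2(\Zd)$ within each family. The two facts I would use are: (1) $\mathrm{SL}_2(\Zd)$ acts transitively on $\Zd^2 \setminus \{0\}$ --- any nonzero vector over the field $\Zd$ can be completed to a basis, and rescaling the complementary vector makes the determinant $1$; this settles families b) and c), and also d) once $k$ is fixed, since the pair $(v_1, k v_1)$ is determined by $v_1$ alone. (2) If $(v_1,v_2)$ and $(v_1',v_2')$ are two linearly independent pairs with the same nonzero value of $\det[v_1|v_2]$, then $A := [v_1|v_2]$ and $B := [v_1'|v_2']$ are invertible with $\det A = \det B$, so $C := B A^{-1} \in \mathrm{SL}_2(\Zd)$ maps $v_1 \mapsto v_1'$ and $v_2 \mapsto v_2'$; this gives transitivity on family e) for each fixed nonzero determinant, and the same argument shows the stabilizer of such a pair is trivial. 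Finally I would tally orbit sizes --- $1$ for a); $|\Zd^2 \setminus\{0\}| = d^2-1$ for b), c), and for each of the $d-1$ values of $k$ in d); and $|\mathrm{SL}_2(\Zd)| = d(d^2-1)$ for each of the $d-1$ nonzero values of $\det$ in e) --- and check the consistency relation $1 + 2(d^2-1) + (d-1)(d^2-1) + (d-1)d(d^2-1) = d^4 = |\Zd^2\times\Zd^2|$.

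The only step requiring real care is the reduction in the first paragraph: one must invoke the full force of Fact~\ref{fact: Clifford symplectic} (every symplectic $C$ is realized by a Clifford, so the induced map $\Cld \to \mathrm{SL}_2(\Zd)$ is onto) and observe that the phases in $U^\dagger D_{q,p}U = \alpha D_{q',p'}$ are immaterial --- which is precisely why Eq.~\eqref{pair orbit def} allows them. Everything after that is routine linear algebra over $\Zd$, and primality of $d$ enters only through the fact that $\Zd$ is a field.
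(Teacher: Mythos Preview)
Your proposal is correct and takes essentially the same approach as the paper: both reduce the orbit classification to the diagonal action of $\mathrm{SL}_2(\Zd)$ on pairs of vectors (the paper does this implicitly via Facts~\ref{fact: Clifford symplectic} and~\ref{fact: 2x2 symplectic}, you do it explicitly upfront), both use transitivity on nonzero vectors for b)--d), and both use the determinant argument $C = [v_1'|v_2'][v_1|v_2]^{-1}$ for e). Your version is slightly more streamlined in organization and adds a nice orbit-size consistency check, but the core ideas are identical.
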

\begin{proof}
    First, we show that these subsets indeed partition the set of all pairs of Paulis. The subsets in a), b) and c) are clearly mutually disjoint. The remaining pairs are of the form $(D_{q_1,p_1}, D_{q_2,p_2})$ with $(q_1,p_1), (q_2,p_2) \neq 0$. The pairs with $p_1 q_2 - q_1 p_2 \neq 0$ can be partitioned into the subsets in e). This leaves pairs $(D_{q_1,p_1}, D_{q_2,p_2})$ with $(q_1,p_1), (q_2,p_2) \neq 0$ and $p_1 q_2 - q_1 p_2 = 0$. Since $(q_1,p_1), (q_2,p_2) \neq 0$, we must have that $q_1, q_2 \neq 0$ and/or $p_1,p_2 \neq 0$. If $q_1,q_2 \neq 0$, then we can write $q_2 = (q_1^{-1}q_2) q_1$, and from $p_1q_2 - q_1p_2 = 0$, we have $p_2 = (q_1^{-1}q_2)p_1$. If $p_1,p_2 \neq 0$, then $q_2 = (p_1^{-1}p_2)q_1$ and $p_2 = (p_1^{-1}p_2)p_1$. In both cases, $q_2 = kq_1$ and $p_2 = kp_1$ for some $k \in \{1,\dots, d-1\}$, which coincides with the subsets in d). It remains to check that these subsets are indeed orbits.

    \begin{enumerate}[a)]
    \item Since $\mathcal{U}\kett{D_{0,0}} = \kett{D_{0,0}}$ (recall $D_{0,0} = I$) for any $U \in \Cld$, it is clear that $\{(D_{0,0}, D_{0,0})\}$ is an orbit. 
    \item Since $\mathcal{U}\kett{D_{0,0}} = \kett{D_{0,0}}$ for any $U\in \Cld$, $(D_{q_1',p_1'}, D_{q_2',p_2'})$ is in the same orbit as $(D_{0,0},D_{q,p})$ if and only if $(q_1',p_1') = (0,0)$ and there exists $U \in \Cld$ such that $\mathcal{U}\kett{D_{q,p}} \propto \kett{D_{q_2',p_2'}}$. By Lemma~\ref{lem: 2-fold orbit}, for $(q,p) \neq (0,0)$, the latter is true whenever $(q_2', p_2') \neq 0$. Hence, $\{(D_{0,0}, D_{q,p}): q,p \in \Zd, (q,p) \neq (0,0)\}$ is an orbit.
    \item Same proof as for b). 
    \item For $(q,p) \neq (0,0)$ and $k \neq 0$, $(D_{q_1',p_1'}, D_{q_2',p_2'})$ is in the same orbit as $(D_{q,p},D_{kq,kp})$ if and only if there exists $U \in \Cld$ such that $\mathcal{U}\kett{D_{q,p}} = \kett{D_{q_1',p_1'}}$ and $\mathcal{U}\kett{D_{kq,kp}} = \kett{D_{q_2',p_2'}}$. By Lemma~\ref{lem: 2-fold orbit}, there exists $U \in \Cld$ such that $\mathcal{U}\kett{D_{q,p}} = \kett{D_{q_1',p_1'}}$ whenever $(q_1',p_1') \neq (0,0)$. Then, since $D_{kq,kp} = D_{q,p}^k$, we have $\mathcal{U}(D_{kq,kp}) \propto U^\dagger (D_{q,p}^k)U = D_{q_1',p_1'}^k = D_{kq_1',kp_1'}$. Thus, all elements in the same orbit as $(D_{q,p},D_{kq,kp})$ are of the form $(D_{q_1',p_1'}, D_{kq_1',kp_2'})$.
    \item Finally, we show that for $p_1q_2 - q_1p_2 \neq 0$, $(D_{q_1',p_1'}, D_{q_2',p_2'})$ is in the same orbit as $(D_{q_1,p_1},D_{q_2,p_2})$---i.e., there exists $U \in \Cld$ satisfying Eq.~\eqref{pair orbit def}---if and only if $p_1' q_2' - q_1' p_2' = p_1q_2 - q_1p_2$. It follows from Facts~\ref{fact: Clifford symplectic} and~\ref{fact: 2x2 symplectic} that there exists such a Clifford $U$ if and only if there exists a $2\times 2$ matrix $C$ such that 
    \begin{equation} \label{C equation} C\begin{pmatrix} q_1 &q_2 \\ p_1 &p_2 \end{pmatrix} = \begin{pmatrix} q_1' &q_2' \\ p_1' &p_2' \end{pmatrix} \end{equation}
    and $\det(C) = 1$. Since $\det\begin{pmatrix} q_1 &q_2 \\ p_1 &p_2 \end{pmatrix} = q_1p_2 - p_1 q_2\neq 0$ by assumption, $\begin{pmatrix} q_1 &q_2 \\ p_1 &p_2 \end{pmatrix}$ is invertible and Eq.~\eqref{C equation} is equivalent to $C = \begin{pmatrix} q_1' &q_2' \\ p_1' &p_2' \end{pmatrix}\begin{pmatrix} q_1 &q_2 \\ p_1 &p_2 \end{pmatrix}^{-1}$. Then, 
    \[ \det(C) = \det\begin{pmatrix} q_1' &q_2' \\ p_1' &p_2' \end{pmatrix}\det \begin{pmatrix} q_1 &q_2 \\ p_1 &p_2 \end{pmatrix}^{-1} = (q_1' p_2' - p_1' q_2')(q_1p_2 - p_1 q_2)^{-1}, \]
    which is equal to $1$ if and only if $p_1' q_2' - q_1' p_2' = p_1q_2 - q_1p_2$.
    \end{enumerate}
\end{proof}

We are now ready to evaluate $\mathcal{E}_d^{(3)}$, proving Theorem~\ref{thm: twirl channels}(iii). Observe that the operators (Eq.~\eqref{Phi1} - \eqref{Upsilon_l}) appearing in our final expression for $\mathcal{E}_d^{(3)}$ corresponds to the non-trivial orbits we found in Lemma~\ref{3-fold orbit}. For each orbit of the form $\{(D_{q_1,p_1}D_{q_2,p_2})\}$, we have an operator that is an equal superposition over states that are proportional to $\kett{D_{q_1,p_1}}\kett{D_{q_2,p_2}}\kett{D_{-q_1-q_2,-p_1-p_2}}$ (and recall that states of this form are the states not ruled out by Lemma~\ref{lem: 3-fold kernel}). In this sense, $\kett{\Phi_1}$ corresponds to the orbit in Lemma~\ref{lem: 3-fold kernel}b), $\kett{\Phi_2}$ corresponds to the orbit in c), $\kett{\Phi_3}$ corresponds to the orbit in d) for $k = d-1$, the $\kett{\Psi_k}$'s
correspond to the orbits in d) for $k \neq d-1$, and the $\kett{\Upsilon_l}$'s correspond to the orbits in e).

\begin{proof}[Proof of Theorem~\ref{thm: twirl channels}(iii)]
Inserting resolutions of the identity (Eq.~\eqref{resolution of superidentity}) and using Lemma~\ref{lem: 3-fold kernel}, we have
\begin{align*}
    \mathcal{E}_d^{(3)} &= \sum_{q_1,p_1, q_2, p_2, q_1', p_1', q_2', p_2' \in \Zd} \braa{D_{q_1,p_1}}\braa{D_{q_2,p_2}}\braa{D_{-q_1-q_2,-p_1-p_2}} \mathcal{E}_d^{(3)} \kett{D_{q_1',p_1'}}\kett{D_{q_2',p_2'}}\kett{D_{-q_1'-q_2',-p_1'-p_2'}} 
    \\ 
    &\qquad\qquad \qquad\qquad \qquad\times \kett{D_{q_1,p_1}}\kett{D_{q_2,p_2}}\kett{D_{-q_1-q_2,-p_1-p_2}}\braa{D_{q_1',p_1'}}\braa{D_{q_2',p_2'}}\braa{D_{-q_1'-q_2',-p_1'-p_2'}}\\
    &= \sum_{q_1,p_1, q_2, p_2, q_1', p_1', q_2', p_2' \in \Zd} \braa{D_{q_1,p_1}}\braa{D_{q_2,p_2}}\braa{D_{q_1,p_1}^\dagger D_{q_2,p_2}^\dagger} \mathcal{E}_d^{(3)} \kett{D_{q_1',p_1'}}\kett{D_{q_2',p_2'}}\kett{D_{q_1',p_1'}^\dagger D_{q_2',p_2'}^\dagger} 
    \\ 
    &\qquad\qquad \qquad\qquad \qquad\times \kett{D_{q_1,p_1}}\kett{D_{q_2,p_2}}\kett{D_{q_1,p_1}^\dagger D_{q_2,p_2}^\dagger}\braa{D_{q_1',p_1'}}\braa{D_{q_2',p_2'}}\braa{D_{q_1',p_1'}^\dagger D_{q_2',p_2'}^\dagger},
\end{align*}
using the fact that $D_{-q_1-q_2, -p_1-p_2} \propto D_{-q_1, -p_1}D_{-q_2,-p_2} = D_{q_1,p_1}^\dagger D_{q_2,p_2}^\dagger$ in the second equality. 

Next, note from the definition of the orbits (Eq.~\eqref{pair orbit def}) that there exists $U \in \Cld$ such that $\braa{D_{q_1,p_1}}\braa{D_{q_2,p_2}}\braa{D_{q_1,p_1}^\dagger D_{q_2,p_2}^\dagger} \mathcal{U}^{\otimes 3} \kett{D_{q_1',p_1'}}\kett{D_{q_2',p_2'}}\kett{D_{q_1',p_1'}^\dagger D_{q_2',p_2'}^\dagger} \neq 0$ if and only if $(D_{q_1,p_1},D_{q_2,p_2})$ and $(D_{q_1',p_1'},D_{q_2',p_2'})$ are in the same orbit. Hence, since since $\mathcal{E}_d^{(3)}$ is the average of $\mathcal{U}^{\otimes 3}$ for $U \in \Cld$, $\braa{D_{q_1,p_1}}\braa{D_{q_2,p_2}}\braa{D_{q_1,p_1}^\dagger D_{q_2,p_2}^\dagger} \mathcal{E}_d^{(3)} \kett{D_{q_1',p_1'}}\kett{D_{q_2',p_2'}}\kett{D_{q_1',p_1'}^\dagger D_{q_2',p_2'}^\dagger} \neq 0$ only if $(D_{q_1,p_1},D_{q_2,p_2})$ and $(D_{q_1',p_1'},D_{q_2',p_2'})$ are in the same orbit, so we can write 
\begin{align*}
    \mathcal{E}_d^{(3)} &= \sum_{\text{orbits}} \sum_{\substack{(D_{q_1,p_1},D_{q_2,p_2}),\\ (D_{q_1',p_1'},D_{q_2',p_2'}) \in \text{ orbit}}} \braa{D_{q_1,p_1}}\braa{D_{q_2,p_2}}\braa{D_{q_1,p_1}^\dagger D_{q_2,p_2}^\dagger} \mathcal{E}_d^{(3)} \kett{D_{q_1',p_1'}}\kett{D_{q_2',p_2'}}\kett{D_{q_1',p_1'}^\dagger D_{q_2',p_2'}^\dagger} 
    \\ 
    &\qquad\qquad \qquad\qquad \qquad\times \kett{D_{q_1,p_1}}\kett{D_{q_2,p_2}}\kett{D_{q_1,p_1}^\dagger D_{q_2,p_2}^\dagger}\braa{D_{q_1',p_1'}}\braa{D_{q_2',p_2'}}\braa{D_{q_1',p_1'}^\dagger D_{q_2',p_2'}^\dagger}. 
\end{align*}

Now, we show that the overlap $\braa{D_{q_1,p_1}}\braa{D_{q_2,p_2}}\braa{D_{q_1,p_1}^\dagger D_{q_2,p_2}^\dagger} \mathcal{E}_d^{(3)} \kett{D_{q_1',p_1'}}\kett{D_{q_2',p_2'}}\kett{D_{q_1',p_1'}^\dagger D_{q_2',p_2'}^\dagger}$ is the same within each orbit, i.e., for any orbit, there is a constant $c$ such that
\[ \braa{D_{q_1,p_1}}\braa{D_{q_2,p_2}}\braa{D_{q_1,p_1}^\dagger D_{q_2,p_2}^\dagger} \mathcal{E}_d^{(3)} \kett{D_{q_1',p_1'}}\kett{D_{q_2',p_2'}}\kett{D_{q_1',p_1'}^\dagger D_{q_2',p_2'}^\dagger} = c\]
for all $(D_{q_1,p_1},D_{q_2,p_2})$ and $(D_{q_1',p_1'},D_{q_2',p_2'})$ in that orbit. To see this, note from the definition of orbits that for any other $(D_{q_1'',p_1''},D_{q_2'',p_2''})$ in the orbit,
there exist $U \in \Cld$ such that $\mathcal{U}^{\otimes 2} \kett{D_{q_1',p_1'}}\kett{D_{q_2',p_2'}} = \alpha\kett{D_{q_1'',p_1''}}\kett{D_{q_2'',p_2''}}$ for some phase $\alpha$. Then,
\begin{align} \mathcal{U}^{\otimes 3}\kett{D_{q_1',p_1'}}\kett{D_{q_2',p_2'}}\kett{D_{q_1',p_1'}^\dagger D_{q_2',p_2'}^\dagger} &= \alpha \kett{D_{q_1'',p_1''}}\kett{D_{q_2'',p_2''}} \alpha^*\kett{D_{q_1'',p_1''}^\dagger D_{q_2'',p_2''}^\dagger} \nonumber \\
&= \kett{D_{q_1'',p_1''}}\kett{D_{q_2'',p_2''}} \kett{D_{q_1'',p_1''}^\dagger D_{q_2'',p_2''}^\dagger}, \label{phases cancel 3} 
\end{align}
so we can use Fact~\ref{fact: invariance} to write
\begin{align*}
    \mathcal{E}_d^{(3)} \kett{D_{q_1',p_1'}}\kett{D_{q_2',p_2'}}\kett{D_{q_1',p_1'}^\dagger D_{q_2',p_2'}^\dagger} &= \mathcal{E}_d^{(3)} \mathcal{U}^{\otimes 3}\kett{D_{q_1',p_1'}}\kett{D_{q_2',p_2'}}\kett{D_{q_1',p_1'}^\dagger D_{q_2',p_2'}^\dagger} \\
    &= \mathcal{E}_d^{(3)} \kett{D_{q_1'',p_1''}}\kett{D_{q_2'',p_2''}}\kett{D_{q_1'',p_1''}^\dagger D_{q_2'',p_2''}^\dagger}.
\end{align*}
Thus, noting that the states defined in Eqs.~\eqref{Phi1} - \eqref{Upsilon_l} are precisely the equal superpositions over the $\kett{D_{q_1,p_1}}\kett{D_{q_2,p_2}}\kett{D_{q_1,p_1}^\dagger D_{q_2,p_2}^\dagger}$ for each orbit, we have
\begin{align} \label{Ed3 with constants}
\mathcal{E}_d^{(3)} = a \kett{I}\braa{I}^{\otimes 3} + b_1\kett{\Phi_1}\braa{\Phi_1} + b_2\kett{\Phi_2}\braa{\Phi_2}+ b_3\kett{\Phi_3}\braa{\Phi_3} + \sum\limits_{k=1}^{d-2} c_k\kett{\Psi_k}\braa{\Psi_k} + \sum\limits_{l = 1}^{d-1} d_l\kett{\Upsilon_l}\braa{\Upsilon_l}
\end{align}
for constants $a, b_1, b_2, b_3, c_k, d_l$. Since $\mathcal{E}_d^{(3)}$ is a projector (Fact~\ref{fact: projector}), each of these constants must be either $0$ or $1$. Finally, we see from Eq.~\eqref{phases cancel 3} and the definition of orbits (Eq.~\eqref{pair orbit def}) that for any $U \in \Cld$, $\mathcal{U}^{\otimes 3}$ permutes the basis states of the form $\kett{D_{q_1,p_1}}\kett{D_{q_2,p_2}}\kett{D_{q_1,p_1}^\dagger D_{q_2,p_2}^\dagger}$ within each orbit, without incurring any phases. Since each of the states appearing in Eq.~\eqref{Ed3 with constants} is an equal superposition over these $\kett{D_{q_1,p_1}}\kett{D_{q_2,p_2}}\kett{D_{q_1,p_1}^\dagger D_{q_2,p_2}^\dagger}$ states  for each orbit, it follows that $\mathcal{U}^{\otimes 3}$ stabilises each of these states, and hence so does $\mathcal{E}_d^{(3)}$. Therefore, each of the constants in Eq.~\eqref{Ed3 with constants} must be equal to $1$.
\end{proof}

\subsection{Formulae for generalized Clifford shadows} 

Having evaluating the $2$- and $3$-fold twirl channels for $\Cld$ (Theorem~\ref{thm: twirl channels}), we can now characterise the classical shadows corresponding to the uniform distribution over generalized Cliffords, in any prime dimension. 

\subsubsection{Measurement channel}

First, we determine the classical shadows measurement channel $\mathcal{M}$, by substituting our expression for the $2$-fold twirl channel from Theorem~\ref{thm: twirl channels}(ii) into Eq.~\eqref{measurement channel}. 

\begin{theorem}[Generalized Clifford shadows measurement channel] \label{thm: M}
    Let $\mathcal{M}$ be the quantum channel defined in Eq.~\eqref{measurement channel 1}. When $\mathcal{D}$ is the uniform distribution over $\Cld$ for some prime $d$ and $\mathcal{B}$ is the computational basis $\{\ket{j}:j \in \Zd\}$, $\mathcal{M}$ is given by
    \begin{equation} \label{Clifford M} \mathcal{M}(A) = \frac{1}{d+1}(\tr(A) I + A), \end{equation}
    for an arbitrary operator $A$ (equivalently, $\mathcal{M} = \frac{d}{d+1}\kett{I}\braa{I} + \frac{1}{d+1} \mathcal{I}$). Therefore, $\mathcal{M}$ is invertible, with inverse given by
    \begin{equation} \label{M inverse} \mathcal{M}^{-1}(A) = (d+1)A -\tr(A) I. \end{equation}
\end{theorem}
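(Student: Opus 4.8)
The plan is to derive the measurement channel $\mathcal{M}$ directly from its definition in Eq.~\eqref{measurement channel} by plugging in the $2$-fold twirl channel $\mathcal{E}_d^{(2)}$ computed in Theorem~\ref{thm: twirl channels}(ii). First I would rewrite Eq.~\eqref{measurement channel} in Liouville notation, so that $\mathcal{M}\kett{A} = \tr_1\big[\sum_{j\in\Zd} \mathcal{E}_d^{(2)}(\kett{j}\bra{j}^{\otimes 2}) (\kett{A}\otimes\kett{I})\big]$, treating the partial trace over the first factor as the contraction $\braa{I}$ acting on the first tensor slot (with the appropriate normalization factor $\sqrt{d}$, since $\kett{I} = I/\sqrt{d}$). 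Substituting $\mathcal{E}_d^{(2)} = \kett{I}\braa{I}^{\otimes 2} + \kett{\Phi_0}\braa{\Phi_0}$ then reduces everything to evaluating two kinds of overlaps: $\braa{I}\braa{I}\big(\kett{j}\bra{j}\otimes\kett{j}\bra{j}\big)$ and $\braa{\Phi_0}\big(\kett{j}\bra{j}\otimes\kett{j}\bra{j}\big)$, together with the "output" contractions $\tr_1\big[(\kett{I}\otimes\cdot)(\kett{A}\otimes\kett{I})\big]$ and $\tr_1\big[(\text{first slot of }\kett{\Phi_0})(\kett{A}\otimes\kett{I})\big]$.

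The key computational steps are: (1) compute $\braakett{I}{j}\bra{j} = \tr(\ket{j}\bra{j})/\sqrt{d} = 1/\sqrt{d}$, so the $\kett{I}\braa{I}^{\otimes 2}$ term contributes, after summing over the $d$ values of $j$, a multiple of $\kett{I}$ applied against $\kett{A}$, producing the $\tr(A)\,I$ piece; (2) expand $\ket{j}\bra{j}$ in the displacement basis, $\ket{j}\bra{j} = \frac{1}{d}\sum_{p} \omega^{-jp} D_{0,p}$ — or more precisely use $\ket{j}\bra{j} = \frac{1}{d}\sum_{p}Z^{-j\,p}\cdots$; the cleanest route is to note $\sum_j \ket{j}\bra{j}\otimes\ket{j}\bra{j}$ is a sum over a subgroup and evaluate $\braa{\Phi_0}(\ket{j}\bra{j}^{\otimes 2})$ directly using $\braakett{D_{q,p}}{\ket{j}\bra{j}} = \frac{1}{\sqrt d}\bra{j}D_{q,p}^\dagger\ket{j}$, which is nonzero only for $q=0$ and equals $\frac{1}{\sqrt d}\omega^{-jp}$ (up to the phase convention in Eq.~\eqref{Dqp act}); (3) assemble the $\kett{\Phi_0}$ contribution, using that the "output" leg of $\kett{\Phi_0}$ is $\kett{D_{q,p}^\dagger}$ and that $\braakett{D_{q,p}^\dagger}{A}$-type overlaps pick out the $D_{0,p}$ components of $A$, while the sum over $j$ of $\omega^{-jp}$ enforces $p=0$ as well, leaving behind precisely the $\kett{A}$ term with the right coefficient. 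Collecting the $j$-sums and the $1/(d^2-1)$ normalization from $\kett{\Phi_0}$, the two terms combine to $\frac{1}{d+1}(\tr(A)I + A)$.

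I expect the main obstacle to be bookkeeping rather than anything conceptual: correctly tracking the normalization factors from the Liouville convention (all double-kets normalized, $\kett{I} = I/\sqrt d$, $\kett{\Phi_0}$ carrying $1/\sqrt{d^2-1}$), the $\sqrt d$ from writing $\tr_1$ as $\sqrt d\,\braa{I}_{\text{slot 1}}$, and the phase $e^{i\pi qp/d}$ in $D_{q,p}$ when computing $\bra{j}D_{q,p}^\dagger\ket{j}$. A cleaner alternative that sidesteps some of this is to determine $\mathcal{M}$ up to its two eigenvalues: since $\mathcal{M}$ is a linear combination $\alpha\kett{I}\braa{I} + \beta\,\mathcal{I}$ (it must be $\Cld$-covariant and hence block-diagonal on the trivial and adjoint isotypes, which is immediate from $\mathcal{E}_d^{(2)}$ having exactly those two pieces), it suffices to compute $\mathcal{M}(I)$ and $\tr[\mathcal{M}(D_{q,p})D_{q,p}^\dagger]$ for a single nonzero $(q,p)$, or equivalently to note $\mathcal{M}(I) = I$ (trace preservation gives $\alpha d + \beta = 1$ after pairing with $\kett I$... actually $\mathcal{M}(I)=I$ directly) and compute one off-diagonal matrix element to pin down $\beta = 1/(d+1)$. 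Then Eq.~\eqref{M inverse} follows by inverting the $2\times 2$ action: on $\kett I$ the eigenvalue is $1$ and on the orthogonal complement it is $\frac{1}{d+1}$, so $\mathcal{M}^{-1} = \kett I\braa I + (d+1)(\mathcal{I} - \kett I \braa I) = (d+1)\mathcal{I} - d\kett I\braa I$, i.e.\ $\mathcal{M}^{-1}(A) = (d+1)A - \tr(A)I$. I would present the direct substitution as the main proof and mention the eigenvalue shortcut as a sanity check.
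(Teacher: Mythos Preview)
Your approach is essentially the same as the paper's: substitute the explicit $2$-fold twirl $\mathcal{E}_d^{(2)} = \kett{I}\braa{I}^{\otimes 2} + \kett{\Phi_0}\braa{\Phi_0}$ into Eq.~\eqref{measurement channel} and compute. The paper streamlines the $j$-sum at the outset by observing that $X^{-j}\in\Cld$ maps $\ket{j}\bra{j}$ to $\ket{0}\bra{0}$, so by Fact~\ref{fact: invariance} one has $\mathcal{E}_d^{(2)}(\ket{j}\bra{j}^{\otimes 2})=\mathcal{E}_d^{(2)}(\ket{0}\bra{0}^{\otimes 2})$ for every $j$, reducing the sum to $d$ copies of the $j=0$ term; it then expands $\ket{0}\bra{0}=\tfrac1d\sum_p D_{0,p}$ and reads off the overlaps. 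Your plan to keep the sum over $j$ also works, but your description of step~(3) is garbled in a way that, taken literally, would give the wrong answer.

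Specifically: the input overlap $\braa{\Phi_0}\big(\ket{j}\bra{j}^{\otimes 2}\big)$ is a \emph{scalar},
\[
\frac{1}{\sqrt{d^2-1}}\sum_{(q',p')\neq (0,0)}\braakett{D_{q',p'}}{\,\ket{j}\bra{j}\,}\;\braakett{D_{q',p'}^\dagger}{\,\ket{j}\bra{j}\,}.
\]
The restriction to $q'=0$ happens \emph{here}, in the input index, and the two phases $\omega^{\mp jp'}$ coming from the two factors \emph{cancel}, so this overlap equals $\tfrac{d-1}{d\sqrt{d^2-1}}$ independently of $j$ --- there is no residual $\omega^{-jp}$ for the $j$-sum to kill. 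The output is then the full $\kett{\Phi_0}$, summed over \emph{all} $(q,p)\neq(0,0)$ (not just $q=0$), and the contraction $\tr_1\big[\kett{\Phi_0}(A\otimes I)\big]$ produces
\[
\frac{1}{d\sqrt{d^2-1}}\sum_{(q,p)\neq(0,0)}\tr(D_{q,p}A)\,D_{q,p}^\dagger \;=\; \frac{1}{\sqrt{d^2-1}}\Bigl(A-\tfrac{\tr A}{d}\,I\Bigr),
\]
which is where the $A$ term actually comes from. Assembling this with the $\kett{I}\braa{I}^{\otimes 2}$ piece and the factor of $d$ from the $j$-sum gives $\mathcal{M}(A)=\tfrac{1}{d+1}(\tr(A)I + A)$. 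Your eigenvalue shortcut at the end is correct and clean, and is exactly the Liouville form $\mathcal{M}=\tfrac{d}{d+1}\kett{I}\braa{I}+\tfrac{1}{d+1}\mathcal{I}$ stated in the theorem.
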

\begin{proof}
    Since $\E_{U \sim \mathcal{D}} \mathcal{U}^{\otimes 2} = \mathcal{E}_d^{(2)}$ when $\mathcal{D}$ is the uniform distribution over $\Cld$, Eq.~\eqref{measurement channel} becomes
    \[ \mathcal{M}(A) = \tr_1 \left[\sum_{j \in \Zd} \mathcal{E}_d^{(2)}(\ket{j}\bra{j}^{\otimes 2}) (A \otimes I) \right]. \]
    We can simplify this by observing that $\mathcal{E}_d^{(2)}(\ket{j}\bra{j}^{\otimes 2}) = \mathcal{E}_d^{(2)}(\ket{0}\bra{0}^{\otimes 2})$ for any $j \in \Zd$. This is because for any $j \in \Zd$, there exists $U \in \Cld$ (specifically, $U = (X^j)^\dagger$ such that $\mathcal{U}(\ket{j}\bra{j}) = \ket{0}\bra{0}$. Hence, by Fact~\ref{fact: invariance}, $\mathcal{E}_d^{(2)}(\ket{j}\bra{j}^{\otimes 2}) = \mathcal{E}_d^{(2)} \circ \mathcal{U}^{\otimes 2}(\ket{j}\bra{j}^{\otimes 2}) = \mathcal{E}_d^{(2)}(\ket{0}\bra{0}^{\otimes 2})$. Thus,
    \begin{equation} \label{M E0}  \mathcal{M}(A) = d \, \tr_1\left[\mathcal{E}_d^{(2)} (\ket{0}\bra{0}^{\otimes 2}) (A \otimes I) \right].\end{equation}

    We now calculate $\mathcal{E}_d^{(2)}(\ket{0}\bra{0}^{\otimes 2})$ using Theorem~\ref{thm: twirl channels}(ii). From Eq.~\eqref{Z}, it is easy to see that
    \[ \ket{0}\bra{0} = \frac{1}{d}\sum_{z \in \Zd} Z^z = \frac{1}{d}\sum_{z \in \Zd} D_{0,z}. \] Defining $\kett{\Pi_0} \equiv \ket{0}\bra{0}$ for convenience, this becomes
    \begin{equation} \label{Pi0 to D} \kett{\Pi_0} = \frac{1}{\sqrt{d}}\sum_{z \in \Zd}\kett{D_{0,z}},\end{equation}
    and we want to find 
    \begin{align*}
        \mathcal{E}_d^{(2)}\kett{\Pi_0}^{\otimes 2} &= (\kett{I}\braa{I}^{\otimes 2} + \kett{\Phi_0}\braa{\Phi_0}) \kett{\Pi_0}^{\otimes 2}.
    \end{align*}
    From Eqs.~\eqref{Liouville Dqp} and~\eqref{Pi0 to D}, we obtain
    \begin{align} \label{overlap D Pi0}
        \braakett{D_{q,p}}{\Pi_0} = \braakett{D_{q,p}^\dagger}{\Pi_0} &= \frac{1}{\sqrt{d}} \delta_{q,0},
    \end{align}
    so we have
    \[ \braa{I}^{\otimes 2}\kett{\Pi_0}^{\otimes 2} = \braakett{D_{0,0}}{\Pi_0}^2 = \frac{1}{d} \]
    and 
    \begin{align*}
        \braa{\Phi_0}(\kett{\Pi_0}^{\otimes 2}) &= \frac{1}{\sqrt{d^2-1}}\sum_{\substack{q,p \in \Zd \\ (q,p) \neq (0,0)}} \braa{D_{q,p}}\braa{D_{q,p}^\dagger} (\kett{\Pi_0}^{\otimes 2}) = \frac{1}{\sqrt{d^2-1}}\frac{d-1}{d}.
    \end{align*}
    Thus,
    \begin{align*}
        \mathcal{E}_d^{(2)}\kett{\Pi_0}^{\otimes 2} &= \frac{1}{d}\kett{I}\kett{I} + \frac{1}{d^2 - 1}\frac{d-1}{d} \sum_{\substack{q,p \in \Zd \\ (q,p) \neq (0,0)}} \kett{D_{q,p}}\kett{D_{q,p}^\dagger} \\
        &= \frac{1}{d^2} I \otimes I + \frac{1}{d^2(d+1)}\sum_{\substack{q,p \in \Zd \\ (q,p) \neq (0,0)}} D_{q,p} \otimes D_{q,p}^\dagger \\
        &= \frac{1}{d(d+1)} I \otimes I + \frac{1}{d^2(d+1)}\sum_{q,p \in \Zd}D_{q,p}\otimes D_{q,p}^\dagger.
    \end{align*}
    Substituting this into Eq.~\eqref{M E0} gives
    \begin{align*}
        \mathcal{M}(A) 
        &= \frac{1}{d+1}\tr(A)I + \frac{1}{d(d+1)}\sum_{q,p \in \Zd}\tr(D_{q,p}A) D_{q,p}^\dagger \\
        &= \frac{1}{d+1}\tr(A) I + \frac{1}{d+1} A,
    \end{align*}
    where we use the fact that $\{\frac{1}{\sqrt{d}} D_{q,p}^\dagger\}_{q,p \in \Zd}$ forms a Hilbert-Schmidt orthonormal basis to obtain the last equality.  
\end{proof}

Thus, for the uniform distribution over generalized Cliffords (in prime dimension), $\mathcal{M}$ and $\mathcal{M}^{-1}$ are depolarising channels. 
This is reminiscent of the fact that the measurement channel for the classical shadows associated with $n$-qubit Clifford circuits, analysed in Ref.~\cite{huang2020predicting}, is also a depolarising channel. In fact, the depolarising parameter in that case has the same dependence on the dimension, i.e., the measurement channel has the same form as that in Eq.~\eqref{Clifford M}, but with $d$ replaced by $2^n$. Note, however, that $n$-qubit Clifford circuits are different from generalized single-qudit Cliffords with $d = 2^n$ (except for $n = 1$). Actually, it can be shown that the measurement channel for generalized Cliffords with $d = 2^n$ is not a depolarising channel, though this calculation is outside the scope of this paper. 

Substituting Eq.~\eqref{M inverse} into Eq.~\eqref{hat rho}, the classical shadow estimator when $D$ is the uniform distribution over $\Cld$ has the form
\[ \hat{\rho} = (d+1) \hat{U}^\dagger \ket{\hat{b}}\bra{\hat{b}}\hat{U} - I, \]
and the estimator for $\tr(O\rho)$ for any observable $O$ is given by
\begin{align}
    \hat{o} = \tr(O\mathcal{M}^{-1}(\hat{U}^\dagger \ket{\hat{b}}\bra{\hat{b}}\hat{U})) &= (d+1) \tr(O \hat{U}^\dagger \ket{\hat{b}}\bra{\hat{b}}\hat{U}) - \tr(O) \\
    &= (d+1) \bra{\hat{b}}\hat{U} O \hat{U}^\dagger\ket{b} - \tr(O). \label{shadows postprocessing}
\end{align}

\subsubsection{Variance}

With both $\mathcal{E}_d^{(3)}$ and $\mathcal{M}^{-1}$ in hand, we can now evaluate the variance, Eq.~\eqref{shadows variance}, for any operator $O$. 

\begin{theorem}[Generalized Clifford shadows variance] \label{thm: shadows variance}
    Let $\hat{o}$ denote the classical shadows estimator for the expectation value $\tr(O\rho)$, defined as in Eq.~\eqref{hat oi}. When $\mathcal{D}$ is the uniform distribution over $\Cld$ for some prime $d$ and $\mathcal{B}$ is the computational basis $\{\ket{j}: j \in \Zd\}$, the variance $\Var[\hat{o}]$ of $\hat{o}$ for any observable $O$ is given by
    \begin{equation} \label{Varhato} \Var[\hat{o}] = \frac{d+1}{d}\|O_0\|_{\text{H-S}}^2 + \frac{d+1}{d^2}\sum_{k=1}^{d-2}\sum_{\substack{q,p\in\Zd\\ (q,p) \neq (0,0)}}\tr(\widetilde{D}_{q,p}^\dagger O_0)\tr(\widetilde{D}_{kq,kp}^\dagger O_0^\dagger) \tr(\widetilde{D}_{kq,kp}\widetilde{D}_{q,p}\rho) - |\tr(O_0\rho)|^2,\end{equation}
    where $O_0 \coloneqq O - \tr(O)I/d$ denotes the traceless part of $O$, and $\widetilde{D}_{q,p} \coloneqq U D_{q,p} U^\dagger$ for any choice of $U \in \Cld$.
\end{theorem}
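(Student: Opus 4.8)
\emph{Proof proposal.} The plan is to start from the variance formula Eq.~\eqref{shadows variance} --- or rather its shifted version Eq.~\eqref{shadows variance shifted} --- substitute the $3$-fold twirl channel $\mathcal{E}_d^{(3)}$ from Theorem~\ref{thm: twirl channels}(iii), and carry out the bookkeeping. Since $\tr(\hat{\rho}) = 1$, the variance of $\hat{o}$ is unchanged if we replace $O$ by its traceless part $O_0 \coloneqq O - \tr(O)I/d$, so we may assume $O = O_0$ from the outset; then Eq.~\eqref{M inverse} gives $\mathcal{M}^{-1}(O_0) = (d+1)O_0$ and $\mathcal{M}^{-1}(O_0^\dagger) = (d+1)O_0^\dagger$. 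Exactly as in the proof of Theorem~\ref{thm: M}, for each $j \in \Zd$ the Pauli $X^j \in \Cld$ satisfies $\mathcal{U}(\ket{j}\bra{j}) = \ket{0}\bra{0}$, so Fact~\ref{fact: invariance} gives $\mathcal{E}_d^{(3)}(\ket{j}\bra{j}^{\otimes 3}) = \mathcal{E}_d^{(3)}(\ket{0}\bra{0}^{\otimes 3})$, whence $\sum_b \E_{U}\mathcal{U}^{\otimes 3}(\ket{b}\bra{b}^{\otimes 3}) = d\,\mathcal{E}_d^{(3)}(\ket{0}\bra{0}^{\otimes 3})$. This reduces the claim to showing $(d+1)^2 d\,\tr[\mathcal{E}_d^{(3)}(\ket{0}\bra{0}^{\otimes 3})(O_0 \otimes O_0^\dagger \otimes \rho)] - |\tr(O_0\rho)|^2$ equals the stated right-hand side.

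Next I would evaluate $\mathcal{E}_d^{(3)}(\ket{0}\bra{0}^{\otimes 3})$ using Theorem~\ref{thm: twirl channels}(iii). Writing $\kett{\Pi_0} = \ket{0}\bra{0} = \tfrac{1}{\sqrt{d}}\sum_z \kett{D_{0,z}}$ as in Eq.~\eqref{Pi0 to D} and using the overlap $\braakett{D_{q,p}}{\Pi_0} = \braakett{D_{q,p}^\dagger}{\Pi_0} = \tfrac{1}{\sqrt{d}}\delta_{q,0}$ from Eq.~\eqref{overlap D Pi0}, one computes the overlap of $\kett{\Pi_0}^{\otimes 3}$ with each basis operator appearing in Theorem~\ref{thm: twirl channels}(iii). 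The $\delta_{q,0}$ factors force the first two tensor slots (and hence, via $D_{0,p}^\dagger D_{0,kp}^\dagger = Z^{-(k+1)p} = D_{0,-(k+1)p}$, also the third slot) to sit at $q = 0$; this annihilates every $\kett{\Upsilon_l}$ (which requires a nonzero symplectic product $p_1 q_2 - q_1 p_2 = l$), while $\braakett{I}{\Pi_0}^{\otimes 3} = d^{-3/2}$ and each of $\kett{\Phi_1},\kett{\Phi_2},\kett{\Phi_3},\kett{\Psi_k}$ has overlap $\tfrac{\sqrt{d-1}}{\sqrt{d+1}}\,d^{-3/2}$. Tracing $\mathcal{E}_d^{(3)}(\ket{0}\bra{0}^{\otimes 3})$ against $O_0 \otimes O_0^\dagger \otimes \rho$, the $\kett{I}^{\otimes 3}$, $\kett{\Phi_1}$, and $\kett{\Phi_2}$ contributions vanish, because each carries a factor $\tr(I O_0) = 0$ or $\tr(I O_0^\dagger) = 0$; only $\kett{\Phi_3}$ (which has the identity in the $\rho$-slot) and the $\kett{\Psi_k}$ survive.

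The $\kett{\Phi_3}$ contribution is $\tfrac{1}{(d+1)d^3}\sum_{(q,p) \neq (0,0)} |\tr(D_{q,p}O_0)|^2$, which by the Hilbert-Schmidt orthogonality of the displacement operators (Proposition~\ref{prop:disp_basis}) together with $\tr(O_0) = 0$ equals $\tfrac{1}{(d+1)d^2}\|O_0\|_{\text{H-S}}^2$; multiplying by $(d+1)^2 d$ gives the first term $\tfrac{d+1}{d}\|O_0\|_{\text{H-S}}^2$. The $\kett{\Psi_k}$ terms sum to $\tfrac{d+1}{d^2}\sum_{k=1}^{d-2}\sum_{(q,p)\neq(0,0)}\tr(D_{q,p}O_0)\tr(D_{kq,kp}O_0^\dagger)\tr(D_{q,p}^\dagger D_{kq,kp}^\dagger \rho)$. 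To reach the stated form I would reindex $(q,p) \mapsto (-q,-p)$, use $D_{q,p}^\dagger = D_{-q,-p}$ (Eq.~\eqref{D dagger}) and the fact that $D_{q,p}$ and $D_{kq,kp}$ commute (their symplectic product is zero, Eq.~\eqref{eq:D_comm}), and then, for any fixed $U \in \Cld$, write $U D_{q,p} U^\dagger = \alpha_{q,p} D_{C(q,p)}$ with $C$ symplectic (Fact~\ref{fact: Clifford symplectic}) and $|\alpha_{q,p}| = 1$: the $\alpha$ phases cancel in each triple product (one forward and two conjugated factors), and $C$ is a bijection of $\Zd^2 \setminus \{(0,0)\}$ commuting with scaling by $k$, so the sum is unaffected by replacing every $D$ by $\widetilde{D}_{\cdot,\cdot} = U D_{\cdot,\cdot} U^\dagger$. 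This simultaneously proves the stated identity and its independence of the choice of $U$.

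The main obstacle I anticipate is the normalization bookkeeping in the second paragraph: correctly evaluating all the $\kett{\Pi_0}^{\otimes 3}$ overlaps against the somewhat intricate operators $\kett{\Phi_i}$, $\kett{\Psi_k}$, $\kett{\Upsilon_l}$ of Theorem~\ref{thm: twirl channels}(iii), and carefully tracking which terms are killed by the $\delta_{q,0}$ constraints or by tracelessness of $O_0$. The reindexing-plus-Clifford-conjugation argument at the end is the only other delicate point, and it is precisely what produces the $\widetilde{D}^\dagger$ and $\widetilde{D}_{kq,kp}\widetilde{D}_{q,p}$ placements seen in the statement.
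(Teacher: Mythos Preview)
Your proposal is correct and follows essentially the same route as the paper: reduce to $O_0$, collapse the sum over $j$ to $d\,\mathcal{E}_d^{(3)}(\ket{0}\bra{0}^{\otimes 3})$, compute the overlaps of $\kett{\Pi_0}^{\otimes 3}$ with the basis operators of Theorem~\ref{thm: twirl channels}(iii), observe that $\kett{\Upsilon_l}$ dies by the $\delta_{q,0}$ constraint and that $\kett{I}^{\otimes 3},\kett{\Phi_1},\kett{\Phi_2}$ die by $\tr(O_0)=0$, and read off the $\kett{\Phi_3}$ and $\kett{\Psi_k}$ contributions. The only substantive difference is in the final step: the paper obtains the $\widetilde{D}$ form in one stroke by invoking Fact~\ref{fact: invariance} ($\mathcal{E}_d^{(3)} = \mathcal{E}_d^{(3)}\circ\mathcal{U}^{\otimes 3}$) to replace each $\kett{\Psi_k}$ by $\mathcal{U}^{\otimes 3}\kett{\Psi_k}$, whereas you do an explicit reindexing $(q,p)\mapsto(-q,-p)$ followed by a direct phase-cancellation/bijection argument under $D_{q,p}\mapsto UD_{q,p}U^\dagger$; both are valid and amount to the same underlying invariance.
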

\begin{proof}
    From Eq.~\eqref{shadows variance shifted}, we have
    \begin{align*}
        \Var[\hat{o}] &= \E[|\hat{o}|^2] - |\E[\hat{o}]|^2 \\
        &= \tr\left[\sum_{j \in \Zd} \mathcal{E}_d^{(3)}(\ket{j}\bra{j}^{\otimes 3}) \left(\mathcal{M}^{-1}(O_0) \otimes \mathcal{M}^{-1}(O_0^\dagger) \otimes \rho\right) \right] - |\tr(O_0\rho)|^2,
    \end{align*}
    where $\mathcal{M}^{-1}$ in this case is given by Eq.~\eqref{M inverse} in Theorem~\ref{thm: M}. For essentially the same reason that $\mathcal{E}_d^{(2)}(\ket{j}\bra{j}^{\otimes 2}) = \mathcal{E}_d^{(2)}(\ket{0}\bra{0}^{\otimes 2})$ for any $j \in \Zd$ (see the proof of Theorem~\ref{thm: M}), we also have $\mathcal{E}_d^{(3)}(\ket{j}\bra{j}^{\otimes 3}) = \mathcal{E}_d^{(3)}(\ket{0}\bra{0}^{\otimes 3})$
    for any $j \in \Zd$. Thus, the first term becomes
    \begin{align*}
    \E[|\hat{o}|^2] &= d\, \tr\left[ \mathcal{E}_d^{(3)}(\ket{0}\bra{0}^{\otimes 3}) \left(\mathcal{M}^{-1}(O_0) \otimes \mathcal{M}^{-1}(O_0^\dagger) \otimes \rho\right) \right].
    \end{align*}
    Now, we notate $\kett{\Pi_0} \equiv \ket{0}\bra{0}$, $\kett{\rho} \equiv \rho$, $\kett{O} \equiv O$, $\kett{O}^\dagger \equiv O^\dagger$ in Liouville notation (note that here, we deviate from the convention in Eq.~\eqref{Liouville1}, in that $\kett{\rho}$, $\kett{O}$, and $\kett{O}^\dagger$ are not necessarily normalised). Then, since $\mathcal{E}_d^{(3)}(\ket{0}\bra{0}^{\otimes 3})$ is Hermitian, we can rewrite
    \begin{align*}
        \E[|\hat{o}|^2] &= d\, \braa{\Pi_0}^{\otimes 3} \mathcal{E}_d^{(3)\dagger}(\mathcal{M}^{-1}\kett{O_0} \otimes \mathcal{M}^{-1}\kett{O_0^\dagger} \otimes \kett{\rho}).
    \end{align*}
    We begin by calculating $\mathcal{E}_d^{(3)}$. By Theorem~\ref{thm: twirl channels}(iii), 
    \begin{align*}
        \mathcal{E}_d^{(3)}\kett{\Pi_0}^{\otimes 3} = \left(\kett{I}\braa{I}^{\otimes 3} + \kett{\Phi_1}\braa{\Phi_1} + \kett{\Phi_2}\braa{\Phi_2}+ \kett{\Phi_3}\braa{\Phi_3} + \sum\limits_{k=1}^{d-2} \kett{\Psi_k}\braa{\Psi_k} + \sum\limits_{l = 1}^{d-1} \kett{\Upsilon_l}\braa{\Upsilon_l}\right)\ket{\Pi_0}^{\otimes 3}.
    \end{align*}
    Using Eqs.~\eqref{Pi0 to D} and~\eqref{overlap D Pi0}, we find
    \begin{align*}
        &\braa{I}^{\otimes 3}\kett{\Pi_0}^{\otimes 3} = \frac{1}{d^{3/2}}, \\
        &\braa{\Phi_i}(\kett{\Pi_0}^{\otimes 3}) = \frac{1}{d^{3/2}}\frac{d-1}{\sqrt{d^2-1}} \qquad \text{for $i \in \{1,2,3\}$}, \\
        &\braa{\Psi_k}(\kett{\Pi_0}^{\otimes 3}) = \frac{1}{d^{3/2}}\frac{d-1}{\sqrt{d^2-1}} \qquad \text{for $k \in \{1,\dots,d-2\}$}, \\
        &\braa{\Upsilon_l}(\kett{\Pi_0}^{\otimes 3}) = 0 \qquad \text{for $l \in \{1,\dots, d-1\}$},
    \end{align*}
    so
    \begin{align*}
        \mathcal{E}_d^{(3)}\kett{\Pi_0}^{\otimes 3} = \frac{1}{d^{3/2}}\kett{I}^{\otimes 3} + \frac{1}{d^{3/2}}\frac{d-1}{\sqrt{d^2-1}} \left(\kett{\Phi_1} + \kett{\Phi_2} + \kett{\Phi_3} + \sum_{k = 1}^{d-2}\kett{\Psi_k} \right).
    \end{align*}
    Hence, 
    \begin{align*}
    \E[|\hat{o}|^2] = \left[\frac{1}{\sqrt{d}}\braa{I}^{\otimes 3} + \frac{1}{\sqrt{d}}\frac{d-1}{\sqrt{d^2-1}} \left(\braa{\Phi_1} + \braa{\Phi_2} + \braa{\Phi_3} + \sum_{k = 1}^{d-2}\braa{\Psi_k} \right) \right](\mathcal{M}^{-1}\kett{O_0} \otimes \mathcal{M}^{-1}\kett{O_0^\dagger} \otimes \kett{\rho}).
    \end{align*}
    From Eq.~\eqref{M inverse}, we have
    \[ \mathcal{M}^{-1}\kett{D_{q,p}} = (d+1)\kett{D_{q,p}} - \delta_{q,0}\delta_{p,0}\kett{I}.\]
    Using this along with the fact that $\mathcal{M}^{-1}$ is self-adjoint, i.e., $\braa{A}\mathcal{M}^{-1}\kett{B} = \braakett{\mathcal{M}^{-1}(A)}{B}$, we obtain
    \begin{align*}
    &\frac{1}{\sqrt{d}}\braa{I}^{\otimes 3}(\mathcal{M}^{-1}\kett{O_0} \otimes \mathcal{M}^{-1}\kett{O_0^\dagger} \otimes \kett{\rho}) = \frac{1}{d^2}|\tr(O_0)|^2 = 0, \\
    &\frac{1}{\sqrt{d}}\frac{d-1}{\sqrt{d^2-1}}\braa{\Phi_1}(\mathcal{M}^{-1}\kett{O_0} \otimes \mathcal{M}^{-1}\kett{O_0^\dagger} \otimes \kett{\rho}) = \frac{1}{d}\tr(O_0)\tr(\rho O_0^\dagger) - \frac{1}{d^2}|\tr(O_0)|^2 = 0 \\
    &\frac{1}{\sqrt{d}}\frac{d-1}{\sqrt{d^2-1}}\braa{\Phi_2}(\mathcal{M}^{-1}\kett{O_0} \otimes \mathcal{M}^{-1}\kett{O_0^\dagger} \otimes \kett{\rho}) =  \frac{1}{d}\tr(O^\dagger) \tr(\rho O_0) - \frac{1}{d^2}|\tr(O_0)|^2 = 0\\
    &\frac{1}{\sqrt{d}}\frac{d-1}{\sqrt{d^2-1}}\braa{\Phi_3}(\mathcal{M}^{-1}\kett{O_0} \otimes \mathcal{M}^{-1}\kett{O_0^\dagger} \otimes \kett{\rho}) = \frac{d+1}{d}\tr(O_0^\dagger O_0) - \frac{d+1}{d^2}|\tr(O_0)|^2 = \frac{d+1}{d}\|O_0\|_{\text{H-S}}^2 \\
    &\frac{1}{\sqrt{d}}\frac{d-1}{\sqrt{d^2-1}} \braa{\Psi_k}(\mathcal{M}^{-1}\kett{O} \otimes \mathcal{M}^{-1}\kett{O_0^\dagger} \otimes \kett{\rho}) = \frac{d+1}{d^2}\sum_{\substack{p,q \in\Zd\\ (p,q) \neq (0,0)}} \tr(D_{q,p}^\dagger O_0)\tr(D_{kq,kp}^\dagger O_0^\dagger) \tr(D_{kq,kp} D_{q,p}\rho).
    \end{align*}
    Thus, we arrive at
    \begin{align} \label{E[o2]}
        \E[|\hat{o}|^2] 
        &= \frac{d+1}{d}\|O_0\|_{\text{H-S}}^2 + \frac{d+1}{d^2}\sum_{k=1}^{d-2}\sum_{\substack{q,p\in\Zd\\ (q,p) \neq (0,0)}}\tr(D_{q,p}^\dagger O_0)\tr(D_{kq,kp}^\dagger O_0^\dagger) \tr(D_{kq,kp}D_{q,p}\rho),
    \end{align}
    giving
    \[ \Var[\hat{o}] = \frac{d+1}{d}\|O_0\|_{\text{H-S}}^2 + \frac{d+1}{d^2}\sum_{k=1}^{d-2}\sum_{\substack{q,p\in\Zd\\ (q,p) \neq (0,0)}}\tr({D}_{q,p}^\dagger O_0)\tr({D}_{kq,kp}^\dagger O_0^\dagger) \tr({D}_{kq,kp}{D}_{q,p}\rho) - |\tr(O_0\rho)|^2. \]
To see that this actually holds with $D_{q,p}$ replaced by $\widetilde{D}_{q,p} \coloneqq U D_{q,p} U^\dagger$ for any $U \in \Cld$, note that since $\mathcal{U} \circ \mathcal{E}_d^{(3)} \circ \mathcal{U}^\dagger$ for any $U \in \Cld$ from Fact~\ref{fact: invariance}, we can replace $\kett{\Phi_i}$, $\kett{\Psi_k}$, $\kett{\Upsilon_l}$ with $\mathcal{U}\kett{\Phi_i}$, $\mathcal{U}\kett{\Psi_k}$, $\mathcal{U}\kett{\Upsilon_l}$, respectively, in all of our calculations. These have the same form as $\kett{\Phi_i}$, $\kett{\Psi_k}$, $\kett{\Upsilon_l}$ (Eqs.~\eqref{Phi1} - \eqref{Upsilon_l}), respectively, except with $D_{q,p}$ replaced by $\widetilde{D}_{q,p}$, thus leading to Eq.~\eqref{Varhato}.
\end{proof}

We can sanity-check this result by comparing to that for the uniform distribution over $n$-qubit Clifford circuits; $2$ is prime, so our results should coincide for $d = 2$ (i.e., $n =1$). 
For $d = 2$, Eq.~\eqref{E[o2]} becomes
\[ \E[|\hat{o}|^2] = \frac{3}{2}\|O_0\|^2_{\text{H-S}}, \]
since the outside sum in the second term is empty. On the other hand, For $n$-qubit Clifford circuits, Eq.~(S43) of Ref.~\cite{huang2020predicting} gives
$\E[|\hat{o}|^2] = \frac{2^n + 1}{2^n + 2} [\tr(O_0^2) + 2\tr(\rho O_0^2)]$,
 which is 
 \begin{equation} \label{Eo n2} \E[|\hat{o}|^2] =\frac{3}{4} [\tr(O_0^2) + 2\tr(\rho O_0^2)] \end{equation} for $n = 1$. Ref.~\cite{huang2020predicting} considers only Hermitian $O_0$, so the RHS is always real, and $\tr(O_0^2) = \|O_0\|^2_{\text{H-S}}$. Now note that for $n = 1$, $O_0$ has dimension $2$, so since it is traceless, we must have $O_0^2 = aI$ for some $a \in \mathbb{R}$. Thus, $\|O_0\|^2_{\text{H-S}} = \tr(O_0^2) = a\tr(I) = 2a$, and $\tr(\rho O_0^2) = a\tr(\rho) = a = \|O_0\|^2_{\text{H-S}}/2$, so we can write Eq.~\eqref{Eo n2}
 \[ \E[|\hat{o}|^2] = \frac{3}{4}\left[\|O_0\|_{\text{H-s}}^2 + 2\frac{\|O_0\|_{\text{H-s}}^2}{2}\right] = \frac{3}{2}\|O_0\|_{\text{H-s}}^2, \]
which matches our result.

\subsection{Some observables of interest}

In the section, we apply our general results, Theorems~\ref{thm: M} and~\ref{thm: shadows variance} to some natural observables to consider in the qudit setting.

\subsubsection{Transition elements in stabiliser bases}

First, we consider observables of the form $U\ket{j}\bra{i}U^\dagger$ for $U \in \Cld$ and computational basis states $\ket{i}$, $\ket{j}$. The expectation values of these with respect to an unknown state $\rho$ are $\tr(U\ket{j}\bra{i}U^\dagger) = \bra{i}U^\dagger \rho U\ket{j}$, which are the quantities considered in Theorem~\ref{thm:clifford_shadow}. 

\begin{proof}[Proof of Theorem~\ref{thm:clifford_shadow}]
Consider any generalized Clifford $U \in \Cld$, and any computational basis states $\ket{i}$ and $\ket{j}$ with $i \neq j$. We have
\[ U\ket{j}\bra{i}U^\dagger = U X^i \ket{j-i} \bra{0}(X^i)^\dagger U^\dagger = UX^i \ket{j'}\bra{0}(UX^i)^\dagger, \]
defining $j' \equiv j - i \enspace (\text{mod } d)$. Hence, since $UX^i \in \Cld$, we can choose $\widetilde{D}_{q,p} = (UX^i) D_{q,p} (UX^i)^\dagger$ in Theorem~\ref{thm: shadows variance},
to get that for $O = U\ket{j}\bra{i}U^\dagger$ (which is already traceless, for $j \neq i$),
\begin{align*}
    \Var[\hat{o}]\Big|_{O = U\ket{j}\bra{i}U^\dagger} &= \frac{d+1}{d}\|U\ket{j}\bra{i}U^\dagger\|_{\text{H-S}}^2 + \frac{d+1}{d^2}\sum_{k=1}^{d-2}\sum_{\substack{q,p\in\Zd\\ (q,p) \neq (0,0)}}\tr(\widetilde{D}_{q,p}^\dagger U\ket{j}\bra{i}U^\dagger)\tr(\widetilde{D}_{kq,kp}^\dagger (U\ket{j}\bra{i}U^\dagger)^\dagger) \tr(\widetilde{D}_{kq,kp}\widetilde{D}_{q,p}\rho) \\
    &\quad - |\tr(U\ket{j}\bra{i}U^\dagger\rho)|^2 \\
    &= \frac{d+1}{d} + \frac{d+1}{d^2}\sum_{k=1}^{d-2}\sum_{\substack{q,p\in\Zd\\ (q,p) \neq (0,0)}} \tr(D_{q,p}^\dagger \ket{j'}\bra{0}) \tr(D_{kq,kp}^\dagger \ket{0}\bra{j'})\tr(\widetilde{D}_{kq,kp}\widetilde{D}_{q,p}\rho) -|\tr(U\ket{j}\bra{i}U^\dagger\rho)|^2 \\
    &\leq \frac{d+1}{d} + \frac{d+1}{d^2}\sum_{k=1}^{d-2}\sum_{\substack{q,p\in\Zd\\ (q,p) \neq (0,0)}} |\bra{0}D_{q,p}^\dagger\ket{j'} \bra{j'}D_{kq,kp}^\dagger\ket{0}|,
\end{align*}
where in the second line, we use the fact that 
\[ \tr(\widetilde{D}_{q,p}^\dagger U\ket{j}\bra{i}U^\dagger) = \tr((UX^i)D_{q,p}(UX^i)^\dagger UX^i \ket{j'}\bra{0}(UX_i)^\dagger) = \tr(\ket{j'}\bra{0}) \]
for our choice of $\widetilde{D}_{q,p}$, and in the third line we use $|\tr(\widetilde{D}_{kq,kp}\widetilde{D}_{q,p}\rho)| \leq 1$.
Now, from Eq.~\eqref{Dqp act}, we have
\begin{equation} \label{Dqp ketbra} D_{q,p} = \omega^{qp/2} \sum_{j\in\Zd}\omega^{jp}\ket{j + q} \bra{j}, \end{equation}
from which we obtain
$|\bra{0} D_{q,p}^\dagger\ket{j'} = \delta_{q,j'}$
and 
$|\bra{j'}D_{kq,kp}^\dagger\ket{0} = \delta_{kq,-j'}$, so
\begin{align*}
    \Var[\hat{o}]\Big|_{O = U\ket{j}\bra{i}U^\dagger} \leq \frac{d + 1}{d} + \frac{d + 1}{d^2} \sum_{k=1}^{d-2} \sum_{\substack{q,p\in\Zd\\ (q,p) \neq (0,0)}} \delta_{q,j'}\delta_{kq,-j'} = \frac{d + 1}{d} + \frac{d+1}{d^2}\sum_{k=1}^{d-2} \sum_{p = 1}^{d-1} \delta_{kj', -j'}.
\end{align*}
Since $i \neq j$, we have $j' = j - i \neq 0$, so $kj' = -j'$ only for $k = d-1$, which is not included in the sum over $k$. Therefore, the second term vanishes, and we have 
\[ \mathrm{Var}[\hat{o}]\Big|_{O = U\ket{j}\bra{i}U^\dagger} \leq \frac{d +1}{d} 
< 2, \]
for any choice of $U \in \Cld$, and $\ket{i}, \ket{j}$ with $i \neq j$. 

Substituting this into Eq.~\eqref{Nsample}, it follows that 
\[ N_{\text{sample}} = \mathcal{O}\left(\frac{\log(M/\delta)}{\eps^2} \right) \]
single-copy measurements of $\rho$ suffice to estimate any $M$ observables of the form $\tr(U\ket{j}\bra{i}U^\dagger \rho) = \bra{i}U^\dagger \rho U\ket{j}$ with $U \in \Cld$ and $i \neq j$, within additive error $\eps$ with high probability. Since $|\Cld| = \mathcal{O}(d^5)$~\cite{gross2006hudson}, there are $\mathcal{O}(d^7)$ possible such observables, $N_{\text{sample}} = \mathcal{O}(\log(d)/\varepsilon^2)$ suffice to estimate all of them. 

By Eq.~\eqref{shadows postprocessing}, if $U' \in \Cld$ was the Clifford sampled in one iteration of the classical shadows protocol, and $\ket{b}$ was the computational basis state that was measured, the corresponding sample of the estimator $\hat{o}$ for $\bra{i} U^\dagger \rho U\ket{j}$ is
\begin{align*} \tr(U\ket{j}\bra{i}U^\dagger \mathcal{M}^{-1}(U'^\dagger \ket{b}\bra{b}U')) &= (d+1) \bra{b} U' U\ket{j}\bra{i}U^\dagger U'^\dagger \ket{b}.
\end{align*}
Since $\ket{b}, \ket{i},\ket{j}$ are computational basis states and $U'U \in \Cld$, each of $\bra{b}U'U\ket{j}$ and $\bra{i}U^\dagger U'^\dagger\ket{b}$ can be efficiently computed (classically) using the $d$-dimensional generalization of the Gottesman-Knill theorem~\cite{kocia2017discrete}, so the classical shadows samples can be efficiently computed. 
\end{proof}

On the other hand, for an observable of the form $U\ket{0}\bra{0}U^\dagger$ for $U \in \Cld$ (which subsumes all observables $U\ket{i}\bra{i}U^\dagger$ for $U \in \Cld$ and computational states $\ket{i}$), the traceless part is $U\ket{i}\bra{i}U^\dagger - I/d$, and choosing $\widetilde{D}_{q,p} = UD_{q,p}U^\dagger$ in Theorem~\ref{thm: shadows variance} gives
\begin{align*}
    \Var[\hat{o}]\Big|_{O = U\ket{0}\bra{0}U^\dagger} &= \frac{d+1}{d}\left(1 - \frac{1}{d}\right) \\
    &\quad + \frac{d+1}{d^2}\sum_{k=1}^{d-2} \sum_{\substack{q,p \in \Zd \\ (q,p) \neq (0,0)}} \tr\left(\widetilde{D}_{q,p}^\dagger \Big(U\ket{0}\bra{0} U^\dagger- \frac{I}{d}\Big)\right) \tr\left(\widetilde{D}_{kq,kp}^\dagger \Big(U\ket{0}\bra{0} U^\dagger- \frac{I}{d}\Big)\right) \tr(\widetilde{D}_{kq,kp}\widetilde{D}_{q,p}\rho) \\
    &\quad - \left|\tr\left(\Big(U\ket{0}\bra{0} U^\dagger- \frac{I}{d}\Big) \rho\right)\right|^2 \\
    &= 1 - \frac{1}{d^2} + \frac{d+1}{d^2} \sum_{k=1}^{d-2} \sum_{\substack{q,p \in \Zd\\ (q,p) \neq (0,0)}} \bra{0}D_{q,p}^\dagger \ket{0} \bra{0}D_{kq,kp}^\dagger \ket{0} \tr({D}_{kq,kp}{D}_{q,p}U^\dagger\rho U) - (\bra{0}U^\dagger \rho U \ket{0} - 1/d)^2.
\end{align*}
From Eq.~\eqref{Dqp ketbra}, 
\[ \bra{0}D_{q,p}^\dagger \ket{0} = \delta_{q,0}, \]
so the double sum becomes
\begin{align*}
    \frac{d+1}{d^2} \sum_{k = 1}^{d-2} \sum_{\substack{q,p \in \Zd \\ (q,p) \neq (0,0)}} \delta_{q,0}\delta_{kq,0} \tr({D}_{kq,kp}{D}_{q,p}U^\dagger\rho U) &= \frac{d + 1}{d^2} \sum_{k =1}^{d-2} \sum_{p = 1}^{d - 1} \tr(D_{0,kp}D_{0,p} U^\dagger \rho U) \\
    &= \frac{d + 1}{d^2} \sum_{k = 1}^{d-2} \sum_{p =1}^{d-1} \tr(Z^{(k+1)p} U^\dagger \rho U) \\
    &= \frac{(d+1)(d-2)}{d^2} \tr\left(\sum_{p = 1}^{d-1} Z^p U^\dagger \rho U\right) \\
    &= \frac{(d+1)(d-2)}{d^2} \tr\left((d\ket{0} \bra{0} - I) U^\dagger \rho U \right).
\end{align*}
Hence,
\begin{align*}
    \Var[\hat{o}]\Big|_{O = U\ket{0}\bra{0}U^\dagger} &= 1 - \frac{1}{d^2} + \frac{(d+1)(d-2)}{d^2}(d\bra{0}U^\dagger \rho U \ket{0} - 1) - (\bra{0}U^\dagger \rho U \ket{0} - 1/d)^2 \\
    &= \frac{1}{d} + (d-1)\bra{0}U^\dagger \rho U \ket{0} - \bra{0}U^\dagger \rho U\ket{0}^2,
\end{align*}
which be $\Omega(d)$ for certain states $\rho$. 

\subsubsection{Displacement operators}

Now, we calculate the variance for estimating the expectation values of displacement operators $D_{a,b}$. For $(a,b) \neq (0,0)$, i.e., $D_{a,b} \neq I$, $D_{a,b}$ is traceless, so choosing $\widetilde{D}_{q,p} = D_{q,p}$ in Theorem~\ref{thm: shadows variance} gives
\begin{align*}
    \Var[\hat{o}]\Big|_{O = D_{a,b}} &= \frac{d + 1}{d}\|D_{a,b}\|_{\text{H-S}}^2 + \frac{d + 1}{d^2} \sum_{k=1}^{d-2} \sum_{\substack{q,p \in \Zd\\ (q,p) \neq (0,0)}} \tr(D_{q,p}^\dagger D_{a,b}) \tr(D_{kq,kp}^\dagger D_{a,b}^\dagger) \tr(D_{kq,kp}D_{q,p}\rho) - |\tr(D_{a,b}\rho)|^2 \\
    &= \frac{d+ 1}{d}(d)  + \frac{d+1}{d^2} \sum_{k=1}^{d-2} \sum_{\substack{q,p \in \Zd\\ (q,p) \neq (0,0)}}  \delta_{q,a}\delta_{p,b} \delta_{kq,-a}\delta_{kp,-b} \tr(D_{kq,kp}D_{q,p}\rho) - |\tr(D_{a,b}\rho)|^2\\ 
    &= d + 1 + \frac{d + 1}{d^2} \sum_{k = 1}^{d-2} \delta_{ka,-a}\delta_{kb,-b} \tr(D_{ka,ka}D_{a,b}\rho) - |\tr(D_{a,b}\rho)|^2.
\end{align*}
If $a \neq 0$, then $ka = -a$ only for $k = d-1$, which does not appear in the sum over $k$. Otherwise, we must have $b \neq 0$, and similarly no value of $k$ in the sum satisfies $kb = -b$. Hence, for any $D_{a,b} \neq I$,  
\[ \Var[\hat{o}]\Big|_{O = D_{a,b}} = d + 1 - |\tr(D_{a,b}\rho)|^2 \geq d. \]
This implies that if we were to use the generalized Clifford classical shadows to estimate $\tr(D_{a,b}\rho)$, and we chose the number of samples according to the variance, via Eq.~\eqref{Nsample}, the number of copies of $\rho$ would scale with $d$ and $\eps$ as $d/\eps^2$. This is consistent with Theorem~\ref{thm:displacement_lower_bd_1_informal}. Note, however, that Theorem~\ref{thm:displacement_lower_bd_1_informal} concerns measuring $|\tr(D_{a,b}\rho)|$ for all $a,b$, whereas Eq.~\eqref{Nsample} calls for $\mathcal{O}(d/\varepsilon^2)$ measurements even for estimating the expectation value of just one displacement operator $D_{a,b}$.

\end{document}